\documentclass[10pt,a4paper]{article}
\usepackage[utf8]{inputenc}
\usepackage{amsmath}
\usepackage{amsfonts}
\usepackage{amssymb}
\usepackage{graphicx}
\usepackage{multirow}
\usepackage[autostyle]{csquotes}
\usepackage[colorlinks,citecolor = blue,linkcolor=violet,anchorcolor = blue]{hyperref}
\usepackage{tikz}
\usepackage{xcolor, colortbl}
\definecolor{Gray}{gray}{0.90}
\definecolor{LightCyan}{rgb}{0.65,1,1}
\definecolor{Cyan}{rgb}{0.95,1,1}
\definecolor{GRAY}{gray}{0.75}

\usepackage[T1]{fontenc}
\usepackage[utf8]{inputenc}
\usepackage[font=small,labelfont=bf]{caption}


\newcommand{\bsub}{\begin{subequations}}
  \newcommand{\esub}{\end{subequations}$\!$}

\renewcommand\theequation{\thesection.\arabic{equation}}

\renewcommand{\eqref}[1]{(\ref{#1})}

\renewcommand{\theequation}{\arabic{section}.\arabic{equation}}

\usepackage[numbers,sort&compress]{natbib}
\newcommand{\R}{{\mathbb{R}}}

\newcommand{\highgreen}[1]{{\color{green}#1}}

\usepackage{parskip}
\usepackage{amsthm}
\usepackage[toc,page]{appendix}
\usepackage{wrapfig}
\usepackage{mathtools}
\usepackage{graphicx}
\graphicspath{{figures/}} 
\usepackage{subcaption}
\usepackage{dirtytalk}
\usepackage{arydshln} 
\numberwithin{equation}{subsection}

\usepackage[format=plain,
            labelfont={it},
            textfont=it]{caption}

\usepackage{cleveref}
\crefformat{section}{\S#2#1#3} 
\crefformat{subsection}{\S#2#1#3}
\crefformat{subsubsection}{\S#2#1#3}

\parskip = 4pt
\parindent = 10pt

\newtheorem{remark}{Remark}
\newtheorem{prop}{Proposition}

\usepackage{geometry} \geometry{ a4paper, total={170mm,257mm},
  left=10mm, top=15mm,right=10mm }

\author{Sarafa A. Iyaniwura and Michael J. Ward} \title{Synchrony and
  Oscillatory Dynamics for a 2-D PDE-ODE Model of Diffusion-Sensing
  with Small Signaling Compartments}
\begin{document}

\maketitle

\begin{abstract}
  We analyze a class of cell-bulk coupled PDE-ODE models, motivated by
  quorum and diffusion sensing phenomena in microbial systems, that
  characterize communication between localized spatially segregated
  dynamically active signaling compartments or ``cells'' that have a
  permeable boundary. In this model, the cells are disks of a common
  radius $\varepsilon \ll 1$ and they are spatially coupled through a
  passive extracellular bulk diffusion field with diffusivity $D$ in a
  bounded 2-D domain. Each cell secretes a signaling chemical into the
  bulk region at a constant rate and receives a feedback of the bulk
  chemical from the entire collection of cells. This global feedback,
  which activates signaling pathways within the cells, modifies the
  intracellular dynamics according to the external environment. The
  cell secretion and global feedback are regulated by permeability
  parameters across the cell membrane.  For arbitrary
  reaction-kinetics within each cell, the method of matched asymptotic
  expansions is used in the limit $\varepsilon\ll 1$ of small cell
  radius to construct steady-state solutions of the PDE-ODE model, and
  to derive a globally coupled nonlinear matrix eigenvalue problem
  (GCEP) that characterizes the linear stability properties of the
  steady-states. The analysis and computation of the nullspace of the
  GCEP as parameters are varied is central to the linear stability
  analysis. In the limit of large bulk diffusivity $D={D_0/\nu}\gg 1$,
  where $\nu\equiv {-1/\log\varepsilon}$, an asymptotic analysis of
  the PDE-ODE model leads to a limiting ODE system for the spatial
  average of the concentration in the bulk region that is coupled to
  the intracellular dynamics within the cells.  Results from the
  linear stability theory and ODE dynamics are illustrated for Sel'kov
  reaction-kinetics, where the kinetic parameters are chosen so that
  each cell is quiescent when uncoupled from the bulk medium. For
  various specific spatial configurations of cells, the linear
  stability theory is used to construct phase diagrams in parameter
  space characterizing where a switch-like emergence of intracellular
  oscillations can occur through a Hopf bifurcation. The effect of the
  membrane permeability parameters, the reaction-kinetic parameters,
  the bulk diffusivity, and the spatial configuration of cells on both
  the emergence and synchronization of the oscillatory intracellular
  dynamics, as mediated by the bulk diffusion field, is analyzed in
  detail. The linear stability theory is validated from full numerical
  simulations of the PDE-ODE system, and from the reduced ODE model
  when $D$ is large.
\end{abstract}

\textbf{Key Words:} Green's function, bulk diffusion, globally coupled
eigenvalue problem (GCEP), Hopf bifurcation, cell-bulk coupling,
synchronous oscillations, diffusion-sensing.


\setcounter{equation}{0}
\setcounter{section}{0}
\section{Introduction}\label{sec:intro}

Cell-to-cell communication is an important aspect of microbial systems
that is often achieved through the diffusion of an extracellular
signaling molecule, referred to as an autoinducer, in their common
environment (cf.  \cite{dunny1997cell}, \cite{taga2003chemical},
\cite{dano2}, \cite{QS_whiteley}). This form of bulk-mediated
communication involves the secretion and feedback of signaling
molecules from and into the cells, respectively, which enables each
cell to adjust its intracellular dynamics based on the signals it
receives from the autoinducer field that depends on the entire
collection of cells. Specific autoinducers responsible for such an
intercellular communication have been identified in many biological
systems including, cyclic adenosine monophosphate (cAMP) that triggers
intracellular oscillations for a collection of social amoebae
\textit{Dictyostelium discoideum} and guides them to aggregation in
low nutrient environments (cf.~\cite{gregor2010}, \cite{nandy1998},
\cite{goldbeter1997biochemical}), acetaldehyde (Ace) that leads to
glycolytic oscillations in a colony of yeast cells (cf.~\cite{dano},
\cite{dano2}, \cite{de2007dynamical}), and acylated homoserine
lactones (AHLs) that induces bioluminescence for certain species of
squid due to colonies of the marine bacterium {\em Vibrio fischeri}
located in the squid's light organ (cf.~\cite{taga2003chemical}).

In this context of intercellular bulk-mediated communication,
quorum-sensing (QS) refers to the onset of collective dynamics in the
cells that occurs when the cell density increases past a
threshold. There are two main categories of QS systems, for which
mathematical models have been developed. The first main type, which
includes yeast cells and social amoeba, involves a switch-like onset
of synchronized oscillatory intracellular dynamics as the cell density
increases (cf.~\cite{dano}, \cite{dano2}, \cite{de2007dynamical},
\cite{gregor2010}, \cite{mina}, \cite{review-yeast}). On the
macroscale, triggered synchronous temporal oscillations also occur in
physicochemical systems involving small catalyst-loaded particles
immersed in a BZ chemical mixture (cf.~\cite{taylor1}, \cite{taylor2},
\cite{tinsley1}, \cite{tinsley2}). The second main type of QS system,
for which the marine bacterium {\em Vibrio fischeri} and the human
pathogen {\em Pseudomonas aeruginosa} are prototypical examples, is
where an increase in the cell density leads to a sudden transition
between bistable steady-states (cf.~\cite{dock}, \cite{king2001},
\cite{qs_jump}).

QS systems that involve a switch-like onset of synchronous
intracellular oscillations have most typically been studied for
well-stirred systems, where the bulk diffusivity is taken to be
infinite. This well-mixed limit leads to a globally coupled ODE
system, where the global coupling arises from the spatially
homogeneous bulk diffusion field (cf.~\cite{gregor2010},
\cite{de2007dynamical}, \cite{taylor2}, \cite{tinsley1},
\cite{Rossler}, \cite{dicty}, \cite{smjw_quorum}). In this more
classical ODE setting, well-established mathematical tools such as ODE
bifurcation theory, phase reduction methods and the Kuramoto order
parameter (cf.~\cite{sync1}, \cite{sync2}) can be used to analyze the
onset of QS behavior and predict the degree of synchronization of
intracellular oscillations as the cell density increases.

However, when the bulk diffusivity is finite, diffusion-sensing
behavior associated with the spatial configuration of cells, the
spatial gradients of the bulk signal, reflecting versus absorbing
boundary conditions, and the mass transport properties of the bulk
medium, have all been shown experimentally to play an important role
for some QS systems (cf.~\cite{gao_diffusion},
\cite{dilanji_diffusion}, \cite{reflect}, \cite{DSQS} and the
references therein).  In contrast to the study of QS behavior through
an ODE theoretical framework, there have been relatively few
theoretical and modeling studies of how spatial diffusion triggers the
onset of collective intracellular oscillations in spatially
segregated, but localized, dynamically active reaction sites, and
these studies have typically been considered in 1-D spatial contexts
(cf.~\cite{gomez2007}, \cite{hess1}, \cite{gou2015}, \cite{gou2016},
\cite{gou2017}, \cite{hess2}, \cite{glass}, \cite{xu_bressloff},
\cite{paquin_1d}.)

{The goal of this paper is to study the emergence and
  synchronization of intracellular oscillations for the coupled
  cell-bulk PDE-ODE model of \cite{jia2016} in a 2-D domain. The
  formulation of this model was inspired by the 3-D PDE-ODE cell-based
  model of \cite{muller2006} with a single intracellular species (see
  also \cite{muller2013} and \cite{Mueller2014uecke}). In contrast to
  the analysis in \cite{jia2016} that was restricted either to the
  well-mixed limit or to the simple case of a ring pattern of
  identical cells, our study will focus on analyzing diffusion-sensing
  behavior, resulting from a passive bulk diffusion field with finite
  diffusivity, for various spatial configurations of a collection of
  heterogeneous cells.}

\begin{figure}[!h]
\begin{center}
\includegraphics[scale=01.10]{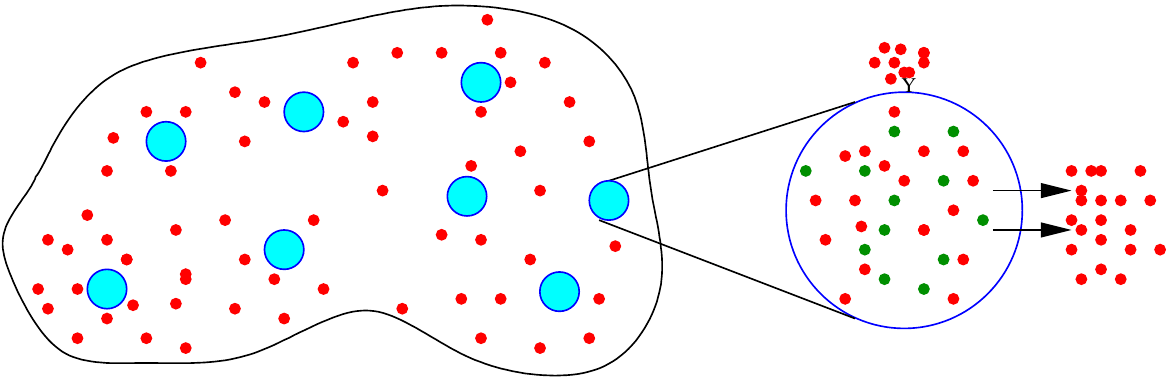}
\end{center}
\caption{A schematic diagram showing dynamically active signaling
  compartments (in cyan) in an arbitrary bounded 2D domain. The green
  and red dot represent the signaling chemicals in the cells, where
  only the red is secreted into the extracellular bulk region. On the
  right: A zoomed-in illustration of the intracellular concentration
  of chemicals within each signaling compartment, the secretion of
  signaling molecules into the bulk region, and the feedback of
  chemical into the cells.}
\label{onebulk_scheme}
\end{figure}

The coupled PDE-ODE model of \cite{jia2016} is formulated as follows:
Within $\Omega$ we assume that there are $m$ dynamically active
circular signalling compartments or ``cells'' of a common radius
$R_0$, denoted by $\Omega_j$ and centered at $\pmb{X}_j \in \Omega$
for $j=1,\dots,m$. In the bulk, or extracellular, region
$\Omega \setminus \cup_{j=1}^{m}\,\Omega_j$, the concentration
$\mathcal{U}(\pmb{X},T)$ of the autoinducer or bulk signal, which is
confined within $\partial\Omega$, is assumed to satisfy the passive
diffusion equation
\begin{subequations}\label{Dim_bulk} 
\begin{align}
  \mathcal{U}_T = & D_B \, \Delta \,
 \mathcal{U} - k_B \, \mathcal{U}\,, \quad T>0\,,\quad \pmb{X} \in \Omega
  \setminus \cup_{j=1}^{m}\,\Omega_j\,; \label{Dim_bulka}\\
  \, \partial_{n_{\pmb{X}}} \, \mathcal{U} = \, 0, \quad \pmb{X} \in
  \partial \Omega\,; & \qquad D_B \, \partial_{n_{\pmb{X}}} \, \mathcal{U}  =
  \beta_{1j} \, \mathcal{U} - \beta_{2j} \, \mu_j^1\,, \quad
  \pmb{X} \in \partial \Omega_j\,,\qquad j = 1, \ldots, m\,.\label{Dim_bulkb}
\end{align}
Here $D_B > 0$ and $k_B >0$ are the dimensional bulk diffusivity and
rate of degradation of the bulk signal, respectively. In the Robin
boundary condition \eqref{Dim_bulkb} on the cell membrane,
$\beta_{1j}>0$ and $\beta_{2j}>0$ are dimensional parameters modeling
the influx and efflux of chemical into and out of the $j^{\text{th}}$
cell, while $\partial_{n_{\pmb{X}}}$ denotes the outer normal
derivative on the cell boundary that points into the bulk region.
Within each cell we assume that there are $n$ interacting species
represented by the vector $\pmb{\mu}_j = (\mu_j^1,\ldots,\mu_j^n)^T$.
Assuming that the cells are sufficiently small so that there are no
spatial chemical gradients within them, the intracellular
reaction-kinetics $\pmb{F}_j$ for the $j^{\text{th}}$ cell is coupled
to the bulk medium via the integration of the flux across the cell
membrane. In this way, the intracellular dynamics within the
$j^{\text{th}}$ cell is coupled to the bulk signal \eqref{Dim_bulk} by
\begin{equation}\label{Dim_Intra}
\begin{split}
  \frac{d \pmb{\mu}_j}{dT} & = k_R\, \mu_c \, \pmb{F}_j
  \left( \pmb{\mu}_j/\mu_c  \right)  + \pmb{e}_1 \int_{\partial \Omega_j}\,
  \left( \beta_{1j} \, \mathcal{U} - \beta_{2j} \, \mu_j^1 \right) \,
  \text{d}S_{\pmb{X}}\,,\qquad j = 1, \ldots, m\,.
\end{split}
\end{equation}
\end{subequations}
Here $\pmb{e}_1 = (1,0,\ldots,0)^T$, $k_R > 0$ is the dimensional
reaction rate for the intracellular kinetics, and $\mu_c > 0$ is a
typical value for $\pmb{\mu}_j$. In this model, one signaling
chemical, labeled by $\mu_j^1$ can permeate the cell membrane with an
efflux parameter $\beta_{2j}$ and, by diffusion through the bulk
medium, can communicate with spatially distant cells.  The influx
permeability parameter $\beta_{1j}$ controls the global feedback into
the $j^{\mbox{th}}$ cell from the bulk diffusion field, which is
determined by the entire collection of cells. In
Fig.~\ref{onebulk_scheme} we schematically illustrate the cell-bulk
coupling for the case of $n=2$ intracellular species.

We assume that the radius $R_0$ of the signaling compartments is small
relative to the domain length-scale $L$, and so we introduce a small
parameter $ \varepsilon \equiv R_0/L \ll 1$. Then, by non-dimensionalizing
the coupled PDE-ODE model \eqref{Dim_bulk} using the approach of
\cite{jia2016}, we obtain that the dimensionless concentration of
chemical $U(\pmb{x},t)$ in the bulk region satisfies
\begin{subequations}\label{DimLess_bulk}
\begin{align}
  \tau \frac{\partial U}{\partial t} =&\,  D \, \Delta U -  \, U\,, \quad
   t > 0\,,\quad \pmb{x} \in \Omega \setminus \cup_{j=1}^{m}\,
   \Omega_{\varepsilon_j}\,;\label{DimLess_bulka} \\
  \partial_n \, U =\, 0\,, \quad \pmb{x} \in \partial \Omega\,;
  & \qquad \varepsilon  D\, \partial_n U  = d_{1j} \, U - d_{2j} \, u_j^1\,,
    \quad \pmb{x} \in \partial \Omega_{\varepsilon_j}\,,\qquad j = 1, \ldots, m\,,
    \label{DimLess_bulkb}
\end{align}
which is coupled to the dimensionless dynamics within the $j^{\text{th}}$ cell  by
\begin{align}\label{DimLess_Intra}
  \frac{\text{d} \pmb{u}_j}{\text{d}t} & =  \,
 \pmb{F}_j \left( \pmb{u}_j  \right)  + \frac{\pmb{e}_1}{\varepsilon \tau}
  \int_{\partial \Omega_{\varepsilon_j}}\, ( d_{1j} \, U - d_{2j} \, u_j^1 ) \,\,
              \text{d}s\,,\qquad j = 1, \ldots, m \,,
\end{align}
\end{subequations}
where $\pmb{u}_j=(u_j^1,\ldots,u_j^n)^T$ is the dimensionless vector
representing the $n$ chemical species in the $j^{\text{th}}$ cell, labeled by
$\Omega_{\varepsilon_j}\equiv \lbrace{\pmb{x} \, \vert \, \,
  |\pmb{x}-\pmb{x}_j| \leq \varepsilon\rbrace}$. We assume that the
centers of the cells are well-separated in the sense that
$\mbox{dist}(\pmb{x}_j,\pmb{x}_k)={\mathcal O}(1)$ for $j\neq k$ and
$\mbox{dist}(\pmb{x}_j,\partial\Omega)={\mathcal O}(1)$ as
$\varepsilon\to 0$. In this dimensionless formulation
\eqref{DimLess_bulk}, the key dimensionless parameters are
\begin{equation}\label{dim:param}
  D \equiv \frac{D_B}{k_B L^2} \,, \qquad
  d_{1j} \equiv \varepsilon \frac{\beta_{1j}}{k_B L} = {\mathcal O}(1) \,, \qquad
  d_{2j} \equiv \varepsilon \frac{\beta_{2j} L}{k_B} = {\mathcal O}(1) \,, \qquad
   \tau \equiv \frac{k_R}{k_B} \,.
\end{equation}
We refer to $D$ and $\tau$ as the effective bulk diffusivity and
reaction-time parameter, respectively. In \eqref{dim:param}, the
permeability parameters $\beta_{1j}$ and $\beta_{2j}$ are chosen as
$\mathcal{O}(\varepsilon^{-1})$ in order to ensure that there is an
${\mathcal O}(1)$ transport across the membrane of the small
cells. The parameter $\tau$ measures the relative rate of the
intracellular dynamics to the time-scale for degradation of the bulk
chemical. When the intracellular reactions proceed slowly, $\tau$ is
small, and little communication between the cells occurs. When the
effective bulk diffusivity $D$ is large, the cells are readily able to
communicate through the bulk medium, and in the well-mixed limit
$D\to \infty$ the bulk signal becomes spatially
homogeneous. Alternatively, for smaller values of $D$, only those
cells that are in close spatial proximity should be able to
communicate through the bulk diffusion field.

For arbitrary intracellular reaction kinetics and for an arbitrary
spatial arrangement of cells, in \S \ref{Analysis} we use strong
localized perturbation theory (cf.~\cite{ward2018spots},
\cite{WHK1993}) in the limit $\varepsilon\to 0$ to construct
steady-state solutions of \eqref{DimLess_bulk} and to derive the
linear stability problem for these steady-states. Unstable eigenvalues
of the linearization of a steady-state are shown to correspond to
roots $\lambda$ in $\mbox{Re}(\lambda)>0$ for which a certain globally
coupled nonlinear matrix eigenvalue problem (GCEP)
${\mathcal M}(\lambda)\pmb{c}=0$ (see \eqref{full_gcep}) has a
nontrivial solution. The $m\times m$ matrix ${\mathcal M}$, which
couples all the cells through an eigenvalue-dependent Green's
matrix, depends on the dimensionless parameters in
\eqref{dim:param}. The components of the corresponding normalized
eigenvector $\pmb{c}$ determines the relative magnitude of the
spatial gradient of the bulk signal near the cell membranes and it
determines the relative phases and amplitudes of small-scale
oscillations within the cells at the onset of a Hopf bifurcation when
$\lambda=i\lambda_I$, with $\lambda_I>0$, is a root of
$\det{\mathcal M}(\lambda)=0$ (see \eqref{nstabform:c}).

In \S \ref{largeD_ODE}, strong localized perturbation theory is used
to reduce the PDE-ODE model \eqref{DimLess_bulk} to an ODE
differential algebraic system with global coupling for the distinguished
limit $D={D_0/\nu}\gg 1$ of large bulk diffusivity, where
$\nu\equiv {-1/\log\varepsilon}$. In contrast to the simpler ODE
system derived in \cite{jia2016} for the well-mixed limit $D\to\infty$
(i.e.~$D\gg {\mathcal O}(\nu^{-1})$), the new derivation in \S
\ref{largeD_ODE}, as summarized in Proposition \ref{prop:dyn}, leads
to an ODE system that depends on the scaled bulk diffusivity parameter
$D_0$ and it depends weakly on the specific spatial configuration of
cells within the domain.

Our asymptotic theory is applied for the case of Sel'kov reaction
kinetics, which has been used as a conceptual model for glycolysis
oscillations in yeast cells \cite{Selkov}. With Sel'kov kinetics,
\eqref{DimLess_bulk} has a unique steady-state when $\varepsilon\ll
1$. As indicated from the experimental and modeling studies of
collective behavior in yeast cells (cf.~\cite{dano2}, \cite{dano},
\cite{de2007dynamical}), individual yeast cells are typically
non-oscillatory when isolated, but readily become synchronized in a
population of such cells. In qualitative agreement with this
observation, the Sel'kov parameters are chosen to be close to the
threshold for the onset of limit-cycle oscillations for an isolated
cell. As a result, the switch-like emergence of intracellular
oscillations, resulting from a Hopf bifurcation and illustrated
through various phase diagrams, is inherently due to the cell-cell
interaction, as mediated by the bulk diffusion field. In our study we
will analyze how the onset of intracellular oscillations and
synchronization depends on the membrane permeability parameters, the
reaction-time parameter, the bulk diffusivity, a Sel'kov kinetic
parameter, and the spatial pattern of cells. Diffusion-sensing
behavior, whereby cells initiate oscillations as a result of certain
spatial effects such as cell-clustering or the buildup of large
spatial gradients of the autoinducer field near the cell boundary, are
illustrated.

In \S \ref{SpecConfigSec} we illustrate our theory with Sel'kov
reaction kinetics for a ring and center-cell pattern of cells in the
unit disk (see Fig.~\ref{def_Mcells}), where the ring cells are taken
to have common parameters but where the parameters for the center cell
can be different. For the unit disk, the Green's matrices needed in
the steady-state and linear stability theory are available
analytically. For a ring and center-cell pattern, the GCEP matrix
${\mathcal M}(\lambda)$ has a cyclic sub-block and so in \S
\ref{RingCenterHole} we can analytically identify particular spatial
modes for which $\mbox{det}{\mathcal M}(\lambda)=0$.  By using
arclength continuation in $D$, Hopf bifurcation boundaries in the
$(D,\tau)$ parameter space for which
$\mbox{det}{\mathcal M}(i\lambda_I)=0$ can be computed for each of
these modes. In open regions of the $(D,\tau)$ parameter plane, we
show how to use a winding number criterion numerically on the roots of
$\mbox{det}{\mathcal M}(\lambda)=0$ in $\mbox{Re}(\lambda)>0$, so as
to compute the number of unstable eigenvalues of the linearization of
the unique steady-state.  In \S \ref{Subsec:SelKov_Exaample} these
phase diagrams are shown for the case of two ring cells. The
triggering effect on the emergence of intracellular oscillations due
to a center cell with either different permeability parameters or a
different Sel'kov kinetic parameter is studied in \S
\ref{sec:def_center_permea} and \S \ref{sec:alpha}, respectively.
Diffusion-sensing behavior as a result of changes in the ring radius
are studied in \S \ref{sec:ring}. The linear stability theory, which
predicts the onset of intracellular oscillations together with the
amplitude and phase differences near the Hopf bifurcation point, is
validated through large-scale simulations of the PDE-ODE model
\eqref{DimLess_bulk} using FlexPDE
\cite{flexpde2015solutions}. Moreover, we show that in certain cases
the new ODE system in Proposition \ref{prop:dyn}, derived for
$D={D_0/\nu}\gg 1$, can still provide a decent agreement with the full
numerical results even when $D={\mathcal O}(1)$.

In \S \ref{sec:LargePopulation} we study how the onset of
intracellular oscillations depends on the specific spatial arrangement
of ten cells in the unit disk that differ only in their influx
permeabilites $d_{1j}$, for $j=1,\ldots,m$. The cell configurations
considered include two clusters of cells (see
Fig.~\ref{fig:twoclusters}), two rings of cells with two isolated
cells (see Fig.~\ref{fig:tworing}), and arbitrarily placed cells (see
Fig.~\ref{fig:arbit}). For computational simplicity, we primarily
focus on the regime $D={D_0/\nu}\gg 1$, where matrix perturbation
theory can be used to asymptotically calculate the spectrum of the
GCEP matrix ${\mathcal M}(\lambda)$, as summarized in Proposition
\ref{lemma:large}. Since this analysis shows that only one matrix
eigenvalue of ${\mathcal M}(\lambda)$ can cross through zero, in \S
\ref{subsec:LargeCellExample} a numerical root-finding is readily
implemented on this eigenvalue to determine Hopf bifurcation
boundaries in the $(D_0,\tau)$ plane where intracellular oscillations
originate for the various spatial configurations of cells and
permeability parameter sets $d_{1j}$ for $j=1,\ldots,10$.  Our linear
stability results, validated through FlexPDE simulations of
\eqref{DimLess_bulk} and the ODE dynamics of Proposition
\ref{prop:dyn}, shows that when $D={D_0/\nu}$ the intracellular
dynamics depend sensitively on the influx permeability set, but only
weakly on the cell locations. In \S \ref{large:quiet} we implement the
linear stability theory based on the GCEP matrix for
$D={\mathcal O}(1)$ to predict that small-scale intracellular
oscillations can be highly heterogeneous in terms of amplitude and
phase when there are isolated cells. The linear stability theory is
confirmed from full PDE simulations. Finally, in \S
\ref{sec:discussion} we briefly summarize our study and discuss some
biological modeling problems that are well-aligned with the cell-based
PDE-ODE framework of \eqref{Dim_bulk}.

\setcounter{equation}{0}
\setcounter{section}{1}
\section{Asymptotic analysis of the dimensionless coupled model}\label{Analysis}

In this section, strong localized perturbation theory in the limit
$\varepsilon\to 0$ is applied to the dimensionless PDE-ODE model
\eqref{DimLess_bulk} for the regime $D = \mathcal{O}(1)$. This theory
is used to asymptotically approximate the steady-state solution of the
coupled model, and also to formulate a globally coupled eigenvalue
problem (GCEP) for studying the linear stability properties of the
derived steady-state solution.

\subsection{Asymptotic construction of the steady-state
  solution} \label{construct_ss}

We construct the steady-state solution for \eqref{DimLess_bulk} using
matched asymptotics. In the $j^{\text{th}}$ inner region,
defined within an $\mathcal{O} (\varepsilon)$ neighborhood of the
boundary of the $j^{\text{th}}$ cell, we introduce the local variables
$\pmb{y} =\varepsilon^{-1} (\pmb{x} - \pmb{x}_j)$ and
$U_j(\pmb{x}) = U_j(\varepsilon \pmb{y} + \pmb{x_j})$, where
$\rho \equiv |\pmb{y}|$.  Upon writing \eqref{DimLess_bulk} in terms
of the inner variables, for $ \varepsilon \to 0$ the steady-state
problem in the $j^{\text{th} }$ inner region is
$\Delta U_j  = 0$ for $\rho  \geq 1$, subject to
$D\, \partial_{\rho} U_j = d_{1j} U_j - d_{2j} u_j^1$ on
$\rho=1$. The radially symmetric solution to this problem is 
\begin{equation} \label{Inner_Solution}
\begin{split}
  U_j(\rho) &= A_j \log\rho + \frac{1}{d_{1j}} \left( D\, A_j + d_{2j}
    u_j^1 \right)\,, \qquad j=1, \ldots ,m \,,
\end{split}
\end{equation}
where $A_j$ for $j=1,\ldots,m $ are constants to be determined.  Upon
substituting \eqref{Inner_Solution} into the steady-state problem of
\eqref{DimLess_Intra}, we obtain that the steady-state intracellular
dynamics $\pmb{u}_j$ of the $j^{\text{th} }$ cell satisfies
\begin{equation}\label{Intra}
\begin{split}
  \pmb{F}_j \left( \pmb{u}_j  \right)  + \frac{2 \pi D}{\tau} \, A_j
  \pmb{e}_1 \,= \pmb{0}\,,\qquad j = 1,  \ldots, m\,.
\end{split}
\end{equation}
This determines $\pmb{u}_{j}$ in terms of the unknown constant
$A_j$. To proceed, we must derive another algebraic system for the
constants $A_j$ for $j=1,\ldots,m$, which is then coupled to
\eqref{Intra}.

By matching the far-field behaviour of the inner solution
\eqref{Inner_Solution} to an outer steady-state solution for
\eqref{DimLess_bulka}, we obtain that the outer solution must satisfy
a specific singularity behaviour as $\pmb{x} \to \pmb{x}_j$. In this
way, the steady-state outer approximation for the bulk solution satisfies
\begin{equation}\label{Outer_p} 
\begin{split}
  \Delta U - \varphi^2  \,U& =0\,, \quad \pmb{x} \in \Omega \setminus
  \{\pmb{x}_1,\ldots,\pmb{x}_m  \} \,; \qquad \partial_n U =0 \,,\quad
  \pmb{x} \in \partial \Omega \,; \\
  U \sim  A_j \log |\pmb{x} - \pmb{x}_j| & + \frac{A_j}{\nu} +
  \frac{1}{d_{1j}}(DA_j + d_{2j} u_j^1)\,, \quad \text{as} \quad \pmb{x} \to
  \pmb{x}_j\,, \qquad j =1,\ldots,m\,,
\end{split}
\end{equation}
where $\nu \equiv {-1/\log\varepsilon}$, with $\varepsilon \ll 1$,
and $\varphi \equiv \sqrt{1/D}$. The pre-specification of the regular
part of each singularity structure in \eqref{Outer_p} yields a
constraint. Overall these constraints provide an algebraic system for
$A_j$ for $j=1,\ldots,m$.

To determine this algebraic system, we next represent the solution to
\eqref{Outer_p} as
\begin{equation}\label{U_Green}
\begin{split}
U =  -2 \pi \sum_{i=1}^{m} A_i G(\pmb{x};\pmb{x}_i) \,,
\end{split}
\end{equation}
where the reduced-wave Green's function $G(\pmb{x};\pmb{x}_j)$ satisfies
\begin{equation}\label{R_Green}
\begin{split}
  \Delta G\, - \,& \varphi^2  G = -\delta(\pmb{x} - \pmb{x}_j)\,, \quad
  \pmb{x} \in \Omega\,; \qquad  \partial_n G =0 \,,\quad \pmb{x} \in
  \partial \Omega\,;\\
  G(\pmb{x};\pmb{x}_j) & \sim - \frac{1}{2\pi} \log|\pmb{x} - \pmb{x}_j| +
  R(\pmb{x}_j) + o(1)\,,\quad \text{as} \quad \pmb{x} \to \pmb{x}_j\,.
\end{split}
\end{equation}
Here $R_j \equiv R(\pmb{x}_j)$ is the regular part of
$G(\pmb{x};\pmb{x}_j)$ at $ \pmb{x}= \pmb{x}_j$. By expanding
\eqref{U_Green} as $\pmb{x} \to \pmb{x}_j$, we simply require that the
non-singular terms of the resulting expression agree with that
specified in \eqref{Outer_p} for each $j=1,\ldots,m$. This leads to an
algebraic system for the vector
$\pmb{\mathcal{A}} \equiv (A_1, \ldots, A_m)^T$, which is given in
matrix form as
\begin{equation}\label{Linear_System} 
\begin{split}
  \Big{(}   I + 2\pi \nu \mathcal{G} + \nu D P_1   \Big{)}
  \pmb{\mathcal{A}} =  - \nu \,P_2\, \pmb{u}^1\,,
\end{split}
\end{equation}
where $\pmb{u}^1\equiv (u_1^1,u_2^1, \ldots, u_m^1)^T$ denotes the vector of
chemicals that is secreted into the bulk region by the cells. In
\eqref{Linear_System}, $\mathcal{G}$ is the symmetric reduced-wave
Green's interaction matrix, while $P_1$ and $P_2$ are $m\times m$
diagonal matrices, defined by
\begin{equation}\label{GreensMatrix}
\begin{split}
\mathcal{G} \equiv
\begin{pmatrix}
R_1    & G_{12} & \dots & G_{1m}\\
G_{21} &  R_2   & \dots & G_{2m}\\
\vdots & \vdots & \ddots &  \vdots \\
G_{m1} & G_{m2} & \dots & R_{m}  
\end{pmatrix}\,, \quad  P_1 \equiv \text{diag}\Big{(}\frac{1}{d_{11}}, \ldots
, \frac{1}{d_{1m}} \Big{)} \,, \quad P_2 \equiv \text{diag}\Big{(}
\frac{d_{21}}{d_{11}}, \ldots , \frac{d_{2m}}{d_{1m}} \Big{)}\,.
\end{split}
\end{equation}
Here $G_{ji} = G_{ij} \equiv G(\pmb{x}_j;\pmb{x}_i)$ for $i \neq j$,
and $R_{j} \equiv R(\pmb{x}_j)$ for $j = 1, \ldots,m$, are obtained
from the solution to \eqref{R_Green}.

Overall, the asymptotic steady-state solution is determined in terms
of the solution $\pmb{\mathcal{A}}$ and $\pmb{u}_j$ for
$j=1,\ldots,m$, to the $n\times m$ dimensional nonlinear algebraic
system (NAS) given by \eqref{Intra} and \eqref{Linear_System}. This system
applies to arbitrary local reaction kinetics $\pmb{F}_j$ and
permeabilities parameters $d_{1j}>0$ and $d_{2j}>0$ for
$j=1,\ldots,m$. When the kinetics and permeability parameters are
identical for all the cells, the NAS reduces to the system given in
equations $(2.4)$ and $(2.9)$ of \cite{jia2016}.

Depending on the specific reaction kinetics assumed, the solution
structure to \eqref{Intra} and \eqref{Linear_System} as parameters are
varied can involve solution multiplicity, saddle-node points, and
other bifurcations. However, to illustrate our asymptotic theory we
will focus on the two-component Sel'kov reaction kinetics for which
\eqref{Intra} and \eqref{Linear_System} has a unique solution.

\subsection{Linear stability analysis} \label{LinearAnalysisDefect}

In the previous subsection, we characterized steady-state solutions of
the coupled model \eqref{DimLess_bulk} using strong perturbation
theory.  Suppose that the NAS \eqref{Intra} and \eqref{Linear_System}
has a solution for a given set of parameters. This then yields an
approximation to the steady-state solution solution $U_e(\pmb{x})$ and
$\pmb{u}_{ej}$, for $j=1,\ldots,m$, to \eqref{DimLess_bulk}. To
determine the linear stability of this steady-state we begin by
introducing the perturbation
\begin{equation}\label{stab:pert}
\begin{split}
 U(\pmb{x},t) = U_{e}(\pmb{x}) + e^{\lambda t}\xi (\pmb{x})\qquad \text{and} \qquad
 \pmb{u}_j(t) &= \pmb{u}_{ej} +  e^{\lambda t} \pmb{\phi}_j\,, \quad j=1,\ldots,m
 \,.
\end{split}
\end{equation}
where $\lambda $ is the eigenvalue of the linearization, and
$\xi (\pmb{x})$ and
$ \pmb{\phi}_j \equiv ( \phi_j^1,\ldots,\phi_j^n)^T$ are the
corresponding eigenfunctions in the bulk region and in the
$j^{\text{th} }$ cell, respectively.  Upon substituting this
perturbation into the PDE-ODE model \eqref{DimLess_bulk}, in the bulk
region we obtain the linearized problem
\begin{subequations}\label{Linear_bulk}
\begin{align}
  \tau \lambda \xi =&\,  D \, \Delta \xi -  \, \xi\,, \quad \pmb{x} \in
 \Omega \setminus \cup_{j=1}^{m}\,\Omega_{\varepsilon_j}\,;\label{Linear_bulkA}\\
  \partial_n \, \xi =\, 0, \quad \pmb{x} \in \partial \Omega\,;& \qquad
   \varepsilon  D\, \partial_{nj} \xi  = d_{1j} \, \xi - d_{2j} \, \phi_j^1\,,
  \quad \pmb{x} \in \partial \Omega_{\varepsilon_j}\,,\qquad j = 1, \ldots, m\,,
                            \label{Linear_bulkB}
\end{align}
which is coupled to the linearized intracellular dynamics of the
$j^{\text{th}}$ cell given in terms of
$\pmb{\phi}_j\equiv(\phi_j^1,\ldots,\phi_j^n)^T$, by
\begin{align}\label{Linear_Intra}
  \lambda \pmb{\phi}_j & =  J_j\, \pmb{\phi}_j   + \frac{\pmb{e}_1}
 {\varepsilon \tau} \int_{\partial \Omega_{\varepsilon_j}}( d_{1j} \, \xi - d_{2j} \,
                \phi_j^1 ) \,\text{d}s\,,\qquad j = 1, \ldots, m\,.
\end{align}
\end{subequations}
Here $J_j \equiv J(\pmb{u}_{ej}) $ is the Jacobian matrix of the local
kinetics $\pmb{F}_j$ evaluated at the steady-state $\pmb{u}_{ej}$.

Next, we use strong localized perturbation theory to analyze the
eigenvalue problem \eqref{Linear_bulk} in the limit
$\varepsilon \to 0$. This analysis leads to a limiting globally
coupled eigenvalue problem (GCEP) for $\lambda$ in the form of a
nonlinear matrix eigenvalue problem. This GCEP will be used to
investigate instabilities of the steady-state solution for the PDE-ODE
system \eqref{DimLess_bulk}.

To derive this GCEP, we first construct an inner region in an
$\mathcal{O}(\varepsilon)$ neighborhood of the $j^{\text{th}}$ cell by
introducing the local variables
$ \pmb{y} = \varepsilon^{-1}(\pmb{x} - \pmb{x}_j)$ and
$\xi_j(\pmb{x}) \equiv \xi_j(\pmb{x}_j + \varepsilon \pmb{y})$ with
$\rho = |\pmb{y}|$.  From \eqref{Linear_bulk}, we obtain for
$\varepsilon \to 0$ that
\begin{align}\label{leading_Inner}
 \Delta \, \xi_j  & = 0, \quad 1< \rho < \infty;\qquad
D\,\partial_{\rho }\, \xi_j = d_{1j} \xi_j - d_{2j} \phi_j^1, \quad \text{on} \quad \rho=1,
\end{align}
in the $j^{\text{th}}$ inner region.  The radially symmetric solution to
\eqref{leading_Inner} is
\begin{equation}\label{leading_Inner_Sol}
\begin{split}
  \xi_j &= c_j \log\rho + \frac{1}{d_{1j}} \left( D\, c_j  + d_{2j} \phi_j^1
  \right)\,, \qquad j=1,\ldots,m\,, 
\end{split}
\end{equation}
where $\rho = |\pmb{y}|$ and $c_j$ for $ j=1,\ldots,m$ are constants to
be determined.  Upon substituting \eqref{leading_Inner_Sol} into the
linearized intracellular dynamics of the $j^{\text{th}}$ cell
\eqref{Linear_Intra}, we obtain a linear relation between
$\pmb{\phi}_j$ and $c_j$ given by
\begin{equation}\label{Jac_sys}
\begin{split}
  (J_j - \lambda I)\pmb{\phi}_j = -\frac{2\pi D}{\tau} \,c_j \pmb{e}_1 \,.
    \qquad j=1,\ldots,m \,.
\end{split}
\end{equation}

Next, by analyzing the outer solution in the bulk region, we will
derive another linear system, which will be coupled to
\eqref{Jac_sys}. These two systems will provide the GCEP that is
needed to study the linear stability of the steady-state solution.

To determine this additional linear system, we first match the far-field
behaviour of the inner solution \eqref{leading_Inner_Sol} to the outer
solution in order to obtain the singularity behaviour of the outer
solution as $\pmb{x} \to \pmb{x}_j$. This yields in the bulk region that
\begin{equation}\label{Kinear_OuterDelta} 
\begin{split}
  \Delta \xi - \varphi_{\lambda}^2  \,\xi& =0\,, \quad \pmb{x} \in
  \Omega \setminus
  \{\pmb{x}_1,\ldots,\pmb{x}_m  \} \,; \qquad \partial_n \xi=0 \,,\quad
  \pmb{x} \in \partial \Omega \,; \\
  \xi \sim  c_j \log |\pmb{x} - \pmb{x}_j| & + \frac{c_j}{\nu} +
  \frac{1}{d_{1j}}(Dc_j + d_{2j} \phi_j^1)\,, \quad \text{as} \quad \pmb{x} \to
  \pmb{x}_j\,, \qquad j =1,\ldots,m\,,
\end{split}
\end{equation}
where $\nu \equiv {-1/\log\varepsilon}$ and
$\varphi_{\lambda} = \sqrt{(1 + \tau \lambda)/D}$. The solution to
\eqref{Kinear_OuterDelta} is represented as
\begin{equation}\label{Linear_OuterSol}
\begin{split}
\xi(\pmb{x}) = - 2\pi \sum_{i=1}^{m} c_i\, G_{\lambda}(\pmb{x};\pmb{x}_i) \,,
\end{split}
\end{equation}
where the eigenvalue-dependent reduced-wave Green's function
$G_\lambda(\pmb{x};\pmb{x}_j)$ satisfies
\begin{subequations}\label{EigGreen}
\begin{align}
  \Delta G_{\lambda} - & \varphi_{\lambda}^2  G_{\lambda} = -
  \delta(\pmb{x} - \pmb{x}_j)\,, \quad \pmb{x} \in \Omega\,;\qquad
   \partial_n G_{\lambda} =0 \,,\quad \pmb{x} \in \partial \Omega\,;
                         \label{EigGreenA} \\
  G_{\lambda}(\pmb{x};\pmb{x}_j) & \sim - \frac{1}{2\pi} \log|\pmb{x} -
 \pmb{x}_j| + R_{\lambda}(\pmb{x}_j) + o(1)\,,\quad \text{as} \quad
    \pmb{x} \to \pmb{x}_j\,.\label{EigGreenB}
\end{align}
\end{subequations}
Here $R_{\lambda j} \equiv R_{\lambda}(\pmb{x}_j)$ is the regular part
of $G_{\lambda}(\pmb{x};\pmb{x}_j)$ at $\pmb{x} = \pmb{x}_j$. In
\eqref{Linear_OuterSol} we have specified the principal branch of
$\varphi_{\lambda}$ to ensure that $G_{\lambda}$ is analytic in
$\mbox{Re}(\lambda)>0$ and that $G_{\lambda}$ decays far away
from the cells.  

By expanding \eqref{Linear_OuterSol} as $\pmb{x} \to \pmb{x}_j$, we
equate the non-singular terms of the resulting expression with those
specified in \eqref{Kinear_OuterDelta} for each $j=1,\ldots,m$.  This
yields a linear system for $\pmb{c}\equiv (c_1,\ldots,c_m)^T$ given in
matrix form by
\begin{equation}\label{OSystem2}
\begin{split}
  \big{(} I + 2\pi \nu \mathcal{G}_{\lambda} + \nu\,D P_1   \big{)}\, \pmb{c}=
  - \nu\,P_2\, \pmb{\phi}^1 \,,
\end{split}
\end{equation}
where $\pmb{\phi}^1 \equiv (\phi_1^1, \ldots, \phi_m^1)^T$ and
$\mathcal{G}_{\lambda}$ is the eigenvalue-dependent reduced-wave
Green's matrix whose entries are defined by
\begin{equation}\label{GreenMat}
  (\mathcal{G}_{\lambda})_{ij} = (\mathcal{G}_{\lambda})_{ji} \equiv
  G_{\lambda}(\pmb{x}_j,\pmb{x}_i) \quad \text{for} \quad i \neq j\,,
  \quad \text{and} \quad\ (\mathcal{G}_{\lambda})_{jj} =
  R_{\lambda j} \equiv R_\lambda(\pmb{x}_j) \,.
\end{equation}
In \eqref{OSystem2}, the diagonal matrices $P_1$ and $P_2$ are 
defined in  \eqref{GreensMatrix}.

Next, we will combine the algebraic system \eqref{OSystem2} with
\eqref{Jac_sys} in order to derive the GCEP for $\lambda$ and 
$\pmb{c} =(c_1,\ldots,c_m)^T$. We first use \eqref{Jac_sys} to determine
$\phi_j^1$ in terms of the constant $c_j$ as
$\phi_j^1 = 2\pi D\tau^{-1} \pmb{e}_1^T ( \lambda I - J_j )^{-1} \pmb{e}_1 \,
c_j$ for $j=1,\ldots,m$, provided that $\lambda$ is not an eigenvalue of $J_j$
for any $j=1,\ldots,m$. In matrix form this yields
\begin{equation}\label{phi_sys}
\begin{split}
 \pmb{\phi}^1 = \frac{2\pi D}{\tau }\mathcal{K} \pmb{c} \,,
\end{split}
\end{equation}
where $\mathcal{K} \equiv \mathcal{K}(\lambda)$ is an $m \times m$
diagonal matrix $\mathcal{K}\equiv \mbox{diag}\left( \mathit{K}_1,\ldots,
  \mathit{K}_m\right)$, whose entries are given by
\begin{subequations}\label{K_info}
\begin{equation}\label{K_mat_entry}
\begin{split}
  \mathit{K}_j =  \pmb{e}_1^T ( \lambda I - J_j)^{-1} \pmb{e}_1 =
  \frac{\pmb{e}_1^T N_j \pmb{e}_1 }{\det( \lambda I - J_j)} =
  \frac{ (N_{j})_{11} }{\det( \lambda I - J_j)} \,.
\end{split}
\end{equation}
Here $N_{j}$ is the $n \times n$ matrix of cofactors of
$( \lambda I - J_j)$, while $(N_{j})_{11}$ is its entry in the first
row and first column given by
\begin{equation}\label{Mj_11_matrix}
(N_{j})_{11} \equiv (N_{j}(\lambda))_{11}= \det
\begin{pmatrix}
  \lambda - \left. \frac{\partial F_j^2}{\partial u_2}\right|_{\pmb{u}=\pmb{u}_{e,j}}
  & \dots & \left. -
    \frac{\partial F_j^2}{\partial u_n}\right|_{\pmb{u}=\pmb{u}_{e,j}} \\
\vdots &  \ddots & \vdots \\
\left. - \frac{\partial F_j^n}{\partial u_2}\right|_{\pmb{u}=\pmb{u}_{e,j}}
& \dots &  \lambda - \left.
  \frac{\partial F_j^n}{\partial u_n}\right|_{\pmb{u}=\pmb{u}_{e,j}} 
\end{pmatrix}\,.
\end{equation}
\end{subequations}
The functions $F_j^2, \ldots, F_j^n$ are the components of the local reaction
kinetics $\pmb{F_j} \equiv (F_j^1, \ldots, F_j^n)^T$ of the $j^{\text{th}}$ cell.

Upon substituting \eqref{phi_sys} into \eqref{OSystem2}, we obtain the
$m$-dimensional homogeneous algebraic system
\begin{subequations} \label{full_gcep}
\begin{equation}\label{Global_System}
\begin{split}
  \mathcal{M}(\lambda) \pmb{c} &= \pmb{0}\,, \qquad \mbox{where} \qquad
 \mathcal{M}(\lambda) \equiv I + 2\pi \nu \mathcal{G}_{\lambda}  + \nu\,D\,P_1 +
  \frac{2\pi \nu D}{\tau }  P_2 \mathcal{K} \,,
\end{split}
\end{equation}
where $\nu\equiv {-1/\log\varepsilon}$.  The homogeneous system
\eqref{Global_System}, where ${\mathcal M}$ is a symmetric but
non-Hermitian matrix, is referred to as the globally coupled
eigenvalue problem (GCEP). The GCEP is a nonlinear matrix eigenvalue
problem for $\lambda$, and it has a nontrivial solution
$\pmb{c}\neq \pmb{0}$ if and only $\lambda$ satisfies
$\det \mathcal{M}(\lambda) = 0$. We label the set
$\Lambda({\mathcal M})$ as the union of all such roots, i.e.
\begin{equation}\label{TransDent}
\begin{split}
  \Lambda({\mathcal M}) \equiv \lbrace{ \lambda \, \vert \,
    \det \mathcal{M}(\lambda)=0 \rbrace} \,.
\end{split}
\end{equation}
\end{subequations}
The parameters in the GCEP \eqref{Global_System} are the bulk
diffusivity $D$, the reaction-timescale $\tau$ and the permeabilities
$d_{1j}$ and $d_{2j}$, for $j=1,\ldots,m$, which are encoded in the
matrices $P_1$ and $P_2$ given in \eqref{GreensMatrix}. Moreover, in
\eqref{Global_System}, the effect of the spatial configuration of the
centers $\{ \pmb{x}_1,\ldots, \pmb{x}_m \} \in \Omega$ of the cells
and the domain shape $\Omega$ arises from both the
eigenvalue-dependent reduced wave Green's interaction matrix
$\mathcal{G}_{\lambda}$ and the steady-state solution, which
determines ${\mathcal K}$ in \eqref{K_info}.

A recent survey of nonlinear matrix eigenvalue problems and available
solution strategies for certain classes of matrices is given in
\cite{guttel} and \cite{betcke}. A range of applications of such
problems, but in simpler contexts where ${\mathcal M}(\lambda)$ is
either a polynomial or rational function of $\lambda$, are discussed
in \cite{betcke2}.

Any element $\lambda\in \Lambda({\mathcal M})$ satisfying
$\mbox{Re}(\lambda)>0$ provides an approximation, valid as
$\varepsilon\to 0$, for an unstable discrete eigenvalue of the
linearized system \eqref{Linear_bulk}.  This leads to the following
criterion regarding the linear stability of the steady-state solution.

\vspace*{0.2cm}
\begin{prop}\label{prop:stab}  For $\varepsilon\to 0$, a steady-state
  solution to \eqref{DimLess_bulk} is linearly stable when there are
  no roots to $\mbox{det}{\mathcal M}(\lambda)=0$ in
  $\mbox{Re}(\lambda)>0$, i.e. for all
  $\lambda\in \Lambda({\mathcal M})$ we have
  $\mbox{Re}(\lambda)<0$. Moreover, if ${\mathcal A}_e$ and
  $\pmb{u}_{ej}$ for $j=1,\ldots,m$ is a non-degenerate solution to
  the NAS \eqref{Intra} and \eqref{Linear_System}, for which $J_j$ is
  non-singular, then $\lambda=0$ is not a root of
  $\mbox{det}{\mathcal M}(\lambda)=0$.
\end{prop}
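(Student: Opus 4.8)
The plan is to prove the two assertions separately, both by direct analysis of the GCEP matrix $\mathcal{M}(\lambda)$ defined in \eqref{Global_System}. The first claim---that the steady-state is linearly stable when $\det\mathcal{M}(\lambda)=0$ has no root in $\mbox{Re}(\lambda)>0$---is essentially a restatement of the correspondence, established in this subsection via matched asymptotics, between unstable discrete eigenvalues of the linearized PDE-ODE system \eqref{Linear_bulk} and the roots of $\det\mathcal{M}(\lambda)=0$. So the argument here is just to trace through the construction: given an eigenvalue $\lambda$ of \eqref{Linear_bulk} with $\mbox{Re}(\lambda)>0$, the inner problem forces $\xi_j$ to have the form \eqref{leading_Inner_Sol}, the intracellular equation forces \eqref{Jac_sys}, matching to the outer solution forces \eqref{OSystem2}, and eliminating $\pmb{\phi}^1$ via \eqref{phi_sys} forces $\mathcal{M}(\lambda)\pmb{c}=\pmb{0}$ with $\pmb{c}\neq\pmb{0}$ (if $\pmb{c}=\pmb{0}$ then $\xi\equiv 0$ in the outer region and $\pmb{\phi}_j=\pmb{0}$ by \eqref{Jac_sys}, so the eigenfunction is trivial). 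Hence $\lambda\in\Lambda(\mathcal{M})$. The contrapositive gives linear stability. I would state this as ``by construction'' and note that the principal-branch choice for $\varphi_\lambda$ in \eqref{Linear_OuterSol} guarantees $\mathcal{G}_\lambda$, and hence $\mathcal{M}(\lambda)$, is analytic in $\mbox{Re}(\lambda)>0$, so the root-counting is well-posed there.

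For the second claim, I would evaluate $\mathcal{M}(0)$ directly and show $\det\mathcal{M}(0)\neq 0$ by relating it to the Jacobian of the NAS \eqref{Intra}, \eqref{Linear_System} at the non-degenerate steady-state. Setting $\lambda=0$ in \eqref{K_mat_entry} gives $K_j(0) = \pmb{e}_1^T(-J_j)^{-1}\pmb{e}_1 = -(N_j(0))_{11}/\det J_j$, which is well-defined precisely because $J_j$ is assumed non-singular. The key observation is that $\mathcal{M}(0)$ is, up to the nonzero diagonal factor encoded in $P_2\mathcal{K}(0)$ versus the $\pmb{F}_j$-linearization, exactly the Schur complement obtained by eliminating the intracellular perturbations $\pmb{\phi}_j$ from the full linearization of the NAS about $(\pmb{\mathcal{A}}_e,\pmb{u}_{ej})$. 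Concretely: differentiating \eqref{Intra} gives $J_j\,\delta\pmb{u}_j + (2\pi D/\tau)\,\delta A_j\,\pmb{e}_1 = \pmb{0}$, so $\delta u_j^1 = -(2\pi D/\tau)(N_j(0))_{11}/\det J_j\,\delta A_j = (2\pi D/\tau) K_j(0)\,\delta A_j$; substituting this into the differentiated \eqref{Linear_System}, namely $(I + 2\pi\nu\mathcal{G} + \nu D P_1)\,\delta\pmb{\mathcal{A}} = -\nu P_2\,\delta\pmb{u}^1$, yields $\mathcal{M}(0)\,\delta\pmb{\mathcal{A}} = \pmb{0}$. Thus $\mathcal{M}(0)$ is (a block of) the Jacobian of the NAS, and non-degeneracy of the steady-state---which I would take to mean precisely that this Jacobian is invertible---gives $\det\mathcal{M}(0)\neq 0$. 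I would remark that $\mathcal{G}_0 = \mathcal{G}$ since $\varphi_0 = \sqrt{1/D} = \varphi$, so the Green's matrix appearing in $\mathcal{M}(0)$ is indeed the steady-state one from \eqref{GreensMatrix}.

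The main obstacle is making the notion of ``non-degenerate solution to the NAS'' precise and matching it to invertibility of the reduced (Schur-complement) matrix rather than the full $(n+1)m\times(n+1)m$ Jacobian. The clean way is to observe that eliminating $\delta\pmb{u}_j$ block-diagonally (legitimate since each $J_j$ is non-singular) reduces the full Jacobian's determinant to $\det\mathcal{M}(0)$ times $\prod_j \det J_j$ times constants, so the full Jacobian is invertible if and only if $\mathcal{M}(0)$ is. I would phrase ``non-degenerate'' as ``the Jacobian of the NAS is non-singular'' and then this equivalence closes the argument immediately. A secondary technical point to state carefully is that all the $K_j(0)$ are finite, which is exactly the hypothesis that each $J_j$ is non-singular; if any $\det J_j = 0$ the entry $K_j$ has a pole at $\lambda=0$ and the claim can genuinely fail, so this hypothesis is not cosmetic.
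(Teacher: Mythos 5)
Your proposal is correct and follows essentially the same route as the paper: the first claim is handled as a restatement of the matched-asymptotics correspondence between discrete eigenvalues of \eqref{Linear_bulk} and roots of $\det\mathcal{M}(\lambda)=0$, and the second by linearizing the NAS \eqref{Intra}, \eqref{Linear_System}, eliminating the intracellular perturbations via $J_j^{-1}$, and identifying the resulting matrix with $\mathcal{M}(0)$. Your explicit Schur-complement remark tying non-degeneracy of the full NAS Jacobian to invertibility of $\mathcal{M}(0)$ (given each $\det J_j\neq 0$) is a slightly more careful rendering of the step the paper states as ``we conclude that $\det\mathcal{J}\neq 0$ owing to non-degeneracy,'' but it is the same argument.
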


\begin{proof} For $\varepsilon\to 0$, any discrete eigenvalue
  $\lambda$ of the linearization \eqref{stab:pert} corresponds to a
  non-trivial solution to \eqref{Linear_bulk}. Since, for
  $\varepsilon\to 0$, these discrete eigenvalues comprise the set
  $\Lambda({\mathcal M})$ in \eqref{TransDent}, the steady-state is
  linearly stable if all $\lambda\in \Lambda({\mathcal M})$ satisfy
  $\mbox{Re}(\lambda)<0$. Next, suppose that ${\mathcal A}_e$ and
  $\pmb{u}_{ej}$ for $j=1,\ldots,m$ is a non-degenerate solution to
  the NAS \eqref{Intra} and \eqref{Linear_System}.  Introducing the
  perturbation ${\mathcal A}={\mathcal A}_e + \pmb{\psi}$ and
  $\pmb{u}_{j}=\pmb{u}_{ej}+\pmb{v}_j$, where $|\pmb{v}_j|\ll 1$ and
  $|\pmb{\psi}|\ll 1$, we linearize \eqref{Intra} and
  \eqref{Linear_System} to obtain
  \begin{equation}\label{noz:one}
 \big{(} I + 2\pi \nu \mathcal{G} + \nu\,D P_1   \big{)}\, \pmb{\psi}=
 - \nu\,P_2\, \pmb{v}^1 \,, \qquad  J_j \pmb{v}_j = -\frac{2\pi D}{\tau} \,
 \psi_j \pmb{e}_1 \,. \qquad j=1,\ldots,m \,,
\end{equation}
where $\pmb{v}^1 \equiv (v_1^1, \ldots, v_m^1)^T$,
$\pmb{\psi}= (\psi_1,\ldots,\psi_m)^T$ and
$\pmb{v}_j=(v_{j}^1,\ldots,v_j^n)^T$. Assuming that $J_j$ is invertible,
\eqref{noz:one} yields that
\begin{equation}\label{noz:two}
\mathcal{J} \pmb{\psi} = \pmb{0}\,, \quad \mbox{where} \quad
    \mathcal{J} \equiv I + 2\pi \nu \mathcal{G}  + \nu\,D\,P_1 +
    \frac{2\pi \nu D}{\tau }  P_2 \mathcal{K}_0  \quad
    \mbox{and} \quad {\mathcal K}_0 \equiv -\mbox{diag}
    \left(\pmb{e}_1^T J_1^{-1} \pmb{e}_1,\ldots,\pmb{e}_1^T J_m^{-1}
      \pmb{e}_1\right) \,.
\end{equation}
We conclude that $\det{\mathcal{J}}\neq 0$ owing to the fact that
${\mathcal A}_e$ and $\pmb{u}_{ej}$ for $j=1,\ldots,m$ is assumed to
be a non-degenerate solution of the NAS \eqref{Intra} and
\eqref{Linear_System}. Finally, since it is readily verified that
$\mathcal{J}={\mathcal M}(0)$, where ${\mathcal M}(\lambda)$ is the
GCEP matrix in \eqref{Global_System}, we conclude that $\lambda=0$ is
not a root of \eqref{TransDent}.
\end{proof}

Proposition \ref{prop:stab} implies that branches of non-degenerate
solutions to the NAS \eqref{Intra} and \eqref{Linear_System} obtained
as a parameter is varied cannot lose stability through a
zero-eigenvalue crossing of the GCEP \eqref{Global_System}. This
simple observation is the motivation for analzying whether stability
can be lost through Hopf bifurcations associated with the
linearization.

In terms of the eigenvectors $\pmb{c}$ and eigenvalues
$\lambda\in \Lambda({\mathcal M})$ of the GCEP \eqref{full_gcep}, we
obtain from \eqref{stab:pert}, \eqref{Linear_OuterSol} and
\eqref{Jac_sys} that the linearization around the bulk and
intracellular steady-state solutions is, for $\varepsilon\to 0$, given
by the superposition
\begin{equation}\label{stabform:c}
  U(\pmb{x},t) \sim U_{e} - \sum_{\lambda \in \Lambda({\mathcal M})}
  e^{\lambda t}\left(\sum_{j=1}^{m} c_j G_{\lambda}(\pmb{x};
  \pmb{x}_j)\right) \,; \qquad
  \pmb{u}_j(t) \sim \pmb{u}_{ej} +  \frac{2\pi D}{\tau}
  \sum_{\lambda\in\Lambda({\mathcal M})} e^{\lambda t} c_j
  \left(\lambda I - J_j\right)^{-1} \pmb{e}_1\,, \quad j=1,\ldots,m
  \,,
\end{equation}
where each $\pmb{c}=(c_1,\ldots,c_m)^T$ depends on the particular
eigenvalue $\lambda$.  To relate the diffusive flux into the $j$-th
cell to the components of $\pmb{c}$ we use the the inner solutions
\eqref{Inner_Solution} and \eqref{leading_Inner_Sol}. To determine the
effect on the intracellular component that can be transported across
the membrane we calculate $\pmb{e}_1^T\pmb{u}_j\equiv u_j^{1}$ in
\eqref{stabform:c}. This yields that
\begin{equation}\label{nstabform:c}
  D \partial_{\rho} U \vert_{\rho=1} \sim D \left(A_j +
  \sum_{\lambda\in\Lambda({\mathcal M})} c_j  e^{\lambda t}\right) \,, 
  \qquad  u_{j}^{1} \sim u_{ej}^{1} + \frac{2\pi D}{\tau}
  \sum_{\lambda\in\Lambda({\mathcal M})} \left( {\mathcal K}\pmb{c}\right)_j
  e^{\lambda t} \,, \qquad j=1,\ldots,m \,,
\end{equation}
where
${\mathcal K}\equiv \mbox{diag}\left( \mathit{K}_1,\ldots,
  \mathit{K}_m\right)$, with $\mathit{K}_j$ defined in \eqref{K_info}.
As evident from \eqref{nstabform:c}, and discussed for various
examples in \S \ref{Subsec:SelKov_Exaample} and \S
\ref{subsec:LargeCellExample}, the modulus $|({\mathcal K}\pmb{c})_j|$
and argument $\mbox{arg}({\mathcal K}\pmb{c})_j$ of the components of
a complex-valued ${\mathcal K}\pmb{c}$, resulting from 
pure imaginary eigenvalues with $\mbox{Re}(\lambda)=0$ and
$\mbox{Im}(\lambda)\neq 0$, determines the relative amplitudes and
phase differences of small scale intracellular oscillations near a
Hopf bifurcation of the steady-state solution. 

Our linear stability theory for steady-state solutions of PDE-ODE
model \eqref{DimLess_bulk} can be applied to any configuration of
cells in an arbitrary 2-D bounded domain and for arbitrary local
reaction kinetics. However, in our illustrations of the theory below
in \cref{SpecConfigSec} and \cref{sec:LargePopulation}, we will
consider the two-component Sel'kov model, which is used in simple
models of glycolysis (cf.~\cite{nandy1998}, \cite{Selkov}).  From
\eqref{DimLess_Intra}, for an isolated cell with no influx from the
bulk, the intracellular dynamics within the $j^{\text{th}}$ cell that
accounts for efflux from the cell boundary is given by
${\text{d} \pmb{u}_j/\text{d}t} = \pmb{F}_j \left( \pmb{u}_j \right)-
{2\pi d_{2j} u_{j}^{1}\pmb{e}_1/\tau}$, where $\pmb{u}_j=(u_{j}^{1},u_{j}^{2})^T$
and the Sel'kov kinetics $\pmb{F}_j(v,w)=(F_{j1}(v,w),F_{j2}(v,w))^T$ are
defined by
\begin{equation}\label{isolated:selkov:ode}
\begin{split}
  F_{j1}(v,w) = \alpha_j w+wv^2-v\,, \qquad F_{j2}(v,w) =\zeta_j
  \left[\mu_j-\left(\alpha_j w+wv^2\right)\right] \,.
\end{split}
\end{equation}
The steady-state for this isolated cell is given by
$u_{j}^{1}={\mu_j/\chi_j}$ and
$u_{j}^{2} = {\mu_j/\left(\alpha_j + (u_{j}^{1})^2\right)}$, where
$\chi_j\equiv 1 + {2\pi d_{2j}/\tau}$. The
determinant and trace of the Jacobian $J_{ej}$ for this isolated cell
with boundary efflux is
\begin{equation}\label{isolated:selkov:jac}
  \mbox{det}(J_{ej})= \zeta_j \chi_j
  \left(\alpha_j + \frac{\mu_j^2}{\chi_j^2}\right) \,,
  \qquad \mbox{tr}(J_{ej}) = \frac{2 \mu_j^2}{\chi_j}
  \left(\alpha_j + \frac{\mu_j^2}{\chi_j^2}\right)^{-1} -
  \chi_j - \zeta_j\left(\alpha_j + \frac{\mu_j^2}{\chi_j^2}\right)\,.
\end{equation}
Since $\mbox{det}(J_{ej})>0$, the steady-state for this isolated cell
is linearly stable only if $\mbox{tr}(J_{ej})<0$.  We will choose
Sel'kov kinetic parameters $\alpha_j$, $\mu_j$ and $\zeta_j$ so that
an isolated cell with zero boundary efflux (i.e. $d_{2j}=0$) is
linearly stable, but with parameters rather close to the stability
threshold.  A set of such parameters is shown in
Fig.~\ref{fig:isolated:trace} where we verify that
$\mbox{tr}(J_{ej})<0$ on $0.7<\alpha_j<1.0$ when $\mu_j=2$ and
$\zeta_j=0.15$.  The Hopf bifurcation (HB) boundary in the $\alpha_j$
versus $\mu_j$ parameter plane, as obtained by setting
$\text{tr}(J_{ej})=0$ is
\begin{equation}\label{selkov:boundary}
  \alpha_j = -\frac{\mu_j^2}{\chi_j^2} +\frac{1}{2\zeta_j}
  \left[ -\chi_j + \sqrt{\chi_j^2+\frac{8\zeta_j\mu_j^2}{\chi_j}}\right]\,,
  \qquad \mbox{where} \quad \chi_j \equiv 1 + \frac{2\pi d_{2j}}{\tau} \,.
\end{equation}
For an isolated cell, a simple application of the Poincare-Bendixson
theorem shows that when the steady-state is unstable the cell will
have limit cycle oscillations. When there is no boundary efflux,
i.e.~$d_{2j}=0$, this parameter range of periodic solutions is given
by the green-shaded region in Fig.~\ref{fig:isolated:plane}. However,
no time-periodic oscillations with Sel'kov kinetics are possible when
the steady-state is linearly stable. In Fig.~\ref{fig:isolated:efflux}
we show how the HB boundary for an isolated cell depends on the
boundary efflux parameter $d_{2j}$. As expected, for the fixed value
$\mu_j=2$, the interval in $\alpha_j$ where oscillations are possible
is decreased when there is an efflux out of the cell boundary. The
shifting of the HB boundaries to the right in
Fig.~\ref{fig:isolated:efflux} indicates that a greater rate $\mu_j$
of production of $u_{j}^2$ is needed to ensure oscillations when there
is a boundary efflux. The interval in $\mu_j$ where oscillations are
possible, at least for some range of $\alpha_j>0$, is
$0<\mu_j< {\chi_j^{3/2}/\sqrt{\zeta_j}}$.

For the baseline parameter set $\mu_j=2$ and $\zeta_j=0.15$, in
\cref{SpecConfigSec} and \cref{sec:LargePopulation} we will show that
the inter-cell coupling via the bulk diffusion field can be sufficient
to trigger an oscillatory instability in the cells through a Hopf
bifurcation.

\begin{figure}[!h]
  \centering
    \begin{subfigure}[b]{0.32\textwidth}
      \includegraphics[width=\textwidth,height=4.4cm]{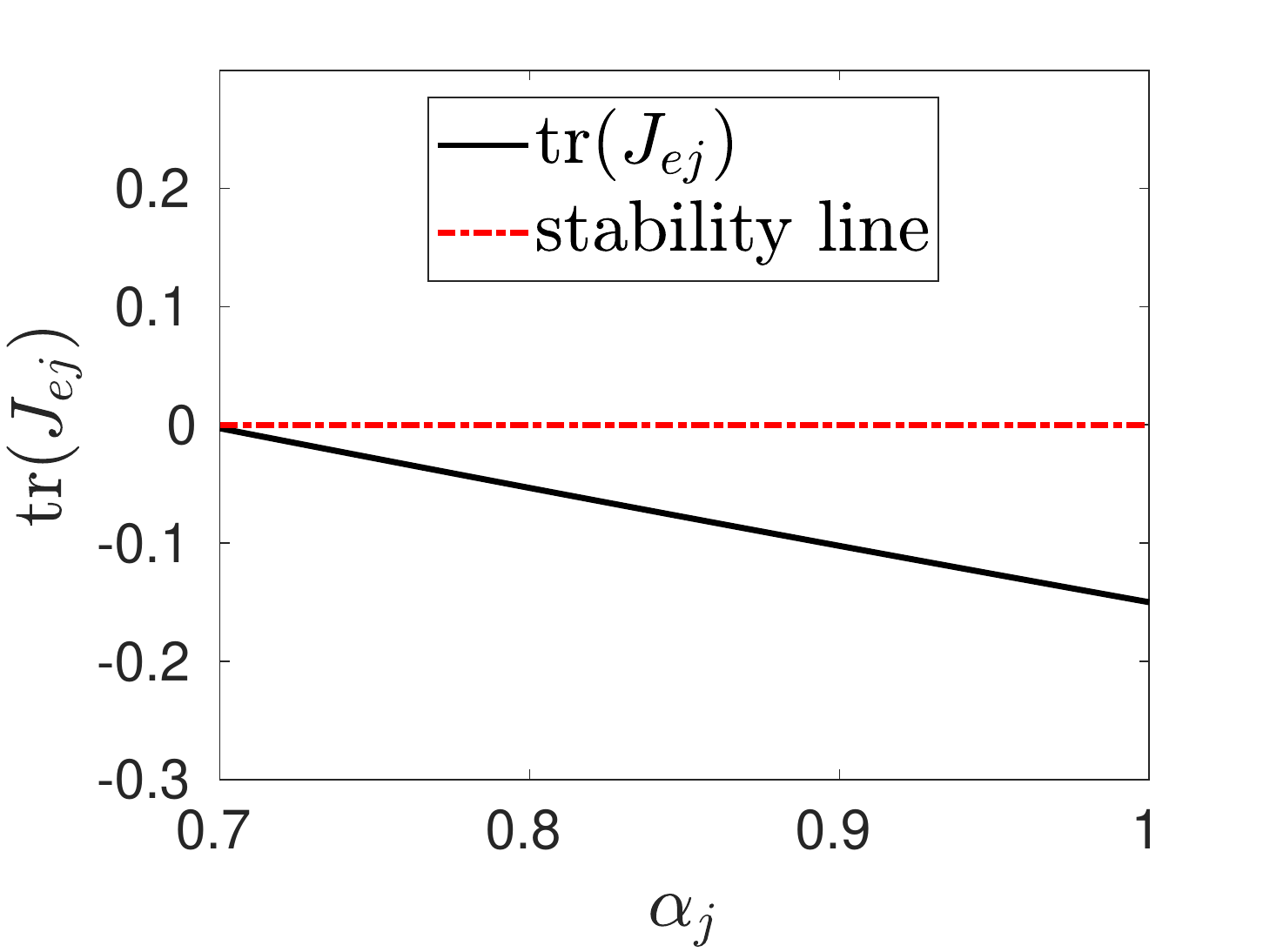}
        \caption{$\mbox{tr}(J_{ej})$ versus $\alpha_j$ for isolated cell} 
        \label{fig:isolated:trace}
    \end{subfigure}
    \begin{subfigure}[b]{0.32\textwidth}  
      \includegraphics[width=\textwidth,height=4.5cm]{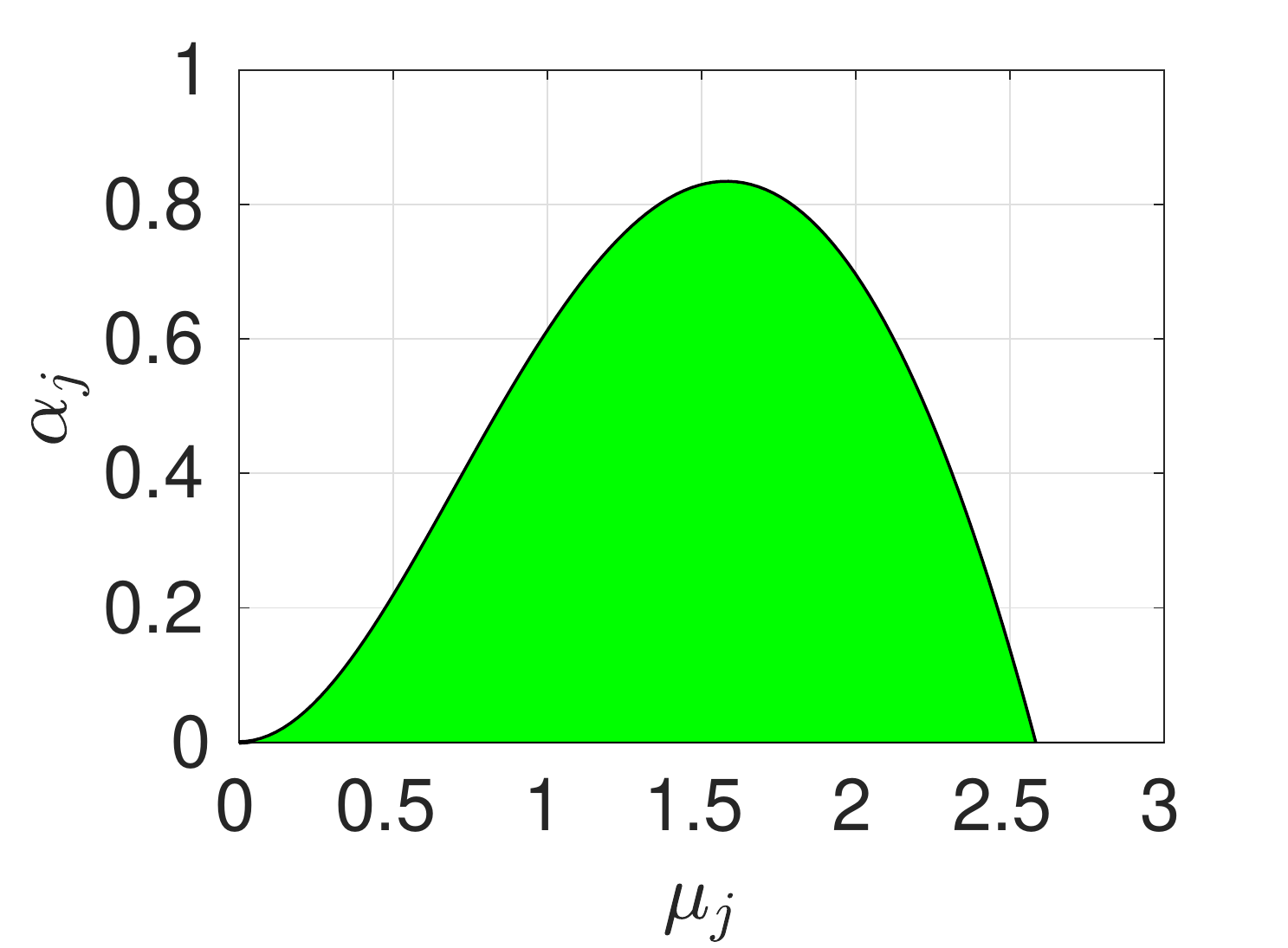}
        \caption{Instability region: isolated cell}
        \label{fig:isolated:plane}
      \end{subfigure}
    \begin{subfigure}[b]{0.32\textwidth}  
      \includegraphics[width=\textwidth,height=4.5cm]{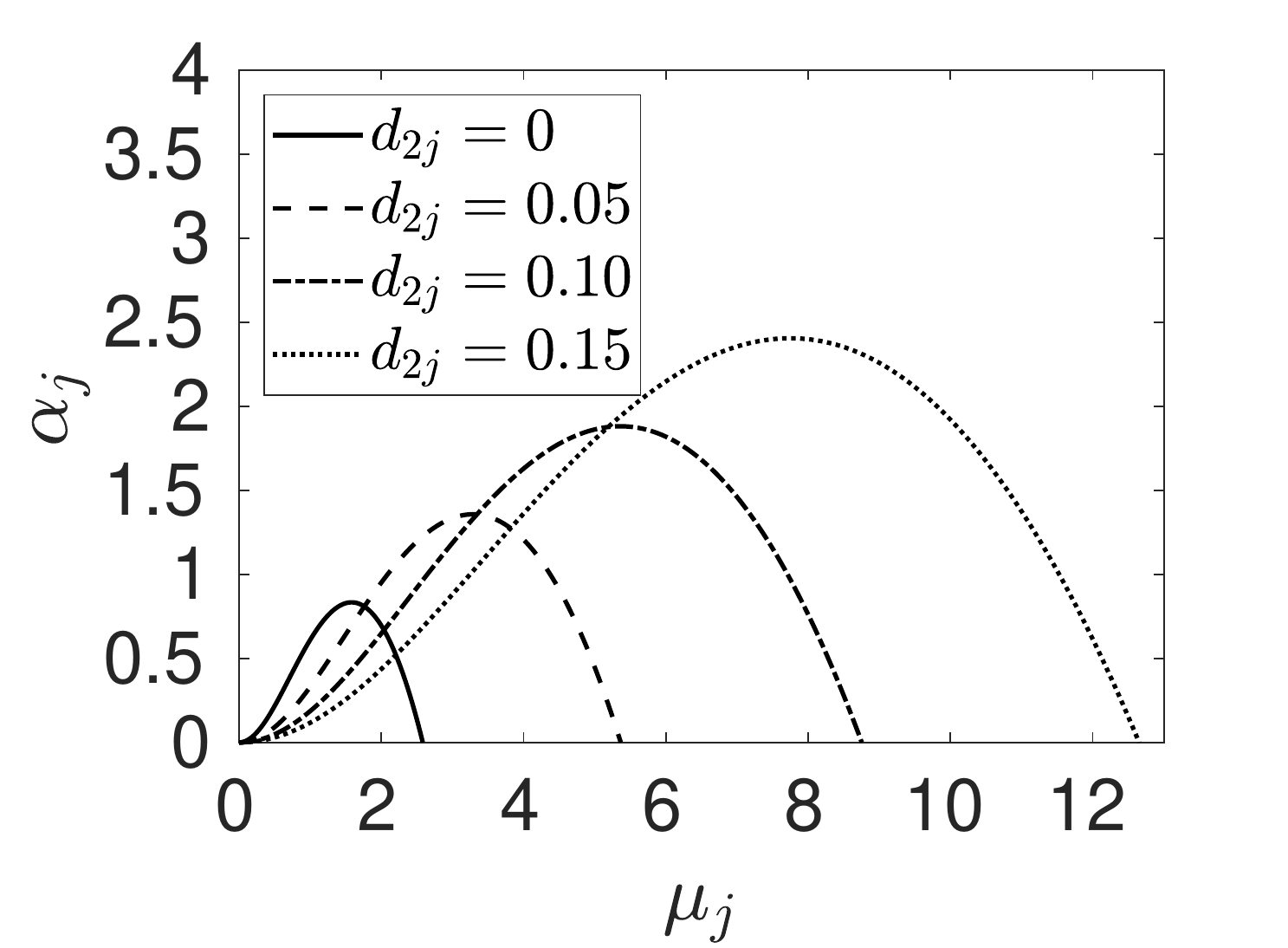}
        \caption{HB boundaries: isolated cell with efflux}
        \label{fig:isolated:efflux}
    \end{subfigure}
    \caption{Left panel: $\mbox{tr}(J_{ej})$, from
      \eqref{isolated:selkov:jac} versus $\alpha_j$, for the
      steady-state of the Sel'kov kinetics \eqref{isolated:selkov:ode}
      for an isolated cell, with $\mu_j=2$ and $\zeta_j=0.15$. This
      steady-state is linearly stable but the parameters are close to
      the stability threshold. Middle panel: Green-shaded region of
      instability where $\mbox{tr}(J_{ej})>0$ in the $\alpha_j$
      versus $\mu_j$ plane for the steady-state of an isolated cell
      with no boundary efflux and $\zeta_j=0.15$. Within this region,
      a time-periodic solution (limit cycle) occurs for an isolated
      cell. The HB boundary is given by \eqref{selkov:boundary} with
      $d_{2j}=0$. In the unshaded region the steady-state is linearly
      stable. Right panel: HB boundaries for an isolated cell (see
      \eqref{selkov:boundary}) with $\zeta_j=0.15$, $\tau=0.5$ and
      four boundary efflux parameters. A larger production rate
      $\mu_j$ is needed to support oscillations.}
\label{fig:selkov}
\end{figure}

For Sel'kov local reaction kinetics \eqref{isolated:selkov:ode} it is
readily shown that, with an arbitrary arrangement of cells, there is a
unique solution to the NAS \eqref{Intra} and \eqref{Linear_System}
given by
\begin{subequations}\label{ss:selkov}
\begin{equation}\label{ss:selkov_2}
  u_{ej}^{1}= \mu_j + \frac{2\pi D}{\tau} A_j \,, \qquad
  u_{ej}^{2}=\frac{\mu_j}{\alpha_j + \left( u_{ej}^{1}\right)^2} \,, \qquad
  j=1,\ldots,m \,,
\end{equation}
where ${\mathcal A}=(A_1,\ldots, A_m)^T$ satisfies the linear algebraic
system
\begin{equation}\label{ss:selkov_1}
\left(I + 2\pi \nu \mathcal{G}  + \nu\,D\,P_1 +
  \frac{2\pi \nu D}{\tau }  P_2 \right) {\mathcal A}= - \nu
P_2 \, \pmb{\mu} \,, \qquad \mbox{with} \qquad
 \pmb{\mu}\equiv (\mu_1,\ldots,\mu_m)^T \,.
\end{equation}
\end{subequations}
For $\nu\ll 1$ sufficiently small, the matrix in \eqref{ss:selkov_1}
is invertible, yielding a unique solution for ${\mathcal A}$.  For
Sel'kov kinetics we conclude that steady-state solutions of the
PDE-ODE model \eqref{DimLess_bulk} are always non-degenerate as
$\varepsilon\to 0$. As such, since by Proposition \ref{prop:stab}
stability cannot be lost via a zero-eigenvalue crossing, in
\cref{SpecConfigSec} and \cref{sec:LargePopulation} we will focus on
analyzing instabilities of the steady-state arising from Hopf
bifurcations.

\setcounter{equation}{0}
\setcounter{section}{2}
\section{A limiting ODE system with global coupling:
  \texorpdfstring{$D = \mathcal{O}(\nu^{-1})$}{Lg}}\label{largeD_ODE}

In this section, we use a singular perturbation approach to reduce the
dimensionless coupled PDE-ODE model \eqref{DimLess_bulk} to an ODE
system that is valid for the limiting regime
$D = \mathcal{O}(\nu^{-1}) \gg \mathcal{O}(1)$, where
$\nu \equiv {-1/\log\varepsilon}$ and $\varepsilon \ll 1$. This ODE
system depends weakly on the spatial configuration of the cells and on
the scaled diffusivity $D_0={\mathcal O}(1)$, defined by
$D={D_0/\nu}$.

Consider a collection of $m$ small cells centered at the points
$\pmb{x}_1, \ldots, \pmb{x}_m$ in a 2-D bounded domain $\Omega$. Define
$\Omega_p \equiv \cup_{j=1}^m \Omega_{\varepsilon_j}$ as the region
formed by the union of the cells, and the average bulk concentration 
$\overline{U} = \overline{U}(t)$ by
\begin{equation}\label{AveBulkConc}
  \overline{U} = \frac{1}{|\Omega\setminus\Omega_p|} \int_{\Omega\setminus\Omega_p}
  U \, \text{d}\pmb{x} \,.
\end{equation}
Our goal is to derive an ODE for
$\overline{U} \equiv \overline{U}(t;\nu)$, accurate to
${\mathcal O}(\nu)$, which is coupled to the intracellular dynamics of
the cells, as given in \eqref{DimLess_bulkb}. Upon multiplying
\eqref{DimLess_bulka} by $1/|\Omega\setminus\Omega_p|$ and using the
divergence theorem, we obtain
\begin{equation}\label{AveBulkConc_Diverg}
  \tau \overline{U}_t + \overline{U} =
  \frac{2 \pi}{|\Omega\setminus\Omega_p|} \sum_{j=1}^m \left( d_{2j} u_j^1  -
    \frac{ d_{1j} }{2 \pi \varepsilon} \int_{\partial \Omega_{\varepsilon_j}} U \,
    \text{d}s \right) \,,
\end{equation}
where we used $|\partial \Omega_{\varepsilon_j}| = 2 \pi \varepsilon$
for the perimeter of each cell. Since
$|\Omega_p| = m\pi \varepsilon^2$, we estimate
$|\Omega \setminus \Omega_{p}| = |\Omega|
-\mathcal{O}(\varepsilon^2)$, so that
$|\Omega \setminus \Omega_{p}| \to |\Omega| $ as $\varepsilon \to 0$
in \eqref{AveBulkConc_Diverg}. Next, by evaluating the integral in
\eqref{DimLess_bulkb}, and using
$|\partial \Omega_{\varepsilon_j}| = 2 \pi \varepsilon$, we obtain
\begin{equation}\label{Intra_Dynamic}
  \frac{\text{d} \pmb{u}_j}{\text{d} t}  = \pmb{F}_j( \pmb{u}_j) -
  \frac{2 \pi \pmb{e}_1 }{\,\tau }
  \left( d_{2j} u_j^1  - \frac{ d_{1j} }{2 \pi \varepsilon}
    \int_{\partial \Omega_{\varepsilon_j}} U \, \text{d}s \right)\,, \qquad
  j=1,\ldots,m \,.
\end{equation}
In \eqref{AveBulkConc_Diverg} and \eqref{Intra_Dynamic}, we must
estimate the bulk concentration $U \equiv U(\pmb{x},t)$ on the
$j^{\text{th}}$ cell boundary $\partial\Omega_{\varepsilon_j}$.

For $D = {\mathcal O}(\nu^{-1})$ we introduce the scaling
\begin{equation}\label{Eqn:Scaling}
  D = \frac{D_0}{\nu}, \quad \text{where} \quad \nu \equiv
  \frac{-1}{\log\varepsilon}\,, \quad  \varepsilon \ll 1 \quad
    \text{and} \quad D_0 = \mathcal{O}(1)\,.
\end{equation}
Upon substituting \eqref{Eqn:Scaling} into \eqref{DimLess_bulk}, we
obtain in the bulk region that
\begin{equation} \label{DimLess_Rescaled_bulk}
\begin{split}
  \tau U_t &= \frac{D_0}{\nu}\, \Delta U - \, U\,, \qquad t > 0\,,
  \quad \pmb{x} \in \Omega \setminus \cup_{j=1}^{m} \Omega_{\varepsilon_j}\,; \\
  \partial_n \, U =\, 0, \quad \pmb{x} \in \partial \Omega_{\varepsilon}\,; &
  \qquad \varepsilon \frac{D_0}{\nu} \, \partial_n U  =  d_{1j} \, U -
  d_{2j} u_j^1  \quad \text{on} \quad \partial \Omega_{\varepsilon_j}\,,
  \qquad j =1,\ldots, m\,.
\end{split}
\end{equation}
In the inner region at an $\mathcal{O}(\varepsilon)$ neighborhood of
the $j^{\text{th}}$ cell, we introduce the inner variables
$ \pmb{y} = \varepsilon^{-1} (\pmb{x} - \pmb{x}_j)$ and
$U(\pmb{x}) = V_j (\pmb{x}_j + \varepsilon \pmb{y}; \nu)$, with
$\rho = |\pmb{y}|$.  Writing \eqref{DimLess_Rescaled_bulk} in terms of
these inner variables, we obtain for $\varepsilon \to 0$ that
\begin{equation}\label{WM_dimlessInner}
\begin{split}
  \Delta V_j = 0,   \quad \rho \geq 1\,;  \qquad
  \, \partial_{\rho} \, V_j  =  \frac{\nu}{D_0} \Big( \, d_{1j} \, V_j -
  d_{2j} \, u_j^1 \Big)\,, \quad \text{on} \quad \rho = 1\,, \qquad
  j = 1, \ldots, m\,,
\end{split}
\end{equation}
which has the radially symmetric solution
\begin{equation}\label{WM_nextInnerSol}
\begin{split}
  V_{j} = \nu \,b_j \log\rho + U^0_{j}\,,\qquad \text{where} \qquad
  b_j \equiv \frac{1}{D_0} \Big( d_{1j}\, U^0_{j} - d_{2j}\,u_j^1\Big)\,,
  \qquad j=1,\ldots,m\,,
\end{split}
\end{equation}
where $V_j |_{\rho = 1} = U_j^0$ is to be determined. By writing
\eqref{WM_nextInnerSol} in the outer $\pmb{x}$ variable, and  using
$|\pmb{y}| = \varepsilon^{-1} |\pmb{x} - \pmb{x}_j|$, we obtain the
following asymptotic matching condition for the outer solution in the bulk
region:
\begin{equation}\label{WM_Match_outer}
\begin{split}
  U \sim \nu \, b_j \log|\pmb{x} - \pmb{x}_j| + \Big( \frac{d_{1j}}{D_0} +
  1 \Big) \, U^0_{j} - \frac{d_{2 j}}{D_0}\,u_j^1\,, \qquad \text{as} \qquad
  \pmb{x} \to \pmb{x}_j\,.
\end{split}
\end{equation}
Since $V_j(\rho) = U_j^0$ on $\rho = 1$ from \eqref{WM_nextInnerSol}, we
have $\int_{\partial \Omega_{\varepsilon j}} U \, \text{d}s = \varepsilon
\int_{0}^{2\pi} V_j |_{\rho = 1} \, \text{d}\theta = 2 \pi \varepsilon
\, U_j^0$. Then, from \eqref{AveBulkConc_Diverg} and
\eqref{Intra_Dynamic}, and recalling \eqref{WM_nextInnerSol} for
$b_j$, we obtain for $\varepsilon \to 0$ that
\begin{equation}\label{Intra_DynamInt}
  \begin{split}
    \tau \overline{U}_t  + \overline{U}  &= \frac{2 \pi}{|\Omega|}
    \sum_{j=1}^m \left( d_{2j} u_j^1  - d_{1j}  U_j^0 \right)=
    -\frac{2\pi D_0}{|\Omega|} \sum_{j=1}^{m} b_j \,, \\
  \frac{\text{d} \pmb{u}_j}{\text{d} t}  - \pmb{F}_j( \pmb{u}_j) &=
  - \frac{2 \pi \pmb{e}_1 }{\,\tau }  \left( d_{2j} u_j^1  - d_{1j}  U_j^0
  \right) = \frac{2\pi D_0}{\tau} \pmb{e}_1 b_j\,, \qquad j=1,\ldots,m \,.
 \end{split}
\end{equation}

To complete the derivation of the ODE system we must obtain an
algebraic system for $b_j$ for $j=1,\ldots,m$ from the analysis of the
outer solution.  From \eqref{DimLess_Rescaled_bulk} and
\eqref{WM_Match_outer}, and relating $U_j^0$ to $b_j$ using
\eqref{WM_nextInnerSol}, the outer problem for $U(\pmb{x},t)$ is
\begin{equation} \label{Compl_Outer_prob}
\begin{split}
  \tau U_t &= \frac{D_0}{\nu}\, \Delta U - \, U\,, \quad t > 0\,,
  \quad \pmb{x} \in \Omega \setminus \cup_{j=1}^{m} \Omega_{\varepsilon_j}\,;
  \qquad \partial_n \, U =\, 0\,, \quad \pmb{x}\in \Omega \,;\\
  \qquad U & \sim  \nu b_j \log|\pmb{x} - \pmb{x}_j| +
  b_j \left( 1 + \frac{D_0}{d_{1j}} \right) + \frac{d_{2j}}{d_{1j}} u_j^1\,,
  \qquad \text{as} \qquad \pmb{x} \to \pmb{x}_j \,.
\end{split}
\end{equation} 
We then expand $U(\pmb{x},t)$ as
\begin{equation}\label{WM_Uexpnd}
  U(\pmb{x},t) = \overline{U} + \frac{\nu}{D_0}\, U_1(\pmb{x},t) + \ldots\,,
  \qquad \mbox{where} \quad \int_{\Omega} U_1 \, d\pmb{x} = 0 \,.
\end{equation}
The zero average constraint on $U_1$ ensures that $\overline{U}$ is
the spatial average of $U$ to terms of order ${\mathcal O}(\nu)$. Upon
substituting \eqref{WM_Uexpnd} into \eqref{Compl_Outer_prob}, we
obtain in the sense of distributions that $U_1$ satisfies
\begin{equation}\label{WM_orderOne_Comp}
\begin{split}
  \Delta U_1 & = \tau \, \overline{U}_{t} + \overline{U} + 2\pi D_0
  \sum_{i=1}^{m} b_i \delta(\pmb{x}-\pmb{x}_i) \,, \quad
  \pmb{x} \in \Omega \,; \qquad \partial_n U_1 = 0\,, \quad
  \pmb{x} \in \partial \Omega\,; \\
  U_1  & \sim b_j D_0  \log|\pmb{x} - \pmb{x}_j| - \frac{D_0}{\nu}\overline{U}
  + \frac{D_0}{\nu }\left[ b_j \Big( 1+ \frac{D_0}{d_{1j}} \Big) +
    \frac{d_{2 j}}{d_{1j}} \,u_j^1\right]\,, \qquad \text{as} \qquad \pmb{x}
  \to \pmb{x}_j \,.
\end{split}
\end{equation}
The divergence theorem applied to \eqref{WM_orderOne_Comp} yields the
ODE given in \eqref{Intra_DynamInt} while the linear system for $b_j$,
for $j=1,\ldots,m$, is obtained from the constraints involved with
specifying the form of the regular part of the singularity behavior in
\eqref{WM_orderOne_Comp}.

The solution to \eqref{WM_orderOne_Comp}, with $\int_{\Omega}U_1 \, d\pmb{x}=0$
is written as
\begin{equation}\label{WM_U1_Sol}
U_1 = - 2\pi D_0 \sum_{i=1}^{m} b_i\,G_0(\pmb{x};\pmb{x}_i) \,,
\end{equation}
where $G_0(\pmb{x};\pmb{x}_j)$ is the unique Neumann Green's function
satisfying
\begin{subequations}\label{WM_NUEGREEN}
\begin{align}
  \Delta G_0 &= \frac{1}{|\Omega|} - \delta(\pmb{x} - \pmb{x}_j)\,, \quad
       \pmb{x} \in \Omega\,;\qquad \partial_n G_0  = 0\,, \quad
       \pmb{x} \in \partial \Omega\,;\label{WM_NUEGREENA}\\
  G_0(\pmb{x};\pmb{x}_j) & \sim  -\frac{1}{2\pi}
     \log|\pmb{x} - \pmb{x}_j| + R_{0j} + o(1)\,,
   \quad \text{as} \quad \pmb{x} \to \pmb{x}_j\,, \quad \text{and}
    \quad \int_{\Omega} \, G_0 \, \text{d}\pmb{x} =0 \,.\label{WM_NUEGREENB}
\end{align}
\end{subequations}
Here $R_{0j}$ is the regular part of $G_0$ at $\pmb{x} = \pmb{x}_j$.
By expanding \eqref{WM_U1_Sol} as $\pmb{x}\to\pmb{x}_j$, we enforce
that the nonsingular part of the resulting expression agrees with that
in \eqref{WM_orderOne_Comp}. This yields that 
\begin{equation}\label{ode:bj}
  b_j \left( 1 + \frac{D_0}{d_{1j}}\right) + 2\pi \nu \left(
    b_j R_{0j} + \sum_{i\neq j}^{m} b_i G_{0ji} \right) = \overline{U} -
  \frac{d_{2j}}{d_{1j}} u_{j}^{1} \,, \qquad j=1,\ldots, m\,,
\end{equation}
where $G_{0ji}=G_{0}(\pmb{x}_j;\pmb{x}_i)$. This linear system for
$b_1,\ldots,b_m$ is then coupled to the ODEs given in
\eqref{Intra_DynamInt}. Upon writing this ODE system in matrix form we
summarize the result as follows:

\vspace*{0.2cm}
\begin{prop}\label{prop:dyn}  Let $\varepsilon\to 0$ and assume that
  $D={D_{0}/\nu}\gg 1$ where $D_0={\mathcal O}(1)$ and
  $\nu={-1/\log\varepsilon}\ll 1$. Then, the PDE-ODE system
  \eqref{DimLess_bulk} reduces to the following $nm+1$ dimensional ODE
  system for
  $\overline{U}\approx |\Omega|^{-1}\int_{\Omega} U\,d\pmb{x}$ and the
  intracellular species:
\begin{subequations}\label{reducedODE}
\begin{equation}\label{reducedODE_1}
\begin{split}
  \frac{\text{d}}{\text{d}t} \overline{U}  = - \frac{1}{\tau} \overline{U}  -
  \frac{2 \pi D_0}{\tau |\Omega|} \pmb{e}^T \pmb{b}\,; \qquad 
  \frac{\text{d} \pmb{u}_j}{\text{d} t}  = \pmb{F}_j( \pmb{u}_j) +
  \frac{2 \pi D_0 \pmb{e}_1 }{\,\tau }  b_j\,, \qquad j=1,\ldots,m\,,
\end{split}
\end{equation}
where $\pmb{e}\equiv (1,\ldots,1)^T$, $\pmb{e}_1 \equiv (1,0,\ldots,0)^T$
and $\pmb{b} \equiv (b_1, \ldots, b_m)^T$. In \eqref{reducedODE_1},
$\pmb{b}$ is the solution to the linear system
\begin{equation}\label{reduced_bmat}
  \big( I + D_0 P_1 + 2 \pi \nu \, \mathcal{G}_0 \big) \pmb{b} =
  \overline{U}\,\pmb{e} -  P_2 \,\pmb{u}^1\,,
\end{equation}
\end{subequations}
where $\pmb{u}^1 \equiv (u^1_1, \ldots, u^1_m)^T$ and $P_1$ and $P_2$ are the
diagonal matrices defined in terms of the permeabilities by
\eqref{GreensMatrix}. In \eqref{reduced_bmat}, ${\mathcal G}_0$ is
the Neumann Green's matrix with matrix entries
\begin{equation}\label{WM_NeumannGmatrix}
  (\mathcal{G}_0)_{ij} = (\mathcal{G}_0)_{ji} = G_0(\pmb{x}_i,\pmb{x}_j)\,,
  \quad i \neq j \qquad \text{and} \quad (G_0)_{jj} = R_{0j}\,.
\end{equation}
For $\nu\ll 1$, this ODE system is equivalent up to ${\mathcal O}(\nu)$ terms
to
\begin{subequations}\label{reducedODE_approx}
\begin{equation}\label{reducedODE_approx_1}
\begin{split}
  \frac{\text{d}}{\text{d}t} \overline{U}  &= - \frac{1}{\tau} \overline{U}  -
\frac{2 \pi}{\tau |\Omega|} \left[\overline{U} \pmb{e}^T {\mathcal C} \pmb{e} -
    \pmb{e}^T {\mathcal C} P_2 \, \pmb{u}^1\right] + {\mathcal O}(\nu^2) \,,\\
  \frac{\text{d} \pmb{u}_j}{\text{d} t}  &= \pmb{F}_j( \pmb{u}_j) +
  \frac{2 \pi \pmb{e}_1 }{\,\tau }  \left[ \overline{U}\left({\mathcal C}
 \pmb{e}\right)_j - \left( {\mathcal C} P_2 \, \pmb{u}^1\right)_{j} \right] \,,
  \qquad j=1,\ldots,m\,,
\end{split}
\end{equation}
where the matrix ${\mathcal C}$ is defined in terms of ${\mathcal G}_0$ and
a diagonal matrix ${\mathcal P}$ by
\begin{equation}\label{reducedODE:cmat}
  {\mathcal C} \equiv {\mathcal P} - \frac{2\pi\nu}{D_0} {\mathcal P}
  {\mathcal G}_0 {\mathcal P} \,, \qquad
   {\mathcal P}\equiv \mbox{diag}\left(\frac{D_0 d_{11}}{d_{11}+D_0}\,,\ldots,
\frac{D_0 d_{1m}}{d_{1m}+D_0}\right)\,.
\end{equation}
\end{subequations}
In the well-mixed limit for which $D_0\to \infty$, \eqref{reduced_bmat} yields
$D_0 b_j \sim \overline{U} d_{1j} - d_{2j} u_{j}^{1}$, so that
\eqref{reducedODE_1} reduces to
\begin{equation}\label{ode:well_mixed}
  \overline{U}_t  = - \frac{1}{\tau} \overline{U}  -
  \frac{2 \pi }{\tau |\Omega|} \sum_{j=1}^{m}( \overline{U} d_{1j} -
  d_{2j} u_j^1)\,; \qquad \frac{\text{d} \pmb{u}_j}{\text{d} t}  =
  \pmb{F}_j( \pmb{u}_j) + \frac{2 \pi  \pmb{e}_1 }{\,\tau }
  (\overline{U} d_{1j} - d_{2j} u_j^1)\,, \qquad j=1,\ldots,m\,.
\end{equation}
\end{prop}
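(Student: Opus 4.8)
Nearly all of the analytic work is already contained in the matched-asymptotics computation preceding the statement, so my plan is to assemble those results into matrix form and then carry out a single matrix-perturbation step. First I would rewrite the scalar relations. The identities \eqref{Intra_DynamInt} --- derived by averaging \eqref{DimLess_bulka} over $\Omega\setminus\Omega_p$, applying the divergence theorem, using $|\partial\Omega_{\varepsilon_j}|=2\pi\varepsilon$ and $|\Omega\setminus\Omega_p|=|\Omega|-{\mathcal O}(\varepsilon^2)\to|\Omega|$ as $\varepsilon\to 0$, the identity $\int_{\partial\Omega_{\varepsilon_j}}U\,\mathrm ds=2\pi\varepsilon U_j^0$, and the definition of $b_j$ in \eqref{WM_nextInnerSol} --- become precisely the ODEs \eqref{reducedODE_1} once we set $\pmb{b}=(b_1,\dots,b_m)^T$ and note $\sum_j b_j=\pmb{e}^T\pmb{b}$. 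The constraint system \eqref{ode:bj}, obtained by expanding \eqref{WM_U1_Sol} as $\pmb{x}\to\pmb{x}_j$ and matching the regular part to the singularity structure in \eqref{WM_orderOne_Comp}, is exactly \eqref{reduced_bmat}: the diagonal coefficient $1+D_0/d_{1j}$ is the $j$-th entry of $I+D_0P_1$, the coefficient $d_{2j}/d_{1j}$ is the $j$-th entry of $P_2$, and the regular parts $R_{0j}$ together with the off-diagonal values $G_{0ji}$ assemble into $2\pi\nu\,{\mathcal G}_0$ with ${\mathcal G}_0$ as in \eqref{WM_NeumannGmatrix}. Since $d_{1j}>0$ and $D_0>0$, the diagonal matrix $I+D_0P_1$ is positive definite, so for $\nu$ small $I+D_0P_1+2\pi\nu{\mathcal G}_0$ is invertible and $\pmb{b}$ is well defined; this gives \eqref{reducedODE}.

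Next I would derive the ${\mathcal O}(\nu)$-accurate form \eqref{reducedODE_approx}. The point to exploit is that $\pmb{b}$ enters \eqref{reducedODE_1} only through the product $D_0\pmb{b}$, and that $D_0(I+D_0P_1)^{-1}=\mbox{diag}\big(D_0 d_{1j}/(d_{1j}+D_0)\big)={\mathcal P}$, the diagonal matrix in \eqref{reducedODE:cmat}. Expanding the inverse of the matrix in \eqref{reduced_bmat} in a Neumann series in the small term $2\pi\nu{\mathcal G}_0$, and using $D_0(I+D_0P_1)^{-1}={\mathcal P}$ and $(I+D_0P_1)^{-1}={\mathcal P}/D_0$ in the leading and trailing factors of the first correction,
\[
D_0\big(I+D_0P_1+2\pi\nu{\mathcal G}_0\big)^{-1} = {\mathcal P} - \frac{2\pi\nu}{D_0}\,{\mathcal P}{\mathcal G}_0{\mathcal P} + {\mathcal O}(\nu^2) = {\mathcal C} + {\mathcal O}(\nu^2)\,.
\]
Substituting $D_0\pmb{b}={\mathcal C}\big(\overline{U}\pmb{e}-P_2\pmb{u}^1\big)+{\mathcal O}(\nu^2)$ into \eqref{reducedODE_1} and distributing ${\mathcal C}$ over the two terms yields \eqref{reducedODE_approx_1}, with the ${\mathcal O}(\nu^2)$ remainder appearing in the $\overline{U}$-equation as displayed. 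The well-mixed reduction \eqref{ode:well_mixed} then follows by letting $D_0\to\infty$ in \eqref{reduced_bmat}: the dominant balance is $D_0P_1\pmb{b}\sim\overline{U}\pmb{e}-P_2\pmb{u}^1$, i.e. $D_0 b_j\sim d_{1j}\overline{U}-d_{2j}u_j^1$ componentwise, and inserting this into \eqref{reducedODE_1} reproduces \eqref{ode:well_mixed}.

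I expect the only genuinely delicate point to be the order-of-magnitude bookkeeping in the matched expansion: one must verify that the terms dropped in the inner solution \eqref{WM_nextInnerSol}, in the outer expansion \eqref{WM_Uexpnd} with its zero-average constraint on $U_1$, and in the singularity matching \eqref{WM_orderOne_Comp}, propagate into the averaged ODE and the cell ODEs only at ${\mathcal O}(\nu)$ or smaller --- so that \eqref{reducedODE} is consistent to ${\mathcal O}(\nu)$ and the Neumann-series truncation producing ${\mathcal C}$ does not spoil that accuracy. I would also note that the representation \eqref{WM_U1_Sol} closes without over-determination because the solvability condition for the Poisson-type problem \eqref{WM_orderOne_Comp} is exactly the averaged ODE already recorded in \eqref{Intra_DynamInt}; modulo this consistency check, everything else is routine linear algebra.
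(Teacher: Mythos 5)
Your proposal is correct and follows essentially the same route as the paper: the ODEs \eqref{reducedODE_1} and the constraint \eqref{reduced_bmat} are exactly the matrix form of \eqref{Intra_DynamInt} and \eqref{ode:bj}, and your Neumann-series identity $D_0\big(I+D_0P_1+2\pi\nu{\mathcal G}_0\big)^{-1}={\mathcal P}-\tfrac{2\pi\nu}{D_0}{\mathcal P}{\mathcal G}_0{\mathcal P}+{\mathcal O}(\nu^2)$ is the same computation as \eqref{ode:b_red1}, merely phrased as a single inverse rather than first factoring out $I+D_0P_1$. Your closing remarks on the invertibility of $I+D_0P_1+2\pi\nu{\mathcal G}_0$ for small $\nu$, the well-mixed balance $D_0P_1\pmb{b}\sim\overline{U}\pmb{e}-P_2\pmb{u}^1$, and the solvability condition for \eqref{WM_orderOne_Comp} coinciding with the averaged ODE all match the paper's reasoning.
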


To derive \eqref{reducedODE_approx} from \eqref{reducedODE},
we approximate the solution $\pmb{b}$ to \eqref{reduced_bmat} up to
terms of order ${\mathcal O}(\nu)$. By inverting the diagonal matrix
$I+D_0P_1$, we obtain from \eqref{reduced_bmat} that
\begin{equation}\label{ode:b_red1}
    \pmb{b} = \frac{1}{D_0} \left( I + \frac{2\pi \nu}{D_0} {\mathcal
        P}{\mathcal G}_0\right)^{-1} \left( \overline{U} {\mathcal P}
      \pmb{e} - {\mathcal P} P_2 \pmb{u}^1\right) \sim \frac{1}{D_0}
    \left( I - \frac{2\pi \nu}{D_0} {\mathcal P}{\mathcal G}_0\right)
    \left( \overline{U} {\mathcal P} \pmb{e} - {\mathcal P} P_2
      \pmb{u}^1\right) =\frac{1}{D_0} \left( \overline{U} {\mathcal
        C}\pmb{e} - {\mathcal C} P_2 \pmb{u}^1\right) + {\mathcal
      O}(\nu^2)\,,
\end{equation}
where ${\mathcal C}$ and ${\mathcal P}$ are given in
\eqref{reducedODE:cmat}. The ODE system \eqref{reducedODE_approx}
results from substituting \eqref{ode:b_red1} into \eqref{reducedODE}.

The ODE systems \eqref{reducedODE}, or alternatively
\eqref{reducedODE_approx}, for the regime $D={\mathcal O}(\nu^{-1})$
are accurate up to and including terms of order ${\mathcal O}(\nu)$
and show how the intracellular species are globally coupled through
the spatial average of the bulk field. Since these ODE systems depend
on the scaled diffusivity parameter $D_0$ and include the effect of
the spatial configuration $\pmb{x}_1\,,\ldots,\pmb{x}_m$ of the cells
through the Neumann Green's matrix, these ODE systems can account for
both diffusion-sensing and quorum-sensing behavior (see
\cref{SpecConfigSec} and \cref{sec:LargePopulation}). In contrast, the
limiting well-mixed ODE system \eqref{ode:well_mixed}, originally
derived in \cite{jia2016} for the simpler case of identical cells,
depends only on the number $m$ of cells. As a result, the well-mixed
ODE dynamics is independent of the diffusivity and the spatial
configuration of the cells.

In our numerical experiments in \cref{SpecConfigSec} and
\cref{sec:LargePopulation} using the ODE system \eqref{reducedODE} the
domain $\Omega$ is the unit disk. For the unit disk, the Neumann Green's
function $G_0(\pmb{x};\pmb{x}_j)$ and its regular part $R_{0j}$,
satisfying \eqref{WM_NUEGREEN}, are (see equation (4.3) of
\cite{KTW2005})
\begin{equation}\label{gr:gmrm}
\begin{split}
  G_0(\pmb{x};\pmb{x}_j) &= -\frac{1}{2\pi}\log|\pmb{x}-\pmb{x}_j| -
  \frac{1}{4\pi}\log\left( |\pmb{x}|^2|\pmb{x}_j|^2 + 1 - 2 \pmb{x}
    \cdot \pmb{x}_j\right) + \frac{ (|\pmb{x}|^2 + |\pmb{x}_j|^2
    )}{4\pi} - \frac{3}{8\pi} ,
  \\
  R_{0j} &= -\frac{1}{2\pi}\log\left(1 - |\pmb{x}_j|^2\right) +
  \frac{|\pmb{x}_j|^2}{2\pi} - \frac{3}{8\pi} \,.
\end{split}
\end{equation}
For an arbitrary cell pattern
$\lbrace{\pmb{x}_1,\ldots,\pmb{x}_m \rbrace}$, \eqref{gr:gmrm} is used
to evaluate the Neumann Green's matrix ${\mathcal G}_0$ as needed in
\eqref{reducedODE}.

\setcounter{equation}{0}
\setcounter{section}{3}
\section{A ring and center cell pattern}\label{SpecConfigSec}

With Sel'kov reaction kinetics, we apply the theory developed in
\cref{Analysis} to a ring and center cell configuration in the unit
disk. This pattern is characterized by $m-1\geq 2$ equally spaced
cells on a concentric ring within the unit disk, and with one at the
center of the disk (see Fig.~\ref{def_Mcells}).  For this pattern, the
GCEP \eqref{full_gcep} will be used to obtain tractable nonlinear
algebraic equations that can be solved numerically to compute HB
boundaries in the $\tau$ versus $D$ parameter plane. In addition, a
winding number criterion is developed to count the number of unstable
eigenvalues in open regions of this parameter plane.  Some of our
examples will show that rather small changes in either the
permeabilities or reaction kinetic parameters of the center cell can
significantly alter the region in parameter space where oscillations
occurs.

\begin{figure}[!ht]
  \centering
    \begin{subfigure}[b]{0.32\textwidth}
      \includegraphics[width=\textwidth,height=4.4cm]{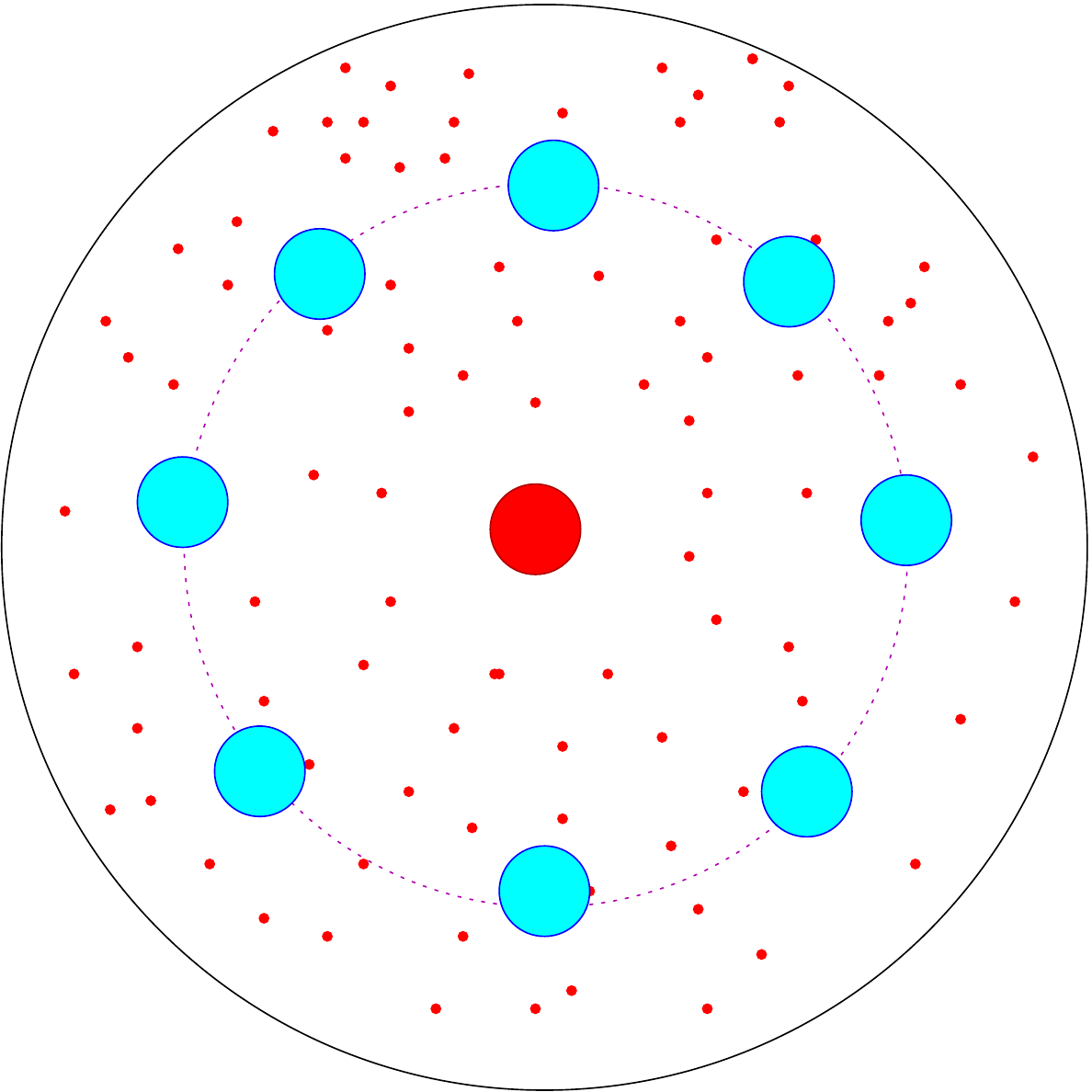}
    \end{subfigure}
    \qquad\qquad
    \begin{subfigure}[b]{0.32\textwidth}  
      \includegraphics[width=\textwidth,height=4.4cm]{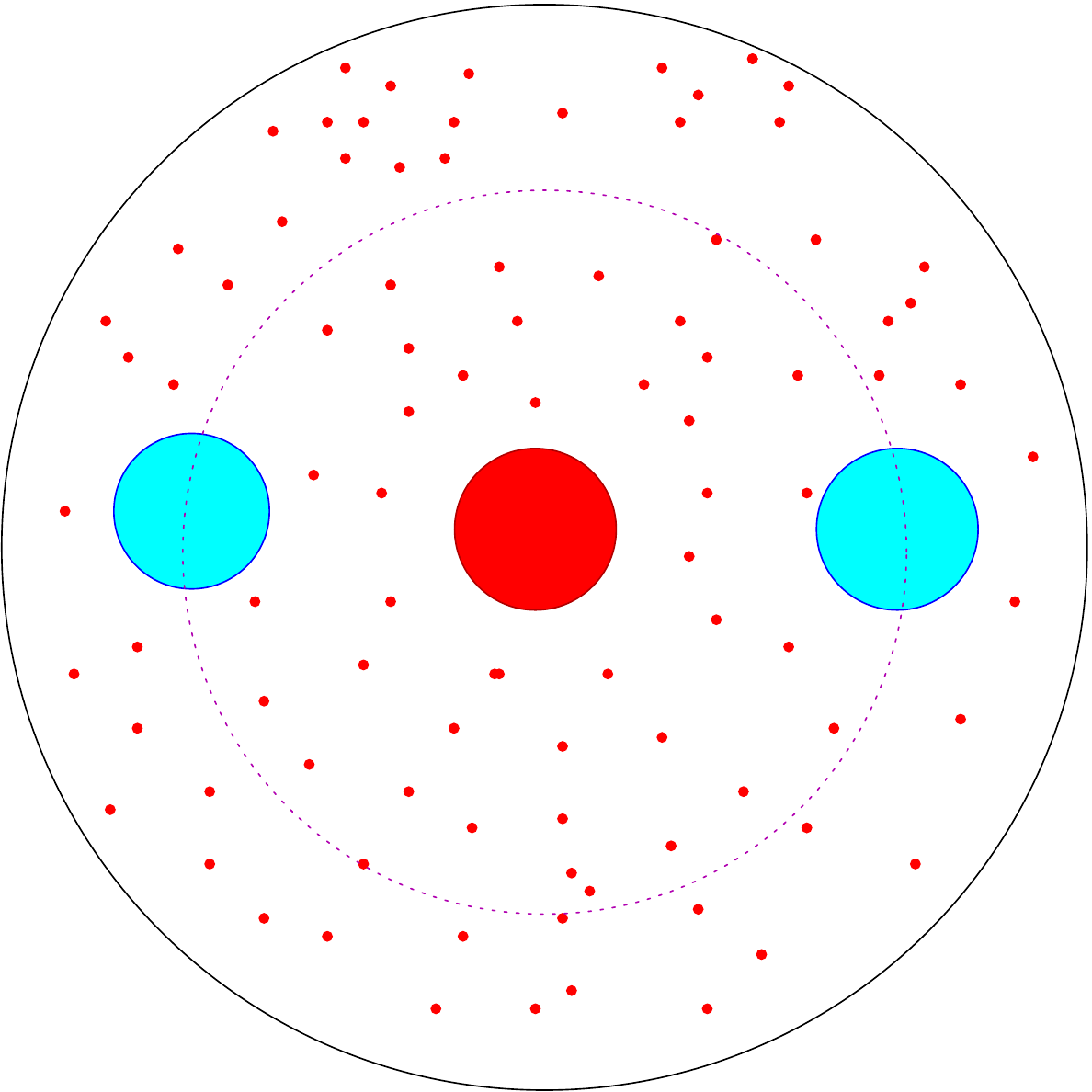}
    \end{subfigure}
\caption{Schematic showing a ring and center cell configuration of
  $m=9$ (left) and $m=3$ (right) cells in the unit disk. The ring
  cells are identical and equally spaced on a concentric ring within
  the disk (in cyan). The center cell (in red), possibly has different
  parameters. The red dots represent the signaling molecules secreted
  in the bulk region by the cells.}
\label{def_Mcells}
\end{figure}

\subsection{Analysis of the GCEP}\label{RingCenterHole}
 
Consider a ring and center cell pattern of $m$ cells where the cells
on the ring of radius $r_0$ have identical parameters, but where the
center cell has possibly different permeabilities or Sel'kov kinetic
parameters (see Fig.~\ref{def_Mcells}). The cell centers are at
\begin{equation}\label{cent:xj}
  \pmb{x}_j=r_0\left(\cos\left(\frac{2\pi (j-1)}{(m-1)}\right),\,\,
    \sin\left(\frac{2\pi (j-1)}{(m-1)}\right)\right) \,, \quad j=1\,,\ldots,
m-1  \,; \qquad \pmb{x}_m={\bf 0} \,,
\end{equation}
where $0<r_0<1$. For this pattern, the reduced-wave Green's matrix
$\mathcal{G}$ in \eqref{GreensMatrix} can be partitioned as
\begin{equation}\label{ring:gblock}
\mathcal{G} = 
\left(\begin{array}{ccc:c}
            &  &  & g_{m}\\ 
		    & \mathcal{G}_{m-1}  &  & \vdots\\
		    &  &  &  g_{m} \\ \hdashline[2pt/2pt]
		g_{m} & \dots & g_{m} & R_{m}  
		\end{array}
              \right)\,; \qquad 
              g_{m} \equiv G(\pmb{x}_j,\pmb{x}_m) = G(\pmb{x}_m, \pmb{x}_j)\,,
              \quad j=1\,\ldots,m-1 \,, \quad R_m = R(\pmb{x}_m) \,.
\end{equation}
Here ${\mathcal G}_{m-1}$ is the $(m-1) \times (m-1)$ symmetric matrix
block representing the interaction between the cells on the
ring. Since this block is also cyclic it has the eigenpair
\begin{equation}\label{cent:g_eig}
  {\mathcal G}_{m-1} \pmb{e} = \omega_1 \pmb{e} \,, \qquad
  \mbox{with} \qquad \pmb{e} = (1,\dots,1)^T\in \R^{m-1} \,,
  \quad \mbox{and} \quad \omega_1\equiv
  R_1 + \sum_{j=2}^{m-1} G(\pmb{x}_1,\pmb{x}_j) \,.
\end{equation}
In \eqref{ring:gblock} there is a common interaction, represented
by $g_m$, between each ring cell and the center cell owing to the
rotational symmetry and the fact that the ring cells are all
equidistant from the center cell.
 
For the identical ring cells, we label their permeabilities as
$d_1 = d_{1j}$ and $d_2 = d_{2j}$ for $j = 1, \dots, m-1$ and their
common Sel'kov kinetic parameters as $\mu_1=\mu_{j}$,
$\alpha_1=\alpha_j$ and $\zeta_1=\zeta_j$ for $j=1,\ldots,m-1$.  Since
the unique steady-state solution to \eqref{ss:selkov_1} has the form
${\mathcal A}=(A_c,\ldots,A_c,A_m)^T$, we readily find that $A_c$ and
$A_m$ satisfies the $2\times 2$ linear system
\begin{subequations}\label{Linear_reduceSys}
\begin{align}
  \left( 1 + \frac{\nu D}{d_1} + 2\pi\nu \omega_1  + \frac{2\pi \nu D}{\tau}
  \frac{d_2}{d_1} \right) & A_c +2 \pi \nu g_m \, A_m = -
   \nu \mu_1 \frac{d_2}{d_1}\,; \label{Linear_reduceSysA}\\
  \left( 1 + \frac{\nu D}{d_{1m}} + 2\pi\nu R_m  +
  \frac{2\pi \nu D}{\tau} \frac{d_{2m}}{d_{1m}} \right) & A_m  +
  2 \pi \nu \, g_m \, (m - 1) A_c   = - \nu \mu_m \frac{d_{2m}}{d_{1 m}} \,,
    \label{Linear_reduceSysB}
\end{align}
\end{subequations}
where $\omega_1$ is the eigenvalue of ${\mathcal G}_{m-1}$ in
\eqref{cent:g_eig}. In terms of ${\mathcal A}=(A_c,\ldots,A_c,A_m)^T$, the
steady-state for the intracellular species as obtained from
\eqref{ss:selkov_2} is
\begin{equation}\label{Stdy_State_U}
  \begin{split}
   u_{e j}^{1} = \begin{cases} 
     u_{e 1}^{1} \equiv \mu_1 + \frac{2 \pi D}{\tau} \, A_c \,, & 
     j=1 \,,\ldots, m-1 \,, \\
     u_{e m}^{1} \equiv \mu_m + \frac{2 \pi D}{\tau} \, A_m \,, & 
     j=m \,,
   \end{cases} \,; \qquad
    u_{e j}^{2} = \begin{cases} 
      u_{e 1}^{2} \equiv  \frac{\mu_1}{\alpha_1 \,\, + \,\, (u_{e1}^1)^2}\,,
         & j=1 \,,\ldots, m-1 \,, \\
         u_{e m}^{2} \equiv \frac{\mu_m}{\alpha_m \,\, + \,\, (u_{e m}^1)^2}\,,
         & j=m \,.
         \end{cases}
\end{split}
\end{equation}

Next, we determine the GCEP for the ring and center cell pattern using
\eqref{Global_System}. For this pattern, the GCEP matrix
$\mathcal{M}(\lambda)$ in \eqref{Global_System} is written as
\begin{subequations}\label{M_M0}
\begin{equation}\label{M_M0_1}
\mathcal{M}(\lambda) = 2\pi \nu \, \mathcal{G}_{\lambda} +  \mathcal{M}_0\,,
\end{equation}
where $\mathcal{G}_{\lambda}$ is the eigenvalue-dependent Green's
matrix, as defined in \eqref{GreenMat}, and where the diagonal
$\mathcal{M}_0$ is defined by
\begin{equation}\label{M_matrix1}
\begin{split}
\mathcal{M}_0 = 
\begin{pmatrix}
M_0 & 	& 	& 	&  \\
 & \ddots & 	&	 &   \\
	 & 	& 	& M_0	&  \\
	  & 	& 	& 	& M_m
\end{pmatrix}\,, \qquad \mbox{where} \qquad \begin{cases} 
  M_0 \equiv 1 + \frac{\nu D}{d_1} + \frac{2\pi\nu D}{\tau} \frac{d_2}{d_1}
  \mathit{K}_c\,, \\
    \\
    M_m \equiv 1 + \frac{\nu D}{d_{1m}} + \frac{2\pi\nu D}{\tau}
    \frac{d_{2m}}{d_{1m}} \mathit{K}_m\,.
  \end{cases}
\end{split}
\end{equation}
\end{subequations}
Here $\mathit{K}_c=\mathit{K}_c(\lambda)$ and
$\mathit{K}_m=\mathit{K}_m(\lambda)$ are the entries of the
$m \times m$ diagonal matrix
$\mathcal{K} =
\text{diag}(\mathit{K}_c,\dots,\mathit{K}_c,\mathit{K}_m)$ defined in
\eqref{K_mat_entry}. For the Sel'kov kinetics given in
\eqref{isolated:selkov:ode}, \eqref{K_mat_entry} yields
\begin{subequations}\label{cent:KcKm}
\begin{equation}\label{cent:KcKm_def}
  \mathit{K}_c \equiv \frac{\lambda + \mbox{det}(J_1)}{\lambda^2 -
    \mbox{tr}(J_1) \lambda + \mbox{det}(J_1)} \,, \qquad
  \mathit{K}_m \equiv \frac{\lambda + \mbox{det}(J_m)}{\lambda^2 -
    \mbox{tr}(J_m) \lambda + \mbox{det}(J_m)} \,,
\end{equation}
where the trace and determinant of the Jacobians of the intracellular
dynamics for the identical ring cells and the center cell are given in
terms of the steady-state values in \eqref{Stdy_State_U} by
\begin{equation}\label{cent:KcKm_jac}
  \begin{split}
  \mbox{det}(J_1) = \zeta_1 \left(\alpha_1 + (u_{e 1}^{1})^2\right)\,, \qquad
  \mbox{det}(J_m) &= \zeta_m \left(\alpha_m + (u_{e m}^{1})^2\right)\,, \\
    \mbox{tr}(J_1) = \frac{ \left[2 (u_{e 1}^{1}) \mu_1 -
      \left(\alpha_1 + (u_{e 1}^{1})^2\right) - \zeta_1
      \left(\alpha_1 + (u_{e 1}^{1})^2\right)^2\right]}{\alpha_1+
    (u_{e 1}^{1})^2} \,, \quad
  \mbox{tr}(J_m) &= \frac{ \left[2 (u_{e m}^{1}) \mu_m -
      \left(\alpha_m + (u_{e m}^{1})^2\right) - \zeta_m
      \left(\alpha_m + (u_{e m}^{1})^2\right)^2\right]}{\alpha_m+
    (u_{e m}^{1})^2} \,.
  \end{split}
\end{equation}
\end{subequations}

From Proposition \ref{prop:stab}, discrete eigenvalues $\lambda$
associated with the ring and center cell pattern are roots of
$\mbox{det} {\mathcal M}(\lambda)=0$. A convenient way to implement this
determinant root-finding problem numerically is to use the special
structure of ${\mathcal M}(\lambda)$ in order to determine
explicit formulae for its matrix spectrum
${\mathcal M}(\lambda)\pmb{c}_j =\sigma_j \pmb{c}_j$, for $j=1,\ldots,m$, where
$\sigma_j=\sigma_j(\lambda)$.  Then, we need only numerically solve
the scalar root-finding problems $\sigma_j(\lambda)=0$ for $\lambda$
for each $j=1,\ldots,m$.

To do so, we use the convenient fact that ${\mathcal G}_{\lambda}$ can be
partitioned, similar to that in \eqref{ring:gblock}, as
\begin{equation}\label{ring:eig_gblock}
\mathcal{G}_{\lambda} = 
\left(\begin{array}{ccc:c}
            &  &  & g_{\lambda m}\\ 
		    & \mathcal{G}_{\lambda(m-1)}  &  & \vdots\\
		    &  &  &  g_{\lambda m} \\ \hdashline[2pt/2pt]
		g_{\lambda m} & \dots & g_{\lambda m} & R_{\lambda m}  
		\end{array}
              \right)\,; \qquad 
              g_{\lambda m} \equiv G_{\lambda}(\pmb{x}_j,\pmb{x}_m) =
              G_{\lambda}(\pmb{x}_m, \pmb{x}_j)\,,
              \quad j=1\,\ldots,m-1 \,, \quad R_{\lambda m}\equiv
              R_{\lambda}(\pmb{x}_m) \,,
\end{equation}
where $G_{\lambda}$ is the eigenvalue-dependent reduced-wave Green's
function with regular part $R_{\lambda}$ satisfying
\eqref{EigGreen}. In \eqref{ring:eig_gblock}, the $(m-1)\times (m-1)$
matrix block $\mathcal{G}_{\lambda(m-1)}$, representing cell
interactions on the ring, is symmetric and cyclic. As a result, it has
the well-defined eigenspace
\begin{equation}\label{cen:glam_eig}
  \mathcal{G}_{\lambda(m-1) } \, \pmb{v}_j = \omega_{\lambda j} \,  \pmb{v}_j, \quad
  j =1, \dots,m-1\,; \qquad \pmb{e}^T\pmb{v}_j=0 \,, \quad j=1,\ldots,m-2\,,
  \quad \pmb{v}_{m-1}=\pmb{e}\equiv (1,\ldots,1)^T\in \R^{m-1} \,.
\end{equation}

By using this special matrix structure, it readily follows that the
GCEP matrix ${\mathcal M}(\lambda)$ in \eqref{M_M0} admits $m-2$ {\em
  anti-phase modes}, characterized by eigenvectors of the form
$\pmb{c}_j=(\pmb{v}_j,0)^T\in \R^{m}$ where $\pmb{v}_j\in \R^{m-1}$
are those eigenvectors of ${\mathcal G}_{\lambda(m-1)}$ in
\eqref{cen:glam_eig}, which satisfy $\pmb{e}^T\pmb{v}_j=0$ for
$j=1,\ldots,m-2$. With this choice, \eqref{M_M0} becomes
\begin{equation}\label{AsynEigProb2}
\begin{pmatrix}
2\pi \nu \mathcal{G}_{\lambda(m-1) }\pmb{v}_j + M_0 \pmb{v}_j \\ \\
2\pi \nu g_{\lambda m}\, \pmb{e}^T \pmb{v}_j
\end{pmatrix}  = \begin{pmatrix}
\sigma_j \pmb{v}_j  \\ \\ 0
\end{pmatrix} \,, \quad \mbox{for} \quad j=1,\ldots,m-2 \,.
\end{equation}
Since $\pmb{e}^T\pmb{v}_j=0$ for $j=1,\ldots,m-2$, we obtain from
\eqref{AsynEigProb2}, \eqref{cen:glam_eig} and \eqref{M_matrix1}
that $m-2$ eigenpairs of ${\mathcal M}(\lambda)$ are
\begin{equation}\label{cent:root_asynch}
  \sigma_j = 2 \pi \nu \omega_{\lambda j} + M_0 = 2 \pi \nu \omega_{\lambda j} +
1 + \frac{\nu D}{d_1} + \frac{2\pi\nu D}{\tau} \frac{d_2}{d_1}
\mathit{K}_c\,,  \quad \pmb{c}_j=(\pmb{v}_j,0)^T \,, \quad
\mbox{for} \quad j =1, \dots,m-2\,,
\end{equation}
where $\mathit{K}_c=\mathit{K}_c(\lambda)$ is defined in
\eqref{cent:KcKm_def}. We remark that since $\mathit{K}_c$ depends on
the steady-state values $A_c$ and $A_m$, as obtained from the linear
system \eqref{Linear_reduceSys}, this term depends on the
permeabilities and local kinetics of the center cell. Discrete
eigenvalues $\lambda$ for the anti-phase modes are union of the zeroes
of $\sigma_j(\lambda)=0$ for $j=1,\ldots,m-2$.

For the remaining two eigenpairs of the GCEP matrix
${\mathcal M}(\lambda)$ the associated eigenvector $\pmb{c}$ has the
form $\pmb{c} = (\pmb{e},\gamma)^T$ for some scalar $\gamma$ to be
determined and $\pmb{e}=(1,\ldots,1)^T\in \R^{m-1}$. This eigenvector
is referred to as the {\em in-phase mode} since any instability
associated with this mode has the same phase for the cells on the ring.
With this choice, \eqref{M_M0} reduces to
\begin{equation}\label{SyncEigen}
\begin{pmatrix}
  \omega_{\lambda(m-1)} + {M_0/(2\pi \nu)} & g_{\lambda m} \\
  (m-1) g_{\lambda m} & R_{\lambda m} + {M_m/(2\pi \nu)} 
\end{pmatrix}
\begin{pmatrix}
  1\\
  \gamma
\end{pmatrix} = \frac{\sigma}{2\pi\nu}
\begin{pmatrix}
  1 \\
  \gamma
\end{pmatrix}\,,
\end{equation}
where
${\mathcal G}_{\lambda(m-1)}\pmb{e}=\omega_{\lambda(m-1)}\pmb{e}$ from
\eqref{cen:glam_eig}.

Upon eliminating $\sigma$ from the $2\times 2$ matrix problem
\eqref{SyncEigen} we obtain that $\gamma_{\pm}$ are the roots of
the quadratic equation
\begin{equation}\label{SigmaSyncGamma}
  \gamma^2 + \frac{1}{2\pi\nu g_{\lambda m}} \Big{[} (M_0 - M_m) +
  2\pi \nu (\omega_{\lambda (m-1)}  - R_{\lambda m})  \Big{]} \gamma  - (m-1) = 0\,,
\end{equation}
given by
\begin{equation}\label{Gamma}
\begin{split}
  \gamma_{\pm} = - \frac{\beta_{\lambda}}{2} \pm \frac{1}{2}
  \sqrt{\beta^2_{\lambda} + 4(m-1)} \,, \qquad \mbox{where} \qquad
  \beta_{\lambda} \equiv \frac{1}{2\pi \nu g_{\lambda m}} \Big{[} (M_0 -
  M_m) + 2\pi \nu (\omega_{\lambda (m-1)} - R_{\lambda m}) \Big{]} \,.
\end{split}
\end{equation}
Since $\gamma_{+} \gamma_{-} = -(m-1)>0$, but with $\gamma_{\pm}$
possibly complex-valued, we confirm that the two possible in-phase
modes $\pmb{c}_{\pm} = (1,\dots,1,\gamma_{\pm})^T$ are orthogonal. The
two eigenvalues $\sigma=\sigma_{\pm}(\lambda)$, given by
$\sigma_{\pm} \equiv 2 \pi \nu \big{(} \omega_{\lambda (m-1)} +
\gamma_{\pm} \,g_{\lambda m} \big{)} + M_0$, can be written as
\begin{equation}\label{cent:sync}
  \sigma_{\pm}(\lambda) = \frac{(h_1 + h_2)}{2} \pm \frac{1}{2}
  \sqrt{\left(h_1 - h_2 \right)^2 + 16 \pi^2 \nu^2(m-1) g_{\lambda m}^2
  } \,, \quad \mbox{where} \quad
  h_1 \equiv 2 \pi \nu\, \omega_{\lambda (m-1)} + M_0\,, \quad
  h_2 \equiv 2 \pi \nu\, R_{\lambda m} + M_m\,.
\end{equation}
Here $M_0=M_0(\lambda)$ and $M_m=M_m(\lambda)$ are given in
\eqref{M_matrix1} and $\omega_{\lambda (m-1)} $ is defined by
$\mathcal{G}_{\lambda (m-1)}\pmb{e}=\omega_{\lambda (m-1)} \pmb{e}$
from \eqref{cen:glam_eig}.

Alternatively, rather than solving \eqref{SyncEigen} for
$\sigma_{\pm}$, and then setting $\sigma_{\pm}(\lambda)=0$ by using a
root-finder for $\lambda$, we can more directly conclude that
${\mathcal M}(\lambda)\pmb{c}=\bf{0}$ for $\pmb{c}=(\pmb{e},\gamma)^T$
if and only if the determinant of the $2\times 2$ matrix in
\eqref{SyncEigen} vanishes. In this way, a discrete eigenvalue
$\lambda$ of the GCEP \eqref{full_gcep} for the in-phase mode,
which satisfies $\mbox{det}({\mathcal M}(\lambda))=0$, is a root of
${\mathcal H}_{c}(\lambda)=0$ defined by
\begin{equation}\label{SyncEigen_root}
  {\mathcal H}_{c}(\lambda)= \left( \omega_{\lambda (m-1)} + \frac{M_0}{2\pi\nu}
\right)\left( R_{\lambda m} + \frac{M_m}{2\pi\nu} \right)- (m-1)
g_{\lambda m}^2\,.
\end{equation}
For any root of \eqref{SyncEigen_root}, the
corresponding eigenvector $\pmb{c}$ of ${\mathcal M}(\lambda)$ is
\begin{equation}\label{SyncEigen_vec}
  \pmb{c}=(1,\ldots,1,\gamma)^T \,, \qquad \gamma
  =  - \frac{1}{g_{\lambda m}} \left( \omega_{\lambda(m-1)} +
    \frac{M_0}{2\pi \nu}\right) \,.
\end{equation}
It is readily verified using \eqref{cent:sync} that if
$\lambda^{\star}$ satisfies either $\sigma_{+}(\lambda^{\star})=0$ or
$\sigma_{-}(\lambda^{\star})=0$, then we must have
${\mathcal H}_c(\lambda^{\star})=0$. In contrast, if $\lambda^{\star}$
satisfies ${\mathcal H}_c(\lambda^{\star})=0$, then we can only
conclude that {\em either} $\sigma_{+}(\lambda^{\star})=0$ or
$\sigma_{-}(\lambda^{\star})=0$.  Therefore, in implementing a
root-finding strategy based on the single scalar equation
\eqref{SyncEigen_root} instead of the two scalar equations
\eqref{cent:sync} care must be taken to identify all possible
roots of ${\mathcal H}_{c}(\lambda)=0$ for the same parameter set.

We summarize our result for eigenvalues $\lambda$ of the GCEP
\eqref{full_gcep} for a ring and center cell pattern as follows:

\vspace*{0.2cm}
\begin{prop}\label{prop:ring+center} Consider a ring and center hole pattern
  of $m\geq 3$ cells in the unit disk with cell centers at
  \eqref{cent:xj}.  The set $\Lambda({\mathcal M})$ as obtained from
  the GCEP \eqref{full_gcep}, and which approximates as $\varepsilon\to 0$
  all the discrete eigenvalues of the linearization of the PDE-ODE
  system \eqref{DimLess_bulk} around the steady-state solution, is
\begin{equation}\label{cent:spec_all}
  \Lambda({\mathcal M}) \equiv \Big{\lbrace}\lambda \,\, \big{\vert} \,\,
  \bigcup\limits_{j=1}^{m-2} \lbrace{\, \sigma_{j}(\lambda)=0\, \rbrace}\,,
  \,\, \bigcup \,
  \lbrace{\sigma_{\pm}(\lambda)=0\rbrace} \Big{\rbrace}\,.
\end{equation}
Here $\sigma_j(\lambda)$, for $j=1,\ldots,m-2$, for the anti-phase
modes are defined in \eqref{cent:root_asynch}, while
$\sigma_{\pm}(\lambda)$ for the in-phase modes are defined in
\eqref{cent:sync}. As shown in Remark \ref{app:degen} of Appendix
\ref{app:green}, due to mode degeneracy of
${\mathcal G}_{\lambda (m-1)}$, there are ${(m-1)/2}$ distinct
anti-phase modes if $m$ is odd and ${(m-2)/2}$ distinct anti-phase
modes if $m$ is even.
\end{prop}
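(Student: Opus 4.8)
The plan is to show that, for the ring-and-center configuration, the GCEP matrix $\mathcal{M}(\lambda)=2\pi\nu\,\mathcal{G}_{\lambda}+\mathcal{M}_0$ of \eqref{M_M0} leaves invariant two complementary subspaces of $\R^m$ whose direct sum is all of $\R^m$, so that $\det\mathcal{M}(\lambda)$ factors into the scalar symbols $\sigma_1(\lambda),\ldots,\sigma_{m-2}(\lambda)$ and $\sigma_{\pm}(\lambda)$ already exhibited in \eqref{cent:root_asynch} and \eqref{cent:sync}. Since $\Lambda(\mathcal{M})$ is by definition \eqref{TransDent} the zero set of $\det\mathcal{M}(\lambda)$, which by the construction leading to Proposition \ref{prop:stab} captures (as $\varepsilon\to0$) all discrete eigenvalues of the linearization \eqref{Linear_bulk}, the factorization gives at once the claimed description \eqref{cent:spec_all}.

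First I would record the explicit action of $\mathcal{M}(\lambda)$. Writing a generic vector of $\R^m$ as $(\pmb{w},s)^T$ with $\pmb{w}\in\R^{m-1}$ and $s\in\R$, the block form \eqref{ring:eig_gblock} of $\mathcal{G}_{\lambda}$ together with the diagonal form \eqref{M_matrix1} of $\mathcal{M}_0$ yields
\begin{equation*}
  \mathcal{M}(\lambda)\begin{pmatrix}\pmb{w}\\ s\end{pmatrix}=
  \begin{pmatrix}\big(2\pi\nu\,\mathcal{G}_{\lambda(m-1)}+M_0 I\big)\pmb{w}+2\pi\nu\,s\,g_{\lambda m}\,\pmb{e}\\[3pt]
    2\pi\nu\,g_{\lambda m}\,\pmb{e}^T\pmb{w}+\big(2\pi\nu R_{\lambda m}+M_m\big)\,s\end{pmatrix}\,,\qquad \pmb{e}=(1,\ldots,1)^T\in\R^{m-1}\,.
\end{equation*}
From this formula one checks that $W_{\mathrm{anti}}\equiv\{(\pmb{w},0)^T:\pmb{e}^T\pmb{w}=0\}$ and $W_{\mathrm{in}}\equiv\mathrm{span}\{(\pmb{e},0)^T,\,\pmb{e}_m\}$, with $\pmb{e}_m=(\pmb{0},1)^T\in\R^m$, are both $\mathcal{M}(\lambda)$-invariant: on $W_{\mathrm{anti}}$ the bottom component vanishes since $\pmb{e}^T\pmb{w}=0$, while $\mathcal{G}_{\lambda(m-1)}$ preserves $\pmb{e}^{\perp}$ by its cyclic eigenstructure \eqref{cen:glam_eig}; and $W_{\mathrm{in}}$ is closed under the action because $\mathcal{G}_{\lambda(m-1)}\pmb{e}=\omega_{\lambda(m-1)}\pmb{e}$. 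Furthermore $\R^m=W_{\mathrm{anti}}\oplus W_{\mathrm{in}}$, as seen by splitting $\pmb{w}=c\,\pmb{e}+\pmb{w}_{\perp}$ uniquely with $\pmb{e}^T\pmb{w}_{\perp}=0$ and writing $(\pmb{w},s)^T=(\pmb{w}_{\perp},0)^T+\big(c\,(\pmb{e},0)^T+s\,\pmb{e}_m\big)$.

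Next I would read off the restriction of $\mathcal{M}(\lambda)$ to each block. On $W_{\mathrm{anti}}$, in the basis $\{(\pmb{v}_j,0)^T\}_{j=1}^{m-2}$ built from the eigenvectors of $\mathcal{G}_{\lambda(m-1)}$ in \eqref{cen:glam_eig}, the restriction is diagonal with entries $\sigma_j(\lambda)=2\pi\nu\,\omega_{\lambda j}+M_0$ exactly as in \eqref{cent:root_asynch}, so $\det\big(\mathcal{M}(\lambda)|_{W_{\mathrm{anti}}}\big)=\prod_{j=1}^{m-2}\sigma_j(\lambda)$. On $W_{\mathrm{in}}$, in the basis $\{(\pmb{e},0)^T,\pmb{e}_m\}$, the restriction is $2\pi\nu$ times the $2\times2$ matrix appearing in \eqref{SyncEigen}, whose eigenvalues are $\sigma_{\pm}(\lambda)$ of \eqref{cent:sync} and whose determinant equals $\sigma_{+}(\lambda)\sigma_{-}(\lambda)=(2\pi\nu)^2\mathcal{H}_c(\lambda)$, with $\mathcal{H}_c$ as in \eqref{SyncEigen_root}. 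Since the determinant of a linear map is the product of the determinants of its restrictions to an invariant direct-sum splitting, $\det\mathcal{M}(\lambda)=\big(\prod_{j=1}^{m-2}\sigma_j(\lambda)\big)\,\sigma_{+}(\lambda)\,\sigma_{-}(\lambda)$, so $\det\mathcal{M}(\lambda)=0$ iff $\lambda$ is a zero of some $\sigma_j$, $j=1,\ldots,m-2$, or of $\sigma_{+}$ or $\sigma_{-}$; this is \eqref{cent:spec_all}. The count of distinct anti-phase modes then follows from the degeneracies among the eigenvalues $\omega_{\lambda j}$ of the symmetric cyclic block $\mathcal{G}_{\lambda(m-1)}$, which is the content of Remark \ref{app:degen}.

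The computations here are all elementary, so I expect no real obstacle; the one point that needs care is that the determinant must be factored through the invariant direct-sum decomposition rather than through a putative full eigenbasis of $\mathcal{M}(\lambda)$. Indeed, the two in-phase eigenvectors $\pmb{c}_{\pm}=(\pmb{e},\gamma_{\pm})^T$ coincide when $\beta_{\lambda}^2=-4(m-1)$, so $\mathcal{M}(\lambda)$ need not be diagonalizable for every $\lambda$; nevertheless the $2\times2$ block on $W_{\mathrm{in}}$ always has determinant $\sigma_{+}\sigma_{-}$, and the factorization of $\det\mathcal{M}(\lambda)$ — hence the conclusion — is unaffected.
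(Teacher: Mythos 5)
Your proposal is correct and follows essentially the same route as the paper: the paper likewise exploits the block partition \eqref{ring:eig_gblock} to exhibit the $m-2$ anti-phase eigenpairs \eqref{cent:root_asynch} on the subspace orthogonal to $(\pmb{e},0)^T$, to reduce the in-phase modes to the $2\times 2$ problem \eqref{SyncEigen} with eigenvalues $\sigma_{\pm}$, and to conclude that $\Lambda({\mathcal M})$ is the union of the zero sets of these scalar symbols. The one place you go beyond the paper's presentation is the last step: the paper passes from the exhibited eigenpairs ${\mathcal M}(\lambda)\pmb{c}_j=\sigma_j\pmb{c}_j$ directly to the identification of the roots of $\det{\mathcal M}(\lambda)=0$, which implicitly uses $\det{\mathcal M}=\bigl(\prod_{j}\sigma_j\bigr)\sigma_{+}\sigma_{-}$; your factorization of the determinant through the invariant splitting $\R^m=W_{\mathrm{anti}}\oplus W_{\mathrm{in}}$ justifies this without assuming a full eigenbasis, and in particular covers the exceptional values of $\lambda$ at which $\gamma_{+}=\gamma_{-}$ and the $2\times2$ in-phase block is defective. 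Since $\sigma_{+}\sigma_{-}$ is in any case the determinant of that block, equal to $(2\pi\nu)^2{\mathcal H}_c(\lambda)$ from \eqref{SyncEigen_root}, the conclusion is unaffected either way, but your phrasing is the cleaner justification.
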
  

For the unit disk, where an infinite series representation of the
solution to \eqref{EigGreen} is available, explicit formulae for the
eigenvalues $\omega_{\lambda j}$ of ${\mathcal G}_{\lambda(m-1)}$, as
needed in \eqref{cent:root_asynch} and \eqref{cent:sync}, are given in
Appendix \ref{app:green}. The result in Remark \ref{app:degen} of
Appendix \ref{app:green} regarding mode degeneracy results from the
fact that ${\mathcal G}_{\lambda(m-1)}$ is both symmetric and
cyclic. In Appendix \ref{app:green} we also show how to readily
calculate the quantities in \eqref{ring:gblock} and
\eqref{cent:g_eig}, which are needed in \eqref{Linear_reduceSys} for
determining the steady-state.

As indicated by Proposition \ref{prop:stab}, stability boundaries in
the $\tau$ versus $D$ parameter space for the steady-state under
Sel'kov kinetics are determined by HB boundaries where
$\lambda=i\lambda_{I}\in \Lambda({\mathcal M})$. To determine Hopf
bifurcation boundaries for the anti-phase modes we set
$\mbox{Re}(\sigma_{j}(i\lambda_I))=0$ and
$\mbox{Im}(\sigma_{j}(i\lambda_I))=0$ for $j=1,\ldots,m-2$ in
\eqref{cent:root_asynch}. Upon separating real and imaginary parts in
\eqref{cent:KcKm_def} we obtain the following nonlinear algebraic
system for each $j=1,\ldots,m-2$:
\begin{subequations}\label{cent:HB_async}
\begin{equation}\label{SigmaAsync2_1} 
  2 \pi  \nu  \, \mbox{Re}(\omega_{\lambda j}) +
   \left( 1 + \nu \frac{D}{d_1} \right) + \frac{2\pi \nu D}{\tau}
   \frac{d_2}{d_1} \, \mbox{Re}(\mathit{K}_{c}(i\lambda_I)) =0 \,, \qquad
   \mbox{Im}(\omega_{\lambda j}) + \frac{D}{\tau} \frac{d_2}{d_1} \,
   \mbox{Im}(\mathit{K}_{c}(i\lambda_I)) = 0 \,,
\end{equation}
where
\begin{equation}\label{kcrealimag}
  \mbox{Im}(\mathit{K}_{c}(i\lambda_I)) = \frac{\lambda_I (\mbox{det}(J_1)-\lambda_I^2)
    \, + \lambda_I  \mbox{det}(J_1)\mbox{tr}(J_1)}
  {(\mbox{det}(J_1) - \lambda_I^2)^2 + (\lambda_I\,
    \mbox{tr}(J_1))^2} \,, \quad
  \mbox{Re}(\mathit{K}_{c}(i\lambda_I)) = \frac{\mbox{det}(J_1)(\mbox{det}(J_1) -
    \lambda_I^2) -\lambda_I^2 \, \mbox{tr}(J_1)}
    {(\mbox{det}(J_1) - \lambda_I^2)^2 + (\lambda_I\, \mbox{tr}(J_1))^2}\,.
\end{equation}
\end{subequations}
Here $J_1$ is the Jacobian of the Sel'kov kinetics for the ring cells with
determinant and trace given in \eqref{cent:KcKm_jac}. Similarly, the Hopf
bifurcation boundaries for the in-phase modes are obtained by
setting $\sigma_{\pm}(i\lambda_I)=0$ in \eqref{cent:sync}, which yields
the nonlinear algebraic system
\begin{equation}\label{cent:HB_sync}
  \mbox{Re}(\sigma_{\pm}(i\lambda_I)) = 0 \,, \qquad
  \mbox{Im}(\sigma_{\pm}(i\lambda_I)) = 0 \,,
\end{equation}
or equivalently ${\mathcal H}_c(i\lambda_I)=0$ from \eqref{SyncEigen_root}.

\subsection{Example: Two cells on a ring with a center cell}\label{Subsec:SelKov_Exaample}

We now apply the theory developed in \cref{RingCenterHole} to a
population of $m=3$ cells, where two of the cells are equally spaced
on a concentric ring of radius $r_0$ within the unit disk, with the
remaining one centered at the origin (see Fig.~\ref{def_Mcells}). For this
configuration, the eigenvalues of the $3\times 3$ GCEP matrix
$\mathcal{M}(\lambda)$ are given in \eqref{cent:root_asynch} and
\eqref{cent:sync} for a single anti-phase mode and the two in-phase
modes, respectively. To compute the HB boundaries in the $\tau$ versus
$D$ parameter plane for these modes, we solve \eqref{cent:HB_async} and
\eqref{cent:HB_sync} numerically by implementing the psuedo-arclength
continuation algorithm TEST\_CON (cf.~\cite{TestCon}) with respect to
$D$, while using Newton's method to compute $\tau$ and $\lambda_I$
at each point on the solution path.  Such a continuation scheme in $D$
is needed owing to the possibility of fold points along the HB boundary.

To determine regions of instability in open sets of the $\tau$ versus
$D$ parameter plane we use \eqref{cent:spec_all} of Proposition
\ref{prop:ring+center}, together with the winding number criterion of
complex analysis, to identify the number ${\mathcal N}$ of eigenvalues
$\lambda\in \Lambda({\mathcal M})$ with $\mbox{Re}(\lambda)>0$. To do
so, we first define
$\mathcal{F}(\lambda) \equiv \det(\mathcal{M}(\lambda))$, where
${\mathcal M}(\lambda)$ is the GCEP matrix in \eqref{M_M0}.  Provided
that there are no zeroes or poles on the imaginary axis, $\mathcal{N}$
is the number of zeroes of $\mathcal{F}(\lambda) = 0$ in
$\mbox{Re}(\lambda)> 0$, which from the argument principle is
\begin{equation}\label{ArgMent_Principle}
  \mathcal{N} = \frac{1}{2\pi} \big[ \text{arg} \,
  \mathcal{F}(\lambda) \big]_{\Gamma} + {\mathcal P}\,.
\end{equation}
Here ${\mathcal P}$ is the number of poles of $\mathcal{F}(\lambda)$
in $\mbox{Re}(\lambda) > 0$, while
$\big[\text{arg}\,\mathcal{F}(\lambda)\big]_{\Gamma}$ denotes the
change in the argument of $\mathcal{F}(\lambda)$ over the closed,
counter-clockwise oriented contour $\Gamma$. This contour $\Gamma$ is
the limit as $\mathcal{R} \to \infty$ of the union of the imaginary
axis $\Gamma_{I} = i\lambda_I$, for $|\lambda_I|\leq \mathcal{R}$, and
the semi-circle $\Gamma_{\mathcal{R}}$, defined by
$|\lambda| = \mathcal{R}$ with $|\text{arg}(\lambda)| \leq \pi/2$. To
count the number of poles of $\mathcal{F}(\lambda)$ in
$\mbox{Re}(\lambda) > 0$, we must examine the analyticity properties
of the GCEP matrix $\mathcal{M}(\lambda)$ in \eqref{M_M0}. Since the
entries of the Green's matrix $\mathcal{G}_{\lambda}$ are analytic in
$\mbox{Re}(\lambda) > 0$, any singularity of $\mathcal{F}(\lambda)$
must arise from the diagonal matrix
${\mathcal K}(\lambda)\equiv
\mbox{diag}(\mathit{K}_c,\mathit{K}_c,\mathit{K}_m)$ of
\eqref{M_matrix1}, which is given explicitly in \eqref{cent:KcKm} in
terms of the Jacobians $J_1$ and $J_m$ of the Sel'kov kinetics for the
identical ring cells and the center cell, respectively. Since
$\mbox{det}(J_1)>0$, \eqref{cent:KcKm_def} yields that $\mathit{K}_c$ has a
complex conjugate pair of poles in $\mbox{Re}(\lambda)>0$ only if
$\mbox{tr}(J_1)>0$. Since this term involves two rows of
${\mathcal M}(\lambda)$, ${\mathcal P}$ must be incremented by four
whenever $\mbox{tr}(J_1)>0$. Similarly, since $\mbox{det}(J_m)>0$,
${\mathcal P}$ is increased by two when $\mbox{tr}(J_m)>0$ for the
center cell.

With ${\mathcal P}$ determined in this way, we numerically compute the
number of unstable eigenvalues $\mathcal{N}$ of the linearization of
the steady-state by evaluating \eqref{ArgMent_Principle} for each
point $(D,\tau)$ in the $\tau$ versus $D$ plane. For each such point,
we numerically construct the closed contour $\Gamma$ for some value
$\mathcal{R}$ (chosen so that $\Gamma$ encloses all the poles of
$\mathcal{F}$) in the complex $\lambda$-plane.  As the closed curve
$\Gamma$ is traversed in a counter-clockwise direction, the closed
image curve $\mathcal{F}(\lambda) = \mathcal{F}_R + i\mathcal{F}_I $
is evaluated numerically in the complex $\mathcal{F}$-plane.  The
winding number, denoting the number of times $\mathcal{F}$
encloses/winds around the origin, is computed numerically from the
algorithm of \cite{alciatore1995winding}, and this is used to
calculate $\big[ \text{arg} \, \mathcal{F}(\lambda) \big]_{\Gamma}$.
If the orientation of
${\mathcal F}(\lambda)={\mathcal F}_R+i{\mathcal F}_I$ around the
origin is in the counter-clockwise direction, then
$\big[ \text{arg} \, \mathcal{F}(\lambda) \big]_{\Gamma}$ is positive;
otherwise, it is negative. In our computations, we chose
$\mathcal{R} = 1.5$ (since any pole of $\mathcal{F}$ is close to the
imaginary axis of the $\lambda$-plane), and we discretized the closed
contour $\Gamma = \Gamma_{I} \cup \Gamma_{\mathcal{R}} $ into
subintervals, with $\Gamma_{I}$ having 800 subintervals while the
semi-circle $\Gamma_{{\mathcal R}}$ had $50$ subintervals. In the
function evaluation, the identity
${\mathcal F}(\overline{\lambda})=\overline{{\mathcal F}(\lambda)}$
was used to halve the computational effort.

\begin{figure}[!ht]
  \centering
    \centering
    \begin{subfigure}[b]{0.40\textwidth}
      \includegraphics[width=\textwidth,height=4.8cm]{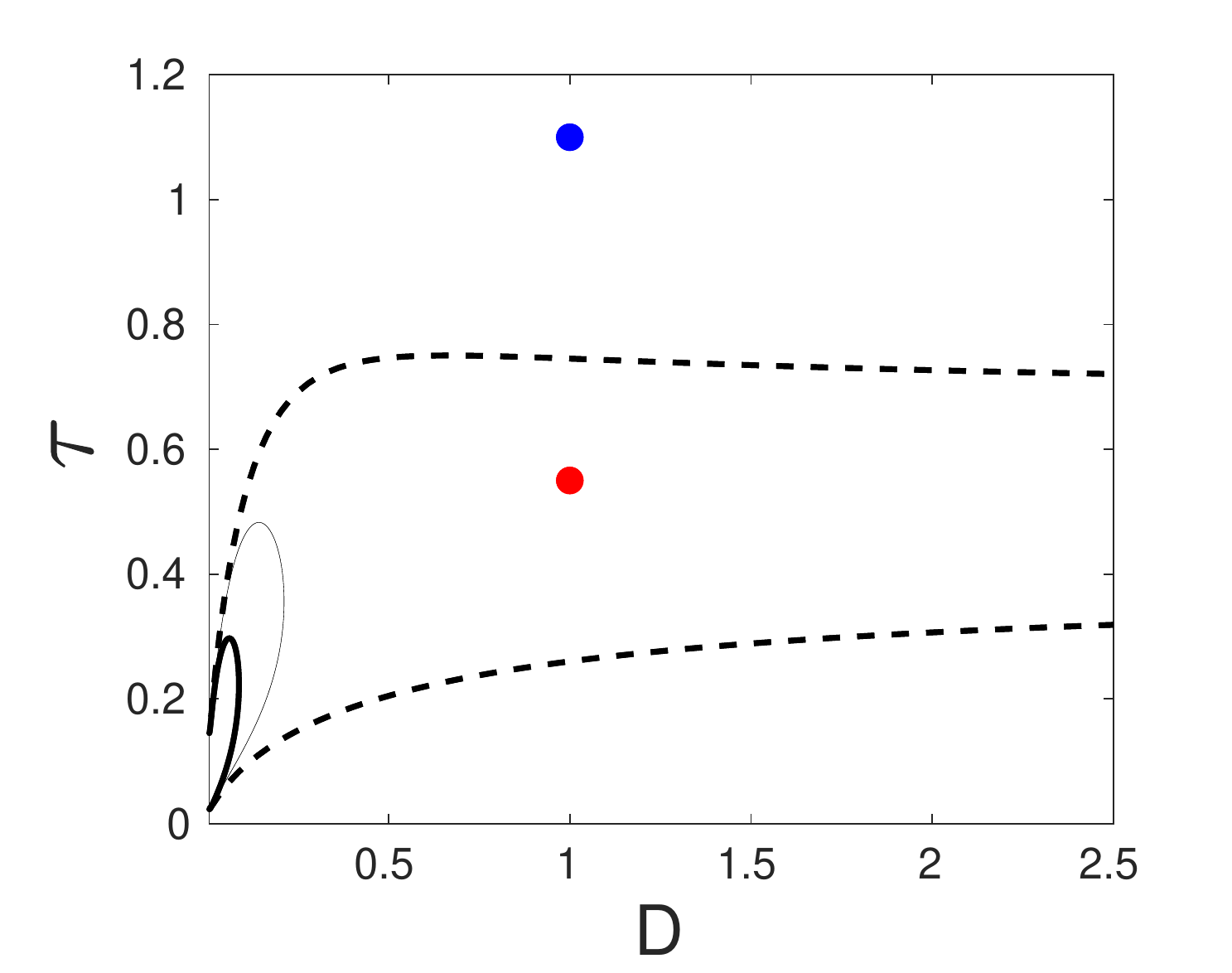}
        \caption{HB boundaries: Identical cells}
          \label{Bifur_3cells_ID}
    \end{subfigure}
    \begin{subfigure}[b]{0.40\textwidth}  
      \includegraphics[width=\textwidth,height=4.8cm]{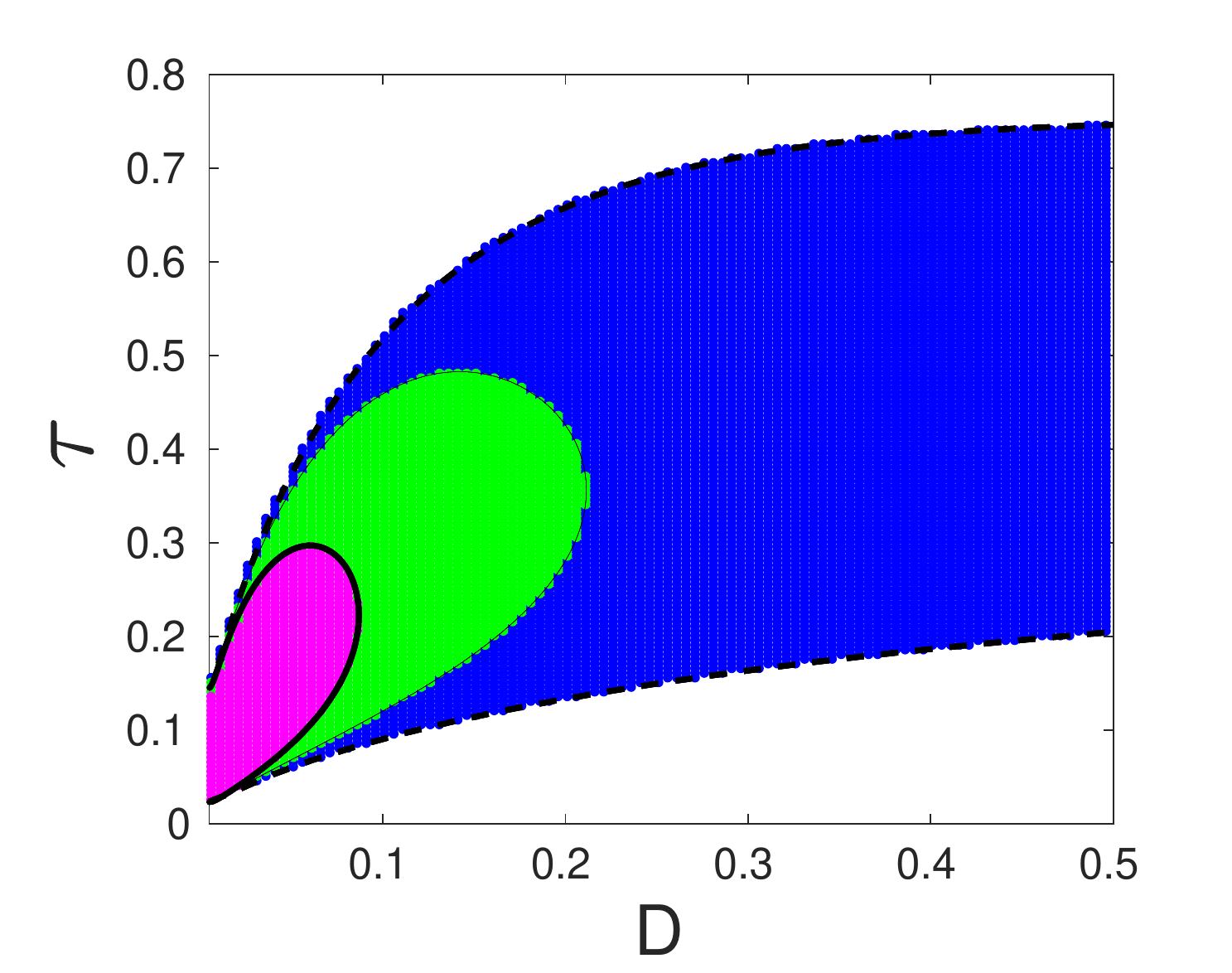}
      \caption{Number of unstable eigenvalues of the GCEP}
        \label{Bifur_Wind_RCH_ID_Zoomed}
    \end{subfigure}
  \vspace*{-1ex}
  \caption{Left panel: HB boundaries in the $\tau$ versus $D$ plane
    for a ring and center hole pattern of $m=3$ identical cells with
    ring radius $r_0=0.5$, parameters as in \eqref{Selkov_para} and
    permeabilities $d_1=0.8$ and $d_2=0.2$. The dashed and heavy solid
    curves are for the in-phase modes computed from
    \eqref{cent:HB_sync} with $(+)$ and $(-)$, respectively. The thin
    solid curve is for the anti-phase mode computed from
    \eqref{cent:HB_async}. Each mode is linearly unstable within its
    respective lobe. Linearly stable steady-state solutions exist
    outside the union of the lobes. Full PDE simulations of
    \eqref{DimLess_bulk} are shown in Figs.~\ref{FlexPDE_ID3cells} and
    \ref{FlexPDE_ID3cellsStable} at the red and blue dots,
    respectively.  Right panel: The regions of instability computed
    from the winding number \eqref{ArgMent_Principle}. Blue region:
    in-phase `+' mode is unstable with 2 roots of
    $\mathcal{F}(\lambda) = \det(\mathcal{M}(\lambda)) = 0$ in
    $\mbox{Re}(\lambda) > 0$. Green region: anti-phase mode is also
    unstable, yielding 4 roots. Magenta region: all modes are
    unstable, and there are 6 roots. Plot on right is a zoom of the
    one on the left. The HB boundaries in the left panel are
    superimposed in this figure.}
  \label{Bifur_3cells_first}
\end{figure}

\subsection{A defective center cell: different permeabilities}\label{sec:def_center_permea}

In our results below, except when otherwise stated, the Sel'kov
parameters $\alpha$, $\mu$ and $\zeta$, and permeabilities $d_1$ and
$d_2$ for the identical cells on the ring, and the common cell radius
$\varepsilon$ are
\begin{equation}\label{Selkov_para}
  \alpha = 0.9\,, \qquad \mu = 2\,, \qquad \zeta = 0.15\,,\qquad
  d_1 = 0.8\,, \qquad d_2=0.2\,,\qquad \varepsilon = 0.05\,.
\end{equation}
From Fig.~\ref{fig:selkov}, we conclude that each cell, when isolated,
has no intracellular oscillations. The permeabilities $d_{13}$ and
$d_{23}$ for the center cell will be stated in the figure captions
below.

Fig.~\ref{Bifur_3cells_ID} shows the computed HB boundaries in the
$\tau$ versus $D$ plane for a ring of radius $r_0 = 0.5$ when the
cells are all identical. We observe that one of the in-phase lobes is
open/unbounded, which predicts the existence of intracellular
oscillations even for large $D$. In
Fig.~\ref{Bifur_Wind_RCH_ID_Zoomed}, we show the corresponding regions
of instability in the $\tau$ versus $D$ parameter plane. In generating
this figure, we pixelated the $\tau$ versus $D$ plane with the
uniform spacing $\Delta \tau = \Delta D = 0.005$, and at each discrete
point $(D, \tau)$ used our winding number algorithm to count the
number of roots $\mathcal{N}$ of $\mbox{det}{\mathcal M}(\lambda)=0$
in $\mbox{Re}(\lambda)>0$. In Fig.~\ref{Bifur_Wind_RCH_ID_Zoomed},
each point in the blue-shaded region has two unstable eigenvalues for
the GCEP, and they correspond to the in-phase `+' mode. The
green-shaded region contains four unstable eigenvalues, two of which
are for the anti-phase mode while the other two for the in-phase `+'
mode. Finally, in the magenta-shaded region there are six unstable
eigenvalues of the linearization, with two such eigenvalues associated
with each of the three possible modes of instability (in-phase $\pm$
and anti-phase).  The HB boundaries in the left panel of
Fig.~\ref{Bifur_3cells_first} are superimposed on these instability
regions.

The $\mathcal{F}(\lambda)={\mathcal F}_R+i{\mathcal F}_I$ curve in the
complex $\mathcal{F}$-plane is shown in Fig.~\ref{WindingF} for a
specific point in each of the three instability regions in
Fig.~\ref{Bifur_Wind_RCH_ID_Zoomed}. These plots show how
${\mathcal F}(\lambda)$ winds around the origin
$({\mathcal F}_R,{\mathcal F}_I)=(0,0)$ (shown with a green dot) as
$\lambda$ traverses $\Gamma$ in the counterclockwise direction. For
the point $(D,\tau) = (0.05,0.15)$ in the magenta-shaded region, we
observe from the left panel of Fig.~\ref{WindingF} that
$\big[ \text{arg} \, \mathcal{F}(\lambda) \big]_{\Gamma} = 0$. At this
point, ${\mathcal F}(\lambda)$ has two poles, one of order four and
the other of order 2, so that ${\mathcal P}=6$. As such,
\eqref{ArgMent_Principle} yields that there are 6 roots (counting
multiplicity) to $\mathcal{F}(\lambda) = 0$ in
$\mbox{Re}(\lambda) > 0$. At the point $(D,\tau) = (0.15,0.35)$ in the
anti-phase mode instability region (green-shaded region in
Fig.~\ref{Bifur_Wind_RCH_ID_Zoomed}), ${\mathcal F}(\lambda)$ winds
round the origin twice in the clockwise direction as shown in the
middle panel of Fig.~\ref{WindingF}, which yields
$\big[ \text{arg} \, \mathcal{F}(\lambda) \big]_{\Gamma} =
-4\pi$. Since ${\mathcal P}=6$ at this point,
\eqref{ArgMent_Principle} yields that $\mathcal{F}(\lambda)=0$ has
four roots (counting multiplicity) in $\mbox{Re}(\lambda) > 0$.  In
the right panel of Fig.~\ref{WindingF}, we present a similar result
for the point $(D,\tau) = (0.4,0.5)$ in the blue-shaded region in
Fig.~\ref{Bifur_Wind_RCH_ID_Zoomed}. At this point, we calculate
$\big[ \text{arg} \, \mathcal{F}(\lambda) \big]_{\Gamma} = -8\pi$ and
that ${\mathcal F}$ has a pole of order four and a pole of order two
in $\mbox{Re}(\lambda)>0$. As such, \eqref{ArgMent_Principle} yields
that $\mathcal{F}(\lambda)=0$ has two roots in
$\mbox{Re}(\lambda) > 0$.

\begin{figure}[!ht]
  \centering
  \makebox{
    \raisebox{0.5ex}{}
 \includegraphics[width=0.30 \textwidth,height=4.5cm]{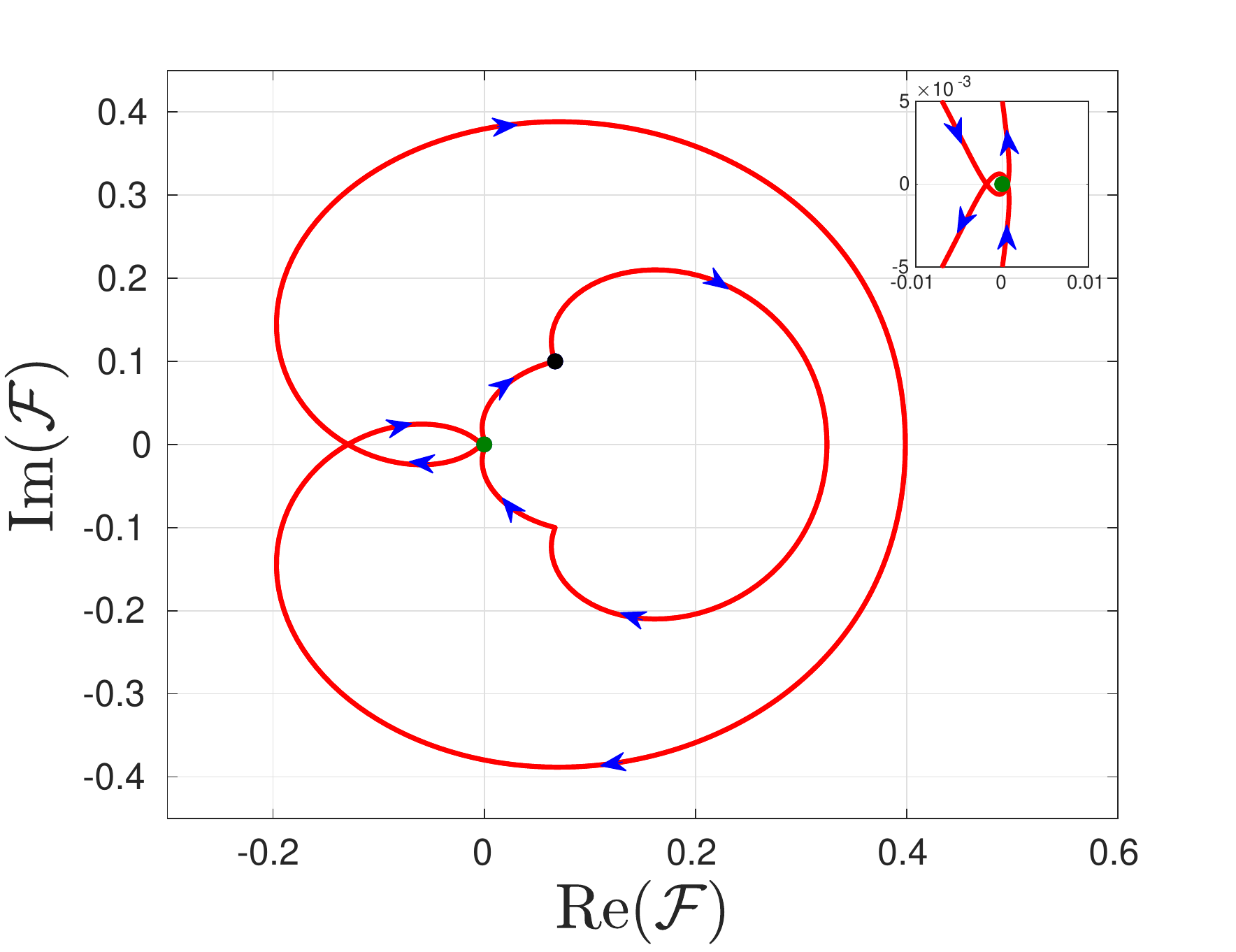}
    \phantomsubcaption
    \label{WindingFA}
  }  
  \makebox{
    \raisebox{0.5ex}{}
\includegraphics[width=0.30 \textwidth,height=4.5cm]{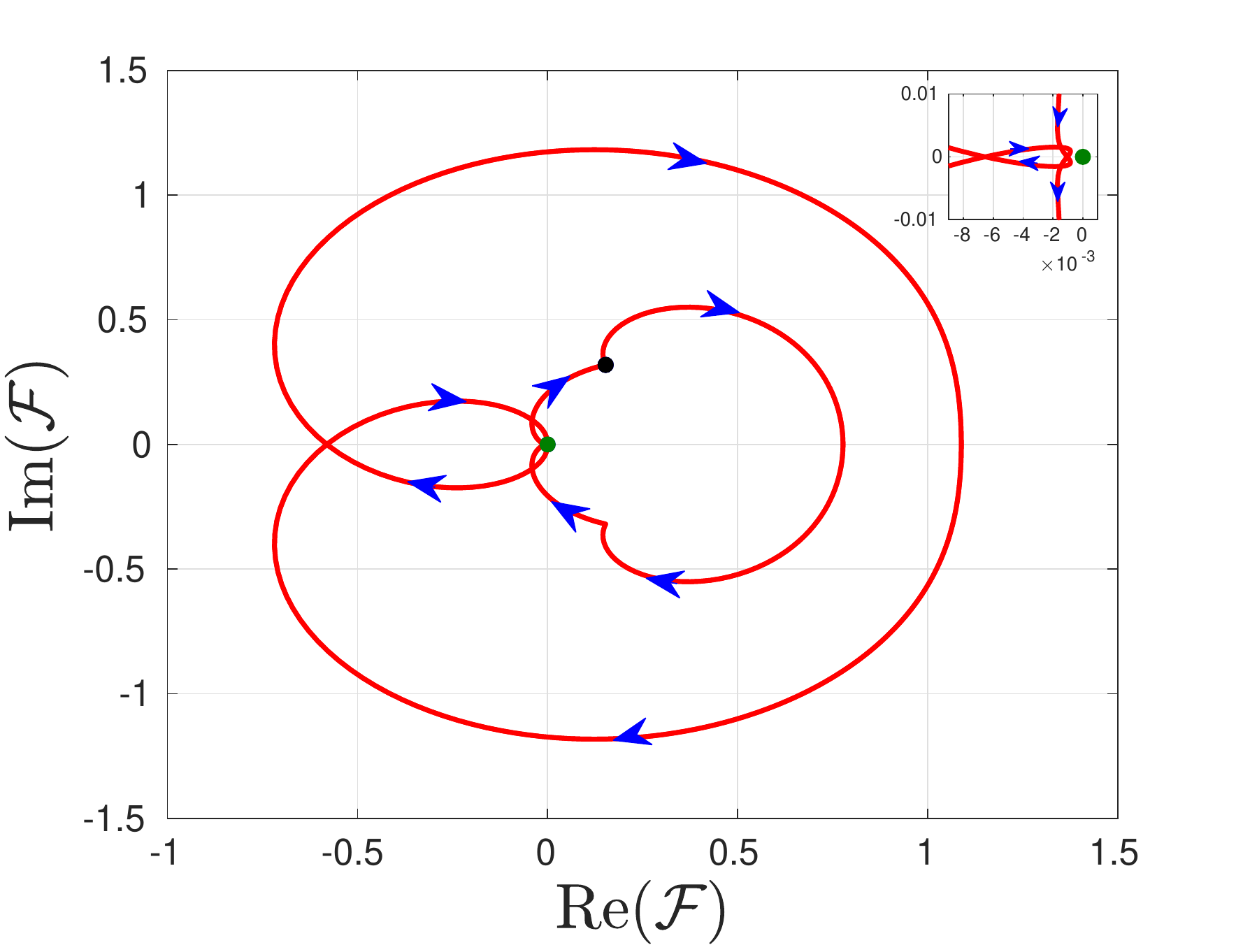}
    \phantomsubcaption
    \label{WindingFB}
  }
  \makebox{
    \raisebox{0.5ex}{}
    \includegraphics[width=0.30\textwidth,height=4.5cm]{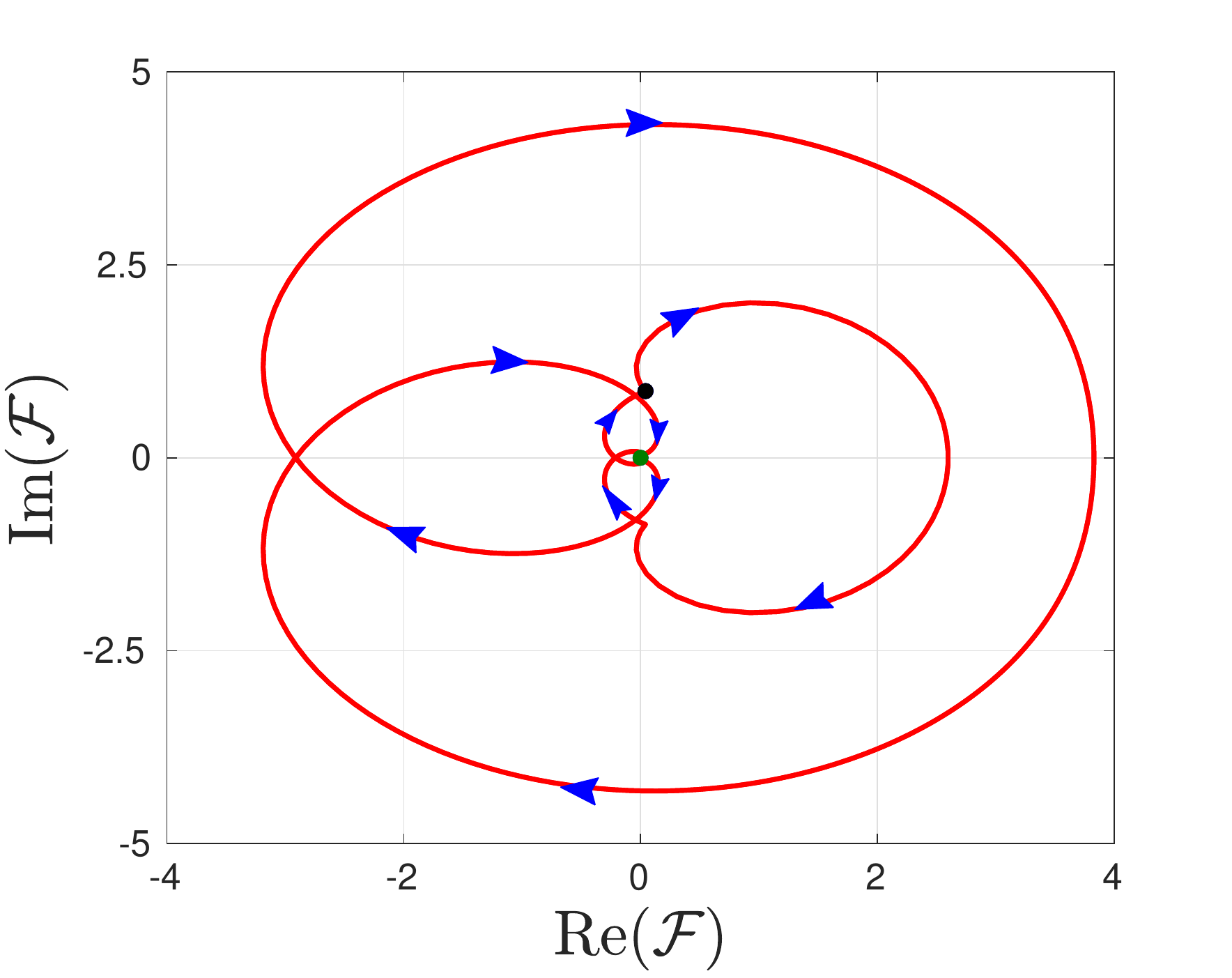}
    \phantomsubcaption
    \label{WindingFC}
  }
  \vspace*{-1ex}
  \caption{The closed curve
      $\mathcal{F(\lambda)} = \det(\mathcal{M}(\lambda))$ plotted for
      a specific points in each of the three regions of instability in
      Fig.~\ref{Bifur_Wind_RCH_ID_Zoomed}. The closed contour
      $\Gamma = \Gamma_{I} \cup \Gamma_{{\mathcal R}} $ in the
      $\lambda$-plane was constructed with $\mathcal{R} = 1.5$, and
      discretized with $800$ subintervals on $\Gamma_{I}$ and $50$
      subintervals on $\Gamma_{{\mathcal R}}$. The green dot
      represents the origin $({\mathcal F}_R,{\mathcal F}_I)=(0,0)$,
      the black dot indicates the starting point of the curve, and the
      blue arrows show the direction of the curve. The inserts detail
      the behavior near the origin.  Left panel: for
      $(D,\tau) = (0.05,0.15)$ in the magenta-shaded region of
      Fig.~\ref{Bifur_Wind_RCH_ID_Zoomed}, we have
      $\big[ \text{arg} \, \mathcal{F}(\lambda) \big]_{\Gamma} = 0$.
      Middle panel: for $(D,\tau) = (0.15,0.35)$ in the green-shaded
      region in Fig.~\ref{Bifur_Wind_RCH_ID_Zoomed}, we have
      $\big[ \text{arg} \, \mathcal{F}(\lambda) \big]_{\Gamma} =
      -4\pi$. Right panel: for $(D,\tau) = (0.4,0.5)$ in the
      blue-shaded region of Fig.~\ref{Bifur_Wind_RCH_ID_Zoomed}, we
      have
      $\big[ \text{arg} \, \mathcal{F}(\lambda) \big]_{\Gamma} =
      -8\pi$.}
  \label{WindingF}
\end{figure}

The real and imaginary parts of the normalized eigenvector $\pmb{c}$
of the GCEP matrix ${\mathcal M}$ in \eqref{M_M0} is given in
Table~\ref{Table:ID_RCH_3cells} for selected points on the HB
boundaries in Fig.~\ref{Bifur_3cells_ID}. Recall from
\eqref{nstabform:c} that the magnitude of the components of the
eigenvector $\pmb{c}$ measure the diffusive flux at the boundary of
each cell, while $\tilde{\pmb{c}}\equiv {\mathcal K}\pmb{c}$ predicts
the relative amplitude and phase shifts of the intracellular
oscillations within the cells at the Hopf bifurcation point. For our
ring and center-cell pattern
${\mathcal K}=\mbox{diag}\left(\mathit{K}_c,\mathit{K}_c,\mathit{K}_m
\right)$, where $\mathit{K}_c$ and $\mathit{K}_m$ are given in
\eqref{cent:KcKm_def}.

\begin{table}[!httbp]
\centering
  \begin{tabular}{ c | c | c | c | c | c } \hline 
\rowcolor{LightCyan}
mode &$(D,\tau)$  &  j  &  $\Big( \mbox{Re}(c_j),\mbox{Im}(c_j) \Big)$ & $\theta_j\, (\text{rad})$ & $ \Big( \mbox{Re}(\tilde{c}_j),\mbox{Im}(\tilde{c}_j) \Big)$\\  \hline \hline    \rowcolor{Cyan}
    &         & 1 & $(0.579,0)$  & $0$  & $(-0.569,0.067)$ \\ \rowcolor{Cyan}
    In-phase ($+$) & $ (1.021,0.262) $ & 2 & $(0.579,0)$ & $0$  & $(-0.569,0.067)$ \\
    \rowcolor{Cyan}
 (dashed curve)   &   & 3 & $(0.575,0.0144)$  & $0.0251$ & $(-0.582,0.0625)$ \\ 
\hline  \hline \rowcolor{Gray}
     &        & 1 & $(-0.412,-0.004)$  & $3.15$ & $(0.395,-0.00429)$ \\ \rowcolor{Gray}
    In-phase ($ - $) & $ (0.0857,0.199) $ & 2 & $(-0.412,-0.004)$  & $3.15$ &
    $(0.395,-0.00429)$ \\ \rowcolor{Gray}
 (heavy solid)   &      & 3 & $(0.813,0)$   & $0$  & $(-0.829,0.0239)$ \\ 
\hline \hline  \rowcolor{Cyan}
     &        & 1 & $(0.707,0)$    & $0$  & $(-0.706,0.042)$ \\ \rowcolor{Cyan}
Anti-phase  &$ (0.211,0.365) $ & 2 & $(-0.707,0)$   & $\pi$ & $(0.706,-0.042)$ \\ \rowcolor{Cyan}
 (thin solid)   &         & 3 & $(0,0)$   & $0$  & $(0,0)$ \\ 
  \hline
\end{tabular}
\caption{Real and imaginary parts of the eigenvector $\pmb{c}$ of the
  GCEP matrix ${\mathcal M}(\lambda)$ in \eqref{M_M0}, together with
  $\tilde{\pmb{c}}\equiv {\mathcal K}\pmb{c}$, as computed for a few points on
  the HB boundaries in Fig.~\ref{Bifur_3cells_ID} for three identical
  cells.  The second to last column shows the phase shifts measured in
  terms of the angle each component of the vector $\pmb{c}$ makes with
  the positive real axis in anticlockwise direction.}
\label{Table:ID_RCH_3cells}
\end{table}

In Fig.~\ref{FlexPDE_ID3cells} we show full numerical simulations of
the coupled PDE-ODE model \eqref{DimLess_bulk} obtained using the
commercial PDE software package FlexPDE \cite{flexpde2015solutions}
for $\tau = 0.55$ and $D=1$, which corresponds to the red dot in the
phase diagram of Fig.~\ref{Bifur_3cells_ID}. We observe from the
results in this figure that the intracellular dynamics of the cells
are synchronized with a very slight phase shift, which agrees
  with the prediction by the eigenvector ${\mathcal K}\pmb{c}$ in the
  first three rows of Table~\ref{Table:ID_RCH_3cells} from the
  linearized theory.  Although the cells have identical parameters,
the center and ring cells have slightly different dynamics owing to
the fact that the full Green's matrix is not cyclic for a ring and
center cell pattern. Our numerical computations of
$\det{\mathcal M}(\lambda)=0$ for $(D,\tau)=(1,0.55)$ using the GCEP
matrix in \eqref{M_M0} yields that $\mbox{Re}(\lambda)\approx 0.0143$,
$\mbox{Im}(\lambda)\approx 0.762$,
$\mbox{Re}(\pmb{c})\approx (0.588,0.588,0.556)$ and
$\mbox{Im}(\pmb{c})\approx (0,0,0.0178)$. Observe that the eigenvector
is rather close to that on the nearby point on the HB boundary, as
given in the first three rows of Table \ref{Table:ID_RCH_3cells}. The
prediction from linearized theory is that the period of oscillations
is approximately ${2\pi/\mbox{Im}(\lambda)}\approx 8.25$, which is
rather close to the period observed in the full PDE simulations of
Fig.~\ref{FlexPDE_ID3cells}. A similar full numerical result is
presented in Fig.~\ref{FlexPDE_ID3cellsStable} for $\tau = 1.1$ and
$D = 1$, corresponding to the blue dot in
Fig.~\ref{Bifur_3cells_ID}. At this pair $(D,\tau)$, our phase diagram
predicts no intracellular oscillations. This is confirmed from the
full numerical results shown in Fig.~\ref{FlexPDE_ID3cellsStable}.

\begin{figure}[htbp]
  \centering
	\makebox{
    \raisebox{0.5ex}{}
    \includegraphics[width=0.30\textwidth,height=4.5cm]{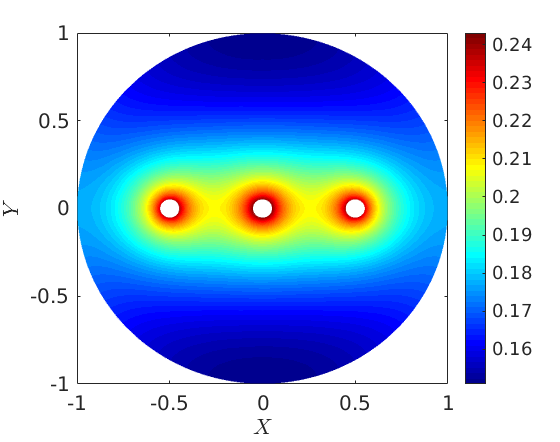}
    \phantomsubcaption
    \label{FlexPDE_ID3cells_surf}
  }  
  \makebox{
    \raisebox{0.5ex}{}
    \includegraphics[width=0.30\textwidth,height=4.5cm]{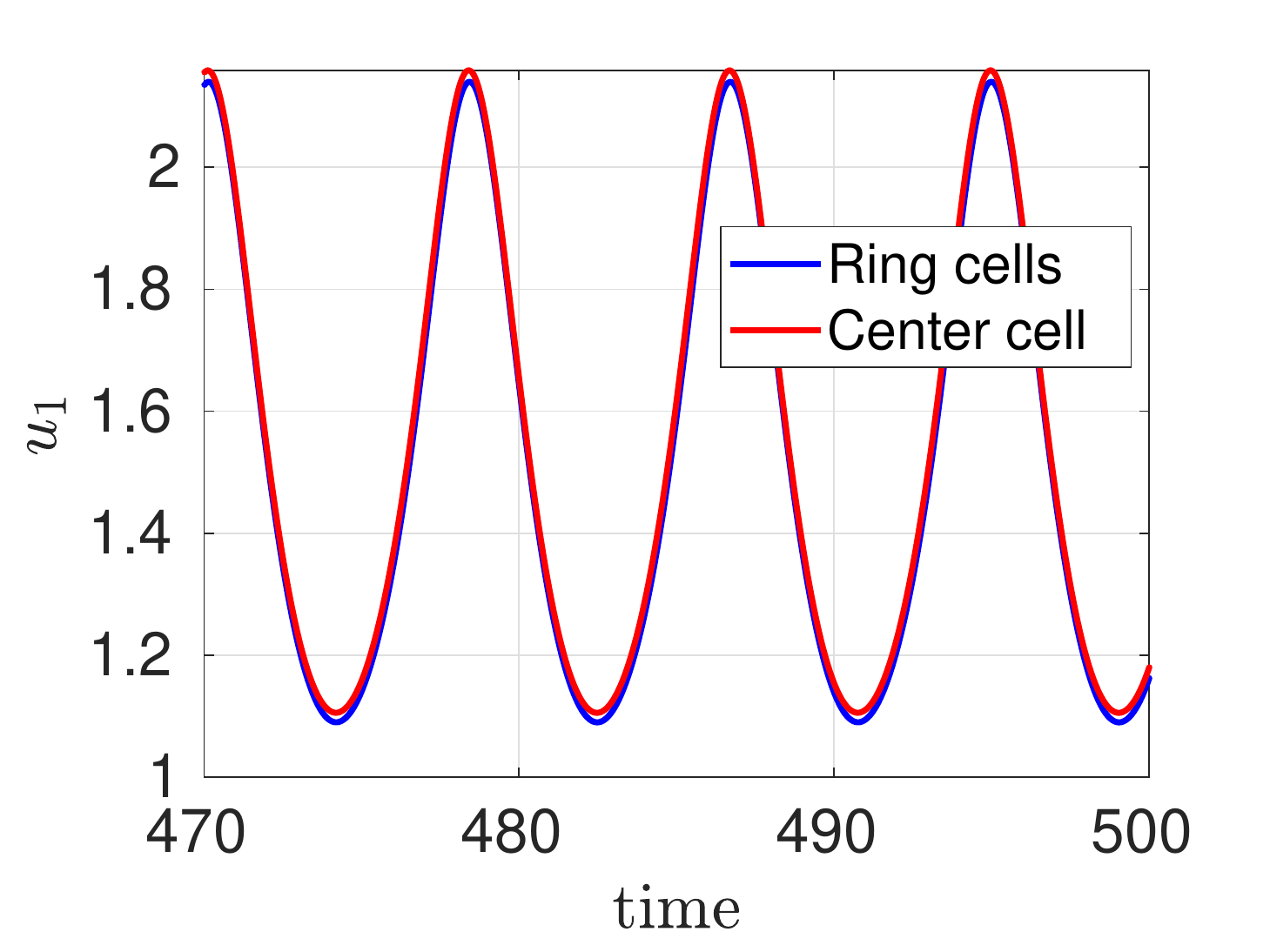}
    \phantomsubcaption
    \label{FlexPDE_ID3cells_bulk}
  }
  \makebox{
    \raisebox{0.5ex}{}
    \includegraphics[width=0.30\textwidth,height=4.5cm]{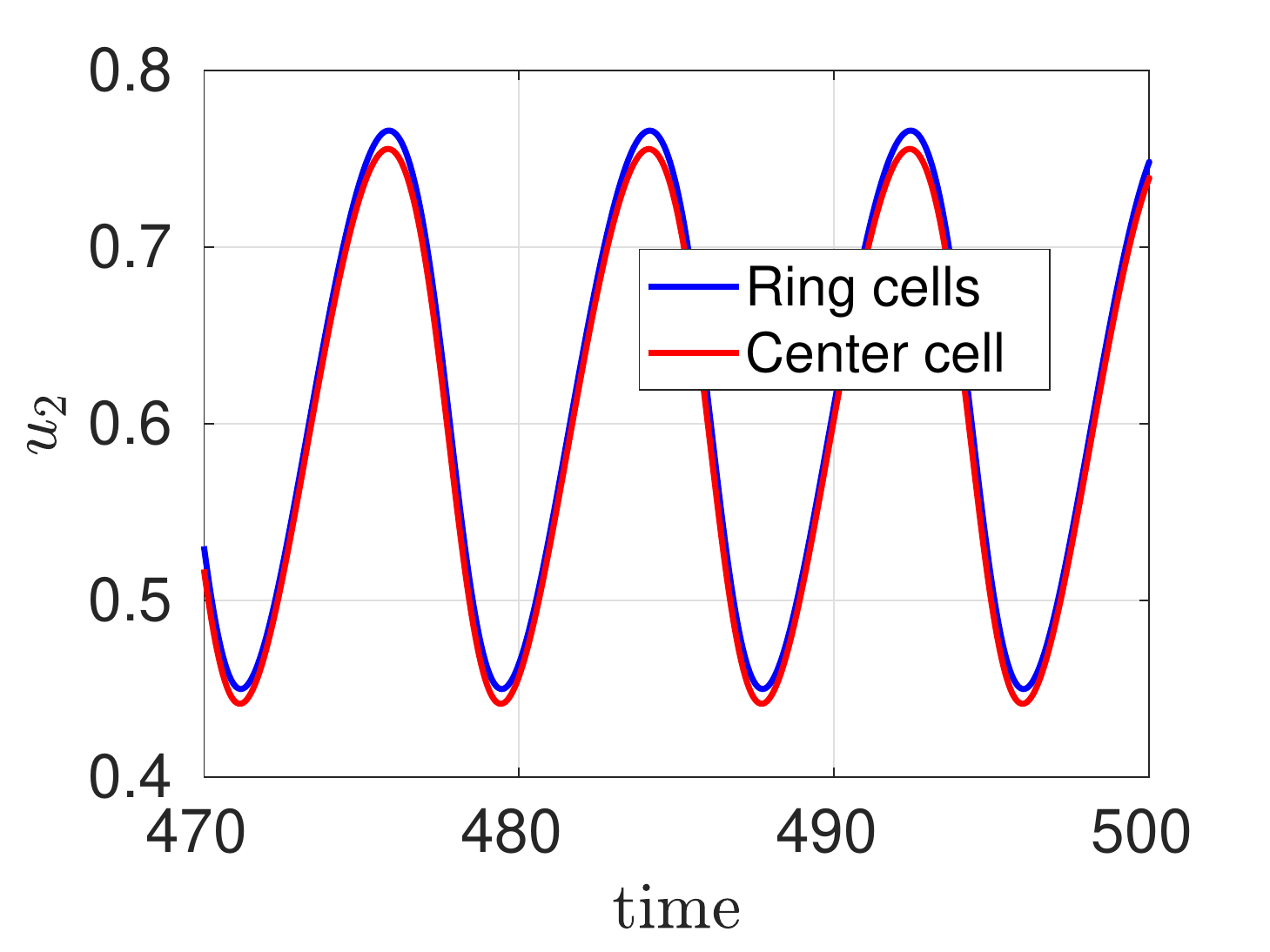}
    \phantomsubcaption
    \label{FlexPDE_ID3cells_u1u2}
  }
  \vspace*{-1ex}
  \caption{Full PDE simulations of \eqref{DimLess_bulk}, computed
    with FlexPDE \cite{flexpde2015solutions}, for $\tau =0.55 $ and
    $D = 1$ for three identical cells corresponding to the red dot in
    Fig.~\ref{Bifur_3cells_ID}. Left panel: surface plot at time
    $t = 400$. Middle panel: intracellular species $u_1$ versus $t$.
    Right panel: intracellular species $u_2$ versus $t$. The blue and
    red curve is for the cells on the ring and the center cell,
    respectively.}
  \label{FlexPDE_ID3cells}
\end{figure}

\begin{figure}[htbp]
  \centering
	\makebox{
    \raisebox{0.5ex}{}
    \includegraphics[width=0.30\textwidth,height=4.5cm]{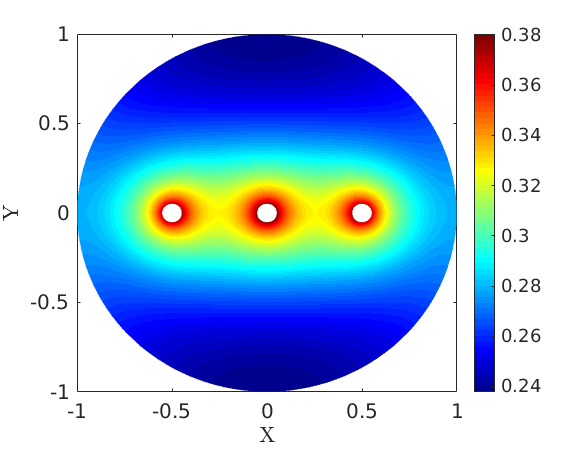}
    \phantomsubcaption
    \label{FlexPDE_3ID_surfStable}
  }  
  \makebox{
    \raisebox{0.5ex}{}
    \includegraphics[width=0.30\textwidth,height=4.5cm]{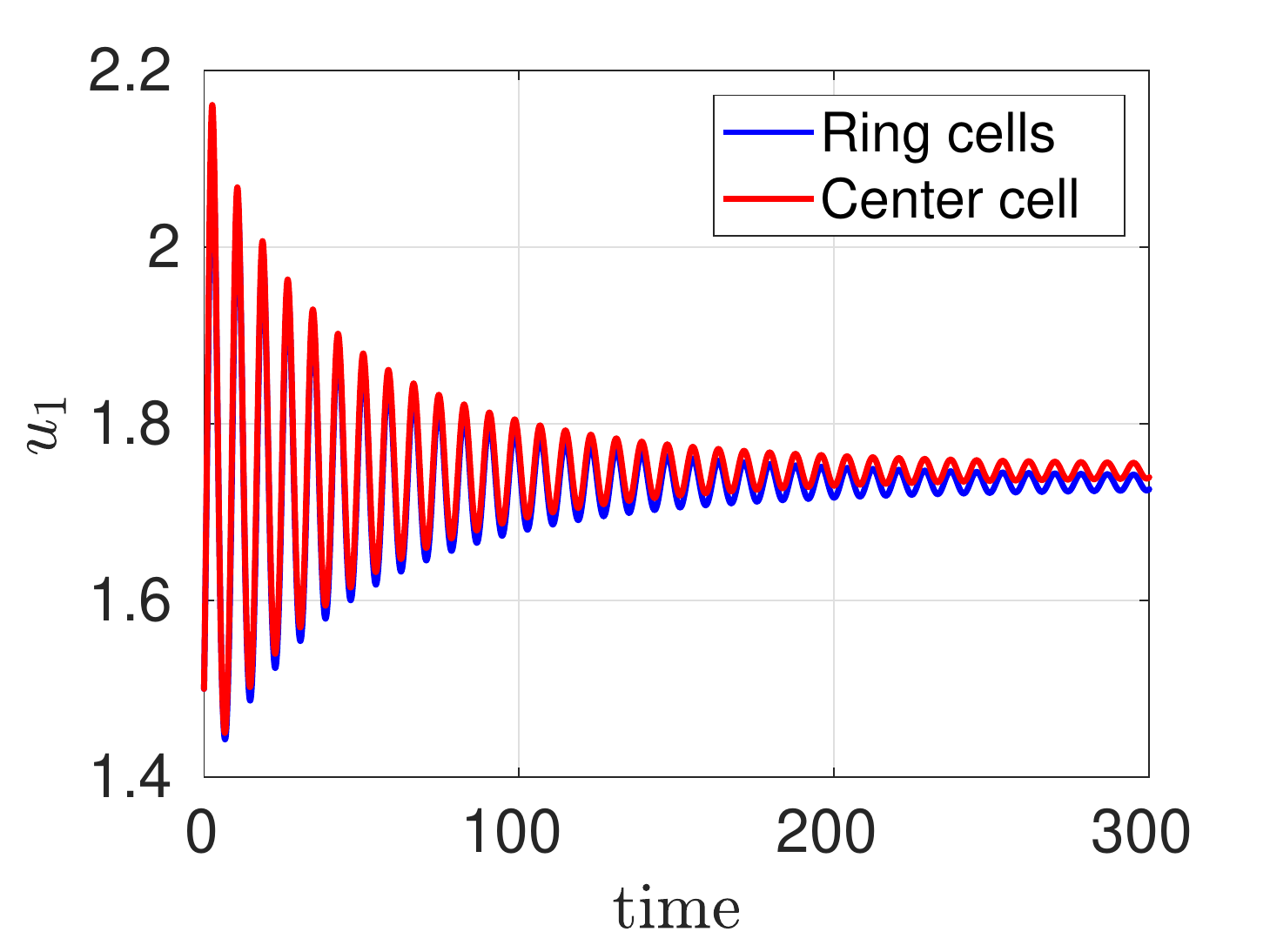}
    \phantomsubcaption
    \label{FlexPDE_3ID_bulkStable}
  }
  \makebox{
    \raisebox{0.5ex}{}
    \includegraphics[width=0.30\textwidth,height=4.5cm]{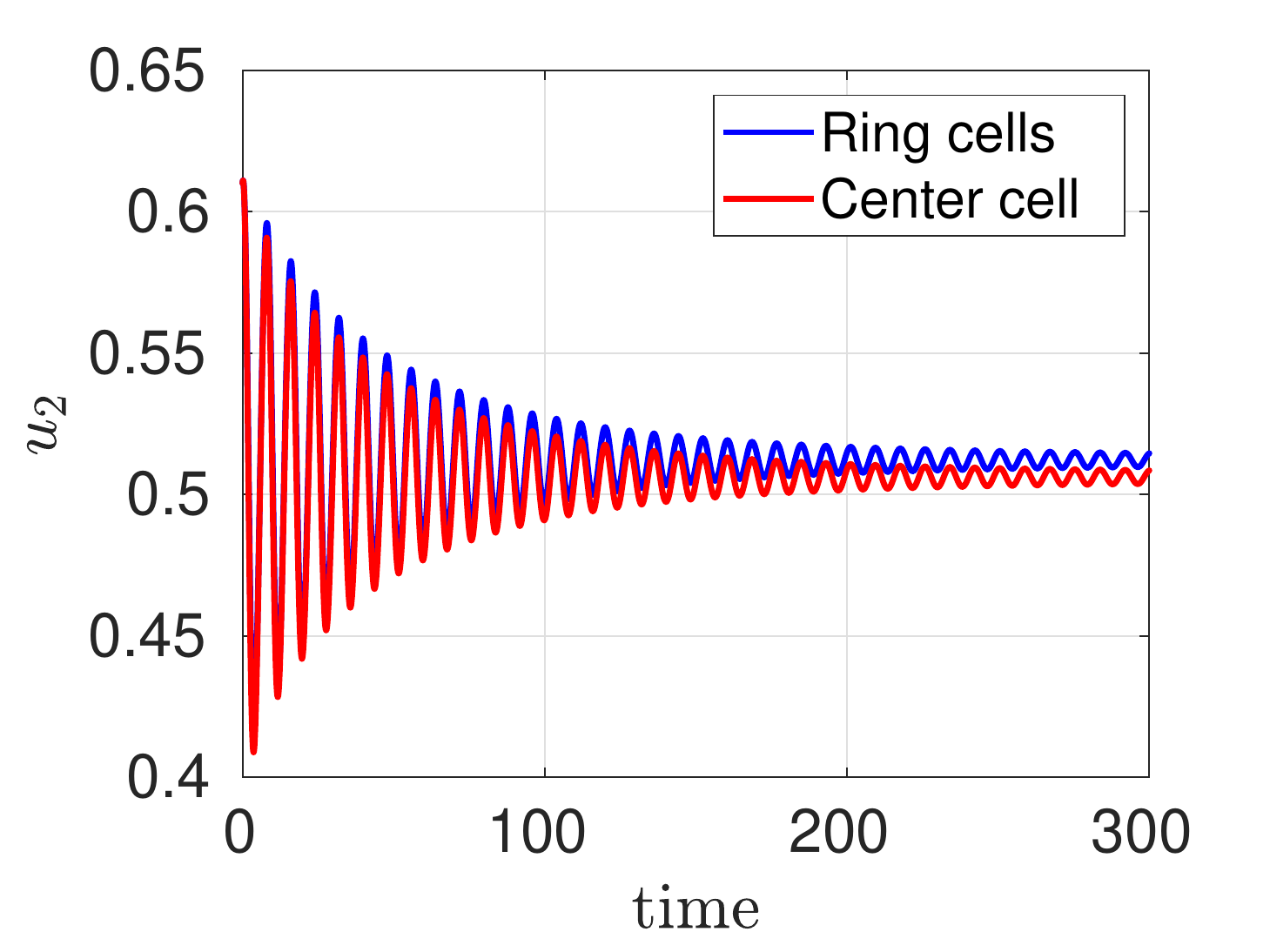}
    \phantomsubcaption
    \label{FlexPDE_ID3cells_u1u2Stable}
  }
  \vspace*{-1ex}
  \caption{Same caption as in Fig.~\ref{FlexPDE_ID3cells} except that
    now $\tau =1.1$ and $D = 1$, corresponding to the blue dot in
    Fig.~\ref{Bifur_3cells_ID}. There are no intracellular
    oscillations and the steady-state is linearly stable.}
  \label{FlexPDE_ID3cellsStable}
\end{figure}

\begin{figure}[htbp]
  \begin{minipage}[c]{0.38\textwidth}
    \centering
    \includegraphics[width=\textwidth,height=5.4cm]{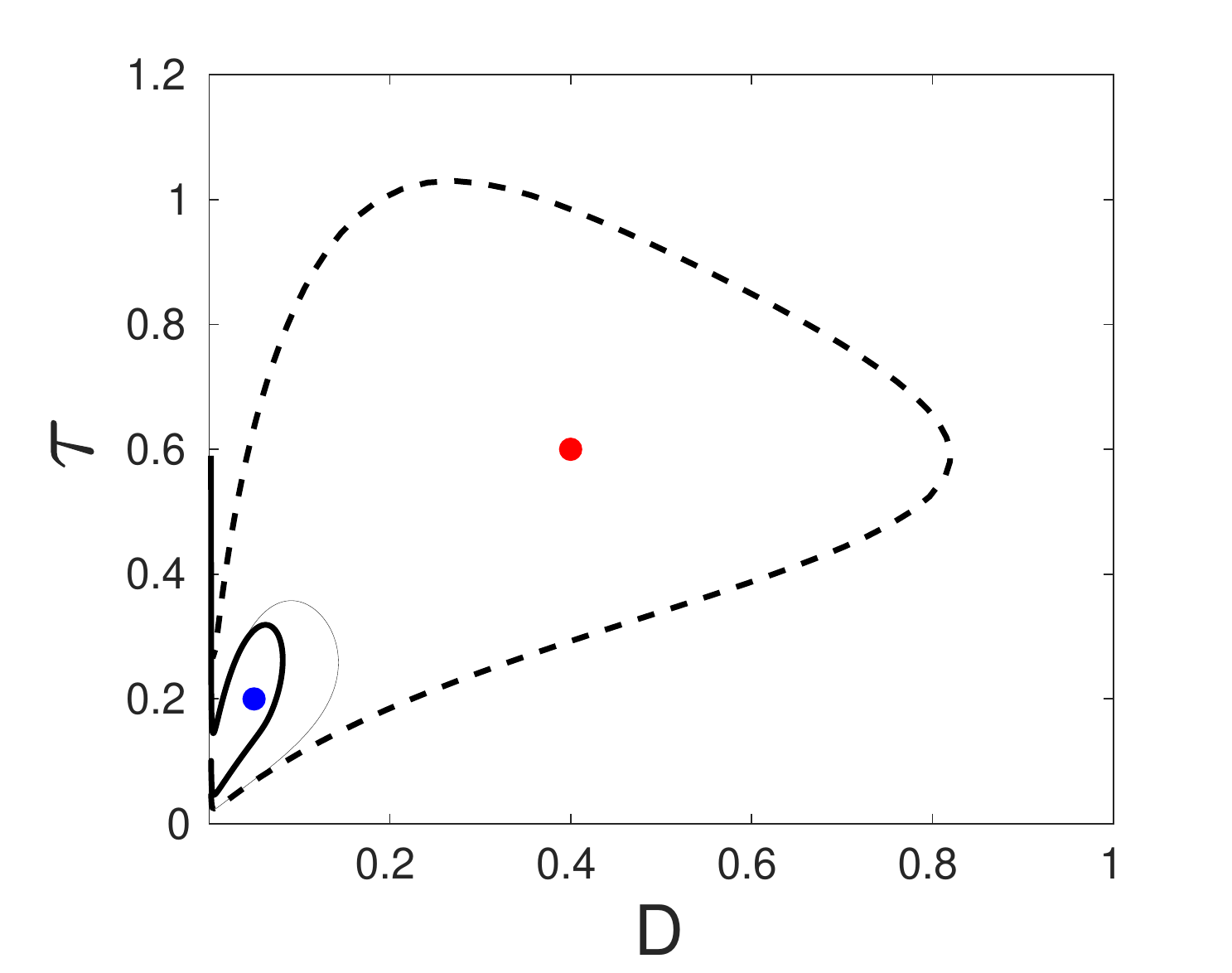}
  \end{minipage}%
  \begin{minipage}[c]{0.62\textwidth}
    \centering
 \begin{tabular}[c]{ c | c | c | c | c | c } \hline  
\rowcolor{LightCyan}
 mode &$(D,\tau)$  &  j  &  Re(${c}_j$)   & Im(${c}_j$)  & $\theta_j\, (\text{rad})$ \\  \hline \hline    \rowcolor{Cyan}
    &         & 1 & $0.345$    & $0.171$  & $0.460$ \\ \rowcolor{Cyan}
In-phase ($+$) & $ (0.820,0.584) $ & 2 & $0.345$    & $0.171$  & $0.460$ \\ \rowcolor{Cyan}
 (dashed curve)  &        & 3 & $0.839$   & $0$  & $0$ \\ 
\hline  \hline \rowcolor{Gray}
     &        & 1 & $-0.446$     & $0.256$ & $2.62$ \\ \rowcolor{Gray}
In-phase ($-$) & $ (0.0775,0.216) $ & 2 & $-0.446$    & $0.256$  & $2.62$ \\ \rowcolor{Gray}
 (heavy solid)  &      & 3 & $0.686$   & $0$  & $0$ \\ 
\hline \hline  \rowcolor{Cyan}
     &        & 1 & $0.707$    & $0$  & $0$ \\ \rowcolor{Cyan}
Anti-phase  &$ (0.129,0.325) $ & 2 & $-0.707$    & $0$  & $\pi$ \\ \rowcolor{Cyan}
(thin solid)  &         & 3 & $0$   & $0$  & $0$ \\ 
  \hline
  \end{tabular}
\end{minipage}
\caption{Left panel: same caption as in Fig.~\ref{Bifur_3cells_ID}
  except that now the center cell is a defector with permeabilities
  $d_{13} = 0.4$ and $d_{23} = 0.2$, corresponding to a reduced influx
  into the center cell. Full PDE simulations are shown in
  Figs.~\ref{FlexPDE_Def_3cells} and \ref{FlexPDE_Def3cellsAsync} at
  the red and blue dots, respectively. Right panel: same caption as in
  Table~\ref{Table:ID_RCH_3cells} except that now $d_{13}=0.4$ and
  $d_{23}=0.2$. The real and imaginary parts of the eigenvector
  $\pmb{c}$ of the GCEP matrix ${\mathcal M}(\lambda)$ in
   \eqref{M_M0} are computed for a few points on the HB
  boundaries shown in the left panel.}
\label{fig:defective_middle}
\end{figure}

\begin{figure}[!httbp]
  \centering
	\makebox{
    \raisebox{0.5ex}{}
    \includegraphics[width=0.30\textwidth,height=4.5cm]{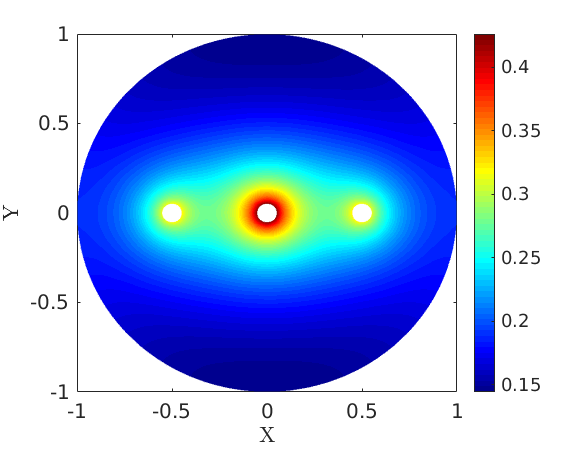}
    \phantomsubcaption
    \label{FlexPDE_Def3cells_surfSync}
  }  
  \makebox{
    \raisebox{0.5ex}{}
    \includegraphics[width=0.30\textwidth,height=4.5cm]{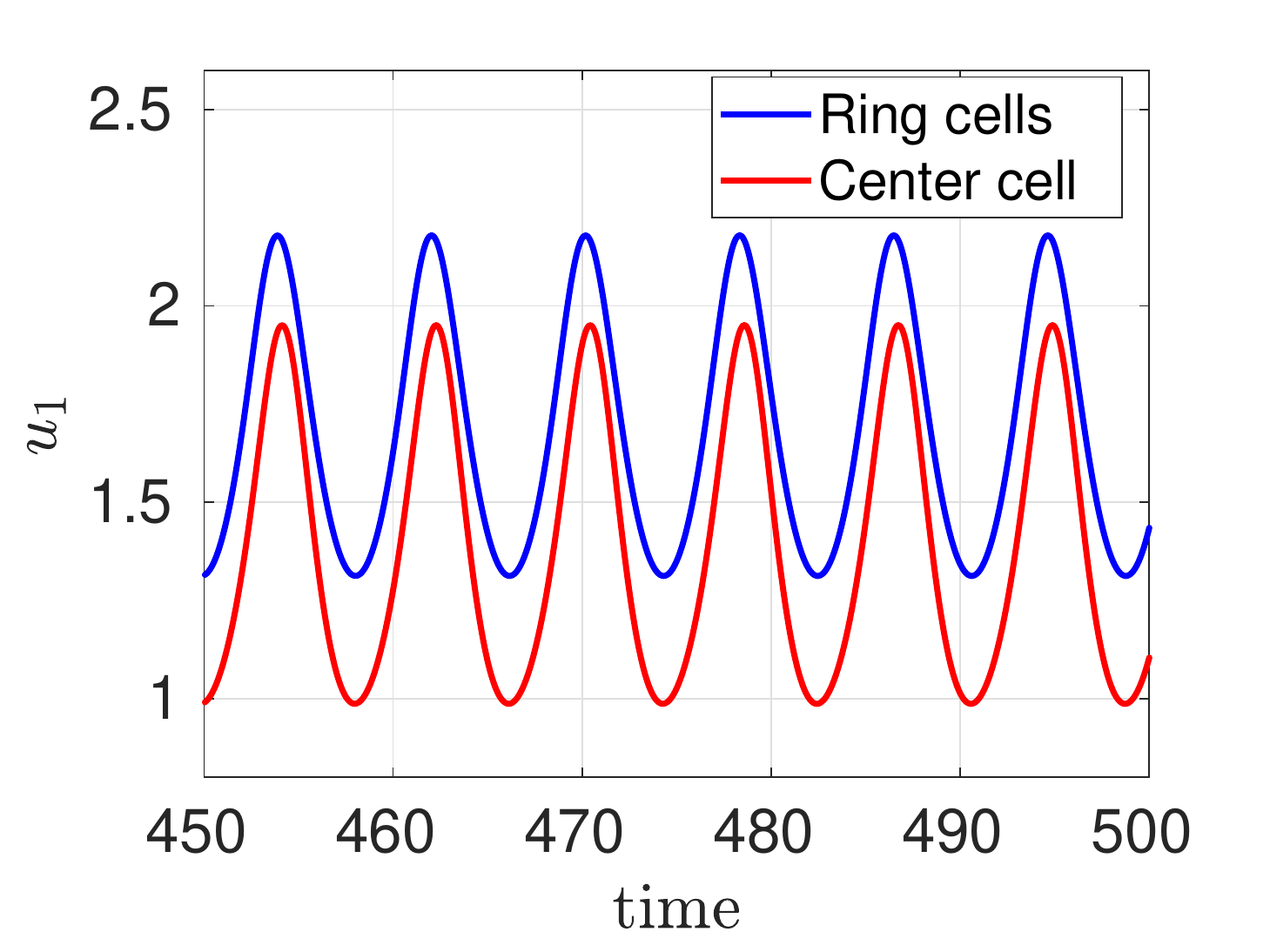}
    \phantomsubcaption
    \label{FlexPDE_Def3cells_RingSync}
  }
  \makebox{
    \raisebox{0.5ex}{}
    \includegraphics[width=0.30\textwidth,height=4.5cm]{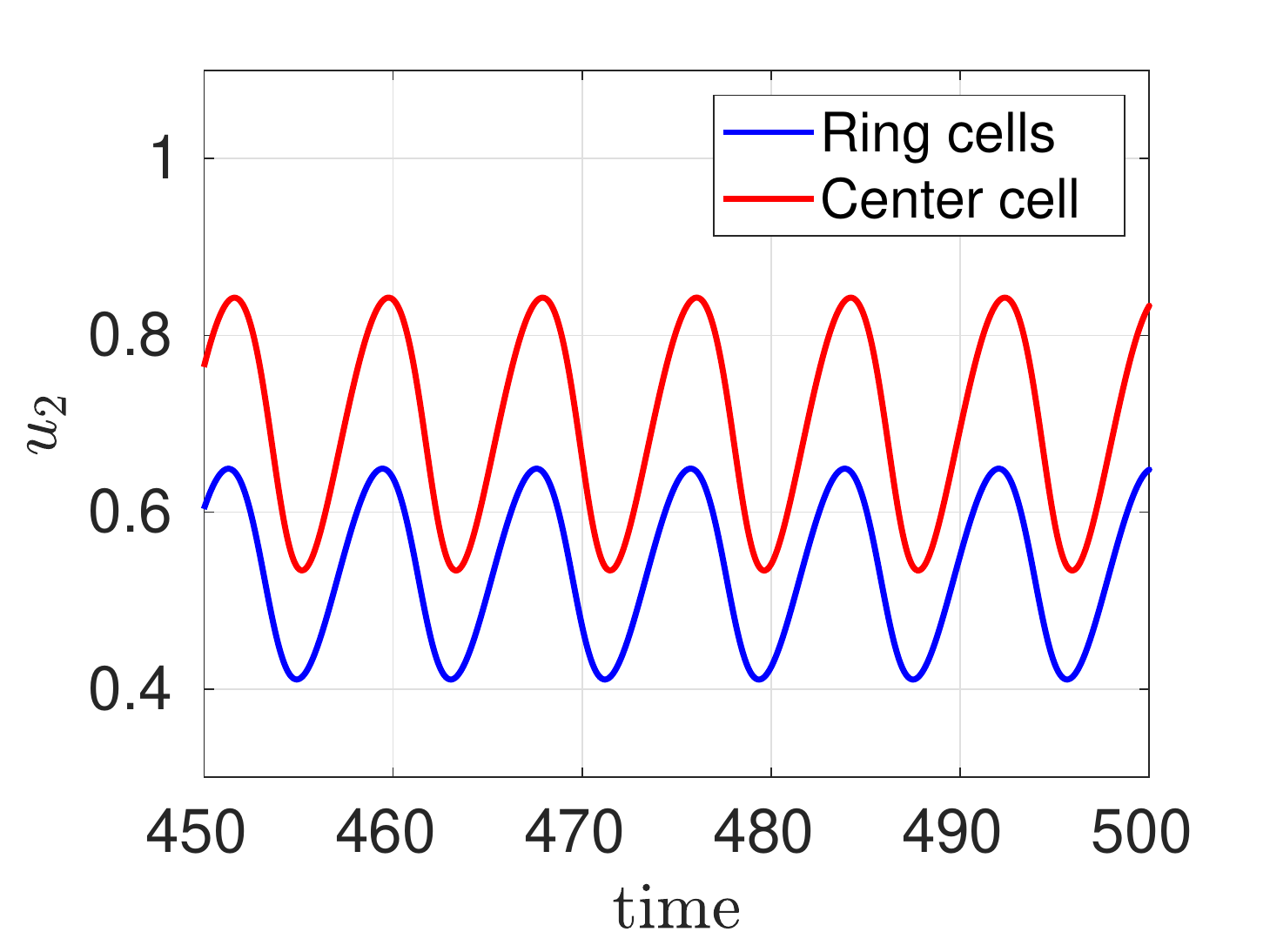}
    \phantomsubcaption
    \label{FlexPDE_Def3cells_CentreSync}
  }
  \vspace*{-1ex}
  \caption{Full PDE simulations of \eqref{DimLess_bulk}, computed
    with FlexPDE \cite{flexpde2015solutions}, for $\tau =0.6$ and
    $D = 0.4$, corresponding to the red dot in the left panel of
    Fig.~\ref{fig:defective_middle}. The center cell is a defector with
    permeabilites $d_{13}=0.4$ and $d_{23}=0.2$. Left panel: surface plot at
    time $t = 400$. Middle panel: intracellular species $u_1$ versus $t$.
    Right panel: intracellular species $u_2$ versus $t$.}
  \label{FlexPDE_Def_3cells}
\end{figure}

\begin{figure}[!httbp]
  \centering
	\makebox{
    \raisebox{0.5ex}{}
    \includegraphics[width=0.30\textwidth,height=4.5cm]{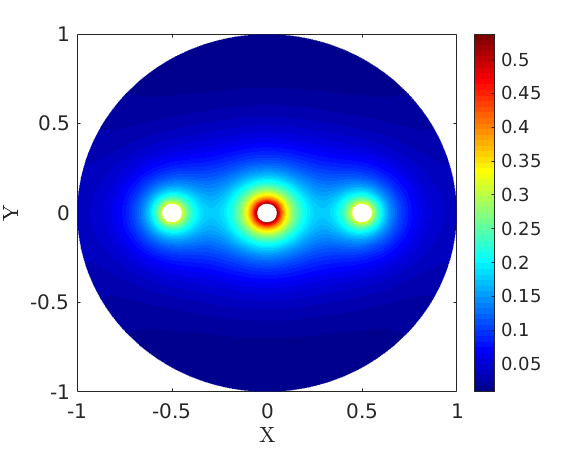}
    \phantomsubcaption
    \label{FlexPDE_Def3cells_surfAsync}
  }  
  \makebox{
    \raisebox{0.5ex}{}
    \includegraphics[width=0.30\textwidth,height=4.5cm]{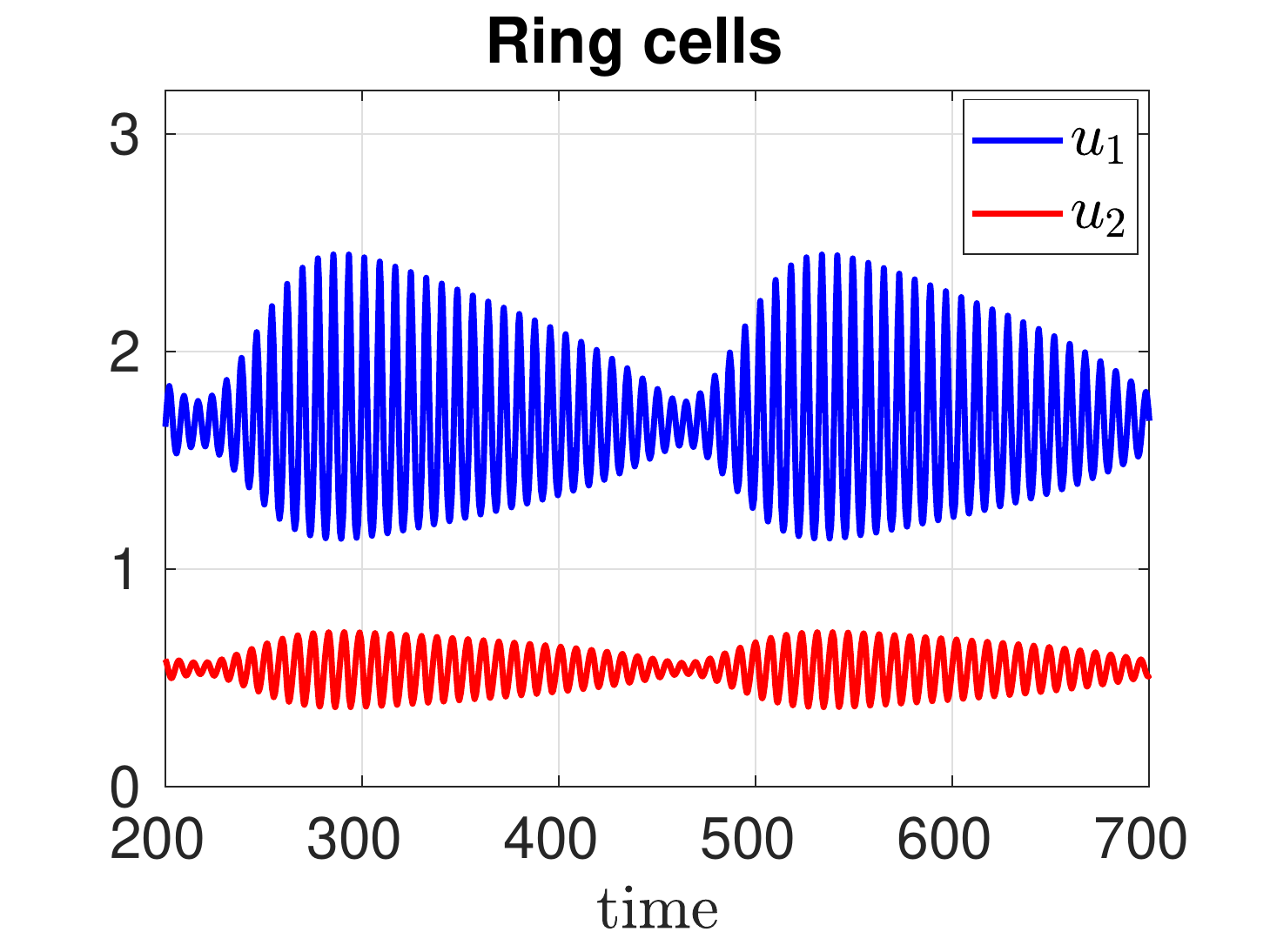}
    \phantomsubcaption
    \label{FlexPDE_Def3cells_ringcellsAsync}
  }
  \makebox{
    \raisebox{0.5ex}{}
    \includegraphics[width=0.30\textwidth,height=4.5cm]{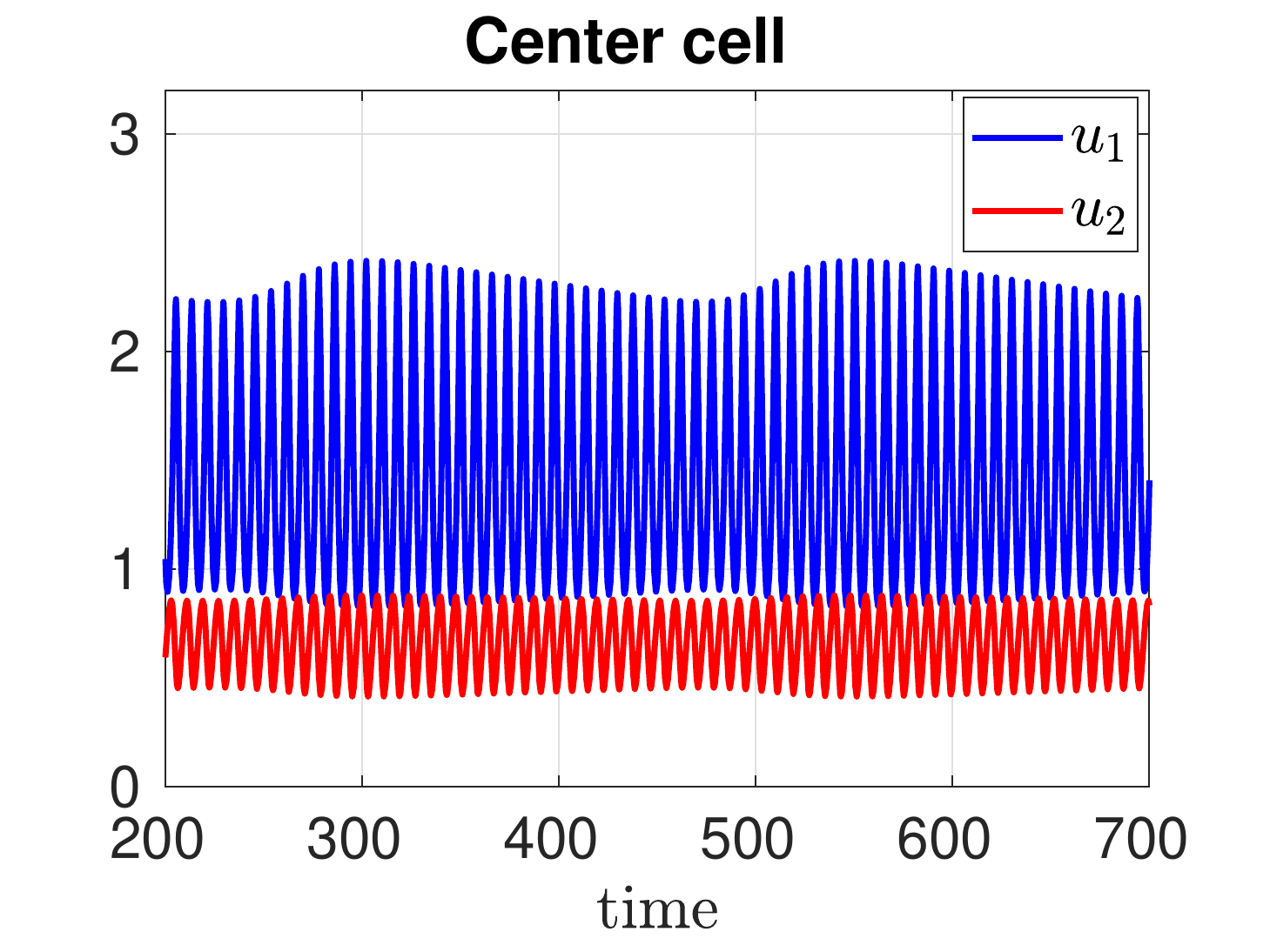}
    \phantomsubcaption 
    \label{FlexPDE_Def3cells_centreCellAsync}
  }
  \vspace*{-1ex}
  \caption{Same caption as in Fig.~\ref{FlexPDE_Def_3cells},
      except that the FlexPDE \cite{flexpde2015solutions} simulation
      of \eqref{DimLess_bulk} is done at $\tau =0.2$ and $D = 0.05$,
      corresponding to the blue dot in the left panel of
      Fig.~\ref{fig:defective_middle}. At this point, both in-phase
      modes and the anti-phase mode are unstable. The beating-behavior
      observed arises from the fact that these three modes all have
      comparable frequencies.}
  \label{FlexPDE_Def3cellsAsync}
\end{figure}

In the left panel of Fig.~\ref{fig:defective_middle} we show the
computed HB boundaries in the $\tau$ versus $D$ plane for the same
parameters as in Fig.~\ref{Bifur_3cells_ID}, except that the influx
rate into the center cell is reduced to $d_{13}=0.4$ (keeping
$d_{23}=0.2$). With this lower influx rate,
Fig.~\ref{fig:defective_middle} shows that the in-phase lobes are now
closed, so that there are no longer intracellular oscillations when
$D$ is large. Qualitatively, when $D$ is large, the bulk chemical
diffuses quickly in the entire disk and there is insufficient feedback
of it into the center cell, owing to the smaller value of $d_{13}$, to
sustain intracellular oscillations. The winding number algorithm of
\eqref{ArgMent_Principle} can be used, with the same result as in
Fig.~\ref{Bifur_3cells_first}, to determine the number of unstable
eigenvalues of the GCEP within the lobes (not shown). In the right
panel of Fig.~\ref{fig:defective_middle} we give the real and
imaginary parts of the normalized eigenvector $\pmb{c}$ of the GCEP
matrix in \eqref{M_M0} at a few selected points on the HB boundaries
shown in the left panel of Fig.~\ref{fig:defective_middle}.

In Fig.~\ref{FlexPDE_Def_3cells}, we show full FlexPDE
\cite{flexpde2015solutions} numerical simulations of
\eqref{DimLess_bulk} for $\tau = 0.6$ and $D=0.4$, which corresponds
to the red dot in the left panel of
Fig.~\ref{fig:defective_middle}. Our numerical computations
  of $\det{\mathcal M}(\lambda)=0$ for $(D,\tau)=(0.4,0.6)$ using the
  GCEP matrix in \eqref{M_M0} yields that
  $\mbox{Re}(\lambda)\approx 0.0115$,
  $\mbox{Im}(\lambda)\approx 0.778$,
  $\mbox{Re}({\mathcal K}\pmb{c})\approx (-0.551,-0.551,-0.624)$ and
  $\mbox{Im}({\mathcal K}\pmb{c})\approx (0.0197,0.0197,0.0569)$. As
  such, our linearized theory predicts that the center cell will have
  larger amplitude oscillations near onset than the ring cells, and
  there will be a $\approx 22^{\circ}$ phase shift between the
  oscillations. Our FlexPDE numerical results in the middle and right
  panels of Fig.~\ref{FlexPDE_Def_3cells} show that the prediction of
  the linearized theory does extend to the fully nonlinear regime in
  that the defective center cell has larger amplitude oscillations
  than do the identical ring cells, and there is a slight phase shift
  in the oscillations. From the surface plot shown in the left panel
  of Fig.~\ref{FlexPDE_Def_3cells}, we observe that the coupling
  between the cells mediated by the bulk medium is rather weak.
  Moreover, the rather large concentration of the bulk chemical close
  to the center cell, with flux measured by the modulus of the third
  component of $\mbox{Re}(\pmb{c})\approx (0.267,0.267,0.914)$ and
  $\mbox{Im}(\pmb{c})\approx (0.107,0.107,0.0)$, is due to its smaller
  rate $d_{13}=0.4$ of influx into the center cell than for the
  identical ring cells centered at $(\pm 0.5,0)$. Paradoxically,
  however, this large buildup of the bulk signal near the center cell
  counteracts the relatively smaller rate of influx into the center
  cell, and has the effect of triggering a larger amplitude
  oscillation in the center cell than for the ring cells.

\begin{figure}
  \begin{minipage}[c]{0.38\textwidth}
    \centering
    \includegraphics[width=\textwidth,height=5.4cm]{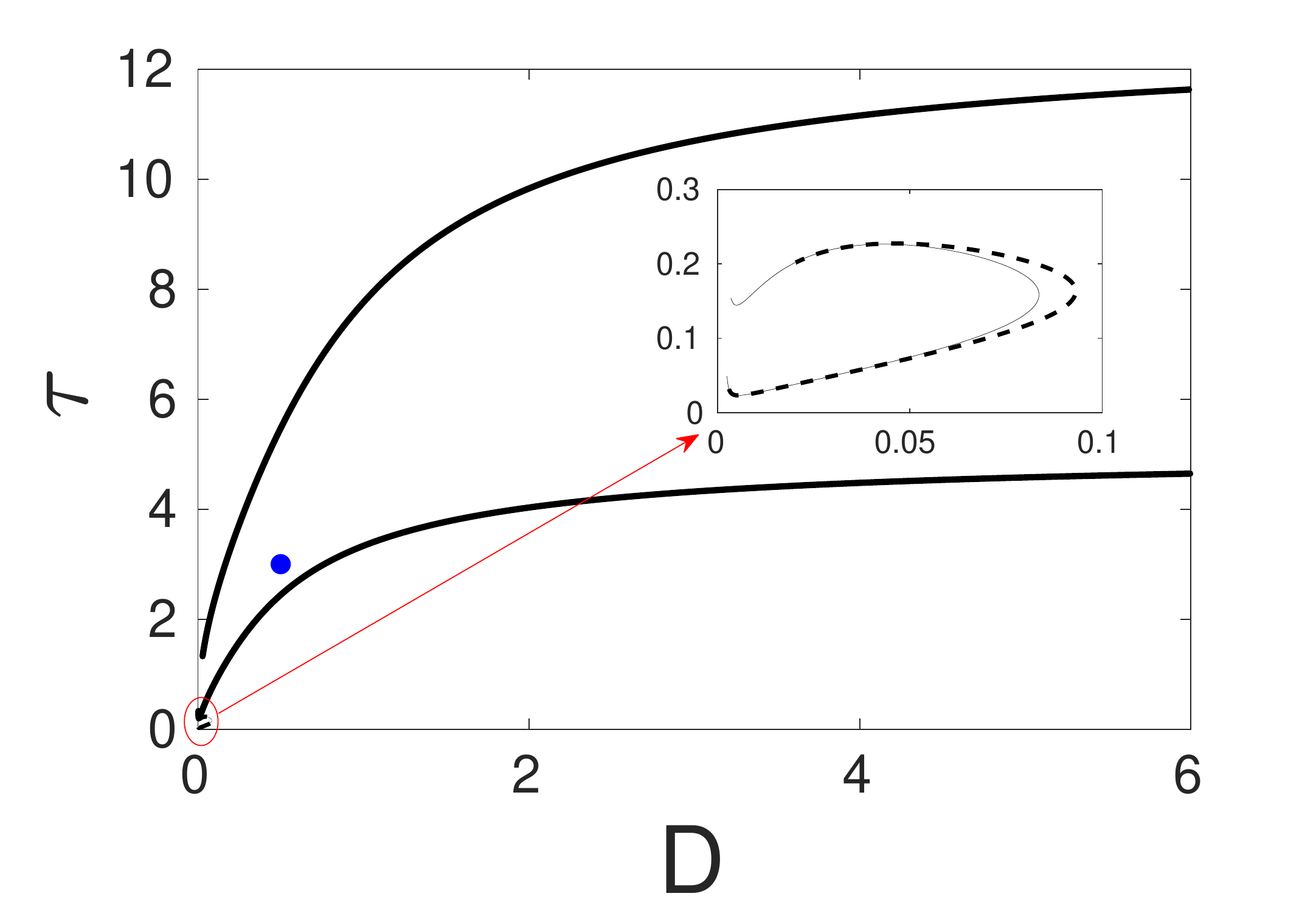}
  \end{minipage}
  \begin{minipage}[c]{0.62\textwidth}
    \centering
   \begin{tabular}[c]{ c | c | c | c | c | c } \hline  
\rowcolor{LightCyan}
mode &$(D,\tau)$  &  j  &  Re(${c}_j$)   & Im(${c}_j$)  & $\theta_j \,(\text{rad})$ \\  \hline \hline    \rowcolor{Cyan}
    &         & 1 & $0.707$    & $0$  & $0$ \\ \rowcolor{Cyan}
In-phase ($+$) & $ (0.0931,0.165) $ & 2 & $0.707$    & $0$  & $0$ \\ \rowcolor{Cyan}
(dashed curve)   &        & 3 & $0.0245$   & $-0.0314$  & $5.375$ \\ 
\hline  \hline \rowcolor{Gray}
     &        & 1 & $-0.0638$    & $0.0985$  & $2.146$ \\ \rowcolor{Gray}
In-phase ($-$) & $ (1.241,3.577) $ & 2 & $-0.0638$    & $0.0985$  & $2.146$ \\ \rowcolor{Gray}
 (heavy solid)   &      & 3 & $0.986$   & $0$  & $0$ \\ 
\hline \hline  \rowcolor{Cyan}
     &        & 1 & $0.707$    & $0$  & $0$ \\ \rowcolor{Cyan}
Anti-phase  &$ (0.0836,0.159) $ & 2 & $-0.707$    & $0$  & $\pi$ \\ \rowcolor{Cyan}
(thin solid)  &         & 3 & $0$   & $0$  & $0$ \\ 
  \hline
   \end{tabular}
\end{minipage}
\caption{Same caption as in Fig.~\ref{fig:defective_middle}
    except that now the center cell is a defector with permeabilities
    $d_{13} = 0.2$ and $d_{23} = 0.4$, corresponding to a reduced
    influx and a larger efflux out of the center cell. Full PDE
    simulations are shown in Fig.~\ref{FlexPDE_ID3cells_d10p2} at the
    blue dot in the left panel.}
\label{fig:defective_right}
\end{figure}

In Fig.~\ref{FlexPDE_Def3cellsAsync}, we show full FlexPDE
\cite{flexpde2015solutions} numerical simulations of
\eqref{DimLess_bulk} for $\tau = 0.2$ and $D=0.05$, corresponding to
the blue dot in the left panel of Fig.~\ref{fig:defective_middle}.  As
seen from Fig.~\ref{fig:defective_middle}, this point is located
within the region of instability that is common to all three modes of
instability.  By solving $\det{\mathcal M}(\lambda)=0$ for
$(D,\tau)=(0.05,0.2)$ numerically, the eigenvalues $\lambda$ and
eigenvectors $\pmb{c}$ for the three modes are:
\begin{equation}\label{beats:spectra}
    \begin{split}
  \mbox{in-phase (+):}\quad \lambda &\approx 0.0350 + 0.779\,i\,,\quad
  \mbox{Re}(c)\approx (   0.152, 0.152, 0.829) \,, \quad
  \mbox{Im}(c) \approx  (0,0,-0.517)\,, \\
  \mbox{in-phase (-):}\quad \lambda &\approx 0.0144 + 0.813\, i\,,\quad
  \mbox{Re}(c)\approx (   0.624, 0.624, -0.434) \,, \quad
  \mbox{Im}(c) \approx (0,0,-0.180)\,,\\
  \mbox{anti-phase:}\quad \lambda &\approx 0.0189 + 0.812\, i\,,\quad
  \mbox{Re}(c)\approx (   0.707, -0.707, 0) \,, \quad
  \mbox{Im}(c) =  (0,0,0)\,.\\
  \end{split}
\end{equation}
We observe that this linearized theory predicts three distinct
unstable oscillatory modes with roughly similar frequencies
$\mbox{Im}(\lambda)$ and growth rates $\mbox{Re}(\lambda)$.  The
beating-type intracellular oscillations observed in
Fig.~\ref{FlexPDE_Def3cellsAsync} is likely related to the well-known
linear phenomenon of superimposing two or more single-mode
oscillations with comparable frequencies.

\begin{figure}[htbp]
  \centering
	\makebox{
    \raisebox{0.5ex}{}
    \includegraphics[width=0.30\textwidth,height=4.5cm]{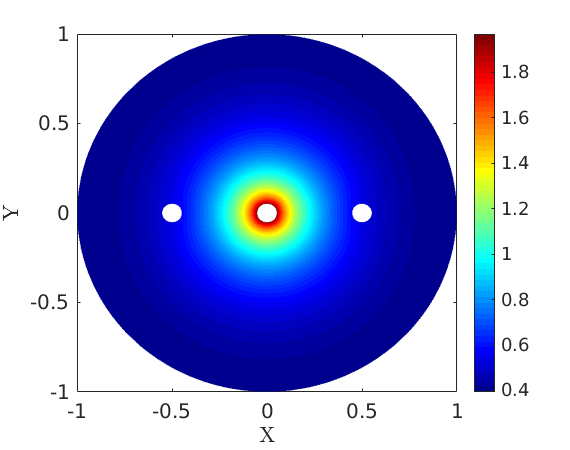}
    \phantomsubcaption
    \label{FlexPDE_Def3cells_surfSync_d10p2}
  }  
  \makebox{
    \raisebox{0.5ex}{}
    \includegraphics[width=0.30\textwidth,height=4.5cm]{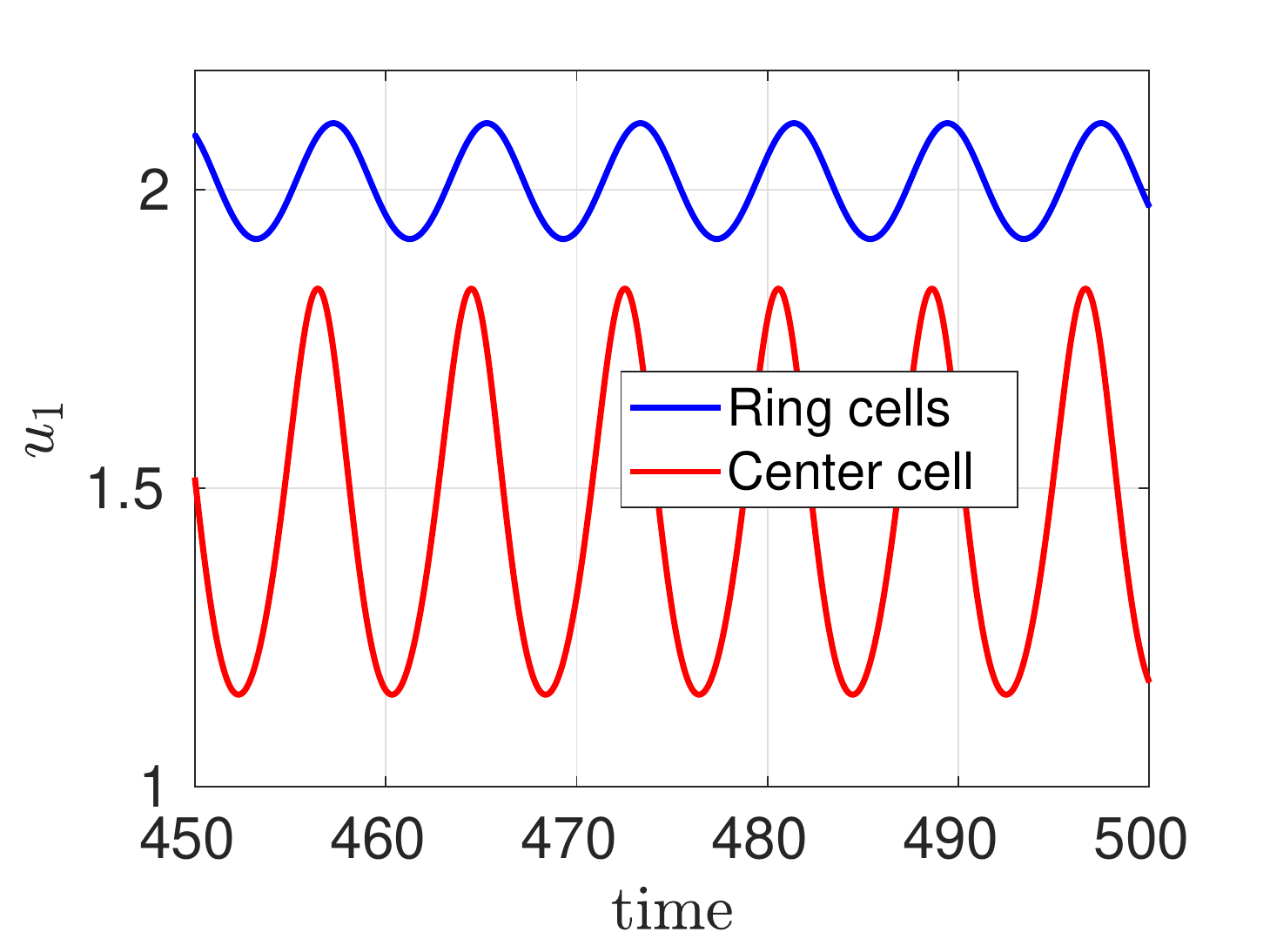}
    \phantomsubcaption
    \label{FlexPDE_Def3cells_RingSync_d10p2}
  }
  \makebox{
    \raisebox{0.5ex}{}
    \includegraphics[width=0.30\textwidth,height=4.5cm]{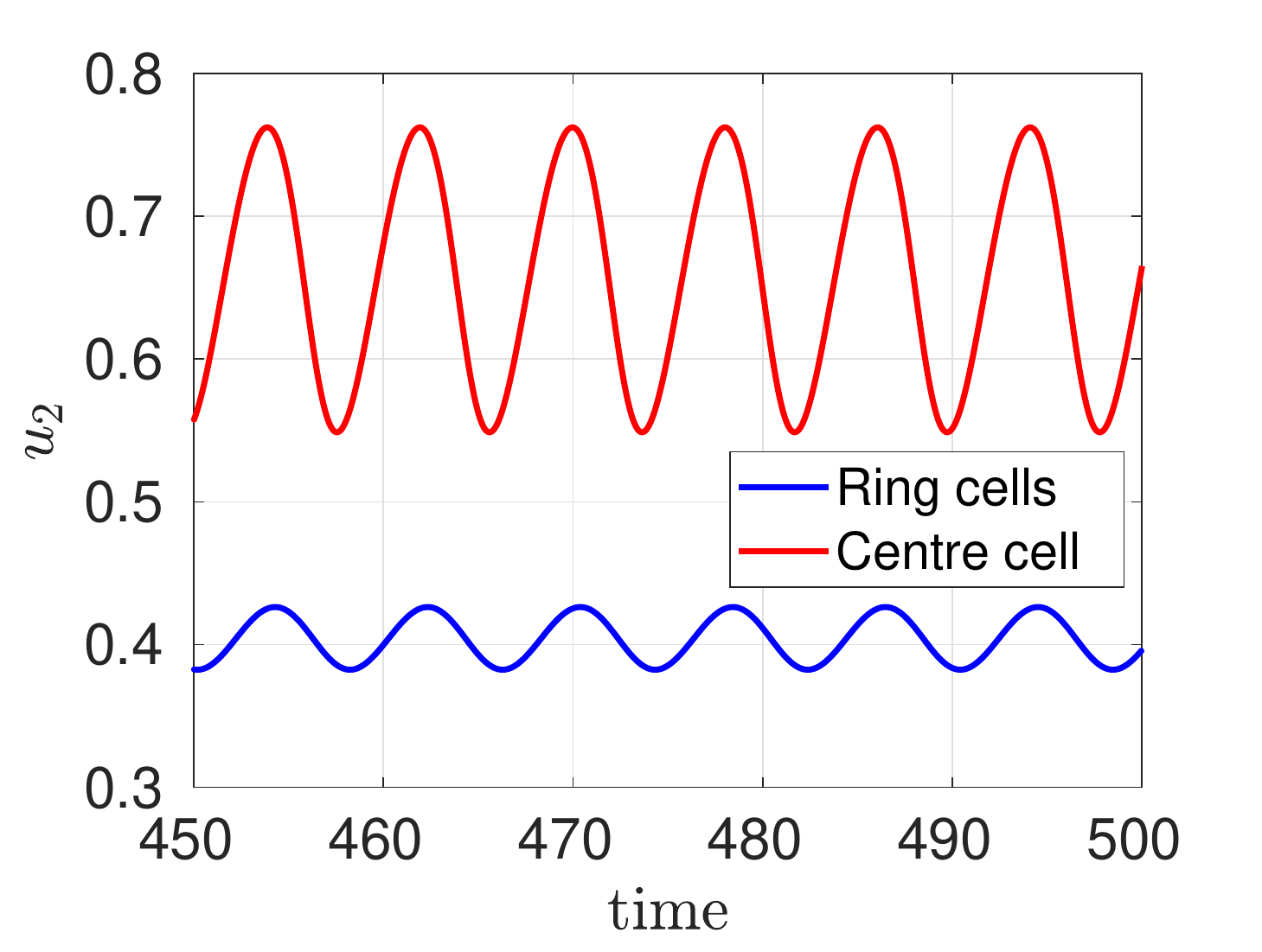}
    \phantomsubcaption
    \label{FlexPDE_Def3cells_CentreSync_d10p2}
  }
  \vspace*{-1ex}
  \caption{Full PDE simulations of \eqref{DimLess_bulk},
      computed with FlexPDE \cite{flexpde2015solutions}, for $\tau =3$
      and $D = 0.5$, corresponding to the blue dot in the left panel
      of Fig.~\ref{fig:defective_right}. The center cell is a defector
      with permeabilites $d_{13}=0.2$ and $d_{23}=0.4$. Left panel:
      surface plot at time $t = 400$. Observe the buildup of the bulk
      chemical near the center cell in comparison to the two ring
      cells. Middle panel: intracellular species $u_1$ versus $t$.
      Right panel: intracellular species $u_2$ versus $t$. As
      predicted by the second row in the table in the right panel of
      Fig.~\ref{fig:defective_right}, we confirm that the center cell
      has larger amplitude oscillations than do the ring cells.}
  \label{FlexPDE_ID3cells_d10p2}
\end{figure}

In the left panel of Fig.~\ref{fig:defective_right} we plot the HB
boundaries for the case where the center cell has permeabilities
$d_{13} = 0.2$ and $d_{23}=0.4$. This corresponds to a larger
secretion or efflux rate for the center cell, while the feedback it
receives from the ring cells is reduced. Unlike the HB boundaries
shown in Figs.~\ref{Bifur_3cells_first} and \ref{fig:defective_middle}
where the instability regions for the modes are nested within each
other, we observe from the insert in the left panel of
Fig.~\ref{fig:defective_right} that only the in-phase $(+)$ mode and
the anti-phase mode overlap. Moreover, since the other in-phase $(-)$
lobe is unbounded in $D$, intracellular oscillations always occur in
some range of $\tau$ as $D$ increases. In the right panel of
  Fig.~\ref{fig:defective_right}, we give the real and imaginary parts
  of the normalized eigenvector $\pmb{c}$ of the GCEP matrix
  \eqref{M_M0} at a few points on the HB boundaries. From the second
  row of this table, the linearized theory suggests that the amplitude
  of intracellular oscillations associated with the dominant in-phase
  $(-)$ instability lobe will be much larger in the center cell than
  in the identical ring cells at the HB point, and that there will be
  a significant phase shift in the oscillations between the center
  cell and the ring cells. More precisely, at the point
  $(D,\tau)=(0.5,3)$ interior to the instability lobe, we solve
  $\mbox{det}{\mathcal M}(\lambda)=0$ numerically to obtain that
  $\mbox{Re}(\lambda)\approx 0.00665$,
  $\mbox{Im}(\lambda)\approx 0.781$,
  $\mbox{Re}(\pmb{c})\approx (-0.0534,-0.0534,0.9908)$,
  $\mbox{Im}(\pmb{c})\approx (0.0792,0.0792,0)$,
  $\mbox{Re}({\mathcal K}\pmb{c})\approx (-0.177,-0.177,-0.926)$, and
  $\mbox{Im}({\mathcal K}\pmb{c})\approx (0.187,0.187,0.106)$. From
  ${\mathcal K}\pmb{c}$ and $\mbox{Im}(\lambda)$ this indicates that
  the center cell will have much larger oscillations near onset than
  the ring cells, with a period of oscillations of
  $\approx 8$ and with a phase shift of $\approx 40^{\circ}$ between the
  ring and center cell oscillations. From the FlexPDE simulations of
  \eqref{DimLess_bulk} when $\tau = 3$ and $D = 0.5$ shown in
  Fig.~\ref{FlexPDE_ID3cells_d10p2}, corresponding to the blue dot in
  the left panel of Fig.~\ref{fig:defective_right}, we observe that
  these predictions of the linearized theory are roughly
  satisfied. Moreover, since the efflux rate out of the center cell is
  larger, while the influx rate is smaller, the rather small bulk
  diffusivity $D=0.5$ should qualitatively lead to a buildup of the
  bulk chemical near the center cell at certain times in the
  oscillation. This feature is observed in the surface plot in the
  left panel of Fig.~\ref{FlexPDE_ID3cells_d10p2}, and provides a
  clear example of the diffusion-sensing behavior, as regulated by the
  permeability parameters.

\begin{figure}[!h]
  \centering
    \begin{subfigure}[b]{0.40\textwidth}
      \includegraphics[width=\textwidth,height=4.3cm]{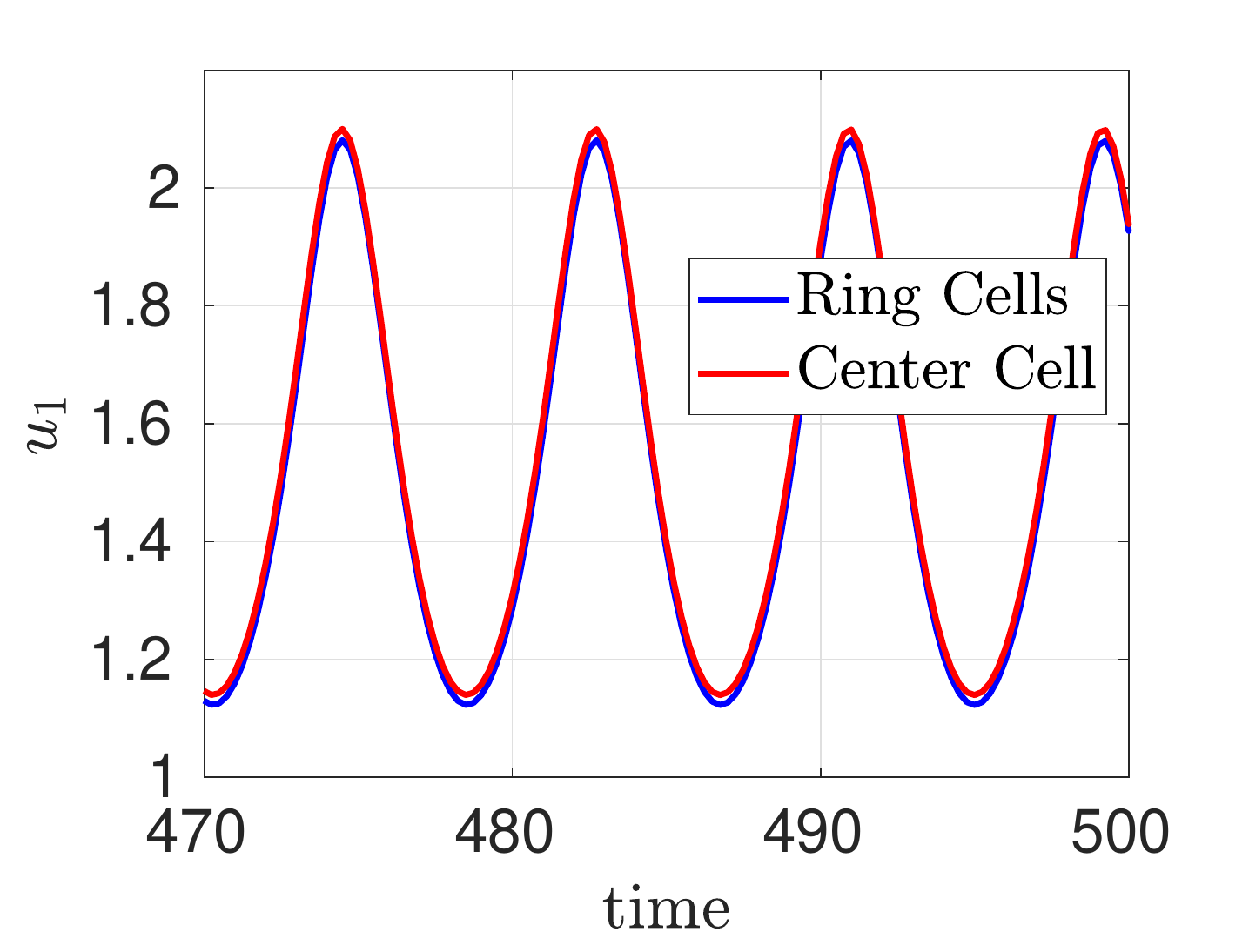}
        \phantomsubcaption
        \label{fig:fig6_odes_1}
    \end{subfigure}
    \begin{subfigure}[b]{0.40\textwidth}  
      \includegraphics[width=\textwidth,height=4.3cm]{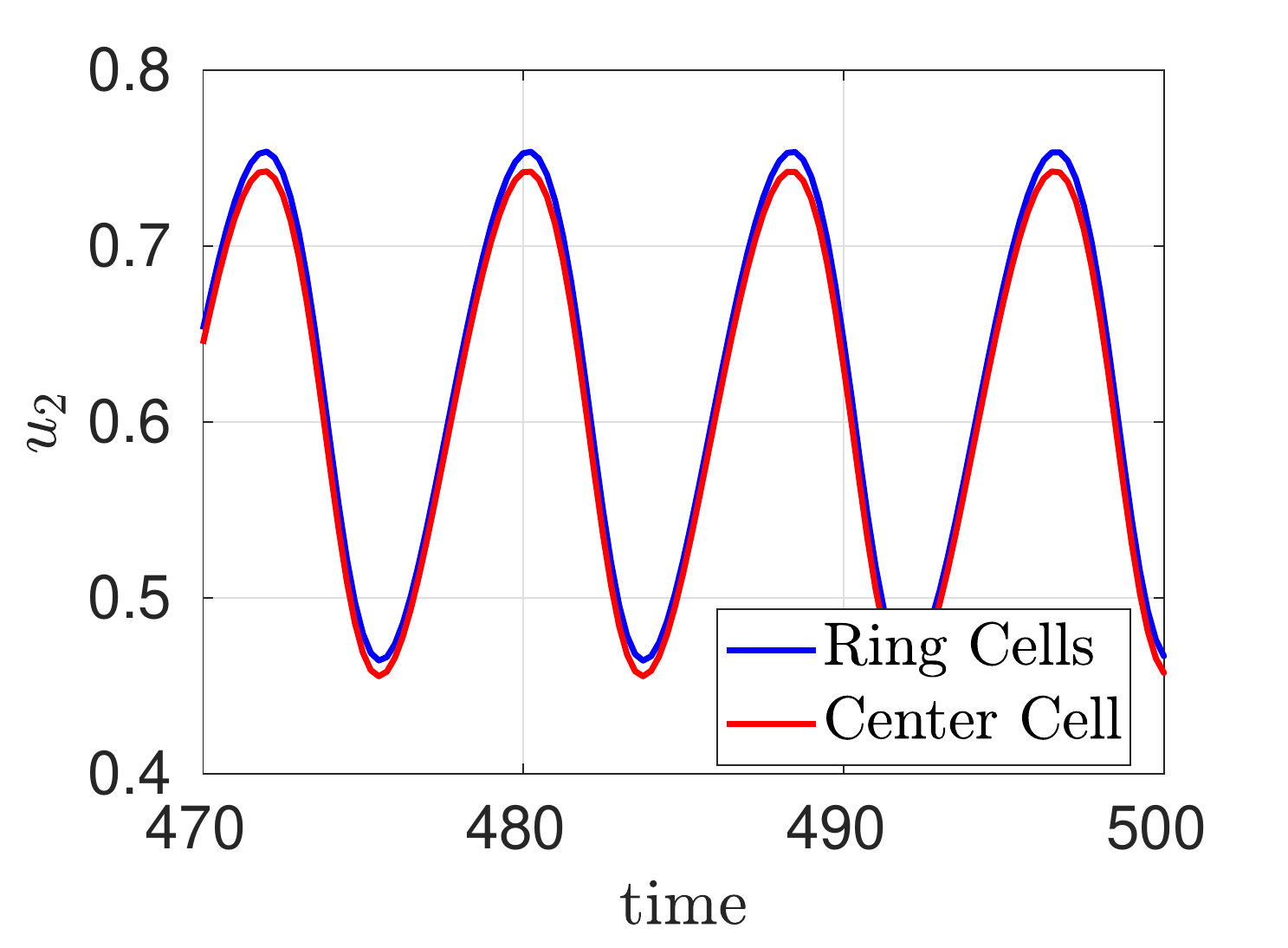}
        \phantomsubcaption
        \label{fig:fig6_odes_3}
    \end{subfigure}
    \begin{subfigure}[b]{0.40\textwidth}
      \includegraphics[width=\textwidth,height=4.3cm]{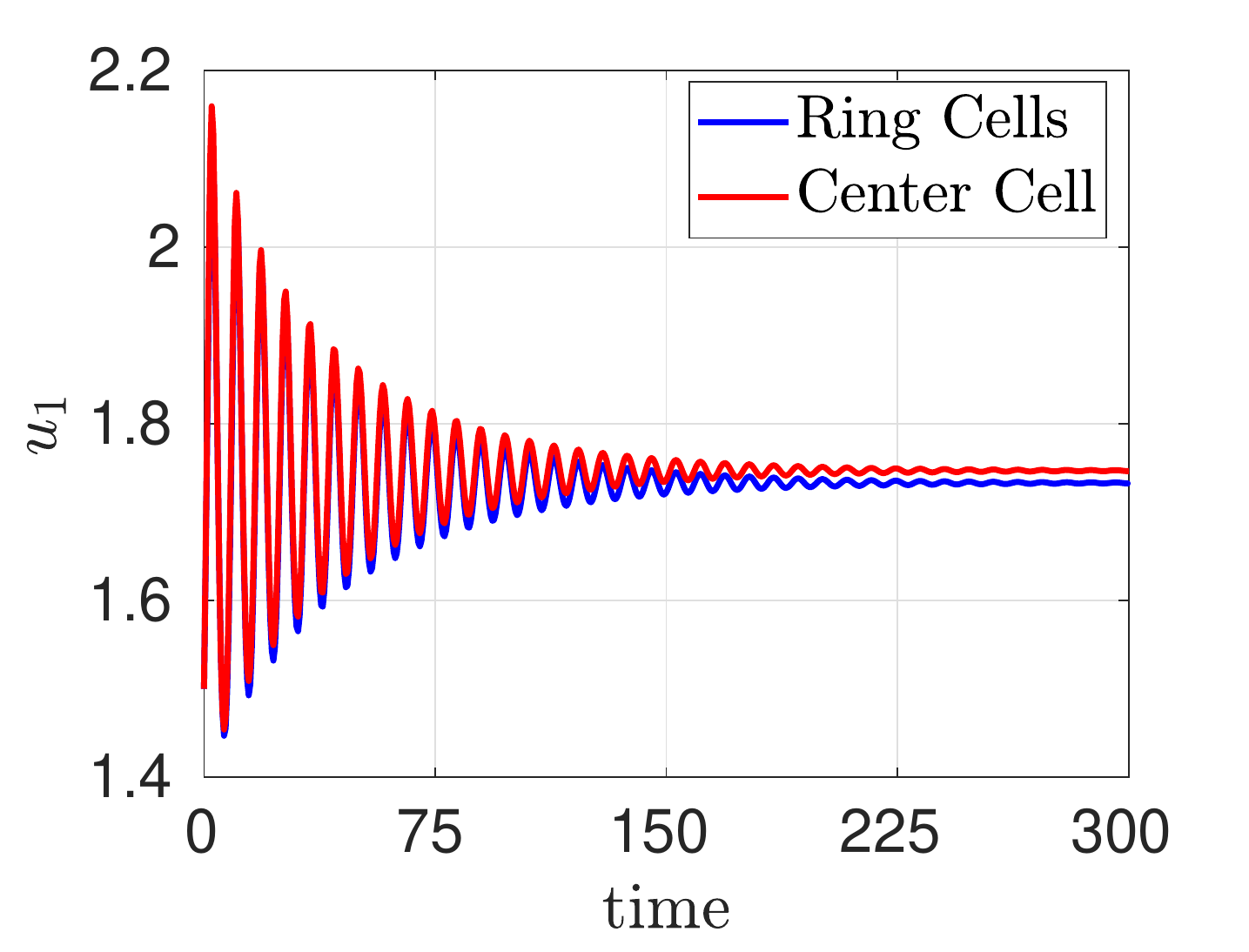}
                \phantomsubcaption
        \label{fig:fig7_odes_1}
    \end{subfigure}
    \begin{subfigure}[b]{0.40\textwidth}  
      \includegraphics[width=\textwidth,height=4.3cm]{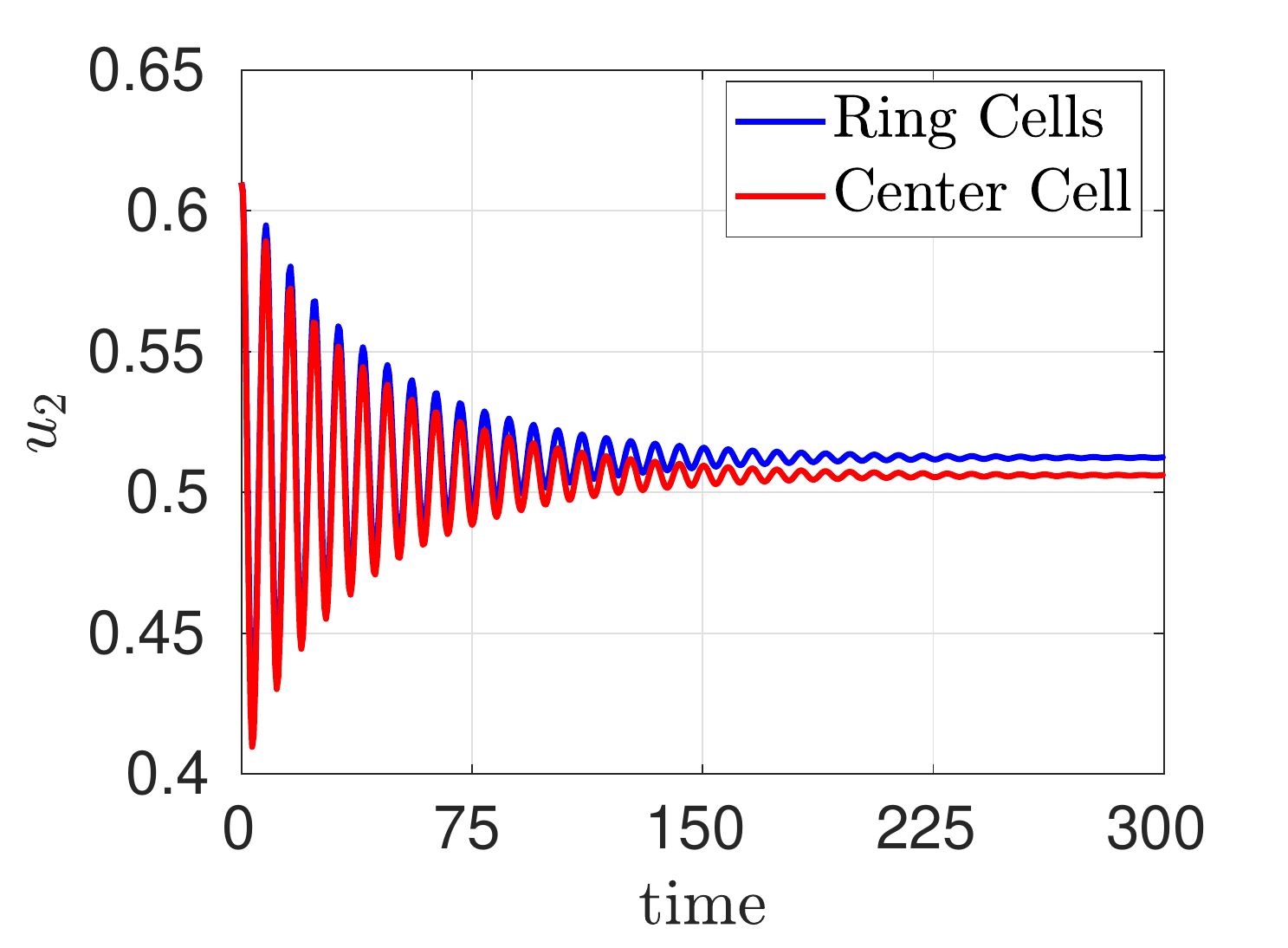}
               \phantomsubcaption
        \label{fig:fig7_odes_3}
    \end{subfigure}
    \begin{subfigure}[b]{0.40\textwidth}
      \includegraphics[width=\textwidth,height=4.3cm]{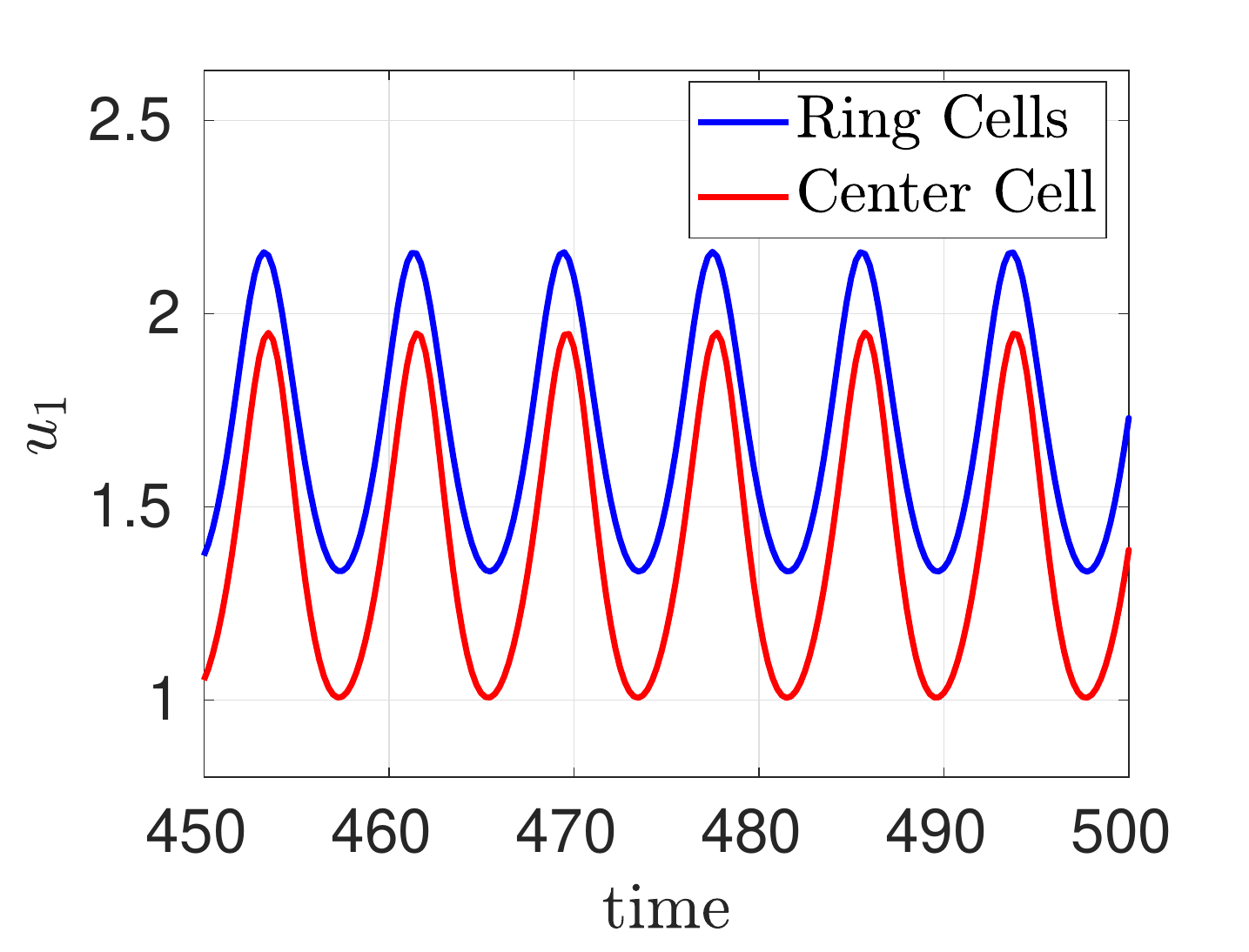}
                \phantomsubcaption
        \label{fig:fig9_odes_1}
    \end{subfigure}
    \begin{subfigure}[b]{0.40\textwidth}  
      \includegraphics[width=\textwidth,height=4.3cm]{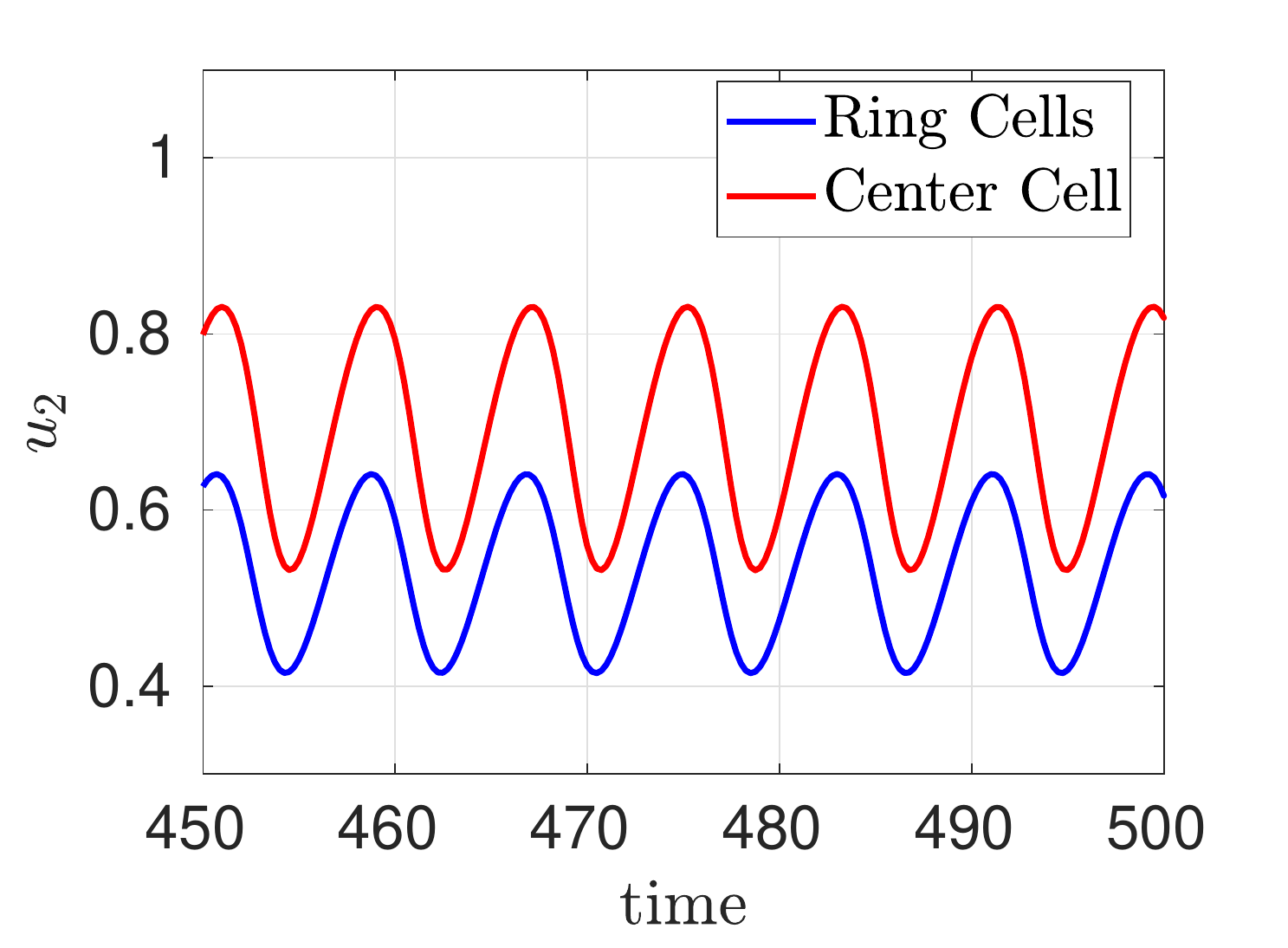}
                \phantomsubcaption
        \label{fig:fig9_odes_3}
    \end{subfigure}
    \begin{subfigure}[b]{0.40\textwidth}
      \includegraphics[width=\textwidth,height=4.3cm]{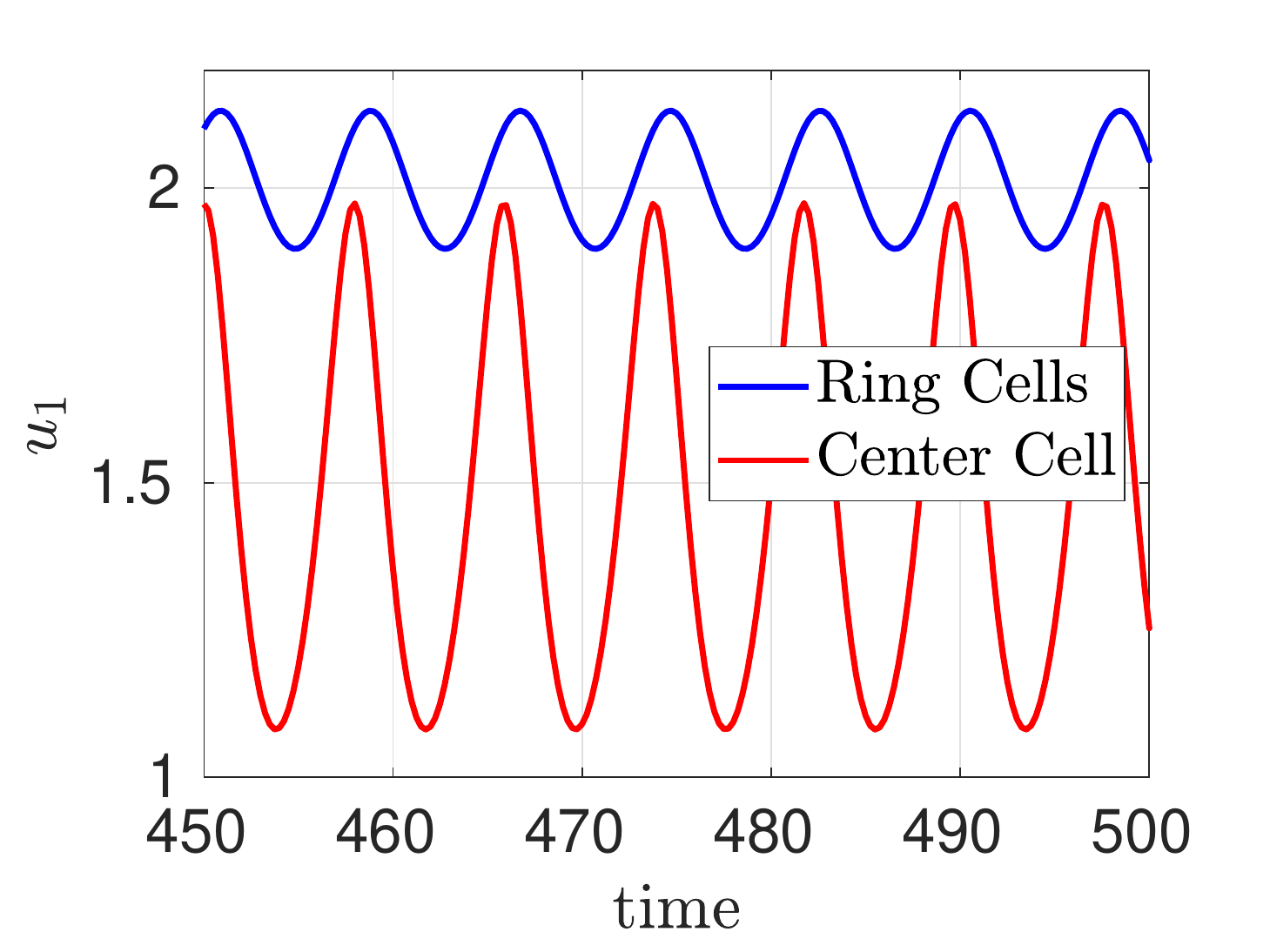}
                \phantomsubcaption
        \label{fig:fig10_odes_1}
    \end{subfigure}
    \begin{subfigure}[b]{0.40\textwidth}  
      \includegraphics[width=\textwidth,height=4.3cm]{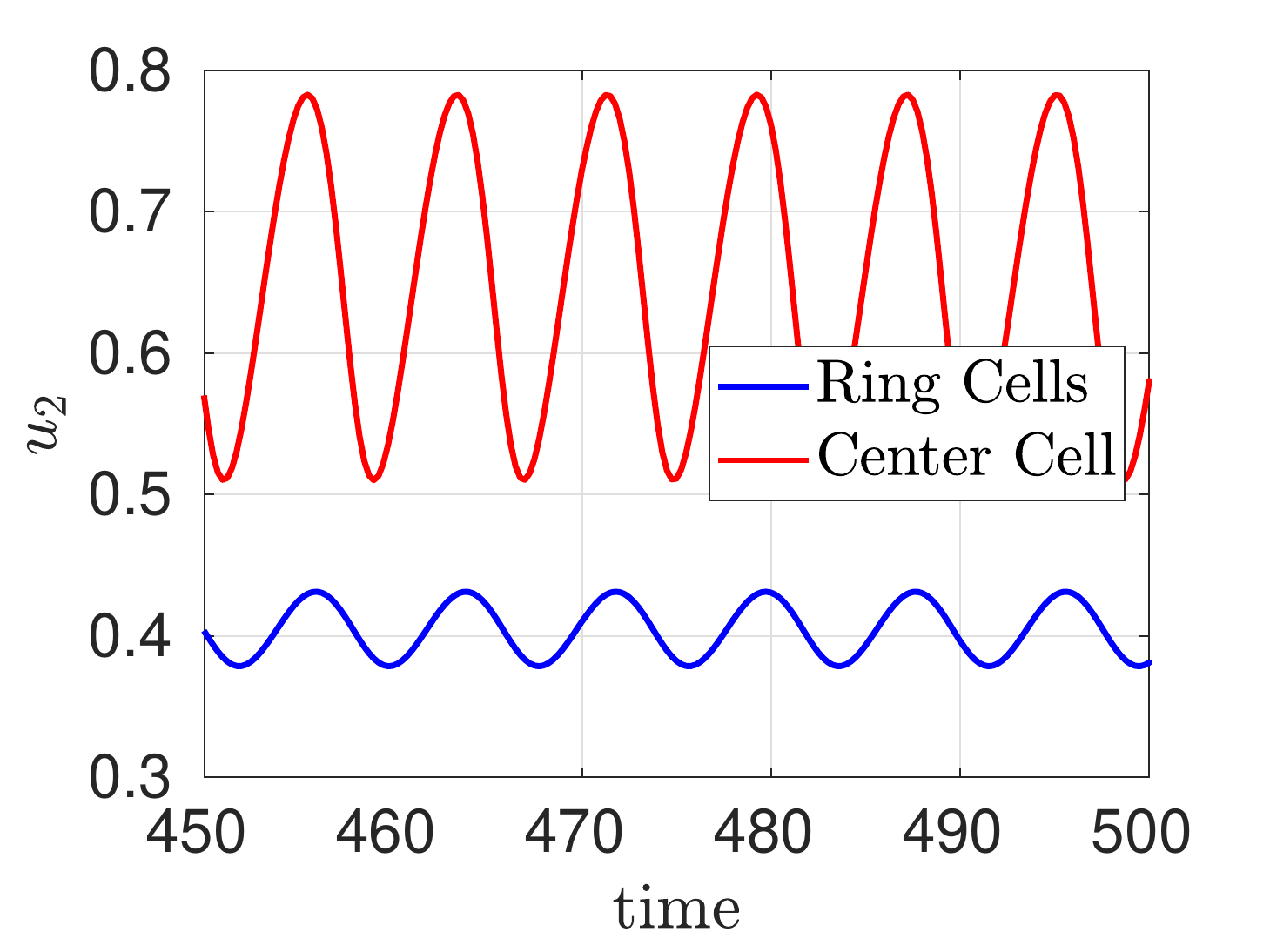}
                \phantomsubcaption
        \label{fig:fig10_odes_3}
    \end{subfigure}
    \caption{Numerical results for intracellular dynamics
        versus time computed from the ODE system \eqref{reducedODE} of
        Proposition \ref{prop:dyn} corresponding to the PDE
        simulations shown in Fig.~\ref{FlexPDE_ID3cells} (first row),
        Fig.~\ref{FlexPDE_ID3cellsStable}, (second row),
        Fig.~\ref{FlexPDE_Def_3cells} (third row) and
        Fig.~\ref{FlexPDE_ID3cells_d10p2} (fourth row), respectively.
        In each case, although $D$ is not large, results from the ODE
        system \eqref{reducedODE} are seen to compare surprisingly
        well with the full FlexPDE simulations of the PDE-ODE system
        \eqref{DimLess_bulk}. First row: $(D,\tau)=(1.0,0.55)$ (red
        dot in Fig.~\ref{Bifur_3cells_ID}). Compare with
        Fig.~\ref{FlexPDE_ID3cells}.  Second row: $(D,\tau)=(1.0,1.1)$
        (blue dot in Fig.~\ref{Bifur_3cells_ID}). Compare with
        Fig.~\ref{FlexPDE_ID3cellsStable}. Third row:
        $(D,\tau)=(0.4,0.6)$ (red dot in the left panel of
        Fig.~\ref{fig:defective_middle}). Compare with
        Fig.~\ref{FlexPDE_Def_3cells}.  Fourth row: $(D,\tau)=(0.5,3)$
        (blue dot in the left panel of
        Fig.~\ref{fig:defective_right}).  Compare with
        Fig.~\ref{FlexPDE_ID3cells_d10p2}. Although there is a phase
        shift between the ODE and full PDE results for intracellular
        oscillations due to different initial conditions used, the ODE
        system \eqref{reducedODE} captures well the amplitude and
        period of intracellular oscillations observed in the full PDE
        simulations.}
\label{fig:odes_ring-center}
\end{figure}

Next, we examine whether the ODE system \eqref{reducedODE} of
Proposition \ref{prop:dyn}, derived under the assumption of large bulk
diffusivity $D={\mathcal O}(\nu^{-1})\gg 1$, can still be used to
reliably approximate the intracellular dynamics observed in the full
FlexPDE simulation results of \eqref{DimLess_bulk}, performed for
${\mathcal O}(1)$ values of $D$, in Figs.~\ref{FlexPDE_ID3cells},
\ref{FlexPDE_ID3cellsStable}, \ref{FlexPDE_Def_3cells} and
\ref{FlexPDE_ID3cells_d10p2}.  In Fig.~\ref{fig:odes_ring-center} we
show that the numerical results computed from the ODE system
\eqref{reducedODE} compare surprisingly well with the full PDE
simulations with respect to the amplitude and period of intracellular
oscillations (first, third and fourth rows of
Fig.~\ref{fig:odes_ring-center}) and the prediction of a linearly
stable steady-state (second row of
Fig.~\ref{fig:odes_ring-center}). In using the ODE system
\eqref{reducedODE} we calculated $D_0$ as $D_0=D\nu$, where
$\nu={-1/\log\varepsilon}$ with $\varepsilon=0.05$. We remark that the
beating-type oscillations observed in
Fig.~\ref{FlexPDE_Def3cellsAsync} for the very small value $D=0.05$
are not captured by the ODE system \eqref{reducedODE}.  Moreover, we
emphasize that the simpler ODE system corresponding to the well-mixed
limit $D\to \infty$ as given in \eqref{ode:well_mixed}, and which was
used in \cite{jia2016} and \cite{smjw_quorum} for studying
quorum-sensing behavior, does not reliably approximate the
intracellular oscillations for the values of bulk diffusivity given in
Figs.~\ref{FlexPDE_ID3cells}, \ref{FlexPDE_ID3cellsStable},
\ref{FlexPDE_Def_3cells}, \ref{FlexPDE_Def_3cells}, and
\ref{FlexPDE_ID3cells_d10p2}.

\subsection{A defective center cell: different Sel'kov kinetics}
\label{sec:alpha}

Next, we study how the HB boundaries in the $(D,\tau)$ plane are
altered by varying a Sel'kov kinetic parameter of the center cell. In
Fig.~\ref{Bifur_3cells_alpha} we plot the HB boundaries for a ring and
center cell pattern of $m=3$ cells, where the identical cells on the
ring have parameters as in \eqref{Selkov_para}, while the Sel'kov
parameter $\alpha_3$ for the defective center cell is either
$\alpha_3=0.86$ (red curves), $\alpha_3=0.96$ (blue curves), or
$\alpha_3=0.9$ (same as in the left panel of
Fig.~\ref{Bifur_3cells_first}). The HB boundaries in the right panel
of Fig.~\ref{Bifur_3cells_alpha} show a zoom for smaller values of $D$
than the figure in the left panel. In these figures, the dashed and
the heavy solid curves are for the in-phase modes computed from
\eqref{cent:HB_sync} with $(+)$ and $(-)$, respectively, while the
thin solid curve is for the anti-phase mode computed from
\eqref{cent:HB_async}. We observe that the HB boundary for the
anti-phase mode is independent of the Sel'kov kinetic parameter
$\alpha_3$ for the center cell. This follows from the facts that the
steady-state solution to \eqref{Linear_reduceSys} for the ring cells
and its Jacobian $J_1$ in \eqref{cent:KcKm_jac} do not depend on
$\alpha_3$. As a result, the anti-phase HB boundary, computed from
\eqref{cent:HB_async}, is independent of $\alpha_3$.

\begin{figure}[htbp]
  \centering
	\makebox{
    \raisebox{0.5ex}{}
    \includegraphics[width=0.40\textwidth,height=4.5cm]{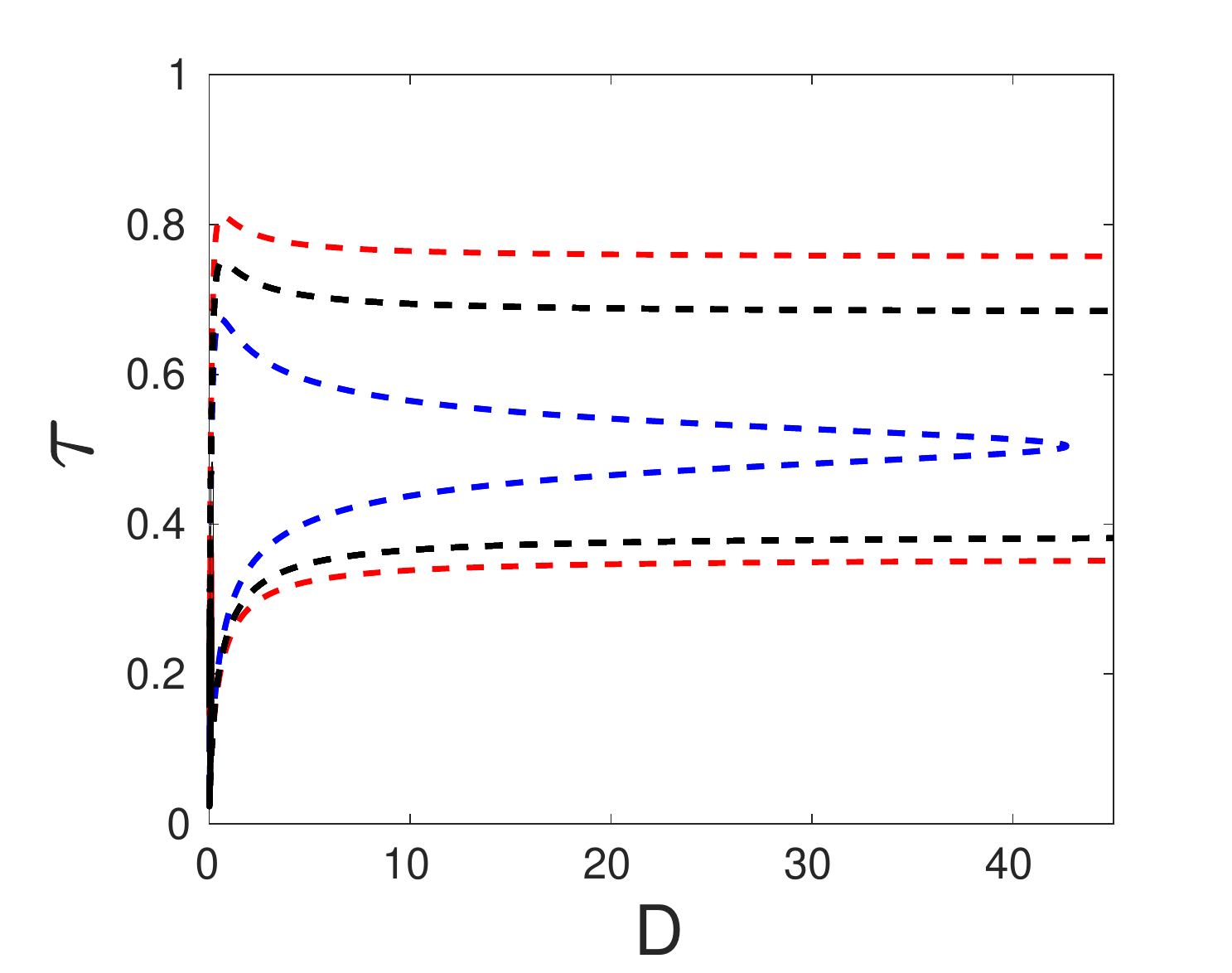}
    \phantomsubcaption
    \label{Bifur_3cells_alphaA}
  }  \qquad
  \makebox{
    \raisebox{0.5ex}{}
    \includegraphics[width=0.40\textwidth,height=4.5cm]{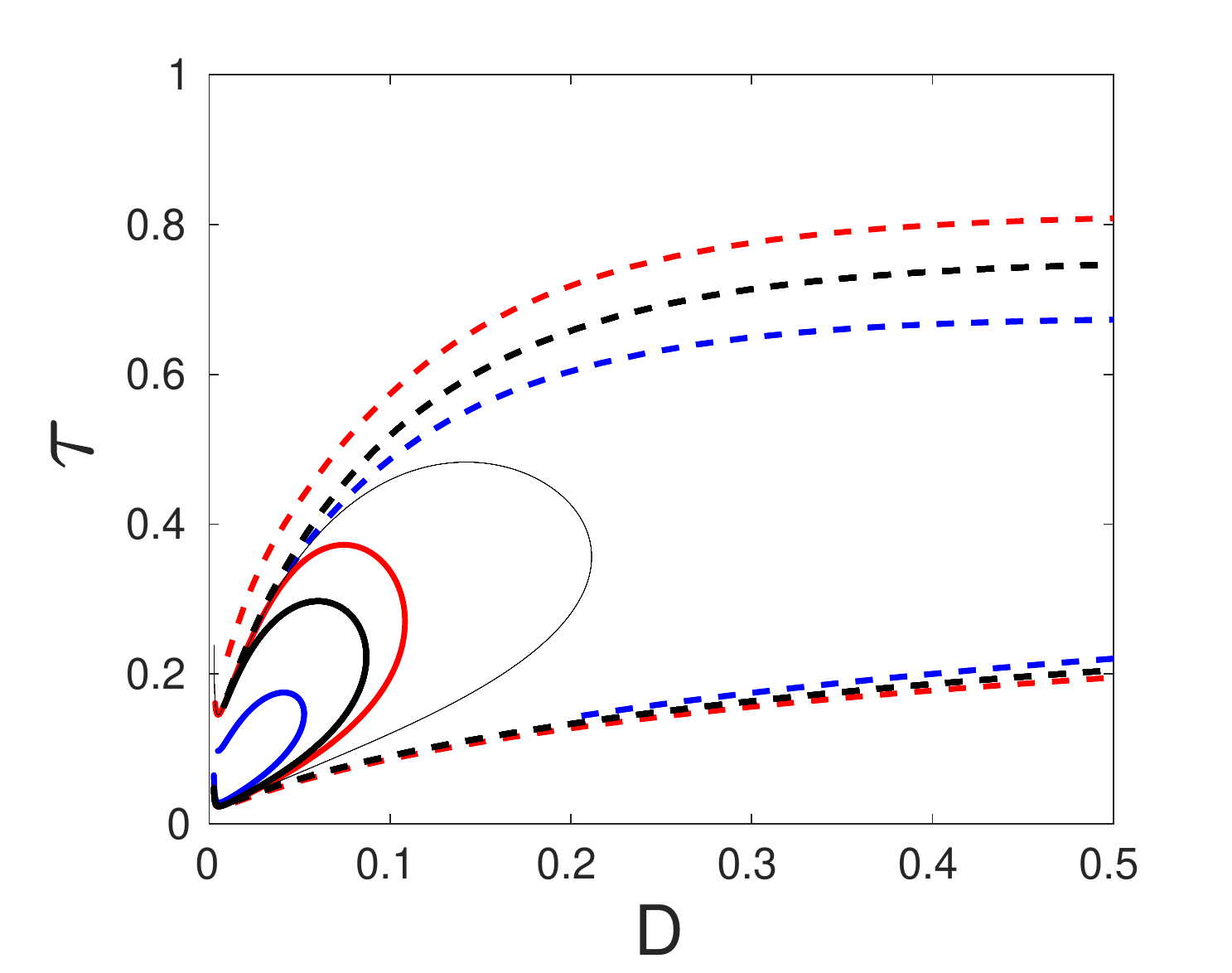}
    \phantomsubcaption
    \label{Bifur_3cells_alphaB}
  }
  \vspace*{-1ex}
  \caption{HB boundaries in the $\tau$ versus $D$ plane for
      a ring and center cell pattern of $m=3$ cells, where the Sel'kov
      kinetic parameter $\alpha_3$ for the center cell is varied.  The
      ring cells are centered at $(\pm 5,0)$ and, with the exception
      of the Sel'kov kinetic parameter for the center cell, all three
      cells have parameters as in \eqref{Selkov_para}. Left panel:
      $\alpha_3 = 0.86$ (red curves), $\alpha_3 = 0.9$ (black curves),
      $\alpha_3 = 0.96$ (blue curves). Right panel: a zoomed-in
      version of the left panel for smaller values of $D$. In both
      panels, the dashed and heavy solid curves are for the in-phase
      modes computed from \eqref{cent:HB_sync}, while the thin solid
      curve is for the anti-phase mode computed from
      \eqref{cent:HB_async}. Each mode is linearly unstable in their
      respective lobes, while linearly stable steady-state solutions
      exists outside the union of the lobes. The anti-phase HB
      boundaries are independent of $\alpha_3$, and so are plotted on top
      of each other.}
  \label{Bifur_3cells_alpha}
\end{figure}

From the left panel of Fig.~\ref{Bifur_3cells_alpha} we observe that
as $\alpha_3$ increases the parameter regions in the $(D,\tau)$ plane
where intracellular oscillations occur decreases. In particular, the
in-phase $(+)$ instability lobe that was open for $\alpha_3 = 0.86$
becomes closed when $\alpha_3 = 0.96$, thereby precluding the
possibility of intracellular oscillations when $D$ is sufficiently
large. To interpret this result, we observe from the middle panel of
Fig.~\ref{fig:selkov} that, as $\alpha_3$ increases, the center cell
becomes less activated, with the parameters drifting further from the
HB boundary for the uncoupled cell. As a result, it becomes more
difficult to trigger in-phase intracellular oscillations for a group
of coupled cells as $\alpha_3$ increases. Overall,
Fig.~\ref{Bifur_3cells_alpha} does show that rather small increases or
decreases in a parameter value of the nonlinear Sel'kov kinetics for
one specific cell can either extinguish or trigger intracellular
oscillations for an entire group of cells.  The corresponding
eigenvectors for selected points on the bifurcation diagrams in
Fig.~\ref{Bifur_3cells_alpha} are shown in Table~\ref{Table:Def_alpha}
for $\alpha_3=0.86$ and $\alpha_3=0.96$.

\begin{table}[!httbp]
\centering
\begin{tabular}{ c | c || c |  c | c || c | c | c|} \hline
  \rowcolor{LightCyan}\hline   \rowcolor{LightCyan}\hline
  &  &
   \multicolumn{3}{c||}{$\qquad\alpha_3=0.86$} &
    \multicolumn{3}{c|}{$\qquad\alpha_3=0.96$}  \\ 
  \rowcolor{LightCyan} 
\multirow{-2}*{mode} & \multirow{-2}*{j} &    $(D,\tau)$ &  $(\mbox{Re}c_j,\mbox{Im}c_j)$  & $\theta_j \,(\text{rad})$ &  $(D,\tau)$ &  $(\mbox{Re}c_j,\mbox{Im}c_j)$  & $\theta_j \,(\text{rad})$ \\ \hline \hline \rowcolor{Cyan}
    &   1      &  & $(0.539,-0.080)$  & $6.14$ &    & $(0.639,0)$ & 0   \\ \rowcolor{Cyan}
In-phase ($+$) & 2 & $ (0.233,0.744) $ & $(0.539,-0.080)$  & $6.14$ & $(0.475,0.672)$  & $(0.639,0)$   &  0 \\ \rowcolor{Cyan}
(heavy solid)     &  3    &  &$(0.638,0)$  & $0$ &    & $(0.427,-0.0118)$   & $6.256$ \\ 
\hline   \rowcolor{Gray}
     &   1    &   & $-(0.461,0.0415)$ & $3.23$ &   & $(-0.254,0.0428)$ & $2.974$ \\ \rowcolor{Gray}
    In-phase ($-$) & 2 & $ ( 0.108,0.284) $ & $-(0.461,0.0415)$  & $3.23$ & $( 0.0492,0.168)$   & $(-0.254,0.0428)$
    &   $2.974$   \\ \rowcolor{Gray}
 (dashed curve)      &  3   &  & $(0.756,0)$ & $0$ &    & $(0.932,0)$    & $0$  \\ 
\hline  \rowcolor{Cyan}
     &   1    &  & $(0.707,0)$ & $0$ &    & $(0.707,0)$   & 0 \\ \rowcolor{Cyan}
Anti-phase  & 2 & $ (0.211,0.365) $ & $-(0.707,0)$ & $\pi$ & $ (0.211,0.365) $   &  $-(.707,0)$  & $\pi$ \\ \rowcolor{Cyan}
 (thin solid)   &  3   &  & $(0,0)$  & $0$ &    & $(0,0)$  & $0$  \\ 
  \hline
\end{tabular}
\caption{Real and imaginary parts of the eigenvector
    $\pmb{c}$ of the GCEP matrix ${\mathcal M}(\lambda)$ in
    \eqref{M_M0}, computed for a few points on the HB boundaries in
    Fig.~\ref{Bifur_3cells_alpha}. The cells on the ring are identical
    and located at $(\pm 0.5, 0)$ with parameters as given in
    \eqref{Selkov_para}. The center cell has the same parameters, with
    the exception that it has a Sel'kov kinetic parameter $\alpha_3$
    different than those on the ring.  Middle three columns:
    $\alpha_3=0.86$. Last three columns: $\alpha_3=0.96$.}
 \label{Table:Def_alpha}
\end{table}

\subsection{Diffusion-sensing: effect of cell locations}
\label{sec:ring}

For a ring and center cell pattern, with cells centered at
$(\pm r_0,0)$ and $(0,0)$, we study how the HB boundaries in the
$(D,\tau)$ plane depend on the ring radius $r_0$. In the first row of
Fig.~\ref{fig:ring_radius} we show the HB boundaries for the in-phase
and anti-phase modes when $r_0=0.25$, $r_0=0.5$ (same as
Fig.~\ref{Bifur_3cells_ID}), and $r_0=0.75$ for the case where the
cells are all identical with permeabilities $d_1=0.8$ and
$d_2=0.4$. Similar plots for the three values of $r_0$ are shown in
the second row of Fig.~\ref{fig:ring_radius} for the case where the
center cell is now defective with $d_{13} = 0.4$ and $d_{23} =0.2$
(Fig.~\ref{Bifur_3cells_Def_r0p50} for $r_0=0.5$ is the same as in
Fig.~\ref{fig:defective_middle}).

When the cells are all identical, we observe from the top row in
Fig.~\ref{fig:ring_radius} that the regions of instability for the
anti-phase mode and the in-phase $(-)$ mode shrink noticeably as $r_0$
decreases. In contrast, the instability region for the in-phase $(+)$
mode is relatively insensitive to changes in $r_0$.  To qualitatively
explain this observation, we examine the eigenvector of the GCEP
matrix given in Table~\ref{Table:ID_RCH_3cells} at points on the HB
boundaries for the three modes. From this table, for the in-phase
$(-)$ mode only the two ring cells oscillate in phase while the center
cell has larger amplitude oscillations that are roughly $180^{\circ}$
out of phase. For the anti-phase $(-)$ mode, the center cell is
quiescent while the ring cells oscillate $180^{\circ}$ out of
phase. Finally, for the dominant in-phase $(+)$ mode, the three
identical cells are synchronized with very similar amplitudes and
phases. Intuitively, as the ring radius $r_0$ decreases the cells
become more clustered and, as a result, intracellular dynamics can
still be synchronized even for smaller values of the bulk diffusivity
$D$. As a result, when $r_0$ is small, the anti-phase and in-phase
$(-)$ instability lobes, in which the center and ring cells are not
synchronized, should exist only for very small values of $D$ (see
Fig.~\ref{Bifur_3cells_ID_r0p25}). When $D$ is very small,
communication between cells that are close can still be rather
weak. Notice that for $r_0=0.75$, where the cells are farther apart,
the anti-phase lobe in Fig.~\ref{Bifur_3cells_ID_r0p75} exists for
larger values of $D$ than when $r_0=0.25$ or $r_0=0.5$. For
$r_0=0.75$, the ring cells can oscillate, maintaining a quiescent
center cell, provided that the bulk chemical is not washed away,
i.e.~$D$ is not too large, owing to the fact that each of the two ring
cells are now relatively close to their images across the domain
boundary due to the reflecting boundary condition imposed. However,
when $r_0$ is large, the ring cells are far from each other and so,
within the anti-phase instability lobe, their oscillations are
$180^{\circ}$ out of phase.

\begin{figure}[!h]
  \centering
    \begin{subfigure}[b]{0.31\textwidth}
  \includegraphics[width=\textwidth,height=4.3cm]{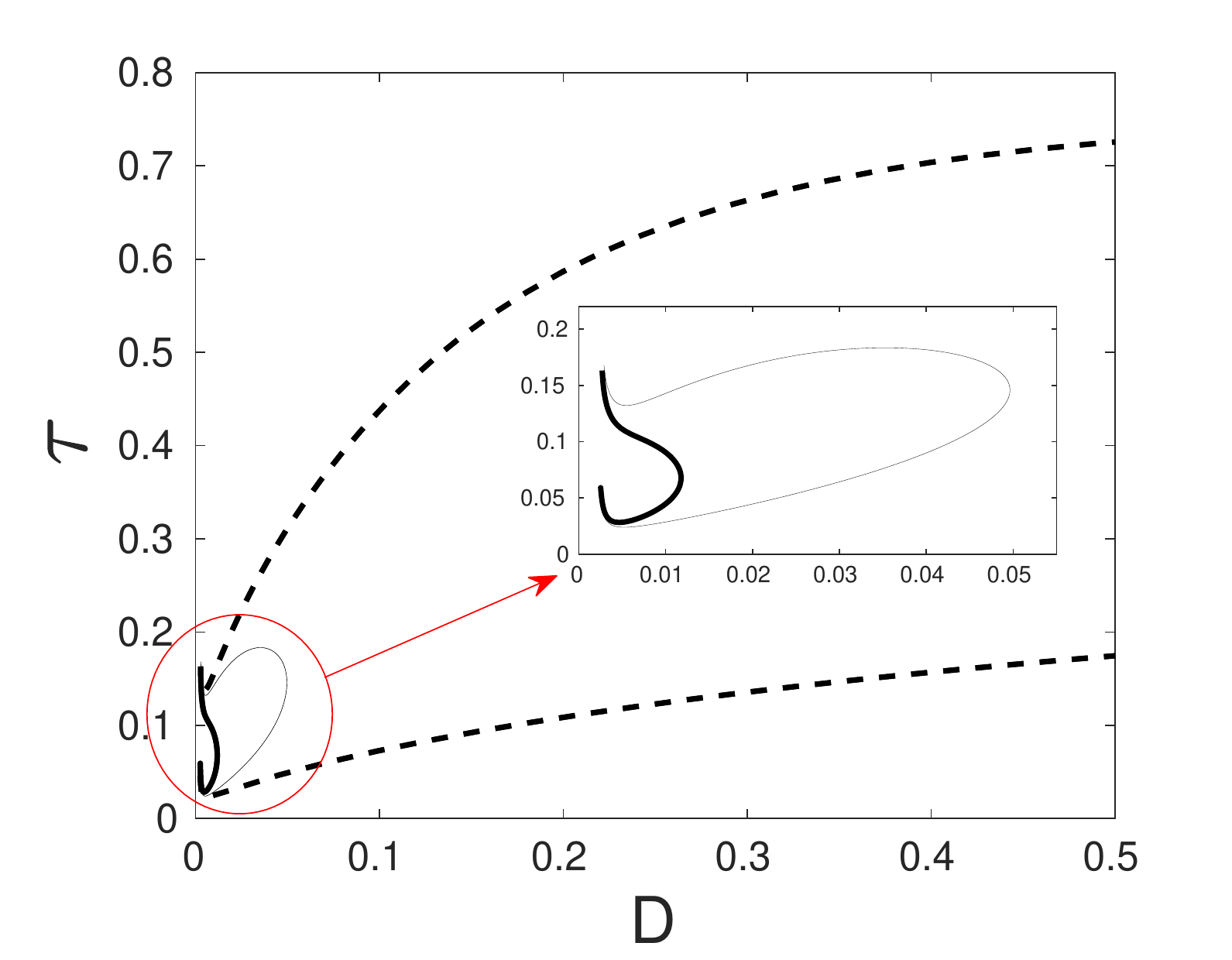}
        \subcaption{$r_0=0.25$}
            \label{Bifur_3cells_ID_r0p25}
    \end{subfigure}
    \begin{subfigure}[b]{0.31\textwidth}  
      \includegraphics[width=\textwidth,height=4.3cm]{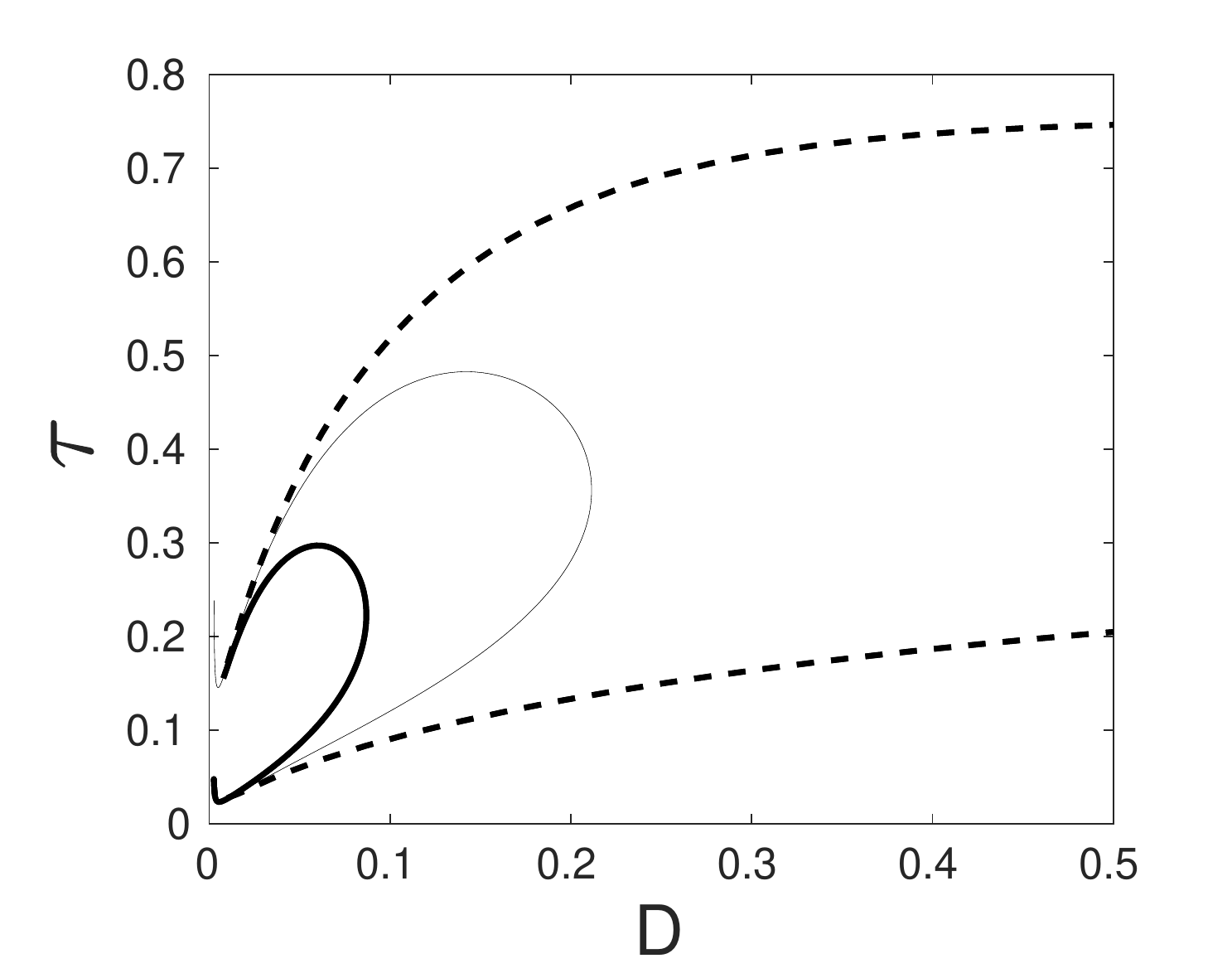}
         \subcaption{$r_0=0.5$}
    \label{Bifur_3cells_ID_r0p5}
    \end{subfigure}
    \begin{subfigure}[b]{0.31\textwidth}
      \includegraphics[width=\textwidth,height=4.3cm]{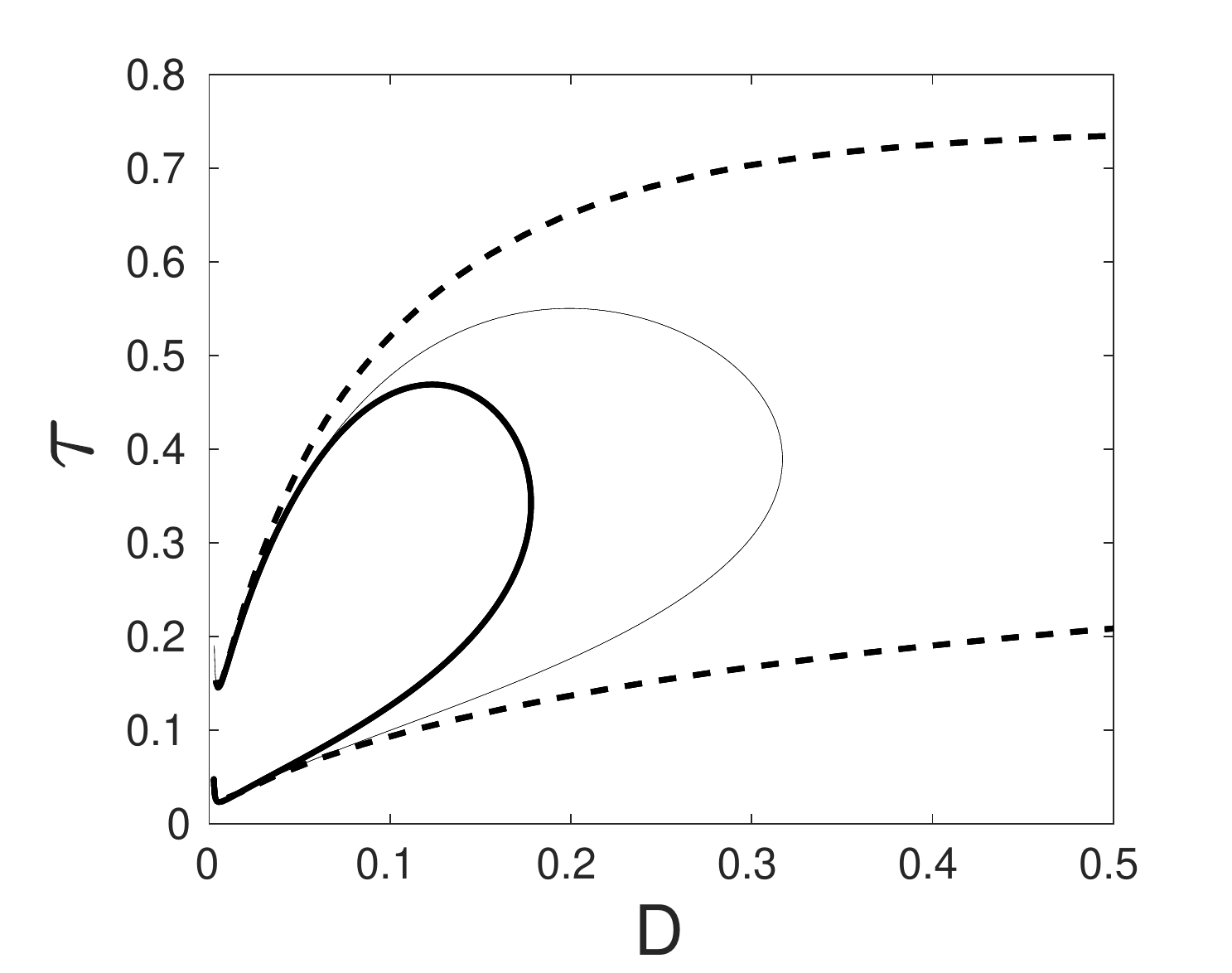}
      \subcaption{$r_0=0.75$}
      \label{Bifur_3cells_ID_r0p75}
    \end{subfigure}
    \begin{subfigure}[b]{0.31\textwidth}  
 \includegraphics[width=\textwidth,height=4.3cm]{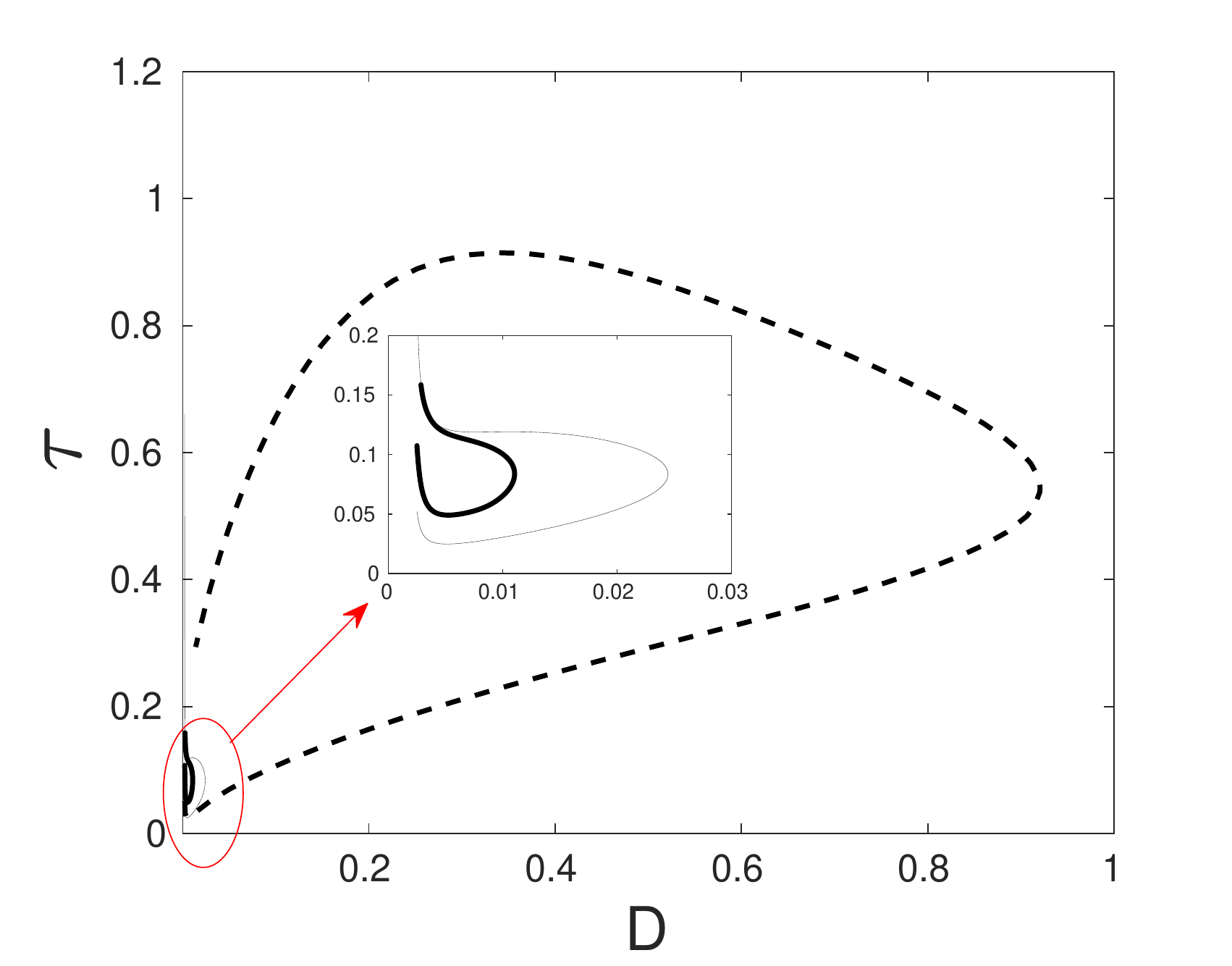} 
   \subcaption{$r_0=0.25$}
    \label{Bifur_3cells_Def_r0p25}
    \end{subfigure}
    \begin{subfigure}[b]{0.31\textwidth}
      \includegraphics[width=\textwidth,height=4.3cm]{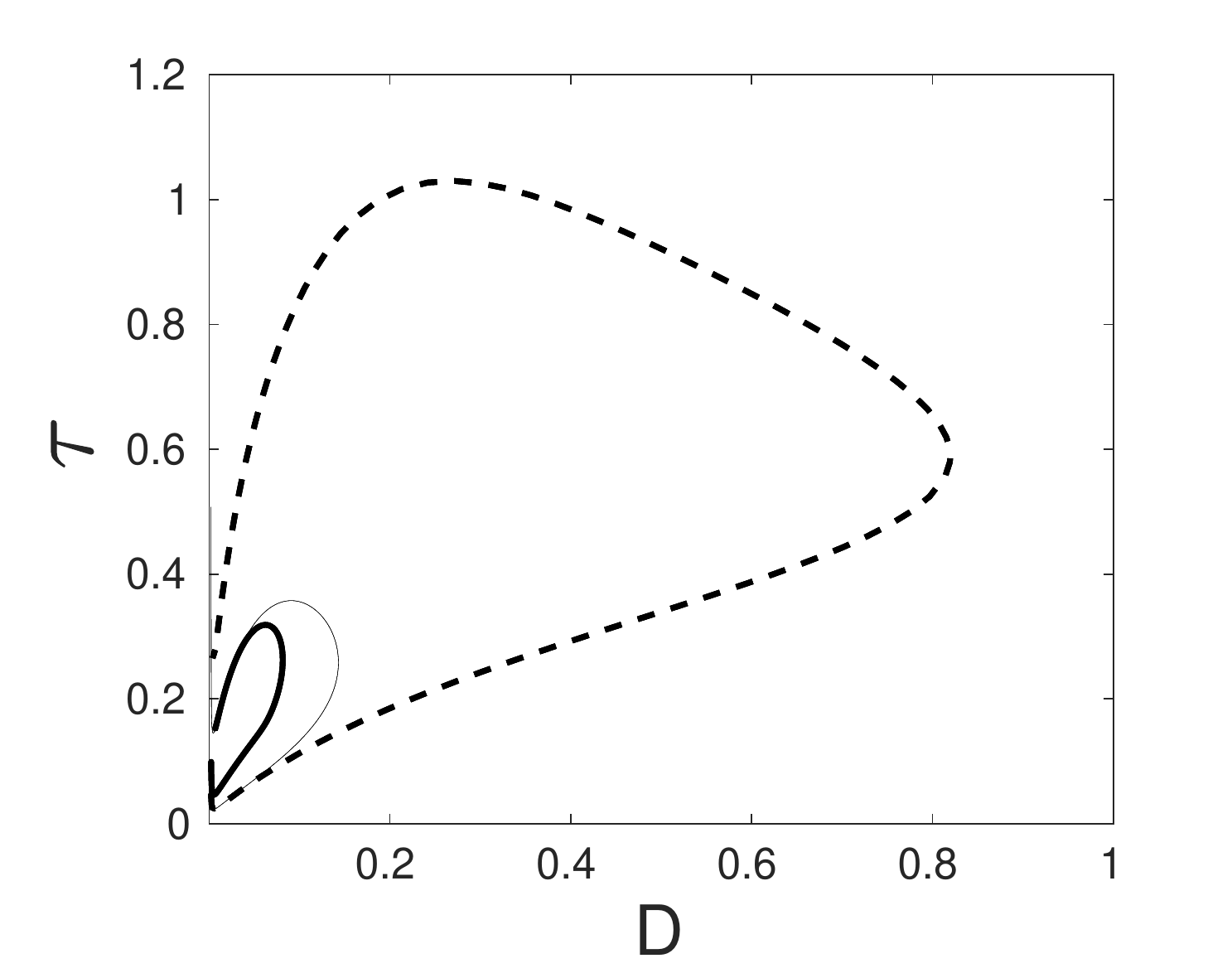}
             \subcaption{$r_0=0.5$}
        \label{Bifur_3cells_Def_r0p50}
    \end{subfigure}
    \begin{subfigure}[b]{0.31\textwidth}  
      \includegraphics[width=\textwidth,height=4.3cm]{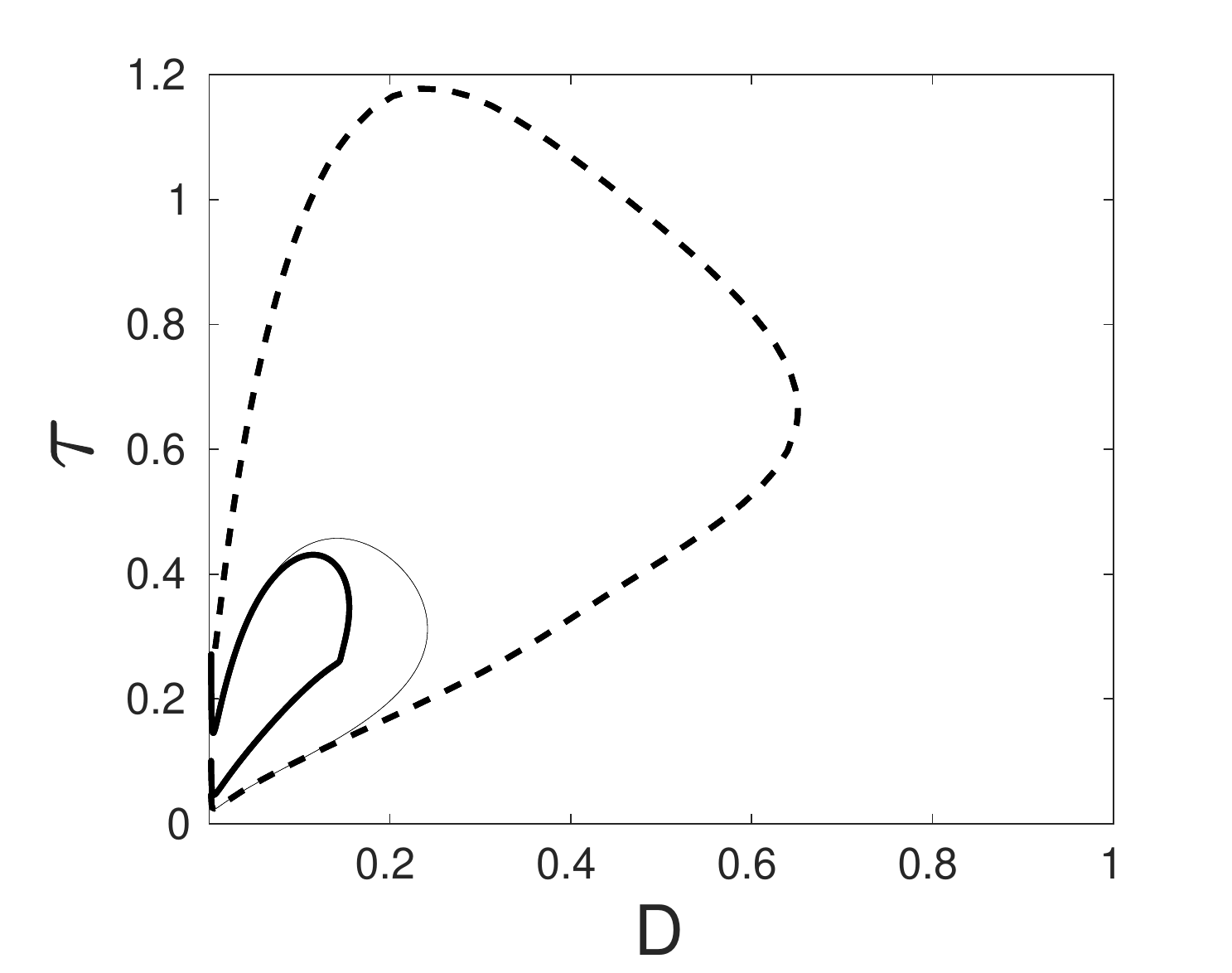}
                             \subcaption{$r_0=0.75$}
                \label{Bifur_3cells_Def_r0p75}
    \end{subfigure}
    \caption{HB boundaries in the $\tau$ versus $D$ plane
        for a ring and center hole configuration of $m=3$ cells, for
        three different ring radii $r_0$ as indicated. The Sel'kov
        parameters and cell radius are as in \eqref{Selkov_para}. Top
        row: identical cells with permeabilities $d_1=0.8$ and
        $d_2=0.2$. Bottom row: center cell is defective with
        $d_{13}=0.4$ and $d_{23}=0.2$. The dashed and heavy solid
        curves are the HB boundaries for the in-phase $(+)$ and $(-)$
        modes, respectively, as computed from
        \eqref{cent:HB_sync}. The thin solid curve is the HB boundary
        for the anti-phase mode computed from
        \eqref{cent:HB_async}. Each mode is unstable in their
        respective lobes, while linearly stable steady-state solutions
        occur outside the union of the lobes.}
\label{fig:ring_radius}
\end{figure}

The second row of Fig.~\ref{fig:ring_radius} shows similar results for
the case where the center cell is defective, with permeabilities
$d_{13} = 0.4$ and $d_{23} = 0.2$, corresponding to a reduced rate of
influx into the center cell. Similar to the results for identical
cells presented in the first row of Fig.~\ref{fig:ring_radius}, as
$r_0$ decreases the region of instability for the in-phase $(-)$ and
the anti-phase modes shrink. Moreover, we observe that the maximum
extent in $D$ of the in-phase $(+)$ lobes, in which all the cells are
essentially synchronized in amplitude and phase, decreases as $r_0$
increases. This is because when the cells are farther apart, a
relatively smaller value of the bulk diffusivity can lead to a washing
out of the bulk signal, which thereby weakens the communication
between the three cells and precludes synchronization.

\setcounter{equation}{0}
\setcounter{section}{4}
\section{Asymptotic analysis for a large population of cells}\label{sec:LargePopulation}

As the number of cells increases it becomes increasingly more
challenging numerically to determine the stability boundaries in
parameter space from the root-finding condition
$\mbox{det}{\mathcal M}(\lambda)=0$ given an arbitrary cell
configuration $\pmb{x}_j$ with arbitrary permeability parameters
$d_{1j}$ and $d_{2j}$ for $j=1,\ldots,m$. In this section we provide a
simpler approach to determine the stability boundaries from the GCEP
\eqref{full_gcep} for the large bulk diffusion parameter regime where
$D={\mathcal O}(\nu^{-1})$. To isolate only the effect of different
cell configurations, such as shown in Fig.~\ref{CellConfig}, as well
as the effect of different cell permeabilities, in our analysis below
we will assume that the Sel'kov kinetic parameters are all
cell-independent, i.e.~that $\alpha_j=\alpha$, $\mu_j=\mu$ and
$\zeta_j=\zeta$ for $j=1,\ldots,m$. Our analysis is easily extended to
remove this assumption.

\begin{figure}[!ht]
    \centering
    \begin{subfigure}[b]{0.2\textwidth}
        \includegraphics[width=\textwidth]{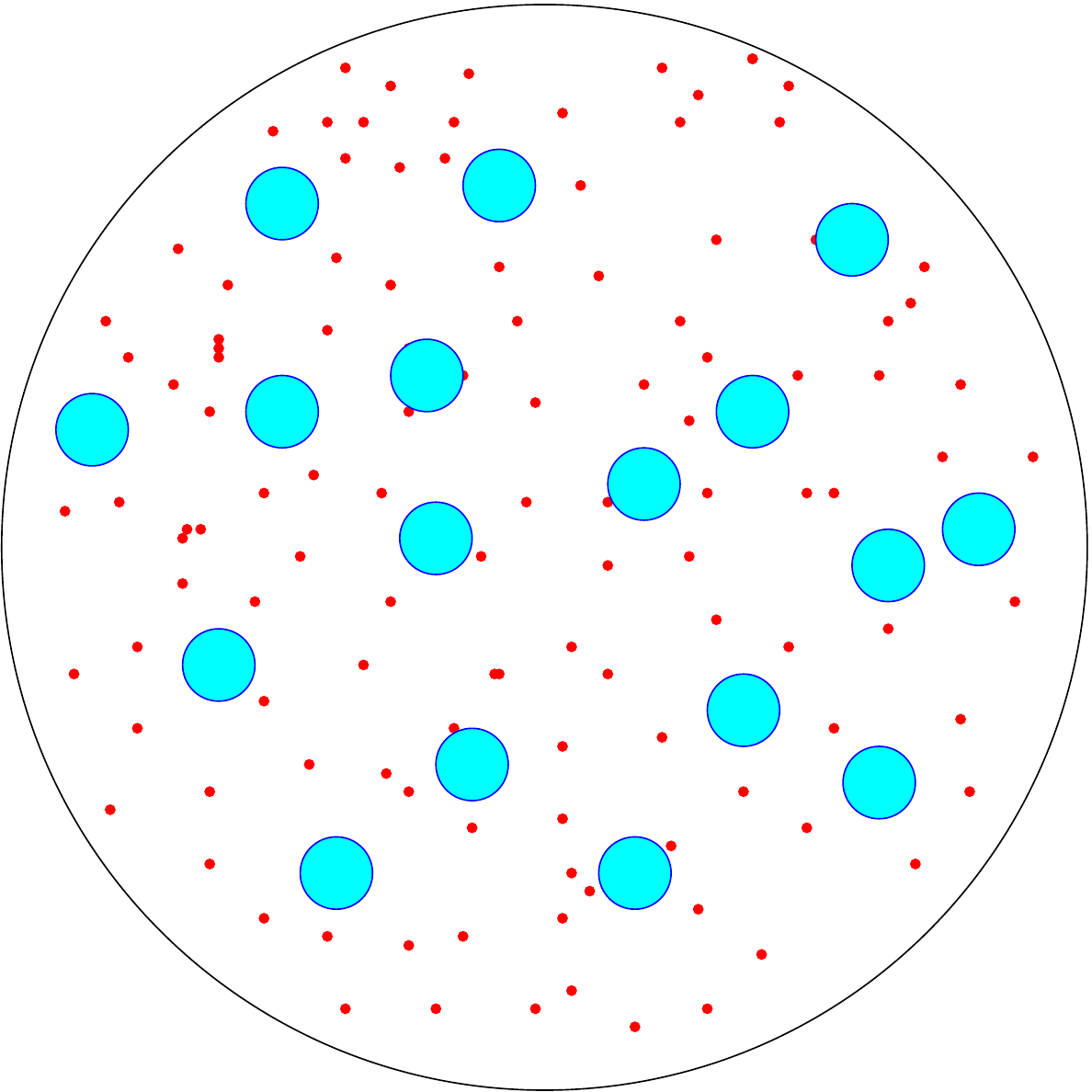}
        \caption{Randomly placed cells}
        \label{Ringcof}
    \end{subfigure}
    ~\qquad \qquad \qquad
    \begin{subfigure}[b]{0.2\textwidth}  
        \includegraphics[width=\textwidth]{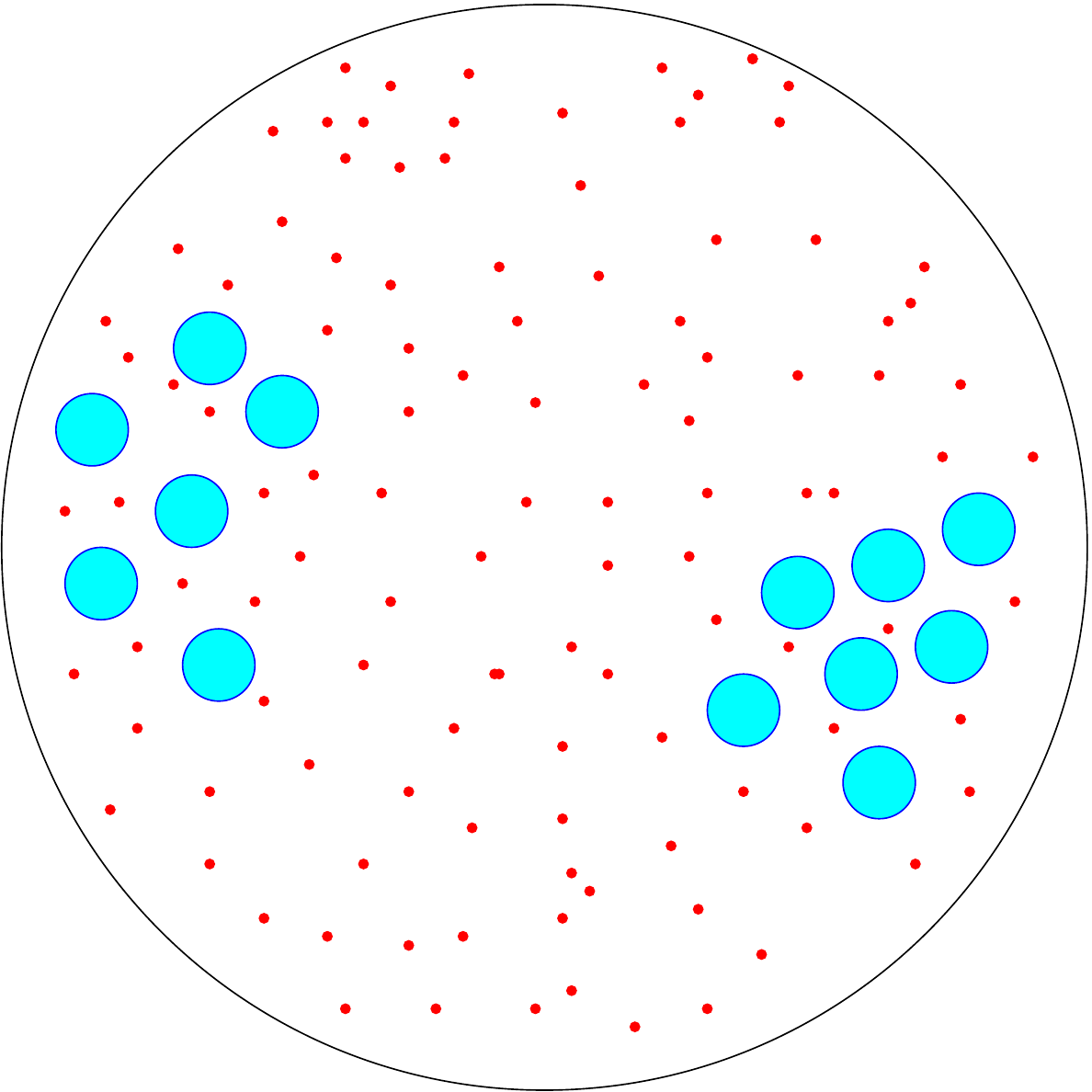}
        \caption{Two clusters of cells}
        \label{Chimeraconf}
    \end{subfigure}
    \caption{Schematic showing different cell configurations
      in the unit disk. The cells are represented by smaller disks (in
      cyan) and the diffusing bulk species corresponds to the red
      dots. Left panel: randomly placed cells. Right panel: two spatially
      clustered groups of cells.} \label{CellConfig}
\end{figure}

We first recall from \eqref{ss:selkov_1} of
\cref{LinearAnalysisDefect} that, for $\varepsilon\to 0$, the
steady-state solution with Sel'kov kinetics is determined in terms of
the solution $\pmb{\mathcal{A}} = (A_1, \dots, A_m)^T$ to the linear
system
\begin{subequations}\label{large:steady-state}
\begin{equation}\label{large:selkov_1}
\left(I + 2\pi \nu \mathcal{G}  + \nu\,D\,P_1 +
  \frac{2\pi \nu D}{\tau }  P_2 \right) {\mathcal A}= - \mu \nu
P_2 \, \pmb{e} \,; \qquad
P_1 = \mbox{diag}\Big{(} \frac{1}{d_{11}}, \ldots, \frac{1}{d_{1m}} \Big{)}
\quad  P_2 = \mbox{diag}\Big{(}\frac{d_{21}}{d_{11}}, \ldots,
\frac{d_{2m}}{d_{1m}}\Big{)} \,,
\end{equation}
where $\pmb{e} = (1,\dots,1)^T$ and $\mathcal{G}$ is the Green's
interaction matrix defined in \eqref{GreensMatrix}. In terms of the
solution $\pmb{\mathcal{A}}$ to \eqref{large:selkov_1}, we obtain from
\eqref{ss:selkov_2} that the steady-state for the intracellular species
$\pmb{u}_{e} = (u^1_{ej}, u^2_{ej} )^T$ in each cell is 
\begin{equation}\label{large:selkov_2}
  u^1_{ej} = \mu  + \frac{2 \pi D}{\tau} A_j \,, \qquad \mbox{and} \qquad
  u^2_{ej} = \frac{\mu}{\alpha + (u^1_{ej})^2}\,.
\end{equation}
\end{subequations}

Since this simple steady-state construction is accurate to all orders
in $\nu$ for any $D>0$, we will not seek an approximation to it
valid for the regime $D={\mathcal O}(\nu^{-1})$. Therefore, in the
GCEP matrix ${\mathcal M}(\lambda)$ given in \eqref{Global_System},
the diagonal matrix ${\mathcal K}(\lambda)$, as defined in
\eqref{K_mat_entry}, will be evaluated at the solution to
\eqref{large:steady-state}. With Sel'kov kinetics, this yields
\begin{subequations}\label{large:Kmatrix}
\begin{equation}\label{large:Kmatrix_1}
  \mathcal{K} = \mathcal{K}(\lambda) \equiv
  \mbox{diag} \Big{(}\mathit{K}_1,\dots, \mathit{K}_m  \Big{)}\,, \qquad
  \mbox{where} \qquad \mathit{K}_j = \frac{\lambda + \mbox{det}(J_j)}
  {\lambda^2 - \lambda \, \mbox{tr}(J_j) + \mbox{det}(J_j) } \,,
\end{equation}
where $J_j$ is the Jacobian matrix for the local Sel-kov kinetics
at the $j^{\text{th}}$ cell, for which
\begin{equation}\label{large:Kmatrix_jac}
  \mbox{det}(J_j) = \zeta \left(\alpha + (u_{e j}^{1})^2\right)\,, \qquad
   \mbox{tr}(J_j) = \frac{ \left[2 \mu u_{e j}^{1} -
      \left(\alpha + (u_{e j}^{1})^2\right) - \zeta
      \left(\alpha + (u_{e j}^{1})^2\right)^2\right]}{\alpha +
    (u_{e j}^{1})^2}\,.
\end{equation}
\end{subequations}

Our goal in this subsection of determining a more tractable formula
for determining the roots of $\det\mathcal{M}(\lambda)=0$ when $m$ is
large and in the limit $D = {D_0/\nu}$, where
$\nu = {-1/\log\varepsilon}$ and $D_0={\mathcal O}(1)$, is based on
approximating the eigenvalue-dependent Green's matrix
${\mathcal G}_{\lambda}$ in \eqref{Global_System}, while retaining the
full $\mathcal{K}$ matrix, as defined in \eqref{large:Kmatrix}.

To do so, we readily calculate from  \eqref{EigGreen} that when
$D = {D_0/\nu}\gg 1$ we have (see \cite{jia2016})
\begin{equation}\label{large:G-matrix}
  \mathcal{G}_{\lambda} =  \frac{ D_0}{ \nu |\Omega| (1 + \tau \lambda)} +
  \mathcal{G}_{0} + \mathcal{O}(\nu)\,.
\end{equation}
Here $\mathcal{G}_{0}$ is the Neumann Green's matrix of
\eqref{WM_NeumannGmatrix} given in terms of the Neumann Green's
function $G_0$ of \eqref{WM_NUEGREEN}, which is known analytically for
the the unit disk in \eqref{gr:gmrm}. By using
\eqref{large:G-matrix} in \eqref{Global_System}, we find that the GCEP
matrix reduces to
\begin{subequations} \label{large:GCEP}
\begin{equation}\label{large:M}
\mathcal{M}(\lambda) = {\mathcal M}_{\infty}(\lambda)   +
\mathcal{O}(\nu^2)\,,
\qquad \mbox{where} \qquad {\mathcal M}_{\infty} \equiv
{\mathcal B} + \gamma E + 2\pi \nu {\mathcal G}_0 \,.
\end{equation}
In \eqref{large:M} the scalar $\gamma$, the rank-one matrix $E$ and the
diagonal matrix ${\mathcal B}$ are defined by
\begin{equation}\label{large:M_sub}
  \gamma =\gamma(\lambda) \equiv\frac{2 \pi m D_0 }
  {(1 + \tau \lambda)|\Omega|} \,, \qquad
  E \equiv \frac{1}{m} \pmb{e}\,\pmb{e}^T \,, \qquad
  {\mathcal B} = {\mathcal B}
  (\lambda) \equiv I + D_0P_1 + \frac{2\pi D_0}{\tau} P_2\,
  \mathcal{K}(\lambda) \,.
\end{equation}
Neglecting terms of order ${\mathcal O}(\nu^2)$, in the limit
$D={D_0/\nu}\gg 1$ we conclude that the GCEP \eqref{full_gcep} reduces
to finding values of $\lambda$ for which there is a nontrivial solution
$\pmb{c}$ to
\begin{equation}\label{large:gcep_mat}
  \mathcal{M}_{\infty}(\lambda) \pmb{c} = \pmb{0}\,, \qquad
  \mbox{which occurs iff} \qquad \det \mathcal{M}_{\infty}(\lambda)=0 \,.
\end{equation}
\end{subequations}
We remark that in our asymptotic reduction \eqref{large:GCEP} of the
GCEP \eqref{full_gcep}, the cell configuration enters through the
diagonal matrix ${\mathcal K}$, as defined in \eqref{large:Kmatrix},
and at order $\nu$ from the Neumann Green's matrix ${\mathcal G}_0$ in
\eqref{large:M}.

To analyze  \eqref{large:GCEP} we first observe from
\eqref{large:M_sub} and \eqref{large:Kmatrix_1} that
\begin{equation}\label{large:bdef}
  {\mathcal B} = \mbox{diag}(b_1,\ldots,b_m)\,, \qquad \mbox{where} \qquad
  b_j = \frac{(d_{1j}+D_0)}{d_{1j}} \left[ 1 + \frac{\eta_j}{\tau}
    \mathit{K}_j \right] \,, \qquad \eta_j \equiv \frac{2\pi d_{2j} D_0}{d_{1j}+D_0}\,,
  \quad j=1,\ldots, m \,.
\end{equation}
The key advantage of our asymptotic reduction is that by using the
matrix structure in \eqref{large:GCEP} we can readily calculate a
tractable analytical expression for
$\mbox{det}{{\mathcal M}_{\infty}(\lambda)}$, which can be used in
implementing the root-finding condition in \eqref{large:gcep_mat}
numerically. Our first result for \eqref{large:GCEP} in this direction
is as follows:

\vspace{0.2cm}
\begin{prop}\label{lemma:large} Suppose that $b_j\neq 0$ for $j=1\,,
  \ldots,m$ so that ${\mathcal B}$ is invertible. Then,
\begin{equation}\label{large:detB}
  \mbox{det}{\mathcal M}_{\infty}(\lambda) =
  \mbox{det}\left({\mathcal B} + \gamma E + 2\pi \nu {\mathcal G}_0\right) =
  \left( \prod_{i=1}^{m} b_i \right) \kappa_{1\varepsilon}
  \left(1+{\mathcal O}(\nu)\right)\,,
\end{equation}
where
\begin{equation}\label{large:kappa_1eps}
  \kappa_{1\varepsilon} \equiv 1 + \frac{\gamma}{m}
  \pmb{e}^{T}\pmb{v}_1 +  2 \pi \nu \,
  \left(\frac{\pmb{v}^T_1\, \mathcal{G}_0 \,\pmb{v}_1}{\pmb{e}^T\, \pmb{v}_1}
    \right) +
  \mathcal{O}(\nu^2)\,, \qquad \mbox{with} \qquad
  \pmb{v}_1^{T} \equiv \left( {1/b_1},\ldots,{1/b_m}\right) \,.
\end{equation}
\end{prop}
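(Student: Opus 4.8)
The plan is to exploit the three structural features of $\mathcal{M}_\infty \equiv \mathcal{B} + \gamma E + 2\pi\nu\mathcal{G}_0$ in \eqref{large:M}: $\mathcal{B}$ is diagonal and, by hypothesis, invertible; $\gamma E = (\gamma/m)\,\pmb{e}\pmb{e}^T$ is rank one; and $2\pi\nu\mathcal{G}_0$ is a small perturbation. First I would factor out $\mathcal{B}$, using $\mathcal{B}^{-1}E = \tfrac{1}{m}(\mathcal{B}^{-1}\pmb{e})\,\pmb{e}^T = \tfrac{1}{m}\,\pmb{v}_1\pmb{e}^T$ with $\pmb{v}_1 = \mathcal{B}^{-1}\pmb{e} = (1/b_1,\ldots,1/b_m)^T$, to get
\[
  \det\mathcal{M}_\infty = \Big(\prod_{i=1}^m b_i\Big)\,\det\Big( X + \tfrac{\gamma}{m}\,\pmb{v}_1\pmb{e}^T\Big)\,,
  \qquad X \equiv I + 2\pi\nu\,\mathcal{B}^{-1}\mathcal{G}_0\,,
\]
since $\det\mathcal{B} = \prod_i b_i$. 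For $\nu$ small, $X$ is invertible with $\det X = 1 + 2\pi\nu\,\mbox{tr}(\mathcal{B}^{-1}\mathcal{G}_0) + \mathcal{O}(\nu^2) = 1 + \mathcal{O}(\nu)$.

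Next I would peel off the rank-one term with the matrix–determinant lemma,
\[
  \det\Big(X + \tfrac{\gamma}{m}\pmb{v}_1\pmb{e}^T\Big) = (\det X)\Big(1 + \tfrac{\gamma}{m}\,\pmb{e}^T X^{-1}\pmb{v}_1\Big)\,,
\]
and then expand $X^{-1} = I - 2\pi\nu\,\mathcal{B}^{-1}\mathcal{G}_0 + \mathcal{O}(\nu^2)$. Using $\pmb{e}^T\mathcal{B}^{-1} = (\mathcal{B}^{-1}\pmb{e})^T = \pmb{v}_1^T$ (valid because $\mathcal{B}$ is diagonal), this gives $\pmb{e}^T X^{-1}\pmb{v}_1 = \pmb{e}^T\pmb{v}_1 - 2\pi\nu\,\pmb{v}_1^T\mathcal{G}_0\pmb{v}_1 + \mathcal{O}(\nu^2)$, hence
\[
  \det\mathcal{M}_\infty = \Big(\prod_{i=1}^m b_i\Big)\big(1 + \mathcal{O}(\nu)\big)\Big[\,1 + \tfrac{\gamma}{m}\,\pmb{e}^T\pmb{v}_1 - \tfrac{2\pi\nu\gamma}{m}\,\pmb{v}_1^T\mathcal{G}_0\pmb{v}_1 + \mathcal{O}(\nu^2)\Big]\,.
\]

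It then remains to recognise the bracket as $\kappa_{1\varepsilon}$ in \eqref{large:kappa_1eps}. This is where the restriction to the nullspace-crossing regime enters: the formula is used to locate roots of $\det\mathcal{M}_\infty = 0$, and there $1 + \tfrac{\gamma}{m}\pmb{e}^T\pmb{v}_1 = \mathcal{O}(\nu)$, equivalently $-\gamma/m = (\pmb{e}^T\pmb{v}_1)^{-1}\big(1 + \mathcal{O}(\nu)\big)$. Substituting this relation into the single $\mathcal{O}(\nu)$ term $-\tfrac{2\pi\nu\gamma}{m}\pmb{v}_1^T\mathcal{G}_0\pmb{v}_1$ introduces only $\mathcal{O}(\nu^2)$ error and turns the bracket into $\kappa_{1\varepsilon} + \mathcal{O}(\nu^2)$. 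Since the bracket is itself $\mathcal{O}(\nu)$ there, the prefactor $\big(1+\mathcal{O}(\nu)\big)$ and the residual $\mathcal{O}(\nu^2)$ together amount to an $\mathcal{O}(\nu)$ \emph{relative} error, giving $\det\mathcal{M}_\infty = \big(\prod_i b_i\big)\,\kappa_{1\varepsilon}\,\big(1 + \mathcal{O}(\nu)\big)$.

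The main obstacle is the bookkeeping in this last step rather than any hard estimate: the multiplicative $\big(1+\mathcal{O}(\nu)\big)$ statement only has content on a neighbourhood of the curve $\kappa_{1\varepsilon}(\lambda)=0$ in $\mbox{Re}(\lambda)>0$, and one must check that the $\mathcal{O}(\nu)$ and $\mathcal{O}(\nu^2)$ remainders are locally uniform there — which follows from the analyticity of $\mathcal{G}_\lambda$ (hence of $\mathcal{G}_0$ and of the $\mathcal{O}(\nu)$ remainder in \eqref{large:G-matrix}) and of the Sel'kov matrix $\mathcal{K}(\lambda)$ in \eqref{large:Kmatrix} on that region, together with the standing assumption $b_j\neq 0$. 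As a by-product, the computation shows that the $\mathcal{G}_0$-correction enters $\kappa_{1\varepsilon}$ only through the scalar $\pmb{v}_1^T\mathcal{G}_0\pmb{v}_1/(\pmb{e}^T\pmb{v}_1)$, so that a single branch of $\det\mathcal{M}_\infty = 0$ — i.e. a single matrix eigenvalue of $\mathcal{M}_\infty$ — can cross zero, which is the fact exploited in the subsequent root-finding.
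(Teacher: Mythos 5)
Your argument is correct, and it reaches \eqref{large:detB} by a genuinely different route than the paper's. Both proofs begin by extracting $\det\mathcal{B}=\prod_i b_i$ as in \eqref{large:proof1}, but the paper then treats $I+\gamma\mathcal{B}^{-1}E+2\pi\nu\mathcal{B}^{-1}\mathcal{G}_0$ as an $\mathcal{O}(\nu)$ perturbation of $I+\gamma\mathcal{B}^{-1}E$ and computes all $m$ eigenvalues of \eqref{large:aux_eig} to two terms via solvability conditions (including a degenerate-perturbation step for the $(m-1)$-fold eigenvalue $1$); the determinant is then the product of these eigenvalues, with $\kappa_{1\varepsilon}$ emerging directly as the two-term expansion of the distinguished eigenvalue and the $(1+\mathcal{O}(\nu))$ factor coming from the remaining $\kappa_{j\varepsilon}=1+\mathcal{O}(\nu)$. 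You instead apply the matrix determinant lemma to the rank-one update $\tfrac{\gamma}{m}\pmb{v}_1\pmb{e}^T$ of $X=I+2\pi\nu\mathcal{B}^{-1}\mathcal{G}_0$ and a Neumann expansion of $X^{-1}$, which yields an essentially exact factorization in two lines. The cost is the reconciliation step you flag: your exact $\mathcal{O}(\nu)$ correction $-\tfrac{2\pi\nu\gamma}{m}\pmb{v}_1^T\mathcal{G}_0\pmb{v}_1$ agrees with the term $+2\pi\nu\,\pmb{v}_1^T\mathcal{G}_0\pmb{v}_1/(\pmb{e}^T\pmb{v}_1)$ in \eqref{large:kappa_1eps} only where $1+\tfrac{\gamma}{m}\pmb{e}^T\pmb{v}_1=\mathcal{O}(\nu)$, i.e.\ near the zero set where the formula is actually deployed in \eqref{large:gcep_mat} — and you handle this correctly. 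You lose nothing by this, since the paper's multiplicative $(1+\mathcal{O}(\nu))$ likewise swamps the $\mathcal{O}(\nu)$ term of $\kappa_{1\varepsilon}$ whenever $\kappa_{1\varepsilon}=\mathcal{O}(1)$, so the two statements carry the same information. What the paper's longer computation buys is the eigenvector $\pmb{v}_1=(1/b_1,\ldots,1/b_m)^T$ itself, quoted in Proposition \ref{large:mainprop} as the leading-order nullvector $\pmb{c}$, the explicit form of the other $m-1$ eigenvalues \eqref{large:eps_ortho_eigpair}, and the perturbation machinery that is reused verbatim for the degenerate case \eqref{degen:ass} where two or more $b_j$ vanish simultaneously; your observation that only the rank-one factor can vanish recovers the single-crossing conclusion, though not the eigenvector.
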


\begin{proof}
Since ${\mathcal B}$ is a diagonal matrix and invertible by assumption, we have
\begin{equation}\label{large:proof1}
  \mbox{det} \left( {\mathcal B} + \gamma E  + 2\pi \nu  {\mathcal G}_0
  \right) = \mbox{det}({\mathcal B}) \det\left( I + \gamma
    {\mathcal B}^{-1}E + 2\pi \nu {\mathcal B}^{-1} {\mathcal G}_0 \right)\,,
\end{equation}
where $\mbox{det}{\mathcal B} = \prod_{i=1}^{m} b_i$. The second term in
\eqref{large:proof1} is calculated by multiplying all the eigenvalues 
$\kappa_{\varepsilon}$ of
\begin{equation}\label{large:aux_eig}
  \left( I + \gamma {\mathcal B}^{-1}E + 2\pi \nu {\mathcal B}^{-1}
    {\mathcal G}_0\right)\, \pmb{v}_{\varepsilon} = \kappa_{\varepsilon}
  \pmb{v}_{\varepsilon} \,.
\end{equation}
To determine ${\mathcal O}(\nu)$ accurate expressions for these eigenvalues
we expand an eigenpair $\kappa_{\varepsilon}$ and $\pmb{v}_{\varepsilon}$ as
\begin{equation}\label{large:L_Expand}
  \kappa_{\varepsilon} = \kappa_{1} + \nu\, \tilde{\kappa} + \cdots \,,\qquad
  \pmb{v}_{\varepsilon} = \pmb{v}_1 + \nu \, \tilde{\pmb{v}} + \cdots\,.
\end{equation}
Upon substituting \eqref{large:L_Expand} into \eqref{large:aux_eig} we
equate powers of $\nu$ to obtain the leading-order problem 
\begin{equation}\label{large:proof_eig0}
(I + \gamma {\mathcal B}^{-1} E)\,\pmb{v}_1 = \kappa_1 \,\pmb{v}_1 \,,
\end{equation}
and the following problem at $\mathcal{O}(\nu)$:
\begin{equation}\label{large:proof_eig1}
  ( I + \gamma {\mathcal B}^{-1}E  - \kappa_1 I)\, \tilde{\pmb{v}} =
  \tilde{\kappa} \, \pmb{v}_1 - 2\pi {\mathcal B}^{-1}{\mathcal G}_0
  \,\pmb{v}_1 \,.
\end{equation}

For the leading order problem \eqref{large:proof_eig0}, we first observe that
$E \, \pmb{q}_j = \pmb{0}$ for $j = 2,\dots,m$, where
$\mbox{span}\lbrace{\pmb{q}_2,\ldots,\pmb{q}_m\rbrace}$ is the $m-1$
dimensional subspace orthogonal to $\pmb{e}$. We readily conclude that
\eqref{large:proof_eig0} has the eigenpairs
\begin{equation}\label{large:0ortho_eigpair}
  \kappa_{1j} = 1  \quad \mbox{and} \quad \pmb{v}_{1j} = \pmb{q}_j \,, \qquad
  \mbox{for} \quad j=2,\ldots,m \,.
\end{equation}
To determine the ${\mathcal O}(\nu)$ correction to these eigenvalues,
we substitute \eqref{large:0ortho_eigpair} into \eqref{large:proof_eig1}
to obtain
\begin{equation}\label{large:1ortho_eigpair}
  \gamma {\mathcal B}^{-1}E \, \tilde{\pmb{v}} =   \tilde{\kappa} \pmb{q}_j
   - 2\pi {\mathcal B}^{-1} {\mathcal G}_0 \,\pmb{q}_j.
\end{equation}
Since ${\mathcal B}\,\pmb{q}_j$, with ${\mathcal B}^T={\mathcal B}$,
is a left null-vector of ${\mathcal B}^{-1}E$, the solvability
condition for \eqref{large:1ortho_eigpair} is
$\pmb{q}_j^T {\mathcal B} \left( \tilde{\kappa} \pmb{q}_j - 2\pi
  {\mathcal B}^{-1} {\mathcal G}_0 \,\pmb{q}_j\right)=0$, which
determines $\tilde{\kappa}$ for each eigenpair. In this way, we obtain
that $m-1$ eigenvalues of \eqref{large:aux_eig} are
\begin{equation}\label{large:eps_ortho_eigpair}
  \kappa_{j\varepsilon}= 1 +  2 \pi \nu \,\frac{\pmb{q}_j^T \, \mathcal{G}_0\,
    \pmb{q}_j}{\pmb{q}_j^T\,{\mathcal B}\,\pmb{q}_j} + \mathcal{O}(\nu^2)\,, 
  \qquad \mbox{where} \quad \pmb{q}_j^T \pmb{e}=0\,, \quad \mbox{for} \quad
  j=2,\dots,m\,.
\end{equation}

To find the remaining eigenpair of the leading order problem we
write \eqref{large:proof_eig0} as
$\pmb{v}_1 (1-\kappa_1) = - \frac{\gamma}{m}
\left(\pmb{e}^T\pmb{v}_1\right) \left( {1/b_1},\ldots, {1/b_m}
\right)^T$. This has the (unnormalized) nontrivial solution
\begin{equation}\label{large:remain0}
  \pmb{v}_1= \left( {1/b_1},\ldots,{1/b_m}\right)^T \,, \qquad
  \mbox{iff} \qquad \kappa_1 = 1 + \frac{\gamma}{m} \pmb{e}^T \pmb{v}_1 =
  1 + \frac{\gamma}{m} \sum_{i=1}^{m} \frac{1}{b_i} \,.
\end{equation}
To determine the $\mathcal{O}(\nu)$ correction to this eigenvalue we
observe, for this eigenpair, that $\pmb{e}$ is a left null-vector of
the matrix in \eqref{large:proof_eig1} since, by
using \eqref{large:remain0} for $\kappa_1$, we calculate
\begin{equation}
  \pmb{e}^T \left(I + \gamma {\mathcal B}^{-1} E -\kappa_1 I\right) =
\pmb{e}^T\left( 1  + \frac{\gamma}{m} \left(\sum_{i=1}^{m} \frac{1}{b_i}\right)
      -\kappa_1 \right)=\pmb{0} \,.
\end{equation}
Therefore, upon left-multiplying \eqref{large:proof_eig1} by
$\pmb{e}^T$, the solvability condition for \eqref{large:proof_eig1} is
$\pmb{e}^T\left(\tilde{\kappa} \pmb{v}_1-2\pi {\mathcal B}^{-1}
  {\mathcal G}_0 \pmb{v}_1\right)=0$, which yields
$\tilde{\kappa}=2\pi {\pmb{v}_1^T {\mathcal G}_0
  \pmb{v}_1/\left(\pmb{e}^T\pmb{v}_1\right)}$ where we used
$\pmb{e}^T {\mathcal B}^{-1}=\pmb{v}_1^T$. We conclude that a two-term
expansion $\kappa_{1\varepsilon}=\kappa_1+ \tilde{\kappa} \nu$ for
this remaining eigenvalue of \eqref{large:aux_eig} is as given in
\eqref{large:kappa_1eps}.  Finally, by multiplying
$\kappa_{1\varepsilon}$ with the other eigenvalues given in
\eqref{large:eps_ortho_eigpair} we obtain
$\det\left( I + \gamma{\mathcal B}^{-1}E + 2\pi \nu {\mathcal B}^{-1}
  {\mathcal G}_0 \right)=\kappa_{1\varepsilon}(1+{\mathcal O}(\nu))$.
In view of \eqref{large:proof1} this completes the derivation of
\eqref{large:detB}.

\end{proof}

The key assumption in Proposition \ref{lemma:large} is that $b_j\neq 0$ for
any $j=1,\ldots,m$. By using \eqref{large:bdef} and \eqref{large:Kmatrix_1}
for $b_j$ and $\mathit{K}_j$, respectively, we conclude that $b_j=0$ if
and only if $\lambda$ is a root of the quadratic equation
${\mathcal Q}_j(\lambda)=0$, where
\begin{equation}\label{large:qj}
  {\mathcal Q}_j(\lambda)\equiv \lambda^2 - \left(
    \mbox{tr}(J_j) - \frac{\eta_j}{\tau} \right) \lambda +
  \mbox{det}(J_j)\left[ 1+ \frac{\eta_j}{\tau}\right]\,, 
 \qquad \mbox{where}  \quad \eta_j \equiv \frac{2\pi d_{2j} D_0}{d_{1j}+D_0}\,,
\end{equation}
with $\mbox{det}(J_j)$ and $\mbox{tr}(J_j)$ as given in
\eqref{large:Kmatrix_jac}.  With this criterion and together with
Proposition \ref{lemma:large} we readily formulate a simple scalar
root-finding problem to identify values of $\lambda$ for which the
reduced GCEP \eqref{large:GCEP} has a nontrivial solution.

\vspace*{0.2cm}

\begin{prop} \label{large:mainprop} Suppose that $\lambda=\lambda^{\star}$
  is a root of ${\mathcal Q}_{s}(\lambda)=0$, where
\begin{equation}\label{large:qs}
   {\mathcal Q}_s(\lambda) \equiv 1 + \frac{\gamma}{m}
  \pmb{e}^{T}\pmb{v}_1 +  2 \pi \nu \,
  \left(\frac{\pmb{v}^T_1\, \mathcal{G}_0 \,\pmb{v}_1}{\pmb{e}^T\, \pmb{v}_1}
    \right)  \qquad \mbox{with} \qquad
  \pmb{v}_1^{T} \equiv \left( {1/b_1},\ldots,{1/b_m}\right) \,.
\end{equation}
Here $\gamma=\gamma(\lambda)$ and $b_j=b_j(\lambda)$ for $j=1,\ldots,m$ are
given in \eqref{large:M_sub} and \eqref{large:bdef}, respectively. Suppose
that $\lambda^{\star}$ satisfies
\begin{equation}\label{large:no_reson}
    \lambda^{\star} \notin \bigcup\limits_{j=1}^{m} \lbrace{\lambda_{j\pm}\rbrace}
    \,, \qquad \mbox{where} \quad {\mathcal Q}_{j}(\lambda_{j\pm})=0 \,,
\end{equation}
and ${\mathcal Q}_j(\lambda)$ is the quadratic defined in
\eqref{large:qj}.  Then,
$\mbox{det}\,{\mathcal M}_{\infty}(\lambda^{\star})=0$ and the
corresponding (unnormalized) nontrivial solution $\pmb{c}$ to the
reduced GCEP \eqref{large:gcep_mat} is
\begin{equation}
  \pmb{c} = \left( \frac{1}{b_1(\lambda^{\star})}, \ldots,
    \frac{1}{b_m(\lambda^{\star})}\right)^T + {\mathcal O}(\nu)\,
\end{equation}
\end{prop}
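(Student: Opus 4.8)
The plan is to obtain this statement as an essentially immediate consequence of Proposition~\ref{lemma:large} together with the eigenvector information already contained in its proof.

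First I would verify that the non-resonance hypothesis \eqref{large:no_reson} is exactly what is needed to invoke Proposition~\ref{lemma:large} at $\lambda=\lambda^{\star}$. From the definitions \eqref{large:bdef} and \eqref{large:Kmatrix_1} of $b_j$ and $\mathit{K}_j$, a direct computation (the remark stated just before the proposition) shows that $b_j(\lambda)=0$ if and only if $\lambda$ is a root of the quadratic ${\mathcal Q}_j(\lambda)$ in \eqref{large:qj}. Hence \eqref{large:no_reson} guarantees $b_j(\lambda^{\star})\neq 0$ for every $j$, so that ${\mathcal B}(\lambda^{\star})$ is invertible and $\prod_{i=1}^{m}b_i(\lambda^{\star})\neq 0$, and Proposition~\ref{lemma:large} applies at $\lambda=\lambda^{\star}$.

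Next I would apply the determinant factorization \eqref{large:detB}, which at $\lambda=\lambda^{\star}$ reads $\det{\mathcal M}_{\infty}(\lambda^{\star}) = \big(\prod_{i} b_i(\lambda^{\star})\big)\,\kappa_{1\varepsilon}(\lambda^{\star})\,(1+{\mathcal O}(\nu))$. Since ${\mathcal Q}_s$ in \eqref{large:qs} is precisely the two-term truncation of $\kappa_{1\varepsilon}$ appearing in \eqref{large:kappa_1eps}, the assumption ${\mathcal Q}_s(\lambda^{\star})=0$ gives $\kappa_{1\varepsilon}(\lambda^{\star})={\mathcal O}(\nu^2)$; as the nonzero prefactor does not affect this, and as ${\mathcal M}_{\infty}$ itself reproduces the true GCEP matrix ${\mathcal M}$ only up to ${\mathcal O}(\nu^2)$ by \eqref{large:M}, we conclude $\det{\mathcal M}_{\infty}(\lambda^{\star})=0$ at the working order of accuracy. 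In the write-up I would state explicitly that the equalities here are understood modulo the ${\mathcal O}(\nu^2)$ residuals inherent in the reduction \eqref{large:GCEP}, and note in passing that the other $m-1$ eigenvalues $\kappa_{j\varepsilon}=1+{\mathcal O}(\nu)$ from \eqref{large:eps_ortho_eigpair} are bounded away from zero, hence contribute no spurious roots. For the null-vector I would track the eigenvector attached to $\kappa_{1\varepsilon}$ through the proof of Proposition~\ref{lemma:large}: the auxiliary matrix in \eqref{large:aux_eig} is exactly ${\mathcal B}^{-1}{\mathcal M}_{\infty}$, and the eigenpair associated with $\kappa_{1\varepsilon}$ has eigenvector $\pmb{v}_{\varepsilon}=\pmb{v}_1+{\mathcal O}(\nu)$ with $\pmb{v}_1=(1/b_1,\ldots,1/b_m)^T$, by \eqref{large:remain0} and \eqref{large:L_Expand}. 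Evaluating at $\lambda^{\star}$, where $\kappa_{1\varepsilon}(\lambda^{\star})=0$ to the retained order, gives ${\mathcal B}^{-1}{\mathcal M}_{\infty}(\lambda^{\star})\pmb{v}_{\varepsilon}=\pmb{0}$, and left-multiplying by the invertible ${\mathcal B}(\lambda^{\star})$ yields ${\mathcal M}_{\infty}(\lambda^{\star})\pmb{v}_{\varepsilon}=\pmb{0}$. Hence $\pmb{c}=\pmb{v}_{\varepsilon}=\big(1/b_1(\lambda^{\star}),\ldots,1/b_m(\lambda^{\star})\big)^T+{\mathcal O}(\nu)$ solves the reduced GCEP \eqref{large:gcep_mat}.

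I expect the only real subtlety — rather than a genuine obstacle — to be the bookkeeping of asymptotic orders: ensuring that the ${\mathcal O}(\nu^2)$ term built into $\kappa_{1\varepsilon}$, the ${\mathcal O}(\nu)$ correction to $\pmb{v}_{\varepsilon}$, and the ${\mathcal O}(\nu^2)$ discrepancy between ${\mathcal M}_{\infty}$ and ${\mathcal M}$ are all interpreted consistently, so that the clean statements ``$\det{\mathcal M}_{\infty}(\lambda^{\star})=0$'' and ``$\pmb{c}=\pmb{v}_1+{\mathcal O}(\nu)$'' are legitimate. I would also remark that $\lambda^{\star}$ must avoid the pole $\lambda=-1/\tau$ of $\gamma$ and the poles of the $\mathit{K}_j$ for ${\mathcal Q}_s$ and the $b_j$ to be well-defined there, which is automatic once the equation ${\mathcal Q}_s(\lambda^{\star})=0$ is meaningful.
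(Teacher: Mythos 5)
Your proposal is correct and follows exactly the route the paper intends: the non-resonance condition \eqref{large:no_reson} is precisely the criterion (via the quadratic ${\mathcal Q}_j$) that keeps every $b_j(\lambda^{\star})\neq 0$ so that Proposition \ref{lemma:large} applies, the vanishing of ${\mathcal Q}_s$ kills the factor $\kappa_{1\varepsilon}$ in the determinant product \eqref{large:detB}, and the null vector is read off from the eigenvector $\pmb{v}_1=(1/b_1,\ldots,1/b_m)^T$ constructed in \eqref{large:remain0}. The paper states this proposition without a separate proof as an immediate consequence of Proposition \ref{lemma:large}, and your write-up supplies exactly those implicit steps, including the correct caveat about the ${\mathcal O}(\nu^2)$ order of accuracy.
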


In this way, we can use Proposition \ref{large:mainprop} to determine Hopf
bifurcation boundaries in the $\tau$ versus $D_0$ parameter plane by simply
letting $\lambda=i\lambda_I$, with $\lambda_I>0$, and setting
\begin{equation}\label{large:hopf}
  \mbox{Re}\left[{\mathcal Q}_{s}(i\lambda_I)\right]=0 \,, \qquad
  \mbox{and} \qquad
    \mbox{Im}\left[{\mathcal Q}_{s}(i\lambda_I)\right]=0 \,,
\end{equation}
while ensuring that the condition \eqref{large:no_reson} holds with
$\lambda^{\star}=i\lambda_I$. A sufficient condition for
\eqref{large:no_reson} to hold along solutions of \eqref{large:hopf}
as parameters are varied is that $\mbox{tr}(J_j)\ne {\eta_j/\tau}$ for
all $j=1,\ldots,m$.

A simple analytically tractable special case of Proposition
\ref{large:mainprop} is when the permeabilities are all identical,
i.e.~$d_{1j}=d_{1c}$, $d_{2j}=d_{2c}$, and when the cell configuration
$\lbrace{\pmb{x}_1,\ldots,\pmb{x}_m\rbrace}$ is such that $\pmb{e}$ is
an eigenvector of the reduced-wave Green's matrix ${\mathcal G}$, and
consequently the Neumann Green's matrix ${\mathcal G}_0$. Therefore,
${\mathcal G}_0\pmb{e}=\beta\pmb{e}$ for some eigenvalue $\beta$. A
ring patterns of cells concentric within the unit disk with common
cell permeabilities and Sel'kov parameters is an example of such a
cell pattern. For this case, \eqref{large:steady-state} admits the
solution ${\mathcal A}=A_c\pmb{e}$ and so $J_j=J_c$ for
$j=1,\ldots,m$.  Since $b_j=b_c$ for $j=1,\ldots,m$, we can take
$\pmb{v}_1=b_{c}^{-1}(1,\ldots,1)^T$ and readily obtain that the
root-finding condition ${\mathcal Q}_{s}(\lambda)=0$ in
\eqref{large:qs} reduces to
\begin{equation}\label{large:equal_sync_1}
  1 + \frac{\gamma}{b_c} + \frac{2\pi\nu}{m b_c} \pmb{e}^T {\mathcal G}_0
  \pmb{e} =0  \,, \qquad \mbox{where} \qquad {\mathcal G}_0 \pmb{e}=
  \beta \pmb{e} \,,
\end{equation}
while from \eqref{large:qj} we have
${\mathcal Q}_{j}(\lambda)={\mathcal Q}_{c}(\lambda)$ for $j=1,\ldots,m$,
where
\begin{equation}\label{large:qc}
  {\mathcal Q}_c(\lambda)\equiv \lambda^2 - \left(
    \mbox{tr}(J_c) - \frac{\eta_c}{\tau} \right) \lambda +
  \mbox{det}(J_c)\left[ 1+ \frac{\eta_c}{\tau}\right]\,, 
 \qquad \mbox{where}  \quad \eta_c \equiv \frac{2\pi d_{2c} D_0}{d_{1c}+D_0}\,,
\end{equation}
By using \eqref{large:bdef} and \eqref{large:Kmatrix_1}, we obtain
after a little algebra that \eqref{large:equal_sync_1} reduces to
\begin{equation}\label{large:equal_root}
   \frac{\lambda + \mbox{det}(J_c)}
  {\lambda^2 - \lambda \, \mbox{tr}(J_c) + \mbox{det}{J_c}} =-\frac{\tau}
  {2\pi d_{2c}}
  \left[ 1 + \frac{d_{1c}}{D_0}\left(1+2\pi\nu \beta\right)
     + \frac{2\pi m d_{1c}}{|\Omega|(1+\tau\lambda)}\right]\,,
\end{equation}
which is a cubic equation in $\lambda$. For this special cell pattern,
we conclude that if $\lambda=\lambda^{\star}$ is a root of
\eqref{large:equal_root} for which $Q_{c}(\lambda^{\star})\neq 0$,
then $\mbox{det}\,{\mathcal M}_{\infty}(\lambda^{\star})=0$. The
corresponding eigenvector of ${\mathcal M}_{\infty}(\lambda^{\star})$
is the in-phase mode $\pmb{c}=\pmb{e}$.

Next, we will show how to determine roots of the reduced GCEP
\eqref{large:gcep_mat} in the case where
${\mathcal
  B}(\lambda)=\mbox{diag}\left(b_1(\lambda),\ldots,b_m(\lambda)
\right)^T$ in \eqref{large:M_sub} is not invertible. We first observe
that a nontrivial $\pmb{c}$ to the leading order problem in
\eqref{large:gcep_mat} (with $\nu=0$), exists if and only if there is a
$\lambda=\lambda^{\star}$ at which {\em at least two} $b_j$ cross through zero
simultaneously. After relabelling the indices as necessary, this
occurs without loss of generality when there is a $\lambda^{\star}$
and an integer $J$ with $2\leq J\leq m$ for which
\begin{equation}\label{degen:ass}
  b_1(\lambda^{\star})=\ldots =b_{J}(\lambda^{\star})=0\,, \quad \mbox{with}
  \quad b_j(\lambda^{\star})\neq 0 \quad \mbox{for} \quad j=J+1,\ldots,m
  \quad \mbox{if} \quad J<m \,.
\end{equation}
Then, for the leading-order problem in \eqref{large:gcep_mat} at
$\lambda=\lambda^{\star}$ the nontrivial solutions $\pmb{c}_0$ are
\begin{equation}\label{degen:order0}
  \left( {\mathcal B}(\lambda^{\star}) + \gamma(\lambda^{\star}) E\right)
  \pmb{c}_0=\pmb{0}\,, \quad \implies
  \quad \pmb{c}_0 \in {\mathcal C}^{\perp} \equiv \Big{\lbrace{}
    \pmb{c}_0 \, \Big{\vert} \,\,  \pmb{e}^{T}\pmb{c}_0=0 \,, \,\, \pmb{c}_0=
\begin{pmatrix}
 \pmb{c}_J\\
 \pmb{0}
\end{pmatrix} \,, \,\, \pmb{c}_J\in \R^{J} \,, \,\, \pmb{0}\in \R^{m-J}
\Big{\rbrace}}\,.
\end{equation}
Next, we introduce an orthonormal basis for the subspace
${\mathcal C}^{\perp}$ of dimension $J-1$ and we decompose $\pmb{c}_0$
as
\begin{equation}\label{degen:c0}
  \pmb{c}_0=\omega_{1}\pmb{v}_1 + \ldots + \omega_{J-1}\pmb{v}_{J-1} \,,
  \qquad \mbox{where} \qquad {\mathcal C}^{\perp} \equiv \mbox{span}\lbrace{
    \pmb{v}_1,\ldots,\pmb{v}_{J-1}\rbrace} \,, \qquad \pmb{v}_j^T\pmb{v}_i
  = \delta_{ij}\,,
\end{equation}
where $\delta_{ij}$ is the Kronecker symbol, and $\omega_j$ for
$j=1,\ldots,J-1$ are scalar coefficients to be found.

To determine the ${\mathcal O}(\nu)$ correction to the leading-order
eigenvalue $\lambda^{\star}$ of the GCEP and identify the constants
$\omega_j$, we look for a nontrivial solution to
${\mathcal M}_{\infty}(\lambda) \pmb{c}=\pmb{0}$, as defined in
\eqref{large:GCEP}, of the form
$\lambda=\lambda^{\star} + {\mathcal O}(\nu)$. We substitute the
expansion
\begin{equation}\label{degen:expan}
  \lambda=\lambda^{\star}+\nu \tilde{\lambda} + \cdots
  \,, \qquad \pmb{c}=\pmb{c}_0 + \nu \pmb{c}_1 + \cdots\,,
\end{equation}
into \eqref{large:GCEP} and then equate ${\mathcal O}(\nu)$ terms to
obtain
\begin{equation}\label{degen:c1}
  \left( {\mathcal B}(\lambda^{\star}) + \gamma(\lambda^{\star}) E\right)
  \pmb{c}_1 = - \tilde{\lambda} \gamma^{\prime}(\lambda^{\star}) E \pmb{c}_0
  - \tilde{\lambda} {\mathcal B}^{\prime}(\lambda^{\star}) \pmb{c}_0 -
  2\pi {\mathcal G}_0 \pmb{c}_0 \,,
\end{equation}
where we observe that $E\pmb{c}_0=\pmb{0}$.  The solvability
condition for \eqref{degen:c1} is that the right-hand side of
\eqref{degen:c1} is orthogonal to each $\pmb{v}_j$ for
$j=1,\ldots, J-1$. In this way, we readily obtain that
$\tilde{\lambda}$ and
$\pmb{\omega}\equiv (\omega_1,\ldots,\omega_{J-1})^T$ are eigenpairs
of the $J-1$ dimensional symmetric generalized matrix eigenvalue
problem
\begin{equation}\label{degen:reduce}
  {\mathcal V}^T {\mathcal G}_0 {\mathcal V} \, \pmb{\omega} = -
  \frac{\tilde{\lambda}}{2\pi} {\mathcal V}^T
  {\mathcal B}^{\prime}(\lambda^{\star}) {\mathcal V} \, \pmb{\omega} \,, \qquad
    {\mathcal V} \equiv 
    \begin{pmatrix}
     \pmb{v}_1\,, & \ldots &\,, \pmb{v}_{J-1}
    \end{pmatrix}\,.
\end{equation}
Here ${\mathcal V}$ is the $m\times J-1$ matrix whose columns provide
an orthonormal basis for ${\mathcal C}^{\perp}$.  In summary, for any
such $\tilde{\lambda}$ satisfying \eqref{degen:reduce} a two-term
expansion for a root of the reduced GCEP
$\mbox{det} {\mathcal M}_{\infty}(\lambda)=0$ is
$\lambda=\lambda^{\star}+ \nu\tilde{\lambda}$ where $\lambda^{\star}$
satisfies \eqref{degen:ass}.

We now illustrate this theory for the special case where $J=2$. This
analysis, given below, will be shown in \S \ref{large:quiet} to be
relevant for analyzing anti-phase instabilities associated with the
cell configuration of Fig.~\ref{fig:tworing} where a pair of isolated
identical cells is spatially segregated from two symmetric ring
clusters. Suppose that cells 1 and 2 have common permeabilities
$d_{1c}\equiv d_{11}=d_{12}$, $d_{2c}\equiv d_{21}=d_{22}$ and that
they have the same intracellular steady-states. Then, from
\eqref{large:Kmatrix} we obtain
$\mbox{tr}{J_c}\equiv\mbox{tr}{J_1}=\mbox{tr}{J_2}$,
$\mbox{det}{J_c}\equiv\mbox{det}{J_1}=\mbox{det}{J_2}$, and so
\eqref{large:bdef} and \eqref{large:Kmatrix_1} yields that
$b_c(\lambda)\equiv b_1(\lambda)=b_2(\lambda)$. Then, in
\eqref{degen:reduce} we can take
${\mathcal V}=\left({1/\sqrt{2}},-{1/\sqrt{2}}\right)^T$, and readily
calculate that
$\tilde{\lambda}=-2\pi {R_c/b_{c}^{\prime}(\lambda^{\star})}$, where
$b_c(\lambda^{\star})=0$. Here
$R_{c}\equiv R_0(\pmb{x}_1)=R_0(\pmb{x}_2)$ is the common value of the
regular part of the Neumann Green's function at the centers
$\pmb{x}_1$ and $\pmb{x}_2$ of the two cells. For $\nu\ll 1$, a simple
perturbation argument shows that
$\lambda\sim\lambda^{\star}+\nu \tilde{\lambda}$ can be identified as
the root to $b_c(\lambda) + 2\pi \nu R_c=0$. Finally, by using
\eqref{large:bdef} for $b_c$, together with \eqref{large:Kmatrix}, we
conclude for $\nu\ll 1$ that $\lambda$ must be a root of the quadratic
\begin{equation}\label{degen:quad_J2}
  \lambda^2 - \lambda \left(\mbox{tr}{J_c} - \frac{\eta_c}{\tau f} \right)
  + \left(1+ \frac{\eta_c}{\tau f}\right) \mbox{det}J_c =0 \,,
  \qquad \mbox{where} \qquad \eta_c \equiv \frac{2\pi d_{2c}D_0}{d_{1c}+D_0} \,,
  \quad f\equiv 1 + \frac{2\pi\nu d_{1c}}{d_{1c}+D_0} R_c \,.
\end{equation}
Since $\mbox{det}J_c>0$, an anti-phase instability occurs for cells 1
and 2, with the other cells remaining quiescent, only when
\begin{equation}\label{degen:quad_J2_crit}
  \mbox{tr}(J_c)- \frac{\eta_c}{\tau f} >0 \,.
\end{equation}
Although this criterion gives a region in the $(D_0,\tau)$ parameter
plane, in \S \ref{large:quiet} we will only implement it for the cell
configuration in Fig.~\ref{fig:tworing} at a fixed value of $\tau$ in
order to determine a threshold in $D_0$.
  
\subsection{Example: \texorpdfstring{$m=10$}{m10} cells in
  the unit disk}\label{subsec:LargeCellExample}

We now apply our simplified theory for the regime
$D = \mathcal{O}(\nu^{-1})$ to a population of $m=10$ cells in the
unit disk. Different spatial configurations of these cells are
considered, and we will focus on three different scenarios: (a) all
cells are identical, (b) some groups of cells are identical and, (c)
none of the cells are identical. For all the examples considered in
this subsection, the cells have a common radius $\varepsilon = 0.05$
and common Sel'kov kinetic parameters as given in
\eqref{Selkov_para}. The heterogeneity in the cells is introduced
through the cell locations and their permeability parameters $d_{1j}$,
$j=1,\ldots,m$, which specifies the rate of feedback of the bulk
signal into the cells. The secretion rate is fixed at $d_2 = 0.2$ for
all the cells. For each spatial configuration of cells and
permeability parameter set $d_{11},\ldots,d_{1m}$ we will compute the
HB bifurcation boundary in the $\tau$ versus $D_0$ plane by solving
\eqref{large:hopf} numerically using Newton's method with arclength
continuation in $D_0$.

In Fig.~\ref{Large:cluster_arbitrary} we plot the HB boundaries (left
panel) in the $\tau$ versus $D_0$ parameter plane together with the
cell pattern (right panel) for a pattern with two clusters of cells
(top row) and a pattern with an arbitrary arrangement of cells (bottom
row). The precise locations and influx rate $d_{1j}$ for the cells for
these two specific configurations are given in
Table~\ref{Table:CellLocation} of Appendix~\ref{Cell_Location}. By
using a numerical winding number computation we have verified that
there are exactly two roots of \eqref{large:qs}, corresponding to two
unstable eigenvalues of the GCEP matrix, inside the lobes spanned by
the HB boundaries.

\begin{figure}[!ht]
  \centering
       \begin{subfigure}[b]{0.45\textwidth}
        \includegraphics[width=\textwidth,height=4.4cm]{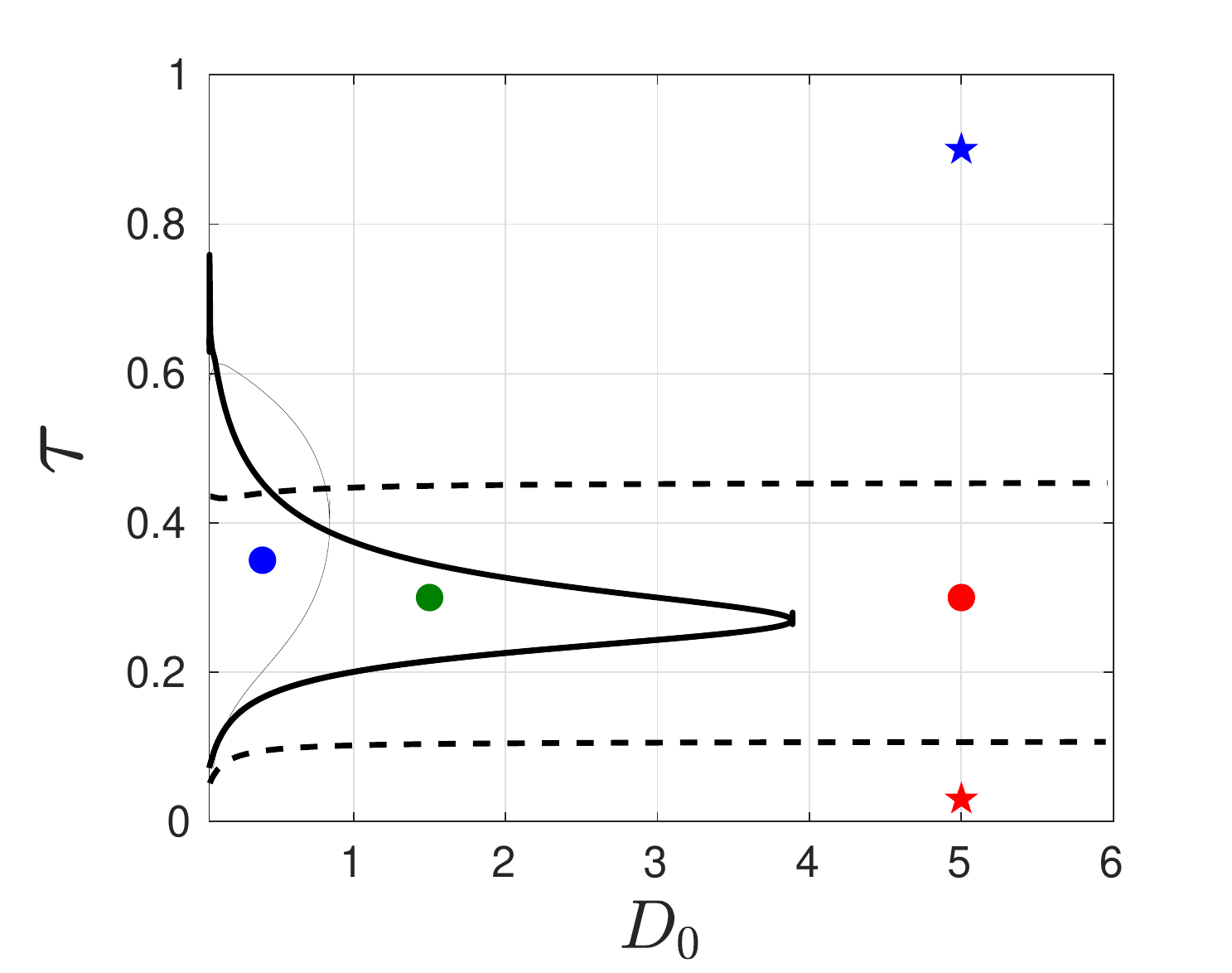}
        \caption{HB boundaries}
        \label{fig:twoclustersHB}
    \end{subfigure}\qquad
    \begin{subfigure}[b]{0.35\textwidth}
      \includegraphics[width=\textwidth,height=4.4cm]{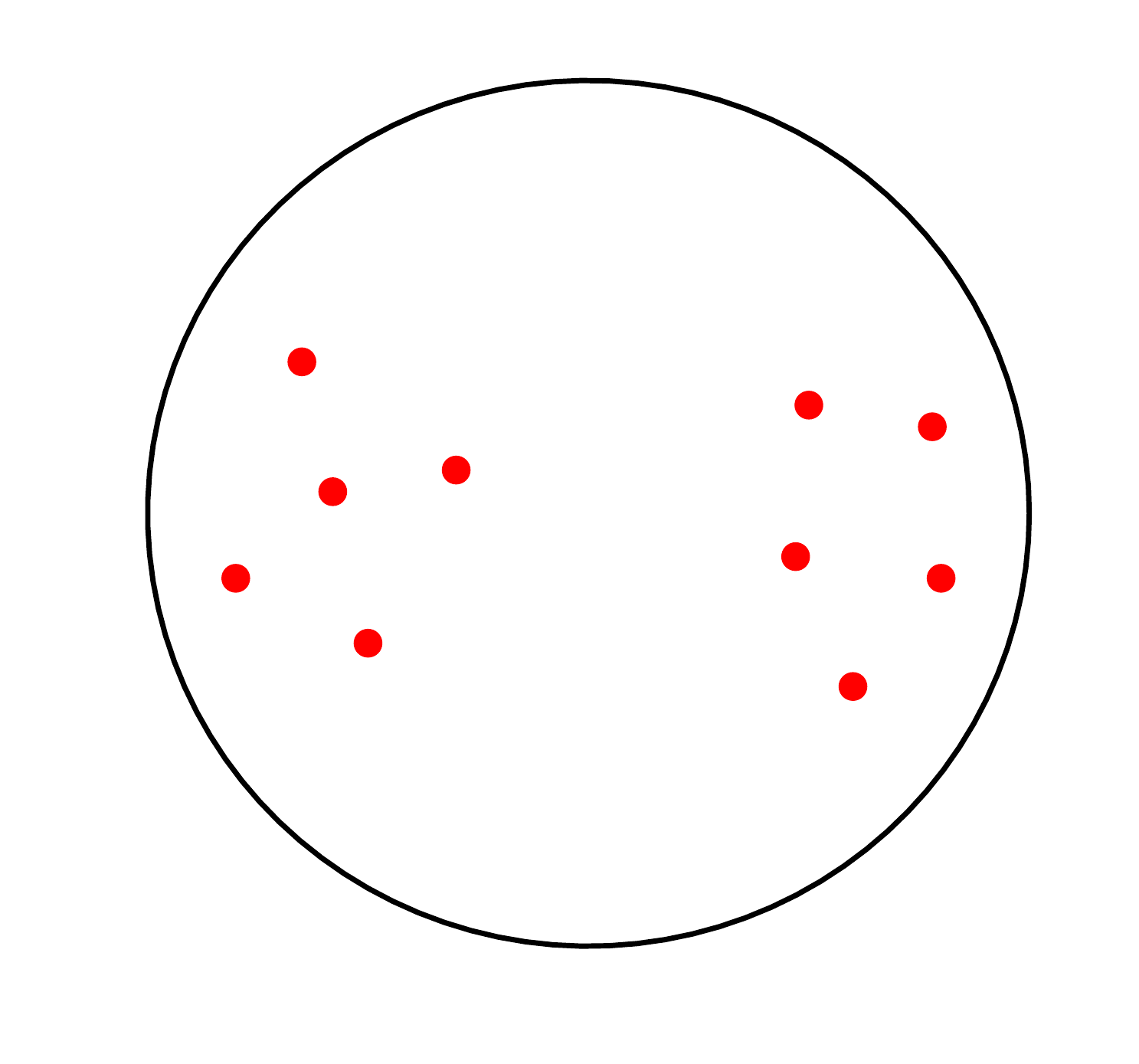}
        \caption{Two spatially segregated clusters of cells}
        \label{fig:twoclusters}
    \end{subfigure}\\
   \begin{subfigure}[b]{0.45\textwidth}
         \includegraphics[width=\textwidth,height=4.4cm]{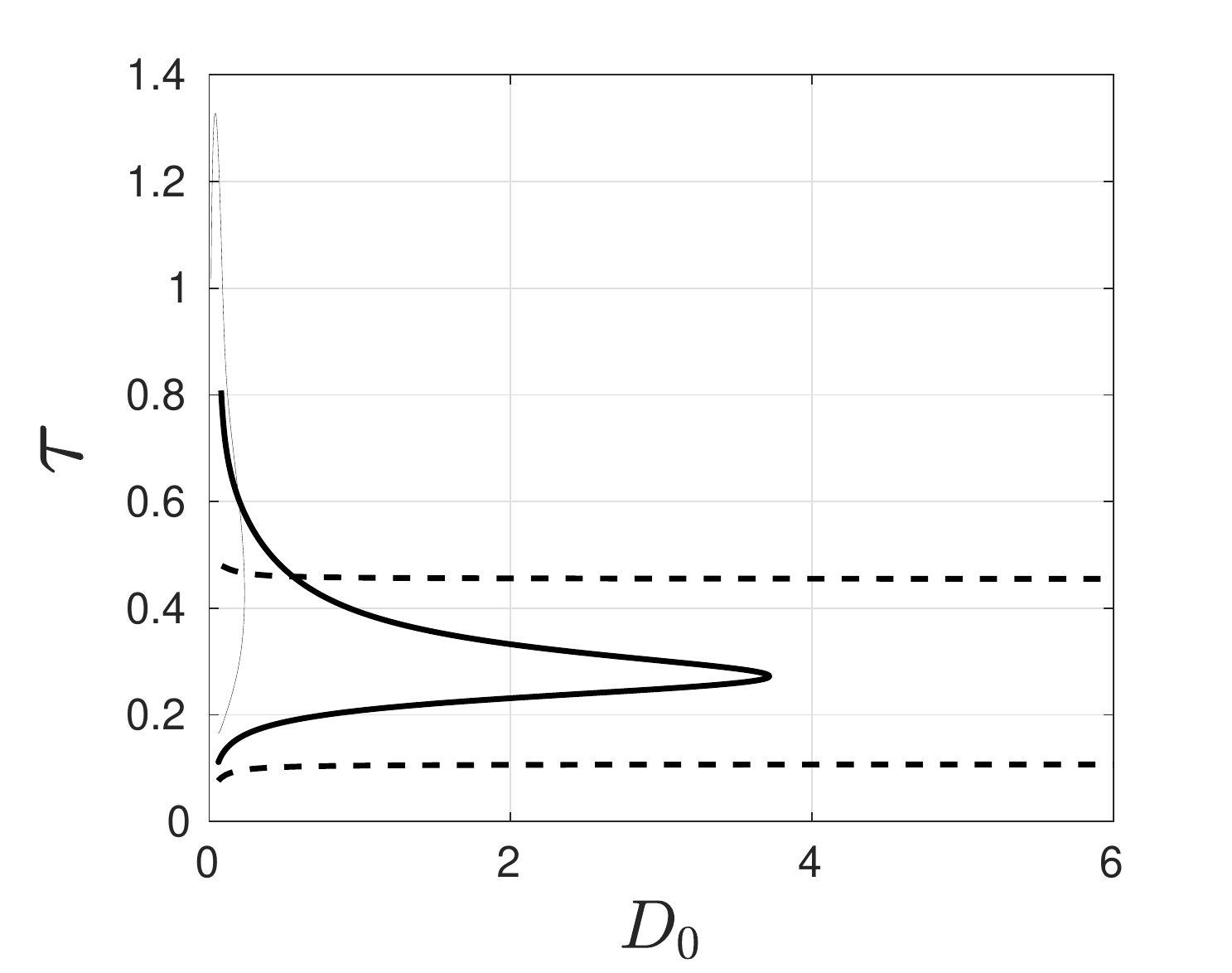}
        \caption{HB boundaries}
        \label{fig:arbitraryHB}
    \end{subfigure}\qquad
    \begin{subfigure}[b]{0.35\textwidth}
        \includegraphics[width=\textwidth,height=4.4cm]{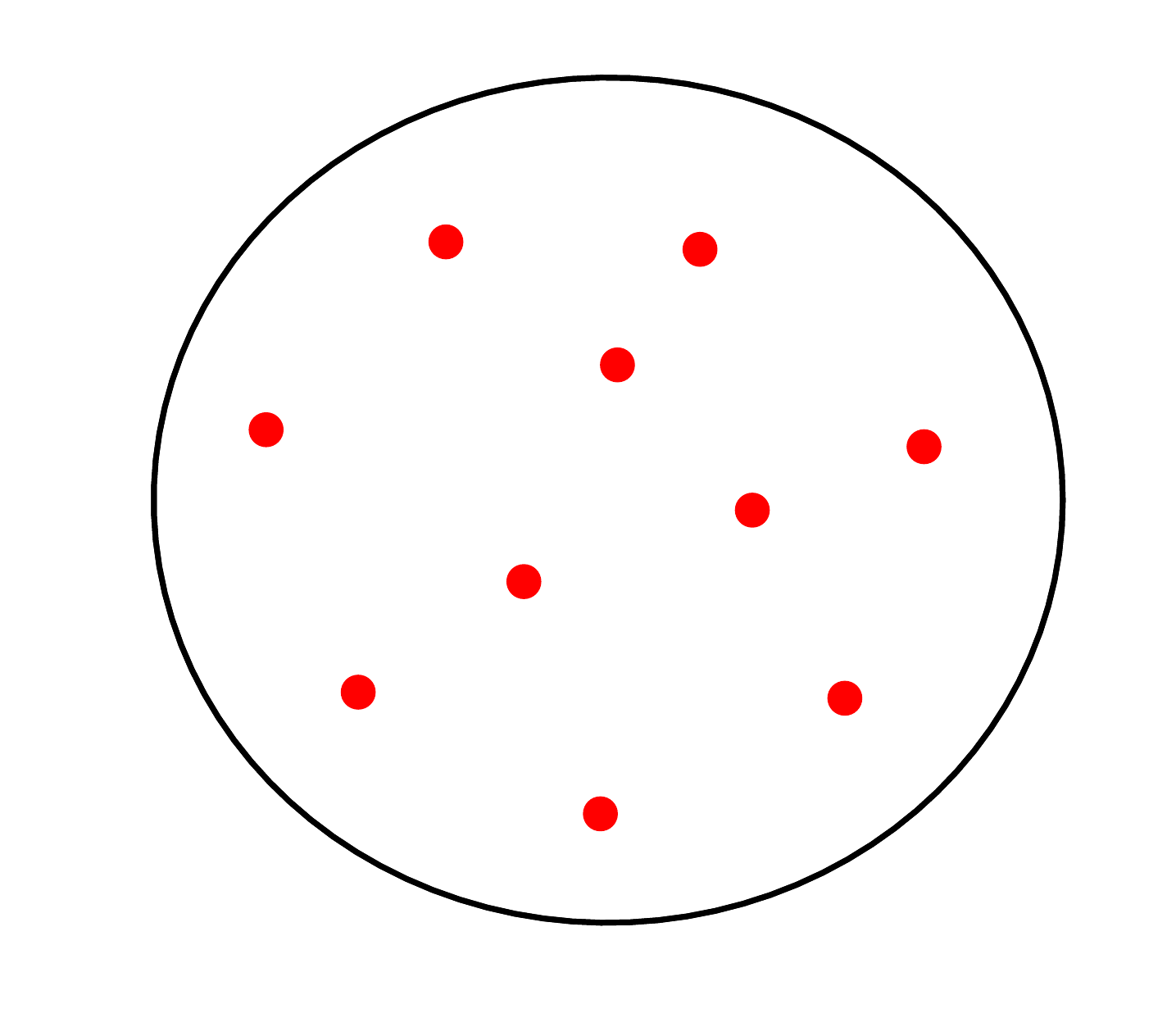}
        \caption{Arbitrary (random) arrangement of cells}
                \label{fig:arbit}
    \end{subfigure}
    \caption{HB boundaries in the $\tau$ versus $D_0$ plane
        for $m=10$ cells with two groups/clusters of cells (top row)
        and arbitrarily placed cells (bottom row).  The dashed curve
        corresponds to identical cells where $d_{1}=0.8$ for
        each cell. The thin solid curve is for when there are two
        groups of identical cells with $d_1 = 0.8$ for the first group
        and $d_1 = 0.4$ for the second group (permeability set
        II). The heavy solid curve is for non-identical cells with
        $d_1$ uniformly selected from the interval
        $0.4 \leq d_1 \leq 0.8$ (permeability set III). The 
        locations and influx rates $d_{1j}$ for the cells are given in
        Table~\ref{Table:CellLocation} of
        Appendix~\ref{Cell_Location}.  The remaining parameters are as
        given in \eqref{Selkov_para}.}\label{Large:cluster_arbitrary}
\end{figure}

By comparing the HB boundaries in Fig.~\ref{fig:twoclustersHB} and
Fig.~\ref{fig:arbitraryHB} we observe from the dashed lines in these
figures that when the cells are all identical (with $d_1=0.8$ for each
cell) the unbounded region of instability in the $(D_0,\tau)$
parameter plane is very similar for the two cell patterns. Therefore,
when $D_0$ is large and when there are a sufficiently large number of
identical cells, this observation suggests that the parameter region
where intracellular oscillations occur should depend only weakly on
the spatial configuration of cells. In addition, by comparing the
heavy solid curves in Fig.~\ref{fig:twoclustersHB} and
Fig.~\ref{fig:arbitraryHB}, we conclude that the HB boundaries for
cell patterns where $d_1$ is uniformly selected from the interval
$0.4 \leq d_1 \leq 0.8$ (permeability set III) also depend only weakly
on the spatial arrangement of the cells, but that there are no
intracellular oscillations when $D_0$ is large. The influx rates
$d_{1j}$ for the cells are given in Table~\ref{Table:CellLocation} of
Appendix~\ref{Cell_Location}. In this case, the bounded instability
region in the $(D_0,\tau)$ plane exists only within a rather narrow
range of $\tau$, which measures the ratio of the time-scale for the
reaction kinetics to the decay rate of the bulk signal. From the thin
solid curve in Fig.~\ref{fig:twoclustersHB}, corresponding to where we
assign the different influx rates $d_1=0.8$ and $d_1=0.4$ to all the
cells in the two different groups (permeability set II in
Table~\ref{Table:CellLocation} of Appendix~\ref{Cell_Location}), we
observe that the parameter region where intracellular oscillations
occur is rather small in $D_0$ but has a larger extent in $\tau$. Even
when $D_0$ is small, identical cells within a cluster are in close
proximity and so are able to synchronize their activities and generate
oscillations within their respective group. However, as $D_0$
increases, the bulk signal diffuses rapidly away from each of the two
clusters and this signal no longer has a sufficient spatial gradient
to coordinate synchronized oscillations between the two spatially
segregated groups of cells. For an arbitrary arrangement of cells, we
observe from the thin solid curve in Fig.~\ref{fig:arbitraryHB} that
when cells are assigned either $d_1=0.8$ or $d_1=0.4$ in such a way
that two neighboring cells are not identical, intracellular
oscillations only occur when $D_0$ is significantly smaller than for
the case when the cells with the same influx rates are clustered.
Qualitatively, this indicates that identical cells within a group of
more closely spaced cells can more readily synchronize their
activity.

For the case of two clustered groups of cells (see
Fig.~\ref{fig:twoclusters}), we will validate our linear stability
predictions with full numerical results computed from the coupled
PDE-ODE model \eqref{DimLess_bulk} using FlexPDE
\cite{flexpde2015solutions} at the indicated points in the HB phase
diagram in Fig.~\ref{fig:twoclustersHB} and for specific permeability
sets. In Table \ref{eigvec:twoclusters} we show the real and imaginary
parts of the components of the normalized eigenvectors $\pmb{c}$ and
${\mathcal K}\pmb{c}$, together with the unique complex conjugate pair
of unstable eigenvalues of the $10\times 10$ GCEP matrix
${\mathcal M}(\lambda)$ in \eqref{Global_System}, as computed from the
root-finding condition $\det{\mathcal M}(\lambda)=0$, at the red,
blue, and green dots shown in the phase diagram in
Fig.~\ref{fig:twoclustersHB}. In Table \ref{Table:CellLocation} of
Appendix \ref{Cell_Location} we indicate the specific permeability set
for the influx rate used at these three pairs of $(D_0,\tau)$.  From
\eqref{nstabform:c}, the components of $\pmb{c}$ measure the diffusive
flux into the cells, while
$\tilde{\pmb{c}}\equiv {\mathcal K}\pmb{c}$, with ${\mathcal K}$ given
in \eqref{large:Kmatrix}, predicts the relative amplitude and phase
shifts of the intracellular oscillations.

\begin{table}[!ht]
\centering
  \begin{tabular}{ c | c | c | c | c |c | c |c} \hline 
\rowcolor{LightCyan}
    Permeability set &$(D_0,\tau)\,;\,\, D={D_0/\nu}$  & $(\mbox{Re}\lambda,\mbox{Im}\lambda)$ & Cell j  &  $\Big( \mbox{Re}(c_j),\mbox{Im}(c_j) \Big)$ & $\theta_j \,(\text{rad})$ &
$\Big( \mbox{Re}(\tilde{c}_j),\mbox{Im}(\tilde{c}_j) \Big)$  \\  \hline \hline    \rowcolor{Cyan}
    &    &     & 1 & $(0.316,0.0000)$  & $0.0000$ & $(-0.309,0.0670)$ \\ 
\rowcolor{Cyan}
    &   &      & 2 & $(0.312,0.00137)$  & $0.0044$ & $(-0.310,0.0669)$ \\ 
\rowcolor{Cyan}
     &   &     & 3 & $(0.309,0.00236)$  & $0.0076$ & $(-0.310,0.0668)$ \\ 
\rowcolor{Cyan}
     &   &     & 4 & $(0.319,-0.00103)$  & $6.284$ & $(-0.308,0.0670)$ \\ 

\rowcolor{Cyan}
 Identical cells&  $ (5,0.3) $ & $(0.0198,0.7802)$  & 5 & $(0.317,-0.000415)$  & $6.28$ & $(-0.309,0.0670)$ \\ 

\rowcolor{Cyan}
    (Set I) & $D\approx 14.98$ &  & 6 & $(0.315,0.000335)$  & $0.0011$ &
    $(-0.309,0.0670)$ \\ 
\rowcolor{Cyan}
    &  Red dot  &    & 7 &  $(0.320,-0.00146)$  & $6.28$  & $(-0.309,0.0671)$
    \\ 
\rowcolor{Cyan}
     &    &    & 8 &  $(0.314,0.00512)$  & $0.0016$  & $(-0.309,0.0670)$ \\
\rowcolor{Cyan}
     &    &    & 9 &  $(0.321,-0.00201)$  & $6.28$   & $(-0.308,0.0671)$\\ 
\rowcolor{Cyan}
     &    &    & 10 &  $(0.321,-0.00183)$  & $6.28$   & $(-0.308,0.0671)$ \\ 
\hline  \hline 
 \rowcolor{Gray}
    &   &      & 1 & $(0.0718,0.0000)$  & $0.0000$ & $(-0.135,0.290)$ \\ 
\rowcolor{Gray}
    & 	&      & 2 & $(0.0751,-0.00850)$  & $6.17$ & $(-0.135,0.273)$ \\ 
\rowcolor{Gray}
     &    &    & 3 & $(0.0693,0.000272)$  & $0.0039$ & $(-0.136,0.270)$ \\ 
\rowcolor{Gray}
     &     &   & 4 & $(0.0802,-0.0145)$  & $6.10$ & $(-0.133,0.289)$ \\ 

\rowcolor{Gray}
Two groups &  $ (0.4,0.35) $ & $(0.00772,0.766)$ & 5 & $(0.0817,-0.0185)$  & $6.06$ & $(-0.133,0.283)$ \\ 

\rowcolor{Gray} 
(Set II)     & $D\approx 1.198$ &   & 6 & $(0.208,-0.369)$  & $5.23$ & $(-0.123,0.307)$ \\ 

\rowcolor{Gray}
     &  Blue dot  &    & 7 &  $(0.204,-0.397)$  & $5.19$  & $(-0.116,0.299)$ \\ 

\rowcolor{Gray}
     &    &    & 8 &  $(0.210,-0.365)$  & $5.23$  & $(-0.125,0.311)$ \\
    
\rowcolor{Gray}
     &    &    & 9 &  $(0.197,-0.4127)$  & $5.16$  & $(-0.109,0.283)$ \\ 
\rowcolor{Gray}
     &    &    & 10 &  $(0.201,-0.406)$  & $5.17$  & $(-0.112,0.292)$ \\ \hline \hline  
 \rowcolor{Cyan} 
    &    &     & 1 & $(0.450,0.000)$  & $0.0000$ & $(-0.246,0.00135)$\\ 
\rowcolor{Cyan}  
    & 	 &     & 2 & $(0.359,0.0635)$  & $0.175$ & $(-0.302,-0.00586)$\\ 
\rowcolor{Cyan}
     &    &    & 3 & $(0.430,0.0277)$  & $0.0643$ & $(-0.269,-0.00193)$ \\ 
\rowcolor{Cyan}
     &    &    & 4 & $(0.242,0.0834)$  & $0.331$ & $(-0.331,-0.00735)$ \\ 

\rowcolor{Cyan} 
Random &   $ (1.5,0.3) $  & $(0.00230,0.765)$ & 5 & $(0.0215,0.0775)$  & $1.30$ & $(-0.368,-0.00610)$ \\ 

\rowcolor{Cyan} 
 (Set III)  & $D\approx 4.494$  &     & 6 & $(0.260,0.0823)$  & $0.307$ & $(-0.324,-0.00917)$ \\ 

\rowcolor{Cyan}
     &  Green dot  &    & 7 &  $(0.117,0.0854)$  & $0.630$ & $(-0.352,-0.00935)$\\ 

\rowcolor{Cyan}
     &    &    & 8 &  $(0.436,0.0192)$  & $0.0439$ & $(-0.259,-0.00170)$ \\
    
\rowcolor{Cyan}
     &    &    & 9 &  $(0.331,0.0697)$  & $0.208$  & $(-0.307,-0.00665)$ \\ 
\rowcolor{Cyan}
     &    &    & 10 &  $(-0.0421,0.0687)$  & $2.12$  & $(-0.376,-0.00609)$ \\ 
  \hline
\end{tabular}
\caption{Real and imaginary parts of the normalized
    eigenvector $\pmb{c}$ of the GCEP matrix \eqref{Global_System}, together
    with $\tilde{\pmb{c}}\equiv{\mathcal K}\pmb{c}$, computed at the
    red, blue, and green dot shown in the phase diagram in
    Fig.~\ref{fig:twoclustersHB} for a two-cluster arrangement of
    cells shown in Fig.~\ref{fig:twoclusters}. The cell locations and
    permeability sets for the influx rate are given in
    Table~\ref{Table:CellLocation} of
    Appendix~\ref{Cell_Location}. The other parameters are as given in
    \eqref{Selkov_para}. The third column gives the unique unstable
    eigenvalue in $\mbox{Re}(\lambda)>0$ of the GCEP matrix at these
    three pairs of $(D_0,\tau)$.}
\label{eigvec:twoclusters}
\end{table}

In the top row of Fig.~\ref{FPDE_2Clusters_ID} we show FlexPDE
simulation results of \eqref{DimLess_bulk} corresponding to the red
dot at $(D_0,\tau)=(5,0.3)$ in the phase diagram of
Fig.~\ref{fig:twoclustersHB} for the case where the cells are all
identical with $d_{1j}=0.8$ for $j=1,\ldots,10$. The eigenpair
$\pmb{c}$ and ${\mathcal K}\pmb{c}$ of the GCEP matrix is given in the
top third of Table \ref{eigvec:twoclusters}. As predicted by the
eigenvector ${\mathcal K}\pmb{c}$ of the linearized theory, the
intracellular oscillations observed in the full simulations are nearly
synchronized both in amplitude and phase. In the bottom row of
Fig.~\ref{FPDE_2Clusters_ID} we show, for this parameter set, that
results computed from the ODE system \eqref{reducedODE}. We observe
that the amplitude and period of intracellular oscillations predicted
by the ODEs \eqref{reducedODE} compare very favorably with
corresponding FlexPDE results computed from \eqref{DimLess_bulk}. To
explain this very favorable comparison, we observe from the surface
plot in Fig.~\ref{FPDE_2Clusters_ID_Surf} that the bulk signal is
roughly spatially uniform when $D_0=5$. Recall that the asymptotic
analysis for the derivation of the ODE system \eqref{reducedODE} in \S
\ref{largeD_ODE} relies on a nearly spatially uniform bulk signal.

\begin{figure}[!httbp]
  \centering
  \begin{subfigure}[b]{0.32\textwidth}
      \includegraphics[width=\textwidth,height=4.2cm]{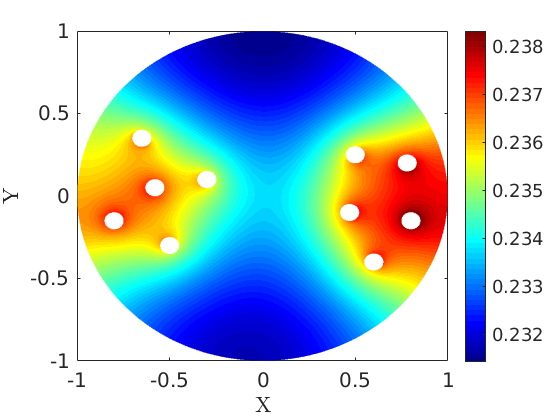}
    \caption{Surface plot at $t=400$}
    \label{FPDE_2Clusters_ID_Surf}
  \end{subfigure}
\begin{subfigure}[b]{0.32\textwidth}
  \includegraphics[width=\textwidth,height=4.2cm]{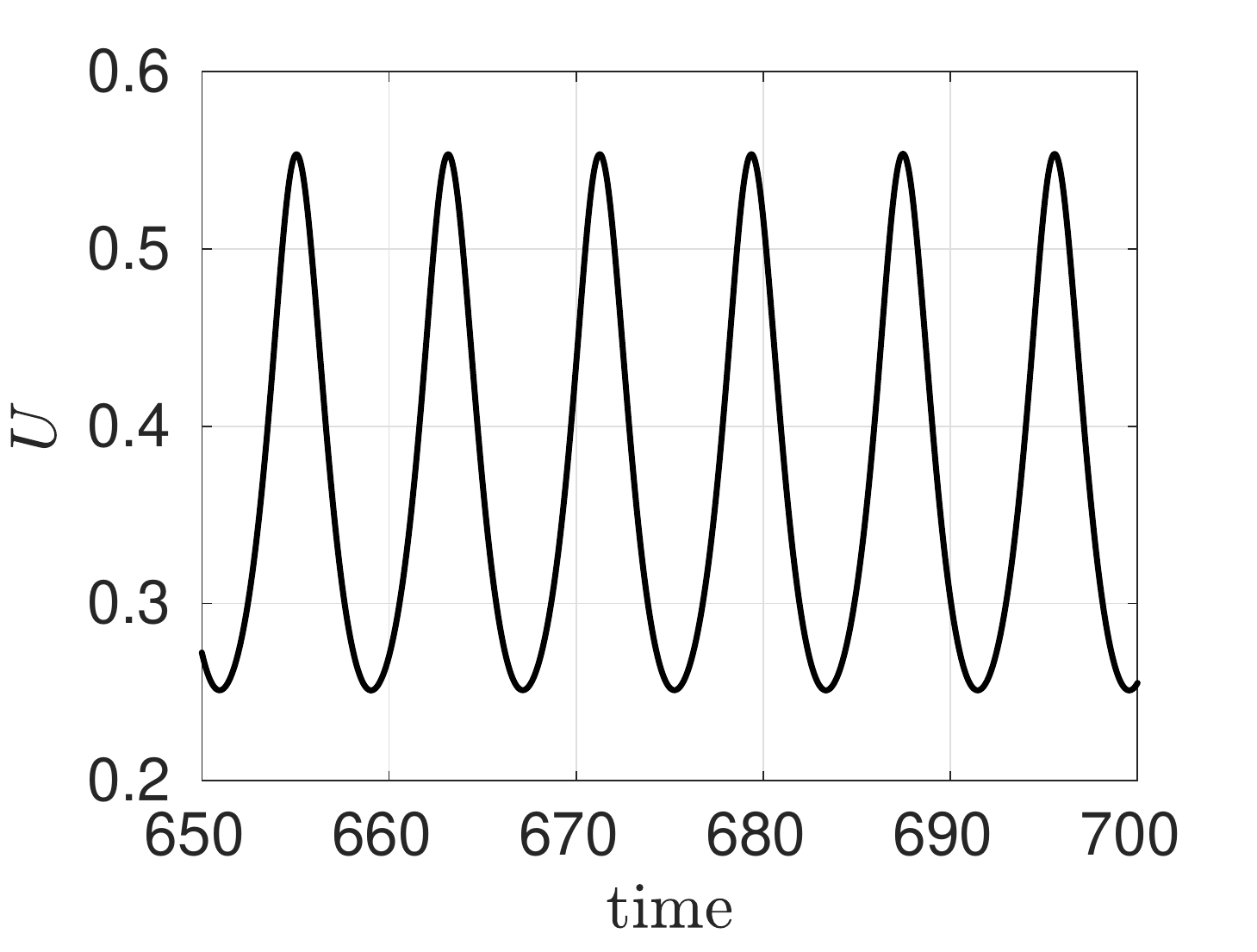}
    \caption{$U$ at $\pmb{x}=(0,0.5)$ (FlexPDE)}
    \label{FPDE_2Clusters_ID_bulk}
  \end{subfigure}
\begin{subfigure}[b]{0.32\textwidth}
  \includegraphics[width=\textwidth,height=4.2cm]{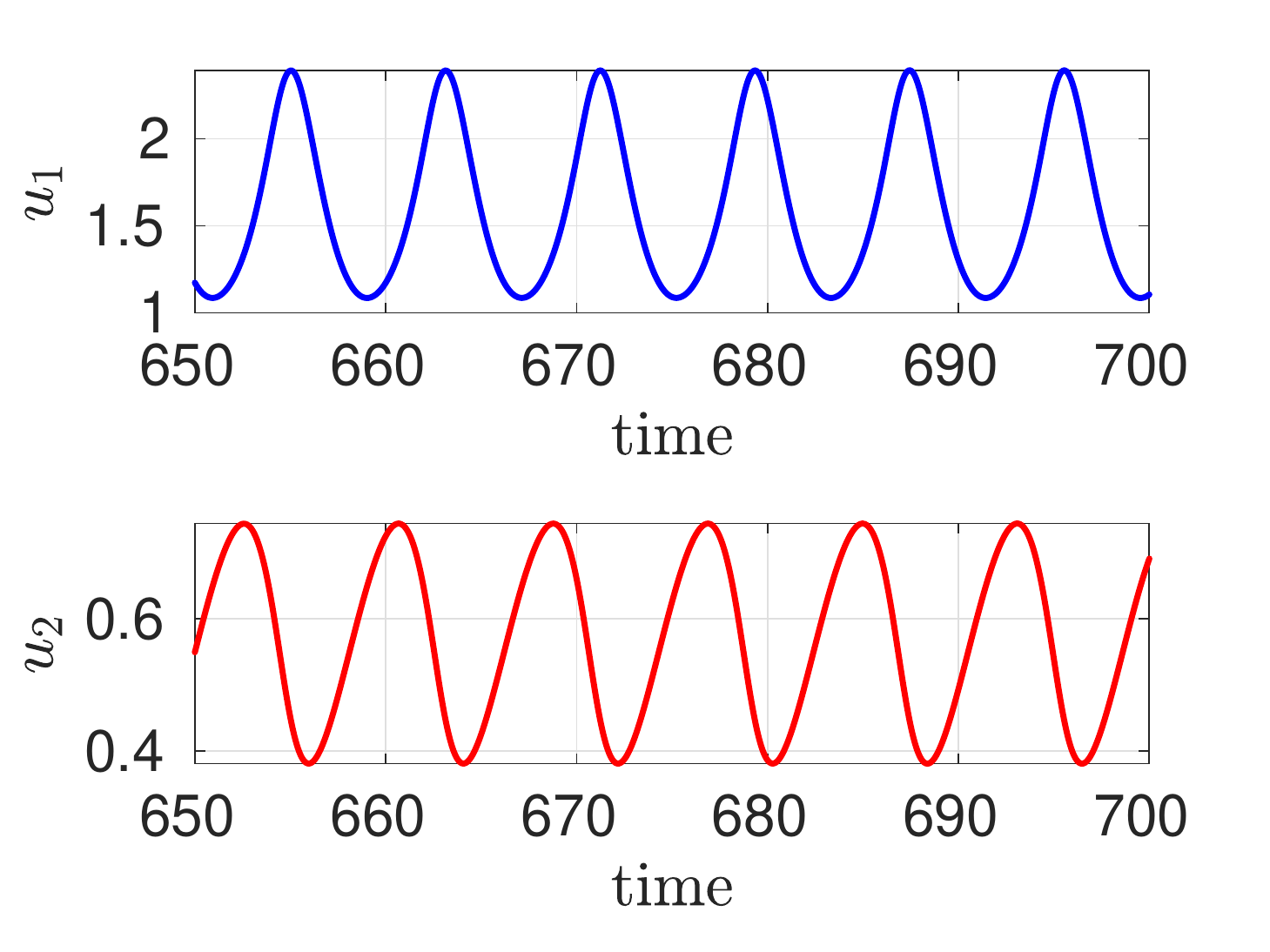}
  \caption{$u_1$ and $u_2$ (FlexPDE)}
   \label{FPDE_2Clusters_ID_cells}
  \end{subfigure}\\
  \begin{subfigure}[b]{0.32\textwidth}
      \includegraphics[width=\textwidth,height=4.2cm]{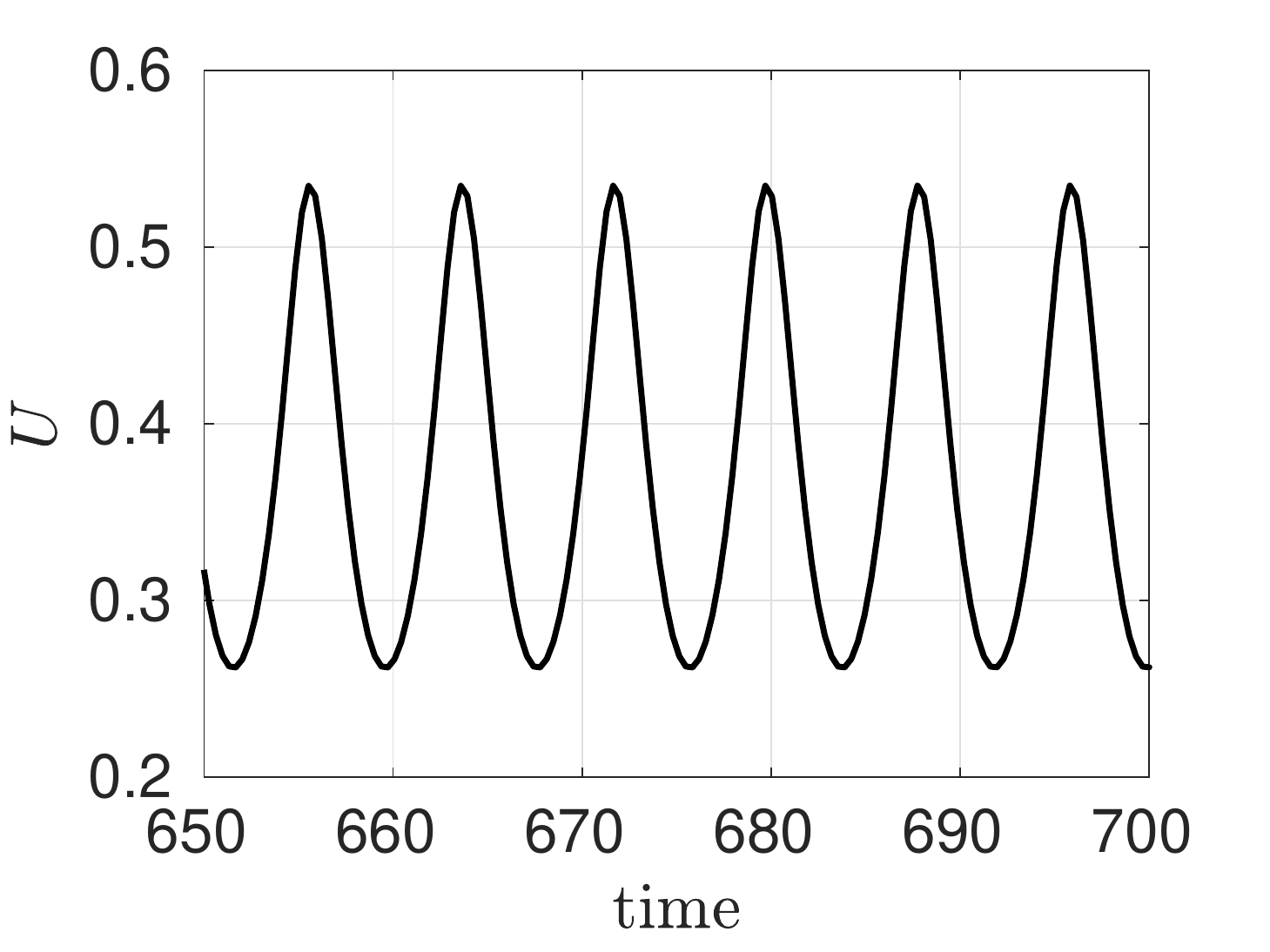}
        \caption{$\bar{U}$ (ODEs)} 
        \label{fig:ubar_reddot}
    \end{subfigure}
    \begin{subfigure}[b]{0.32\textwidth}  
      \includegraphics[width=\textwidth,height=4.2cm]{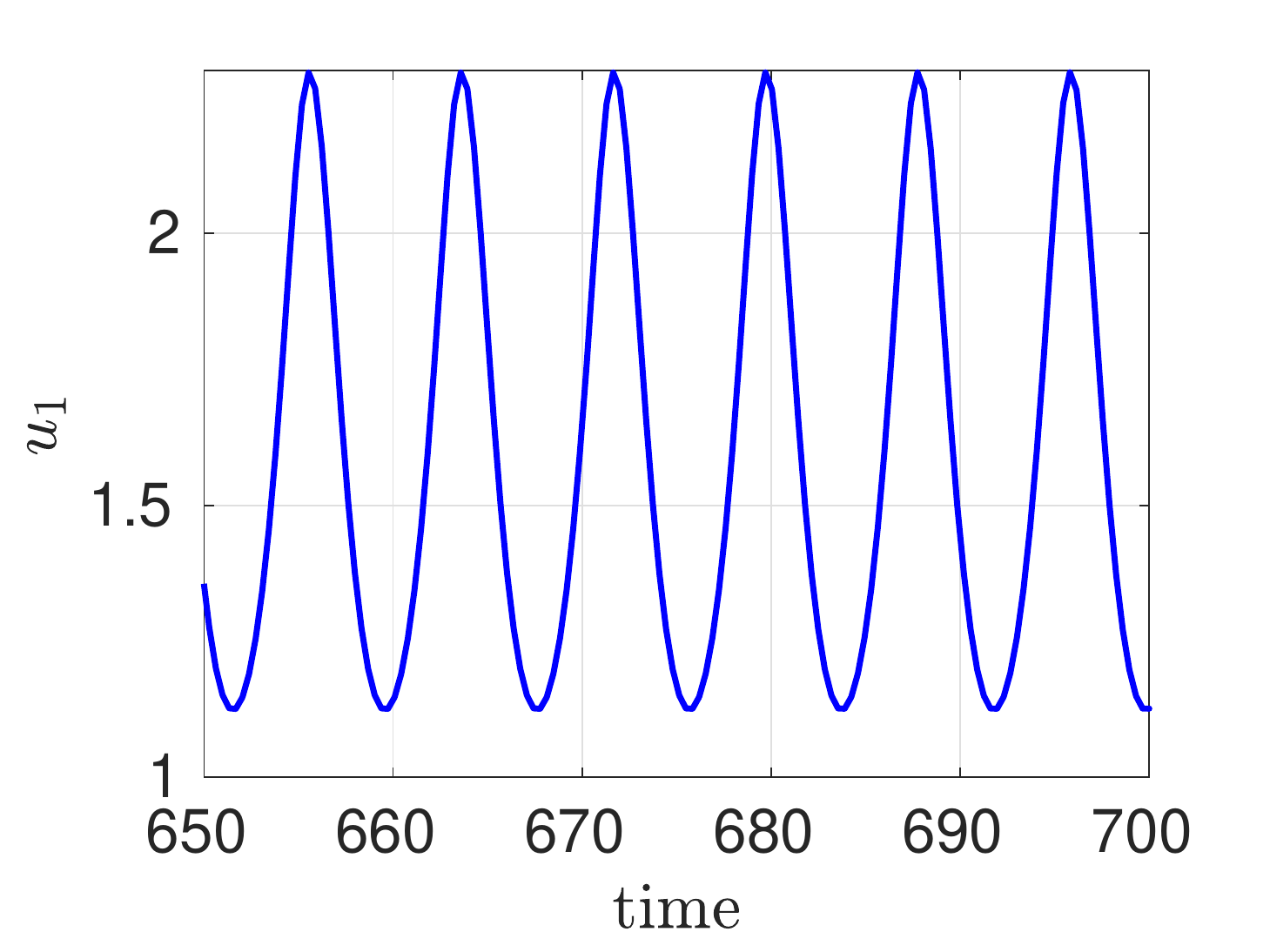}
        \caption{$u_1$ (ODEs)}
        \label{fig:u1_reddot}
      \end{subfigure}
    \begin{subfigure}[b]{0.32\textwidth}  
      \includegraphics[width=\textwidth,height=4.2cm]{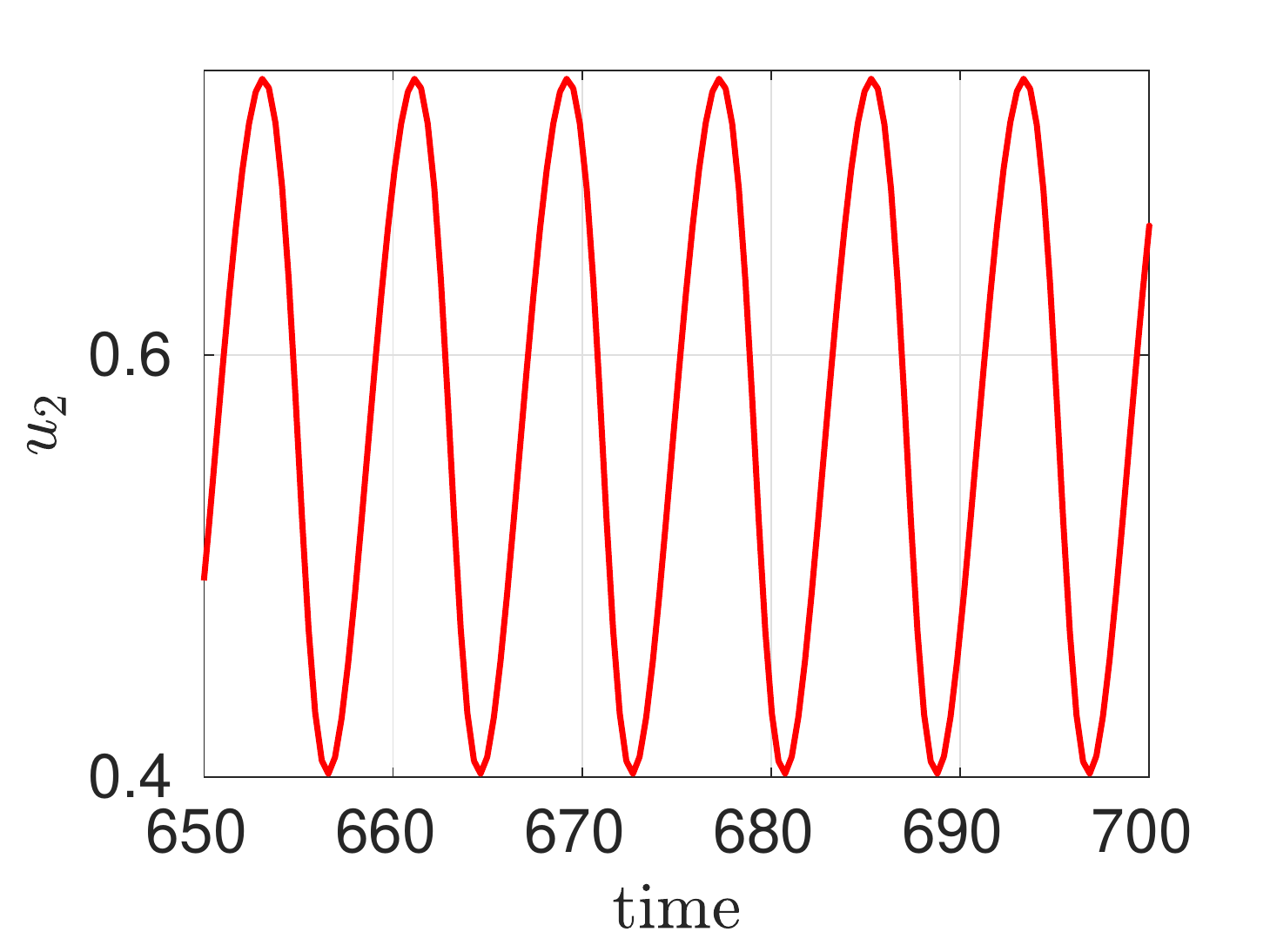}
        \caption{$u_2$ (ODEs)}
        \label{fig:u2_reddot}
    \end{subfigure}
    \caption{Top row: FlexPDE numerical results computed
        from the PDE-ODE model \eqref{DimLess_bulk} for $m=10$
        identical cells, arranged in two clusters (see
        Fig.~\ref{fig:twoclusters}), at the red dot in
        Fig.~\ref{fig:twoclustersHB} where $(D_0,\tau)=(5.0,0.3)$. The
        cells have identical influx rates $d_{1j}=0.8$ for
        $j=1,\ldots,m$ (set I) and almost identical intracellular
        dynamics. The cell locations are in
        Table~\ref{Table:CellLocation} of
        Appendix~\ref{Cell_Location}. Lower row: corresponding results
        for $\bar{U}$, $u_1$, and $u_2$, as computed from the ODE
        system \eqref{reducedODE}. The eigenvector and eigenvalue for
        the GCEP matrix for the linearization are in the top third of
        Table~\ref{eigvec:twoclusters}. The results from the ODEs compare
      well with the FlexPDE simulations.}
  \label{FPDE_2Clusters_ID}
\end{figure}

\begin{figure}[!httbp]
  \centering
    \begin{subfigure}[b]{0.4\textwidth}
      \includegraphics[width=\textwidth,height=4.2cm]{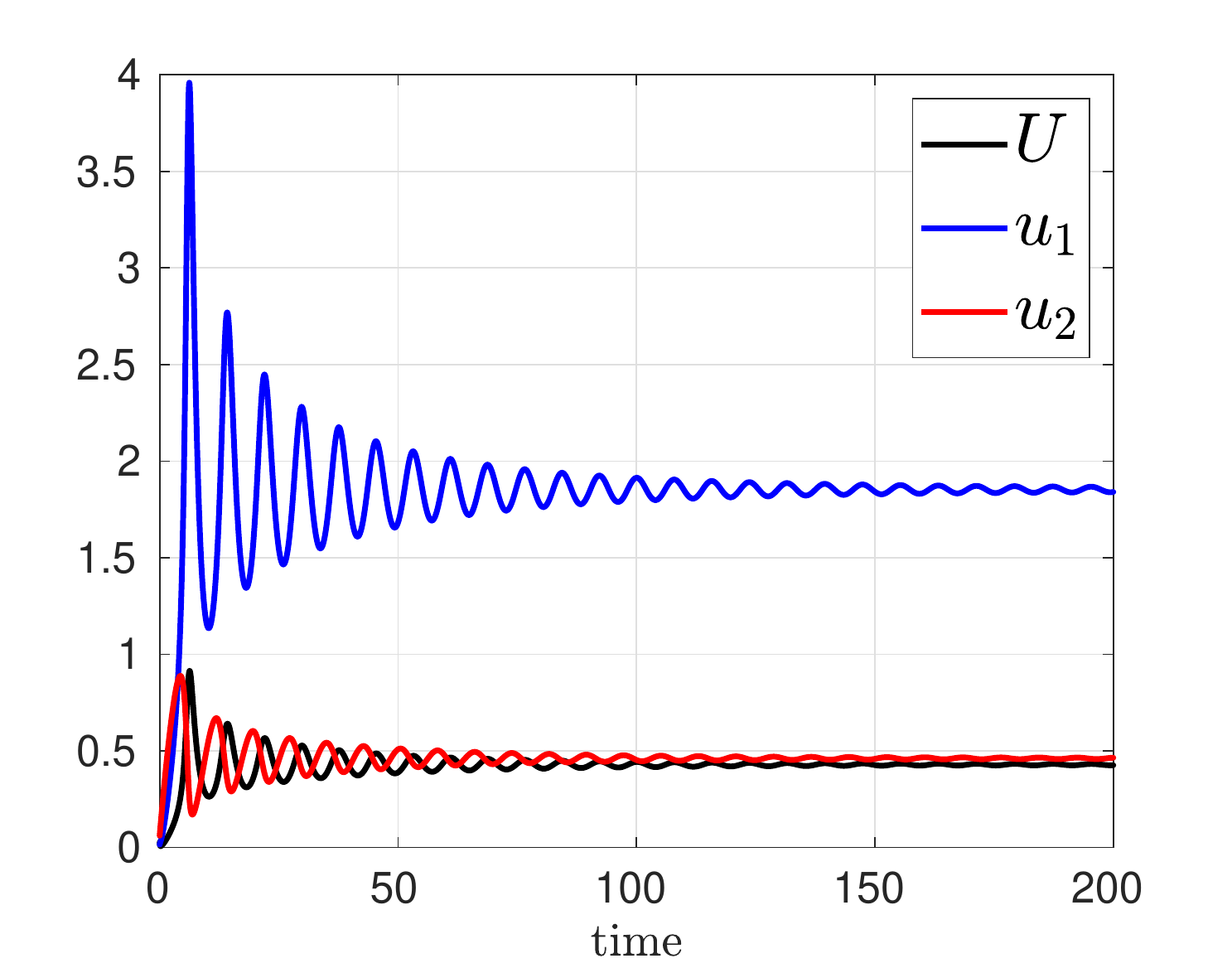}
        \caption{$U$, $u_1$, $u_2$ at $(D_0,\tau)=(5.0,0.9)$ (FlexPDE)} 
        \label{fig:tau09_flexpde}
    \end{subfigure}
    \begin{subfigure}[b]{0.4\textwidth}
      \includegraphics[width=\textwidth,height=4.2cm]{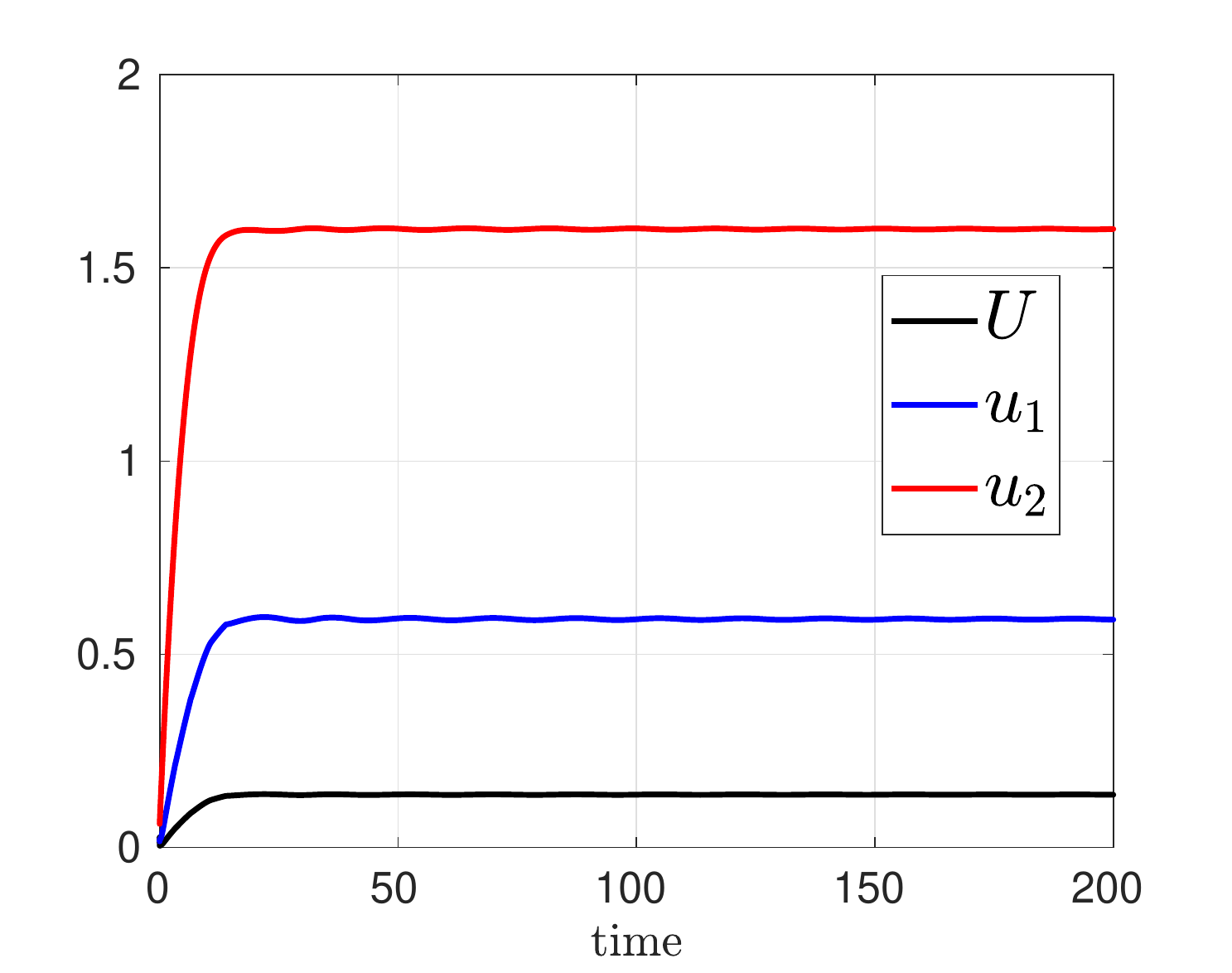}
        \caption{$U$, $u_1$, $u_2$ at $(D_0,\tau)=(5.0,0.03)$ (FlexPDE)} 
        \label{fig:tau03_flexpde}
      \end{subfigure}
   \begin{subfigure}[b]{0.4\textwidth}
      \includegraphics[width=\textwidth,height=4.2cm]{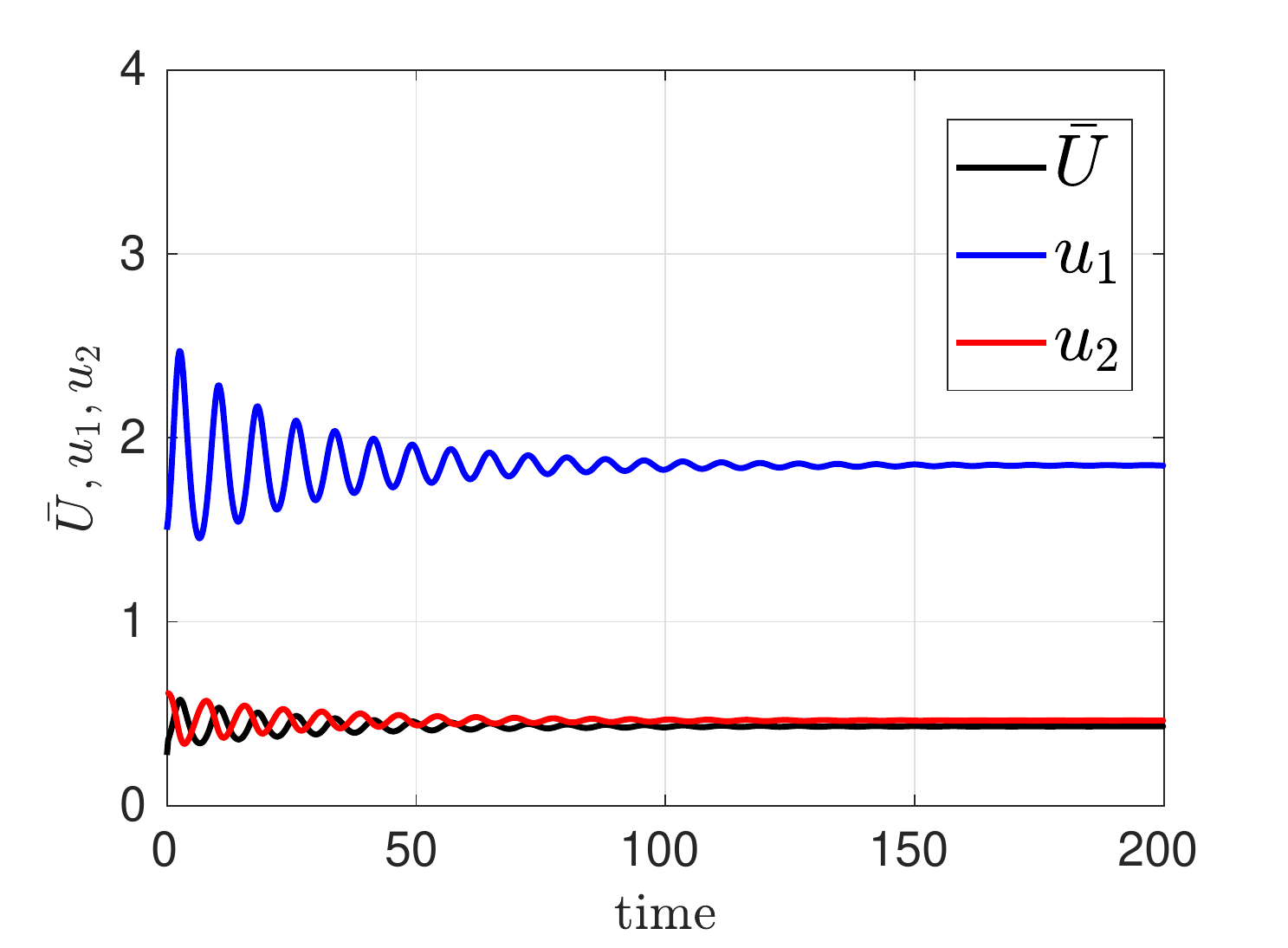}
        \caption{$\bar{U}$, $u_1$, $u_2$ at $(D_0,\tau)=(5.0,0.9)$ (ODEs)} 
        \label{fig:tau09_odes}
    \end{subfigure}
    \begin{subfigure}[b]{0.4\textwidth}
      \includegraphics[width=\textwidth,height=4.2cm]{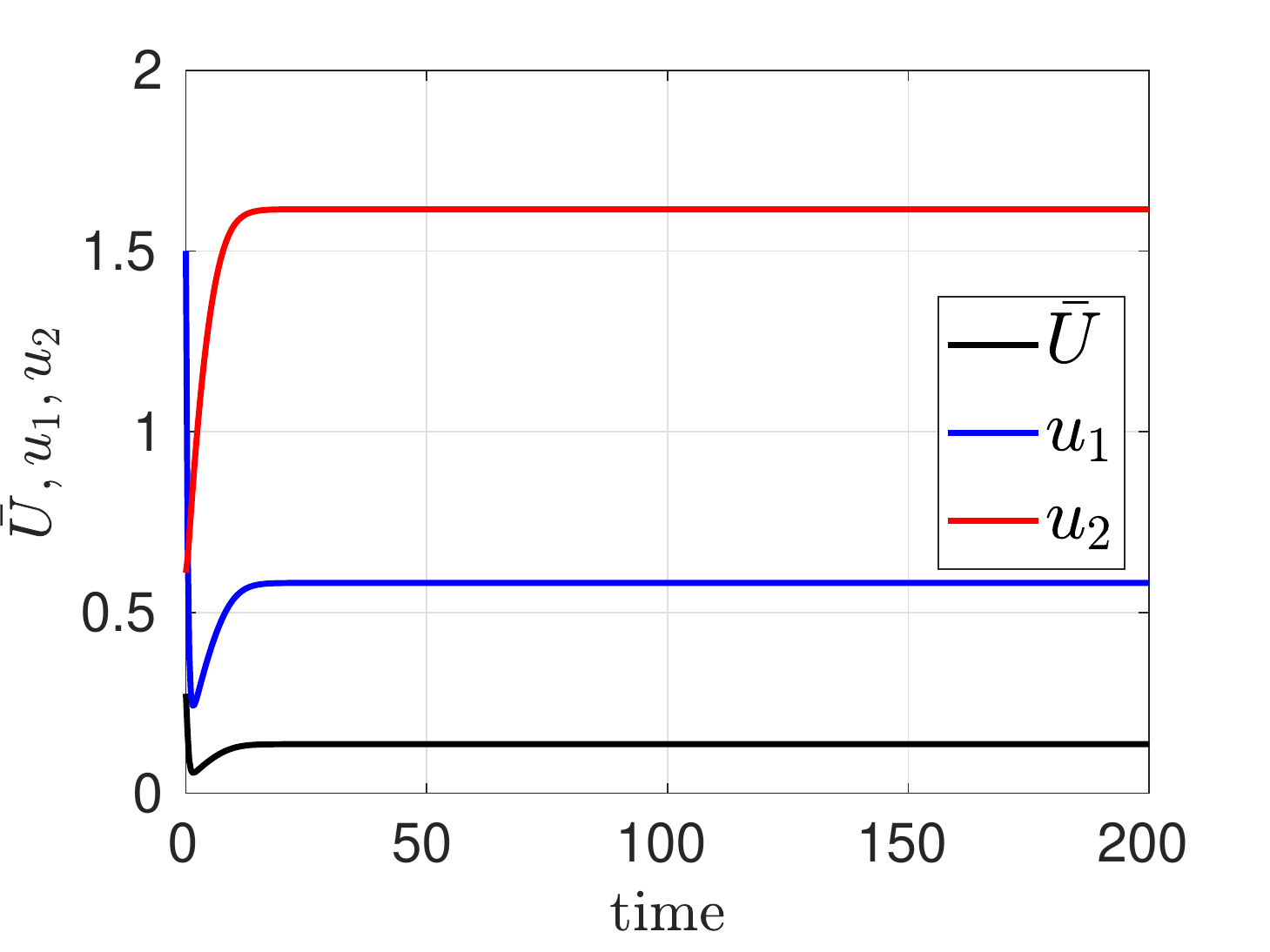}
        \caption{$\bar{U}$, $u_1$, $u_2$ at $(D_0,\tau)=(5.0,0.03)$ (ODEs)} 
        \label{fig:tau03_odes}
    \end{subfigure}
    \caption{Top row: FlexPDE numerical results computed
        from the PDE-ODE model \eqref{DimLess_bulk} for $m=10$
        identical cells, arranged in two clusters (see
        Fig.~\ref{fig:twoclusters}) at the blue star in
        Fig.~\ref{fig:twoclustersHB} where $(D_0,\tau)=(5.0,0.9)$
        (left panel) and at the red star in
        Fig.~\ref{fig:twoclustersHB} where $(D_0,\tau)=(5.0,0.03)$
        (right panel). The cells have identical influx rates
        $d_{1j}=0.8$ for $j=1,\ldots,10$ (set I) and almost identical
        intracellular dynamics. The cell locations are in
        Table~\ref{Table:CellLocation} of Appendix~\ref{Cell_Location}.
        Lower row: corresponding results for
        $\bar{U}$, $u_1$, and $u_2$, as computed from the ODE system
        \eqref{reducedODE}. As expected, there are no sustained oscillations
        and the steady-state is stable.} \label{fig:twoclusters_star}
\end{figure}

\begin{figure}[!httbp]
  \centering
    \includegraphics[width=0.50\textwidth,height=4.2cm]{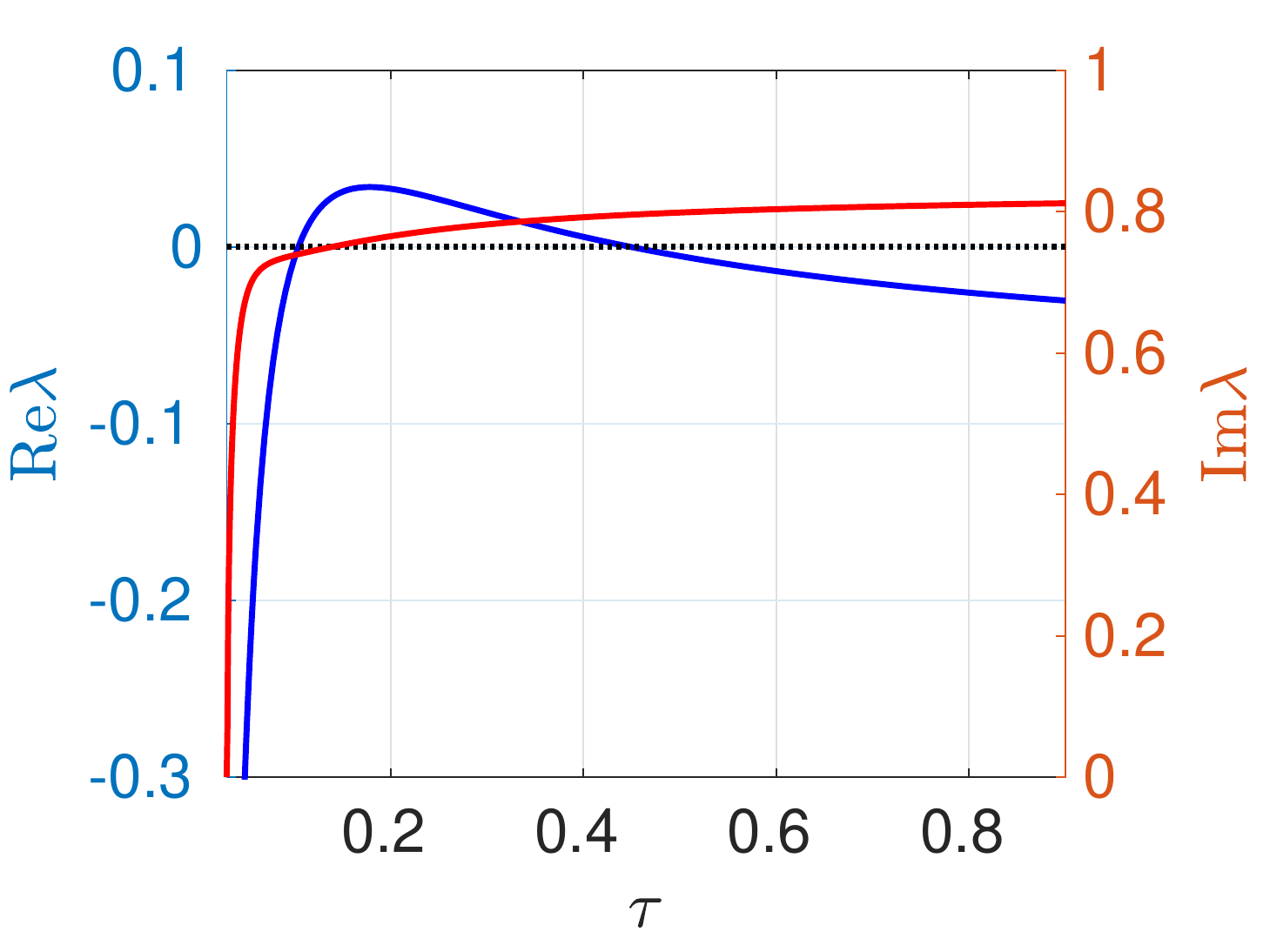}
    \caption{Real (blue curve; left $y$ axis) and imaginary (red
      curve; right $y$ axis) parts of the eigenvalue of the GCEP
      matrix with the largest real part versus $\tau$ along the
      vertical slice $D_0=5$ for $0.03\leq\tau\leq 0.9$ in the phase
      diagram in Fig.~\ref{fig:twoclustersHB}. The eigenvalue is
      computed using root-finding on $\det{\mathcal M}(\lambda)=0$,
      where ${\mathcal M}(\lambda)$ is given in \eqref{Global_System}.
      Observe the two HB values in $\tau$ where $\mbox{Re}(\lambda)=0$
      (dotted black line). When $\tau=0.9$, we have
      $\mbox{Im}(\lambda)\approx 0.81$. However, as $\tau$ decreases
      below the lower HB point, $\mbox{Im}(\lambda)$ decreases to
      zero.} \label{fig:spectrum_tau}
\end{figure}

\begin{figure}[!ht]
  \centering
    \begin{subfigure}[b]{0.40\textwidth}
      \includegraphics[width=\textwidth,height=4.2cm]{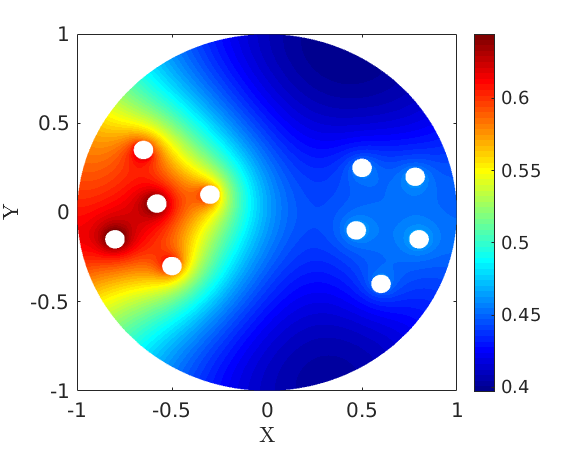}
      \caption{Surface plot at $t=400$}
      \label{FPDE_2Clusters_HalfID_surf}
    \end{subfigure}\\
    \begin{subfigure}[b]{0.32\textwidth}
      \includegraphics[width=\textwidth,height=4.2cm]{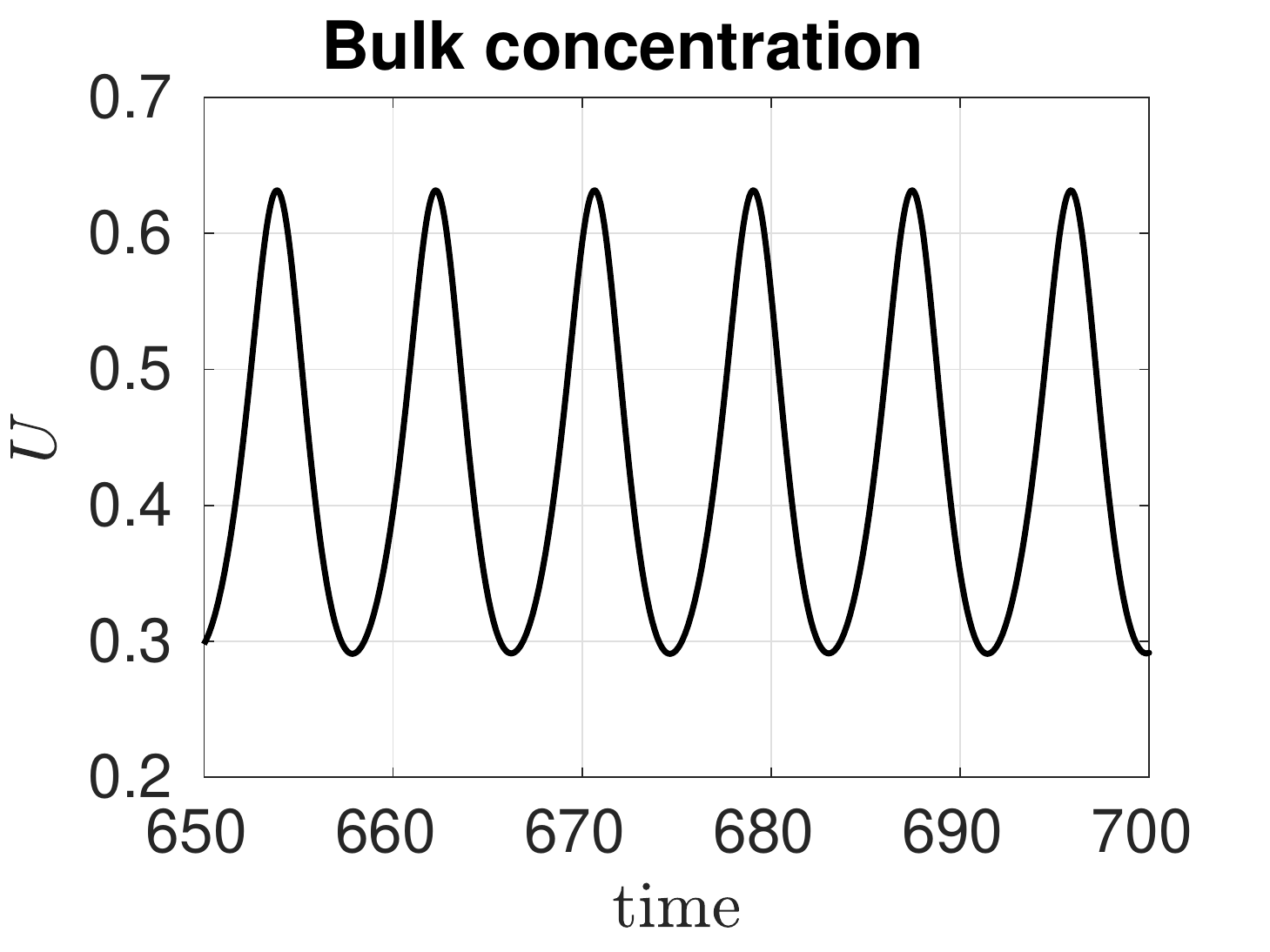}
        \caption{$U$ at ${\pmb x}=(0,0.5)$ (FlexPDE)} 
    \label{FPDE_2Clusters_HalfID_bulk}
    \end{subfigure}
    \begin{subfigure}[b]{0.32\textwidth}
      \includegraphics[width=\textwidth,height=4.2cm]{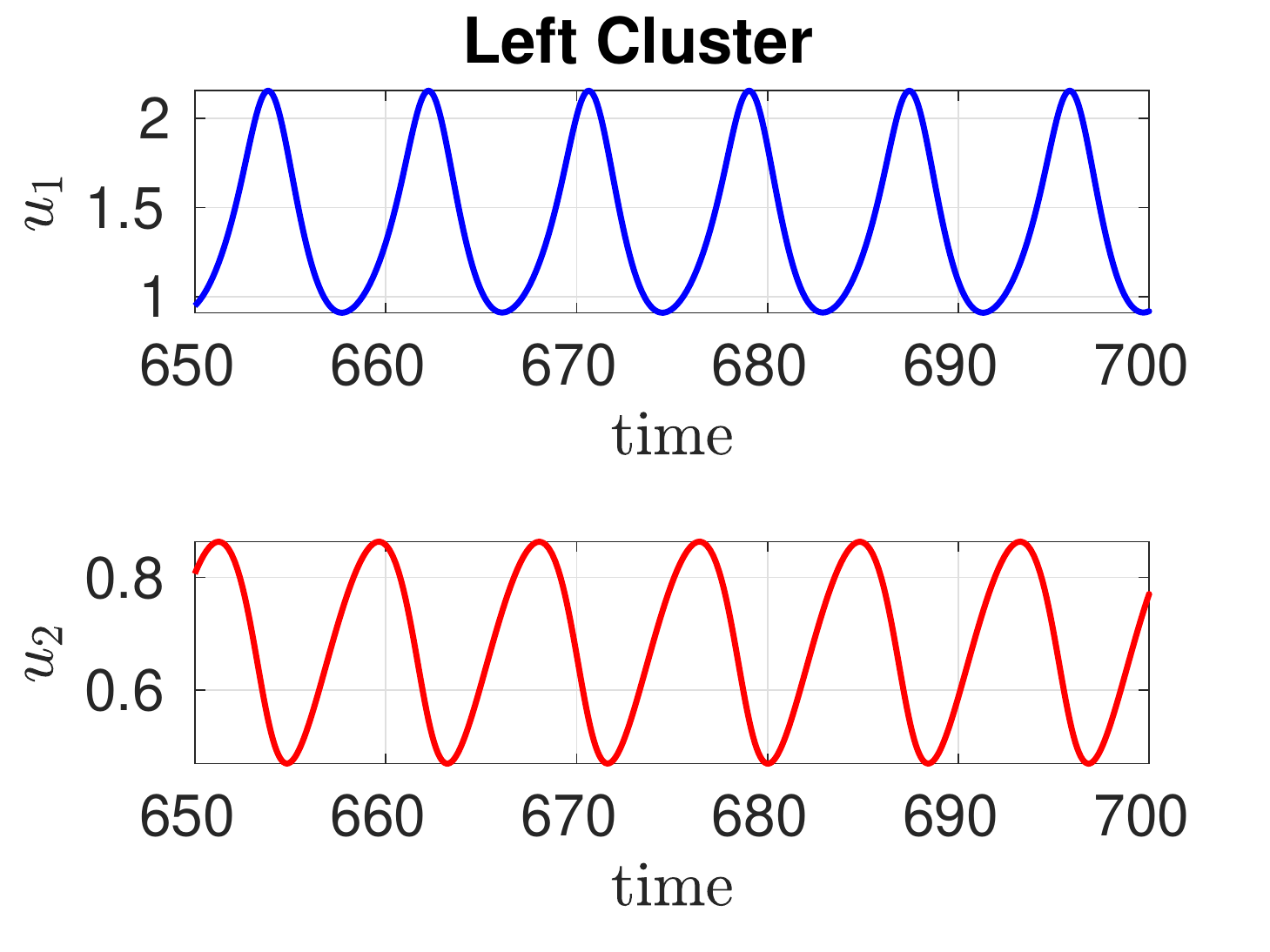}
      \caption{$u_1,u_2$ Left cluster (FlexPDE)}
    \label{FPDE_2Clusters_HalfID_left}
    \end{subfigure}
    \begin{subfigure}[b]{0.32\textwidth}
      \includegraphics[width=\textwidth,height=4.2cm]{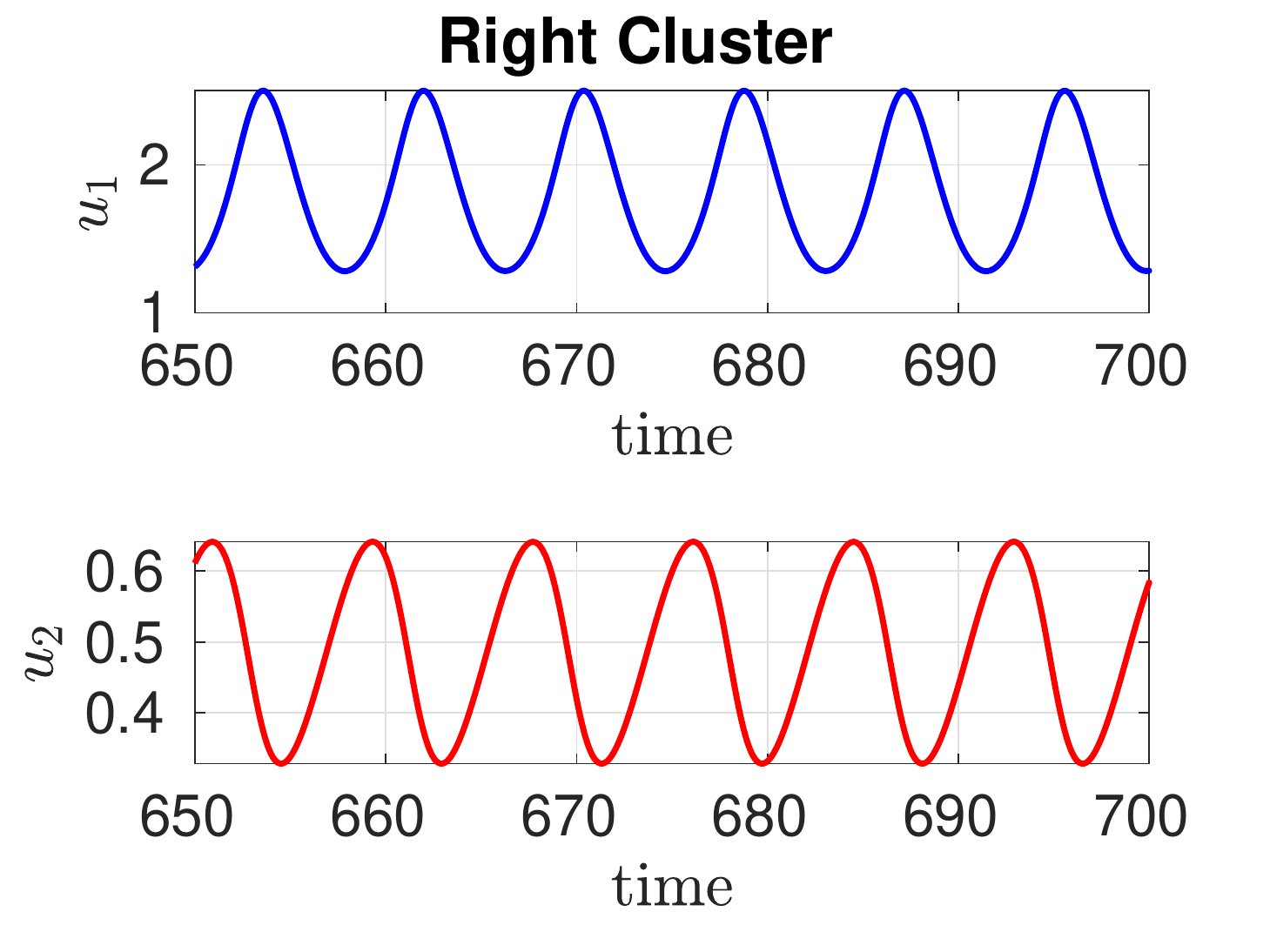}
      \caption{$u_1,u_2$ Right cluster (FlexPDE)}
    \label{FPDE_2Clusters_HalfID_right}
    \end{subfigure}\\
  \begin{subfigure}[b]{0.32\textwidth}
      \includegraphics[width=\textwidth,height=4.2cm]{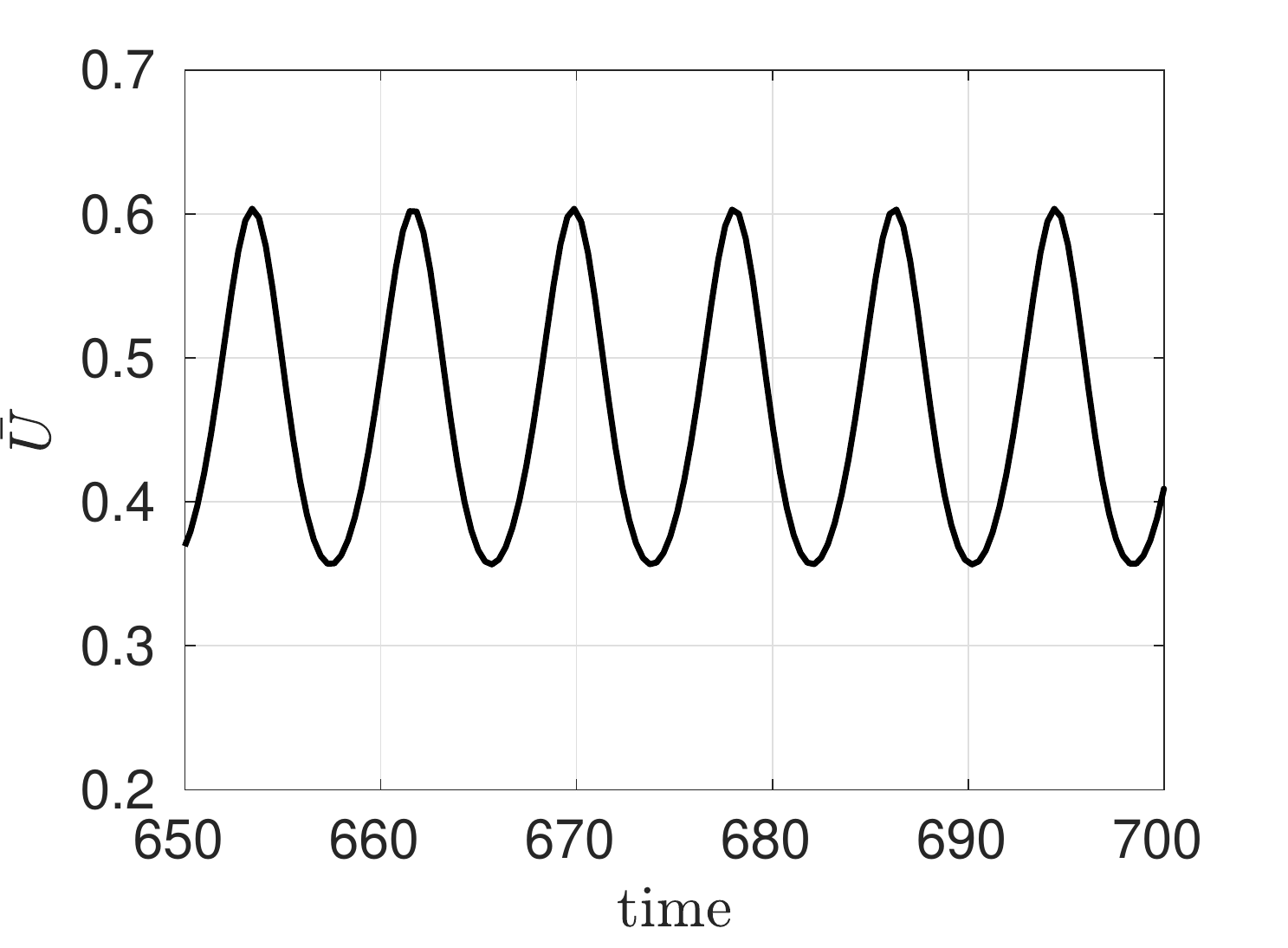}
        \caption{$\bar{U}$ (ODEs)} 
        \label{fig:cluster_bulk_2g}
    \end{subfigure}
    \begin{subfigure}[b]{0.32\textwidth}  
      \includegraphics[width=\textwidth,height=4.2cm]{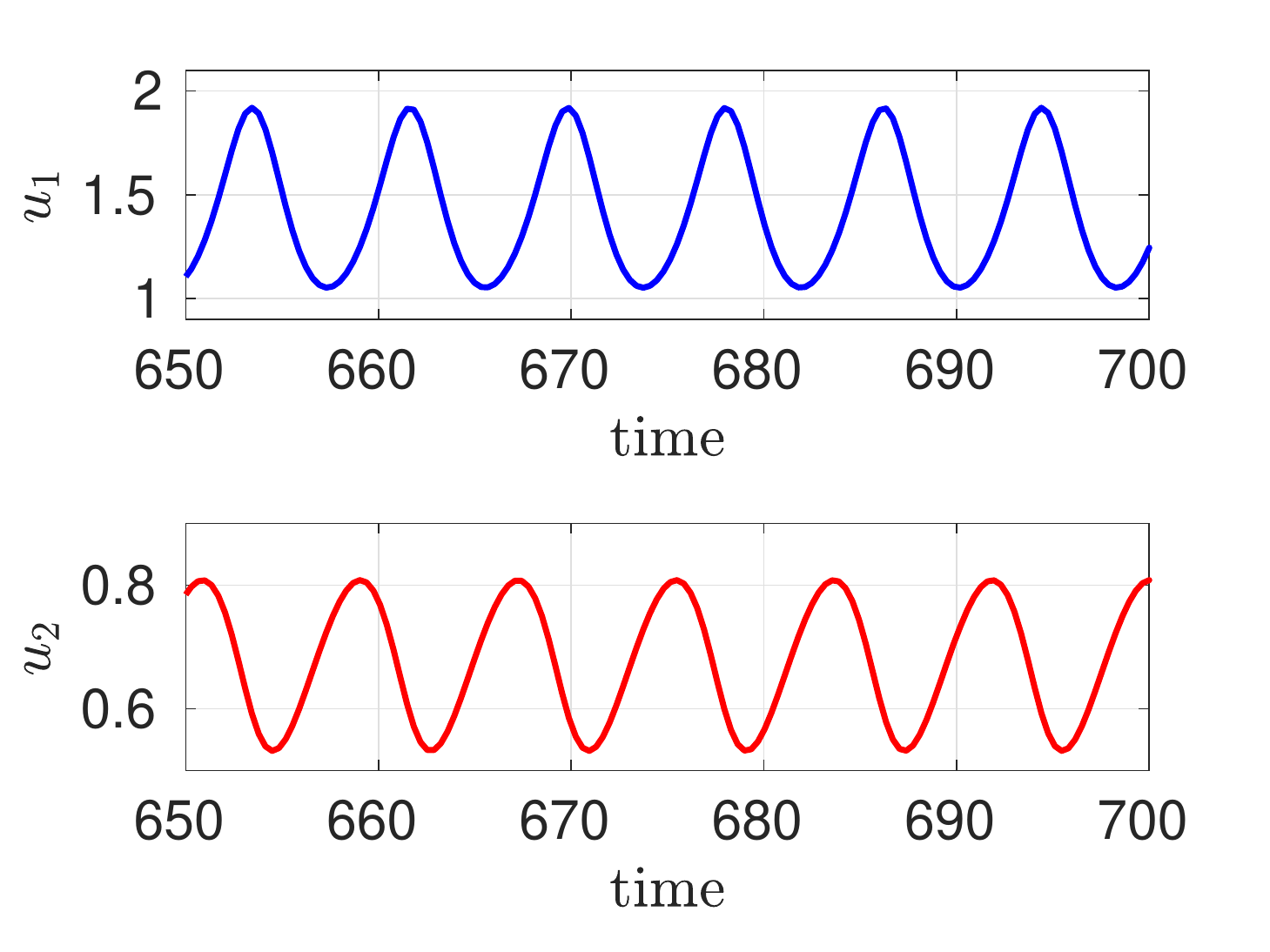}
        \caption{$u_1,u_2$ Left cluster (ODEs)}
        \label{fig:cluster_left_2g}
      \end{subfigure}
    \begin{subfigure}[b]{0.32\textwidth}  
      \includegraphics[width=\textwidth,height=4.2cm]{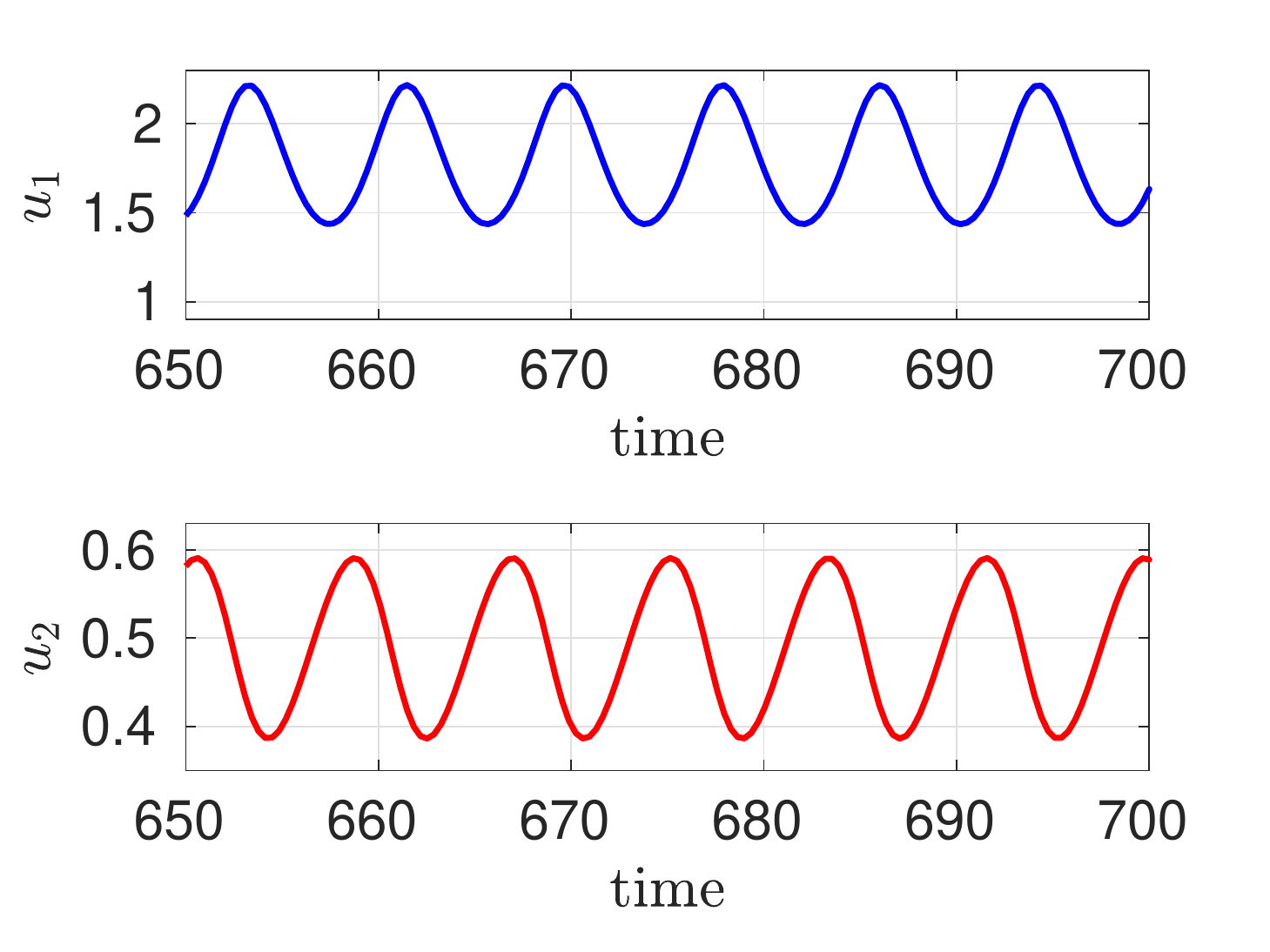}
        \caption{$u_1,u_2$ Right cluster (ODEs)}
        \label{fig:cluster_right_2g}
    \end{subfigure}
  \vspace*{-1ex}
  \caption{Top and middle row: FlexPDE numerical results
      computed from the PDE-ODE model \eqref{DimLess_bulk} for $m=10$
      cells, arranged in two clusters (see Fig.~\ref{fig:twoclusters})
      at the blue dot in Fig.~\ref{fig:twoclustersHB} where
      $(D_0,\tau)=(0.4,0.35)$. The cells in the left cluster have
      identical influx rates $d_{1}=0.4$ while the cells in the right
      cluster have $d_1=0.8$.  The cell locations are in
      Table~\ref{Table:CellLocation} of Appendix~\ref{Cell_Location}.
      Within each cluster there is very similar intracellular
      dynamics. Lower row: corresponding results for $\bar{U}$, $u_1$,
      and $u_2$, as computed from the ODE system
      \eqref{reducedODE}. The eigenvector and eigenvalue for the GCEP
      matrix for the linearization are in the middle third of
      Table~\ref{eigvec:twoclusters}.}
  \label{fig:cluster_all_2g}
\end{figure}

In the top row of Fig.~\ref{fig:twoclusters_star} we show full FlexPDE
results computed from \eqref{DimLess_bulk} at the blue and red stars
in Fig.~\ref{fig:twoclustersHB} corresponding to $(D_0,\tau)=(5,0.9)$
(top left panel) and $(D_0,\tau)=(5,0.03)$ (top right panel),
respectively. In this case, where the cells are all identical with
$d_1=0.8$, the linear stability analysis predicts that the
steady-state is linearly stable and that no sustained intracellular
oscillations should occur. This prediction is confirmed from the
FlexPDE simulations. In the bottom panels of
Fig.~\ref{fig:twoclusters_star} we show that the corresponding results
predicted by the ODE system \eqref{reducedODE} compare very favorably
with the FlexPDE results with regards to the long time limiting
behavior. At the red star point in Fig.~\ref{fig:twoclustersHB} we
calculate from the root-finding condition
$\det{\mathcal M}(\lambda)=0$, where ${\mathcal M}$ is given in
\eqref{Global_System}, that the eigenvalue nearest the origin in
$\mbox{Re}(\lambda)<0$ is $\lambda\approx -1.01 + 0.000202\, i$.  This
predicts a monotone, non-oscillatory, decay to the steady-state. This
feature is observed in the right panels of
Fig.~\ref{fig:twoclusters_star}.  Alternatively, at the blue star
point in Fig.~\ref{fig:twoclusters_star} we calculate that the nearest
eigenvalue in $\mbox{Re}(\lambda)<0$ for the GCEP matrix at
$(D_0,\tau)=(5.0,0.9)$ is $\lambda\approx -0.030 + 0.8104\, i$. This
eigenvalue predicts an oscillatory decay to the steady-state, and is
confirmed by the results in the left panels of
Fig.~\ref{fig:twoclusters_star}. To further explain this transition
between monotone and oscillatory decay to the steady-state, in
Fig.~\ref{fig:spectrum_tau} we plot the real and imaginary parts of
the eigenvalue with the largest real part of the GCEP matrix along the
vertical slice $D_0=5$ for $0.03\leq\tau\leq 0.9$ in the phase diagram
in Fig.~\ref{fig:twoclustersHB}, as obtained by numerically solving
$\det{\mathcal M}(\lambda)=0$. This figure shows that the imaginary
part of this eigenvalue becomes small when $\tau$ decreases below the
lower HB boundary.

In the top and middle rows of Fig.~\ref{fig:cluster_all_2g} we show
FlexPDE simulation results corresponding to the blue dot in
Fig.~\ref{fig:twoclustersHB} where $(D_0,\tau)=(0.4,0.35)$. For this
case, the influx rate for the cells in the right and left clusters
were assigned as $d_1=0.8$ and $d_1=0.4$, respectively (parameter set
II in Table~\ref{Table:CellLocation} of
Appendix~\ref{Cell_Location}). The corresponding normalized eigenpair
${\mathcal K}\pmb{c}$ obtained from the GCEP matrix, as given in the
middle third of Table~\ref{eigvec:twoclusters}, predicts that the
cells will synchronize their oscillations within their respective
groups, but that there will be a slight phase difference in the
intracellular oscillations between the two groups. The results from
the full FlexPDE simulations in the middle row of
Fig.~\ref{fig:cluster_all_2g} confirm these predictions from the
linearized theory. However, from comparing the middle and bottom rows
of Fig.~\ref{fig:cluster_all_2g}, we observe that results from the ODE
system \eqref{reducedODE} do not compare as favorably with FlexPDE
simulations as they do in Fig.~\ref{FPDE_2Clusters_ID} when $D_0=5$.
This poorer agreement is likely due to the fact that there is a
noticeable spatial gradient in the bulk signal at the lower value
$D_0=0.4$, as observed from the surface plot of
Fig.~\ref{FPDE_2Clusters_HalfID_surf}. Moreover,
Fig.~\ref{FPDE_2Clusters_HalfID_surf} shows that the bulk signal can
concentrate, as expected, in the left cluster owing to the lower rate
of chemical influx into the cells in this cluster.

\begin{figure}[!ht]
  \centering
   \begin{subfigure}[b]{0.32\textwidth}  
      \includegraphics[width=\textwidth,height=4.2cm]{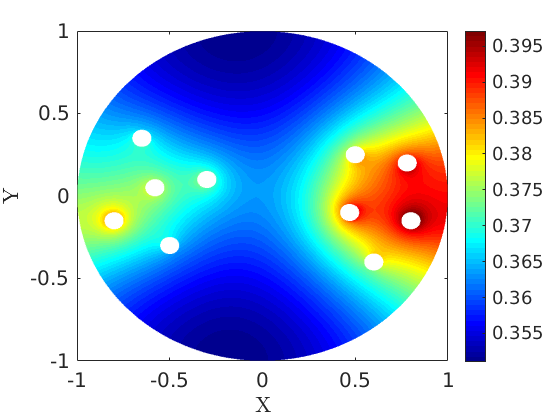}
        \caption{surface plot at $t=400$}
    \label{FPDE_2Clusters_NonID_Surfplot}
  \end{subfigure}
   \begin{subfigure}[b]{0.32\textwidth}  
      \includegraphics[width=\textwidth,height=4.2cm]{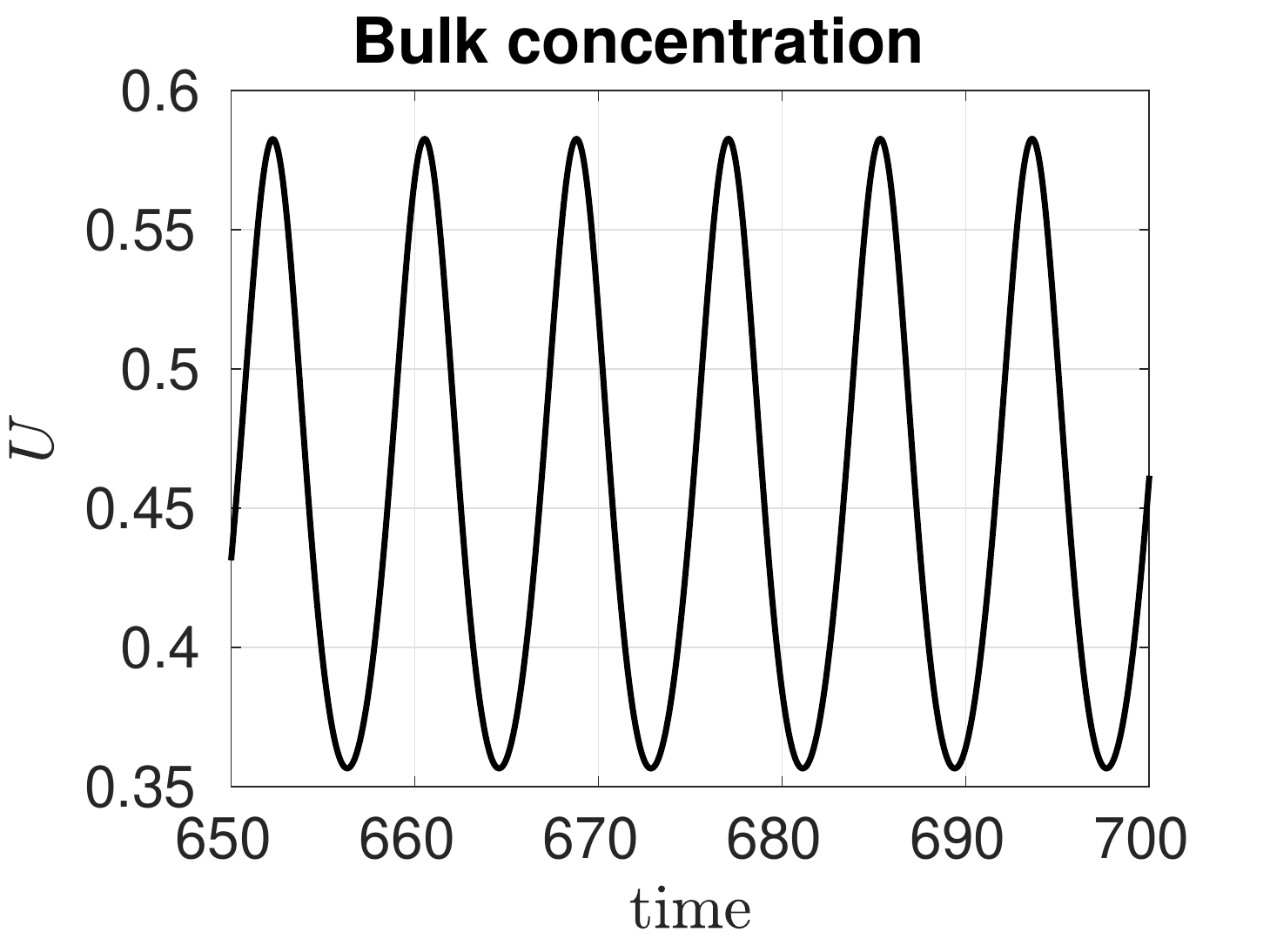}
        \caption{$U$ at $\pmb{x}=(0.0.5)$ (FlexPDE)}
    \label{FPDE_2Clusters_NonID_Surf_bulk}
  \end{subfigure}
   \begin{subfigure}[b]{0.32\textwidth}  
      \includegraphics[width=\textwidth,height=4.2cm]{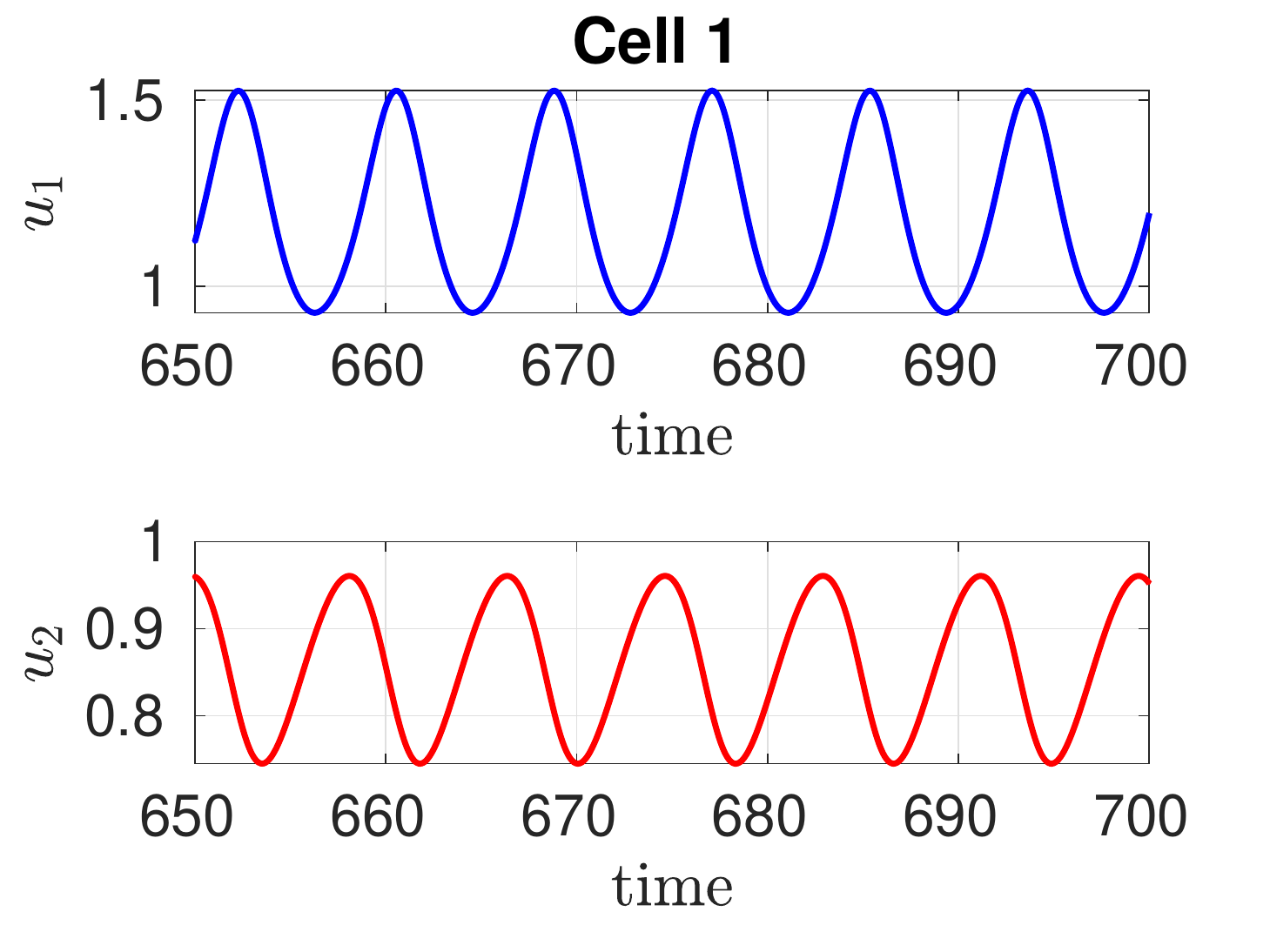}
        \caption{$u_1$, $u_2$ Cell 1 (FlexPDE)}
    \label{FPDE_2Clusters_NonID_cell1}
  \end{subfigure}\\
   \begin{subfigure}[b]{0.32\textwidth}  
      \includegraphics[width=\textwidth,height=4.2cm]{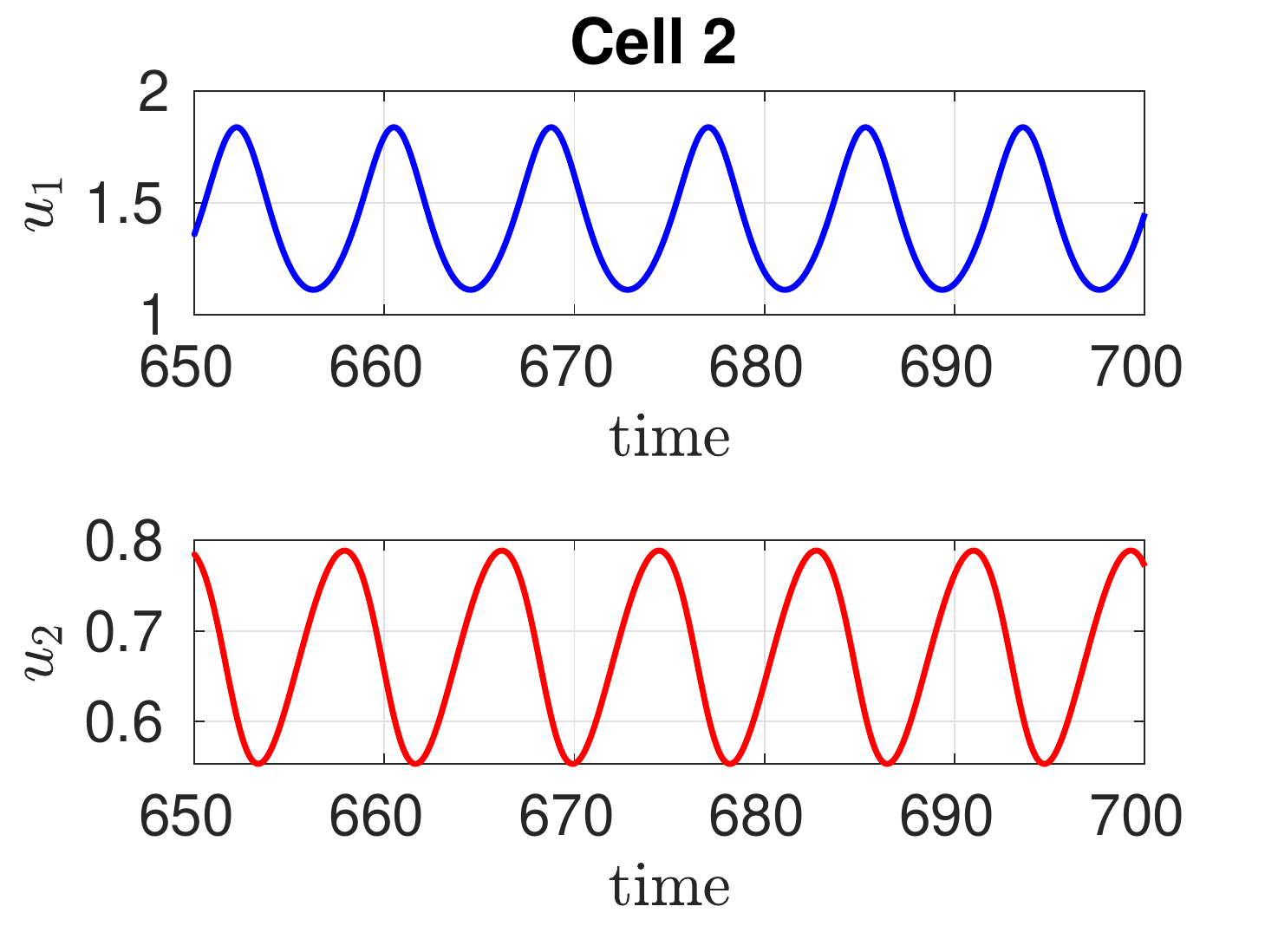}
        \caption{$u_1$, $u_2$ Cell 2 (FlexPDE)}
    \label{FPDE_2Clusters_NonID_cell2}
  \end{subfigure}
   \begin{subfigure}[b]{0.32\textwidth}  
      \includegraphics[width=\textwidth,height=4.2cm]{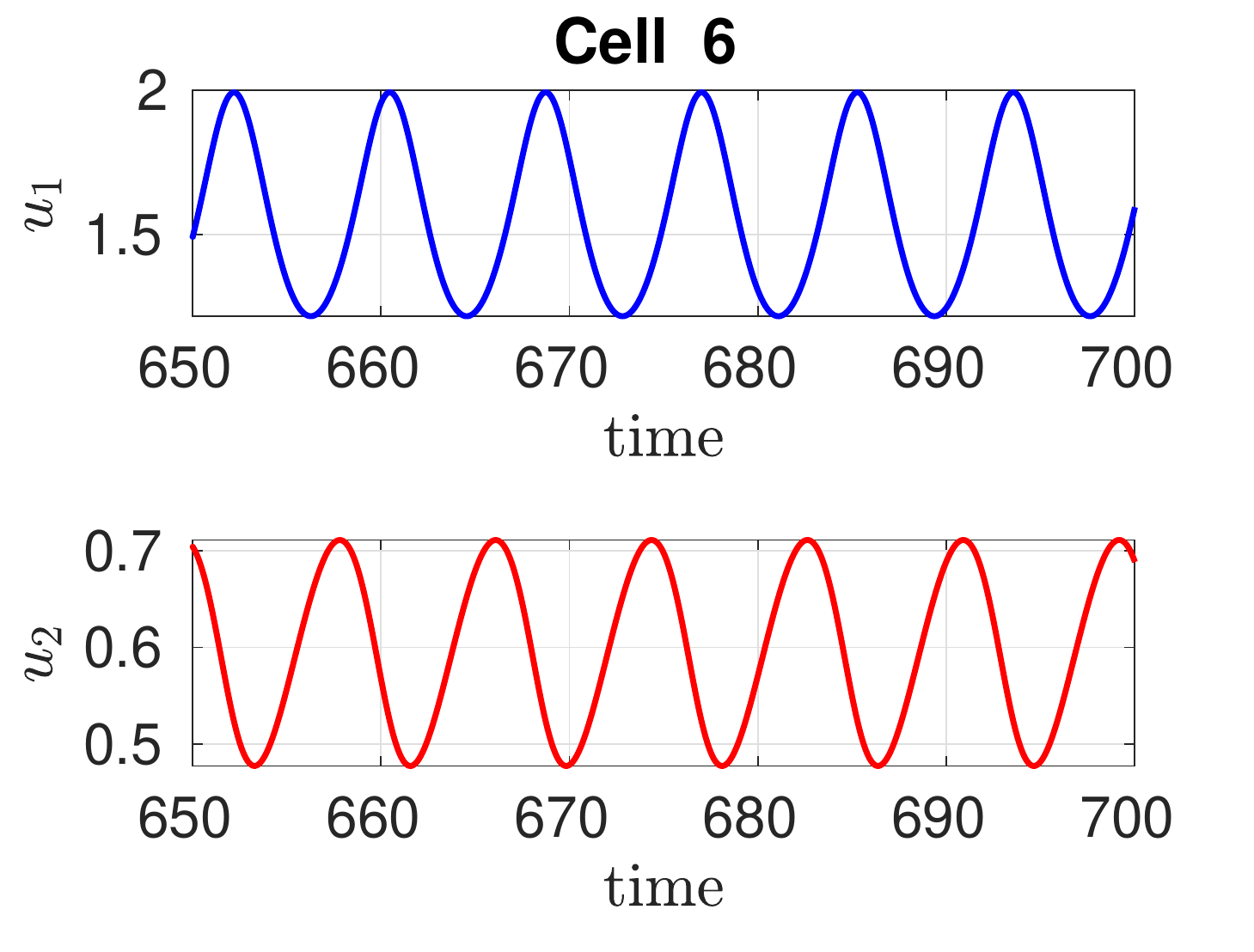}
        \caption{$u_1$, $u_2$ Cell 6 (FlexPDE)}
    \label{FPDE_2Clusters_NonID_cell6}
  \end{subfigure}
   \begin{subfigure}[b]{0.32\textwidth}  
      \includegraphics[width=\textwidth,height=4.2cm]{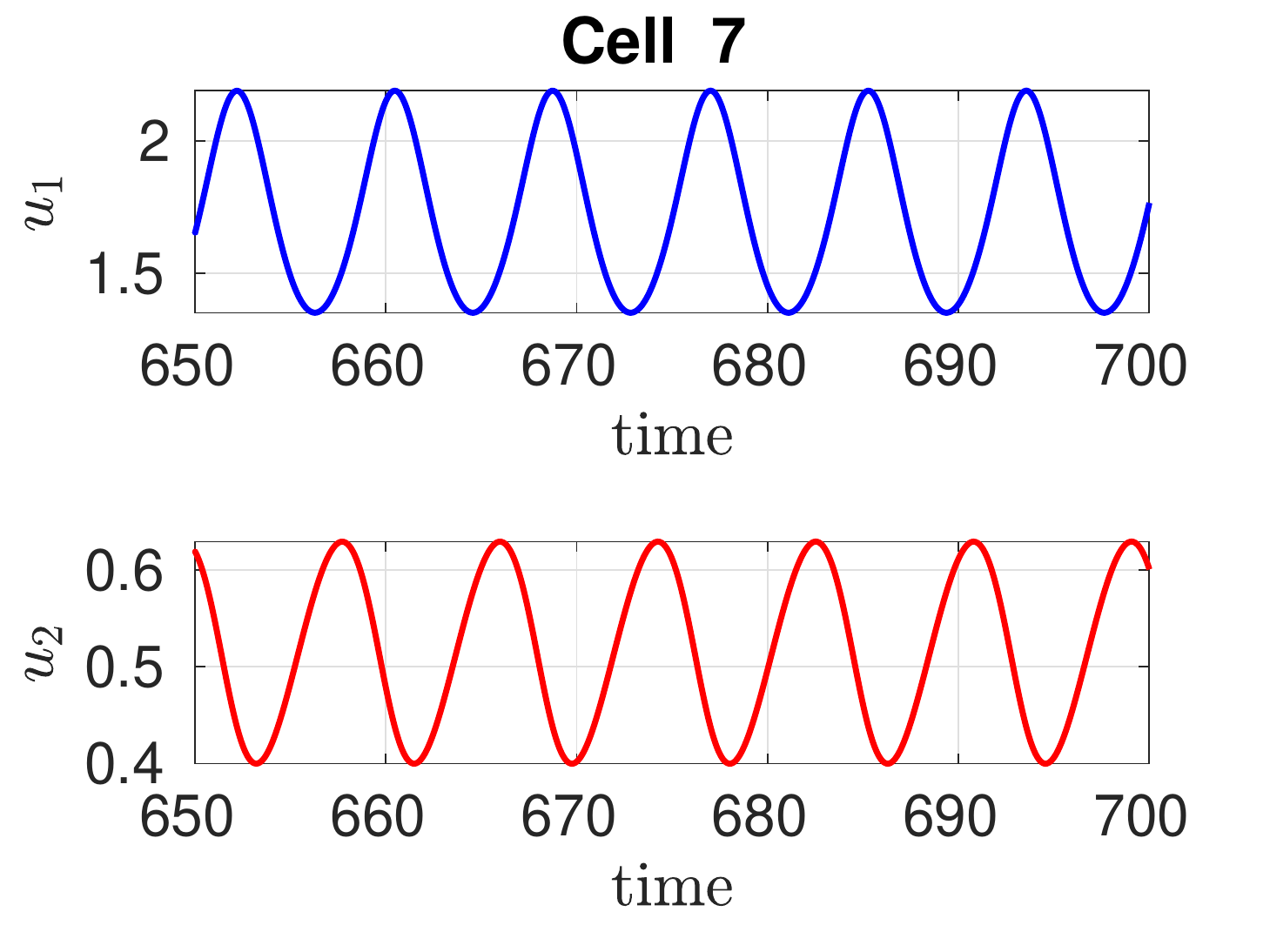}
        \caption{$u_1$, $u_2$ Cell 7 (FlexPDE)}
    \label{FPDE_2Clusters_NonID_cell7}
  \end{subfigure}\\
   \begin{subfigure}[b]{0.32\textwidth}  
      \includegraphics[width=\textwidth,height=4.2cm]{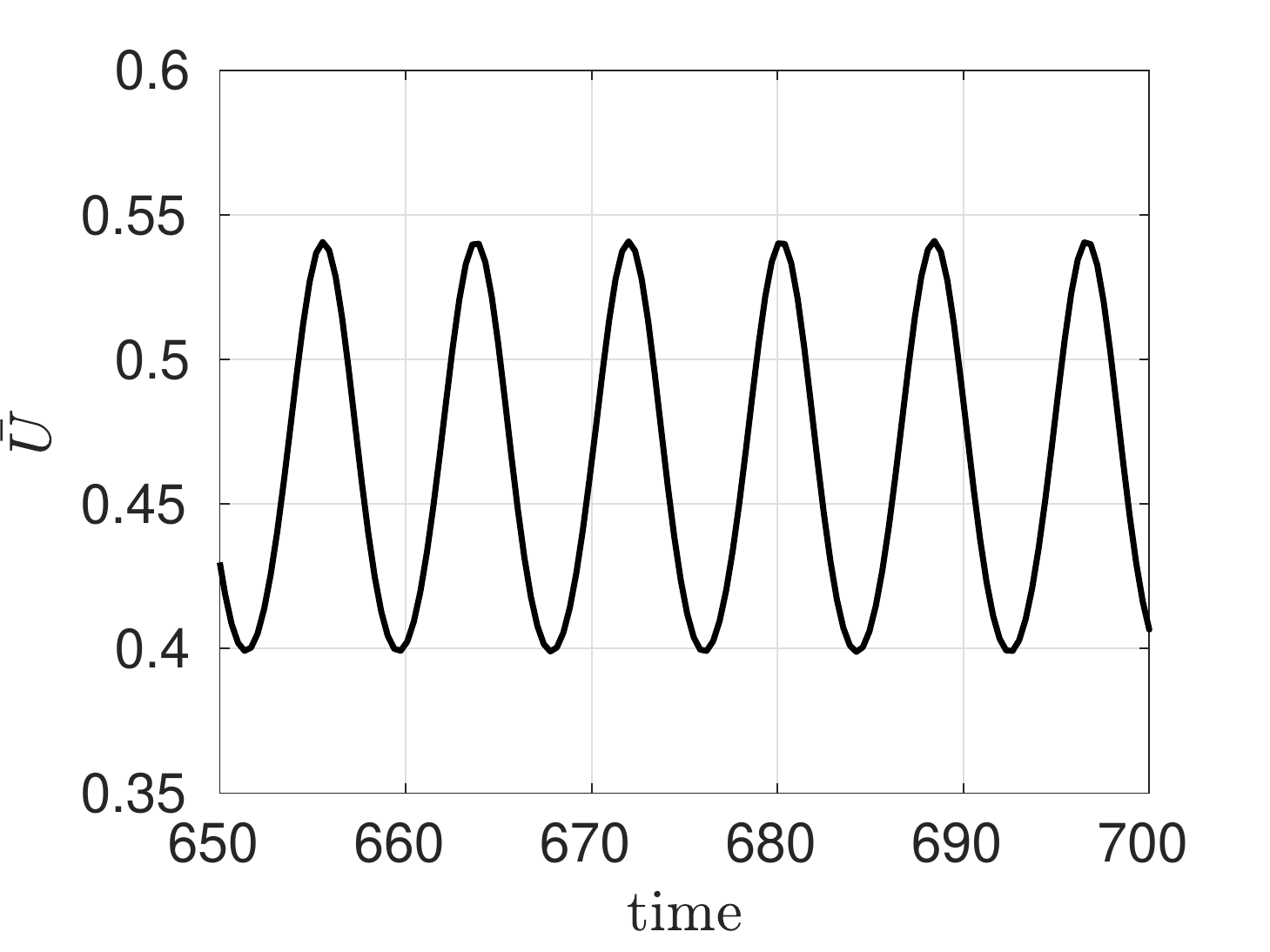}
        \caption{$\bar{U}$ (ODEs)}
    \label{fig:cluster_bulk_arb}
  \end{subfigure}
   \begin{subfigure}[b]{0.32\textwidth}  
      \includegraphics[width=\textwidth,height=4.2cm]{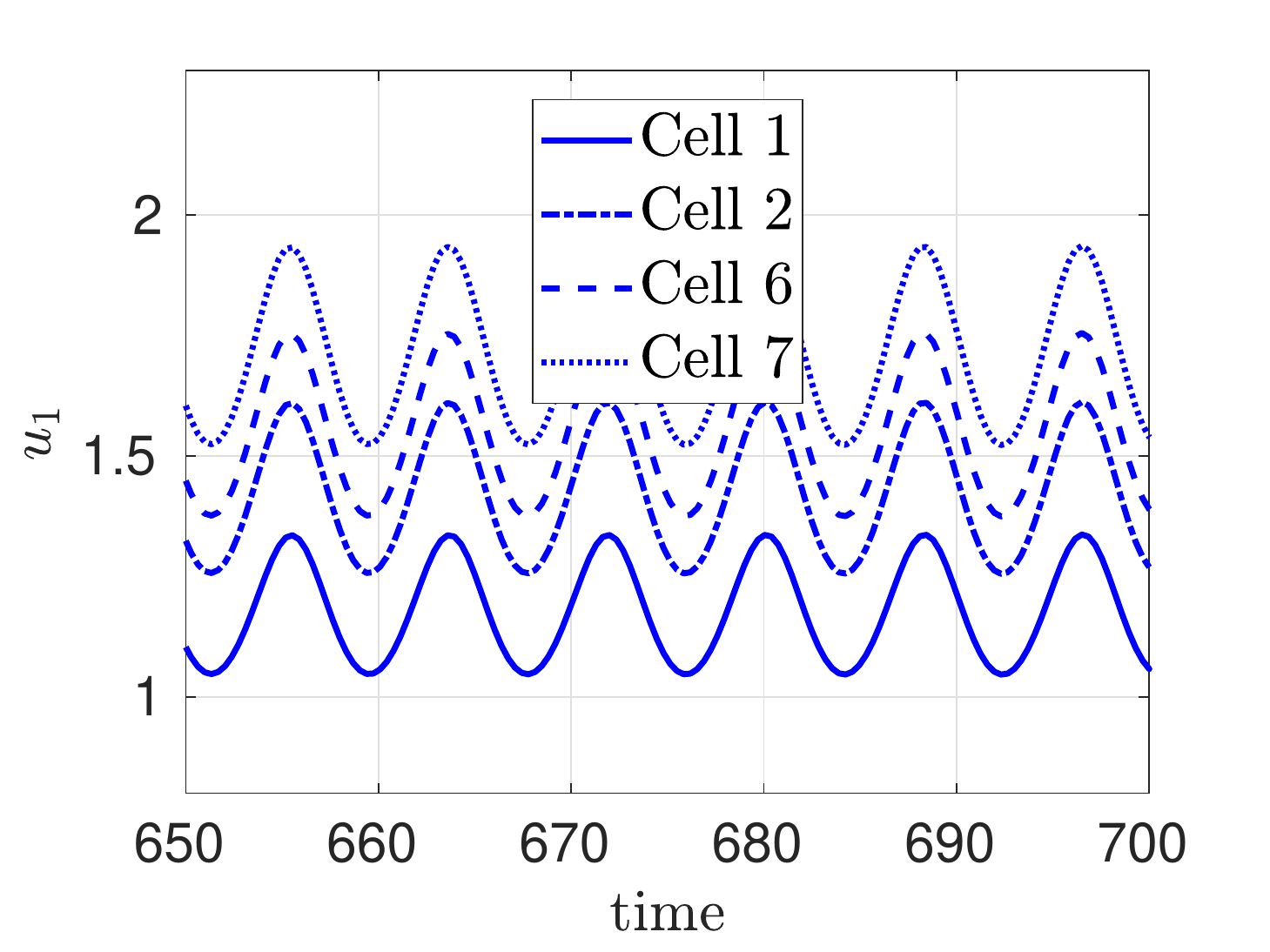}
      \caption{$u_1$, Cells 1,2,6,7 (ODEs)}
          \label{fig:cluster_u1_arb}
  \end{subfigure}
   \begin{subfigure}[b]{0.32\textwidth}  
      \includegraphics[width=\textwidth,height=4.2cm]{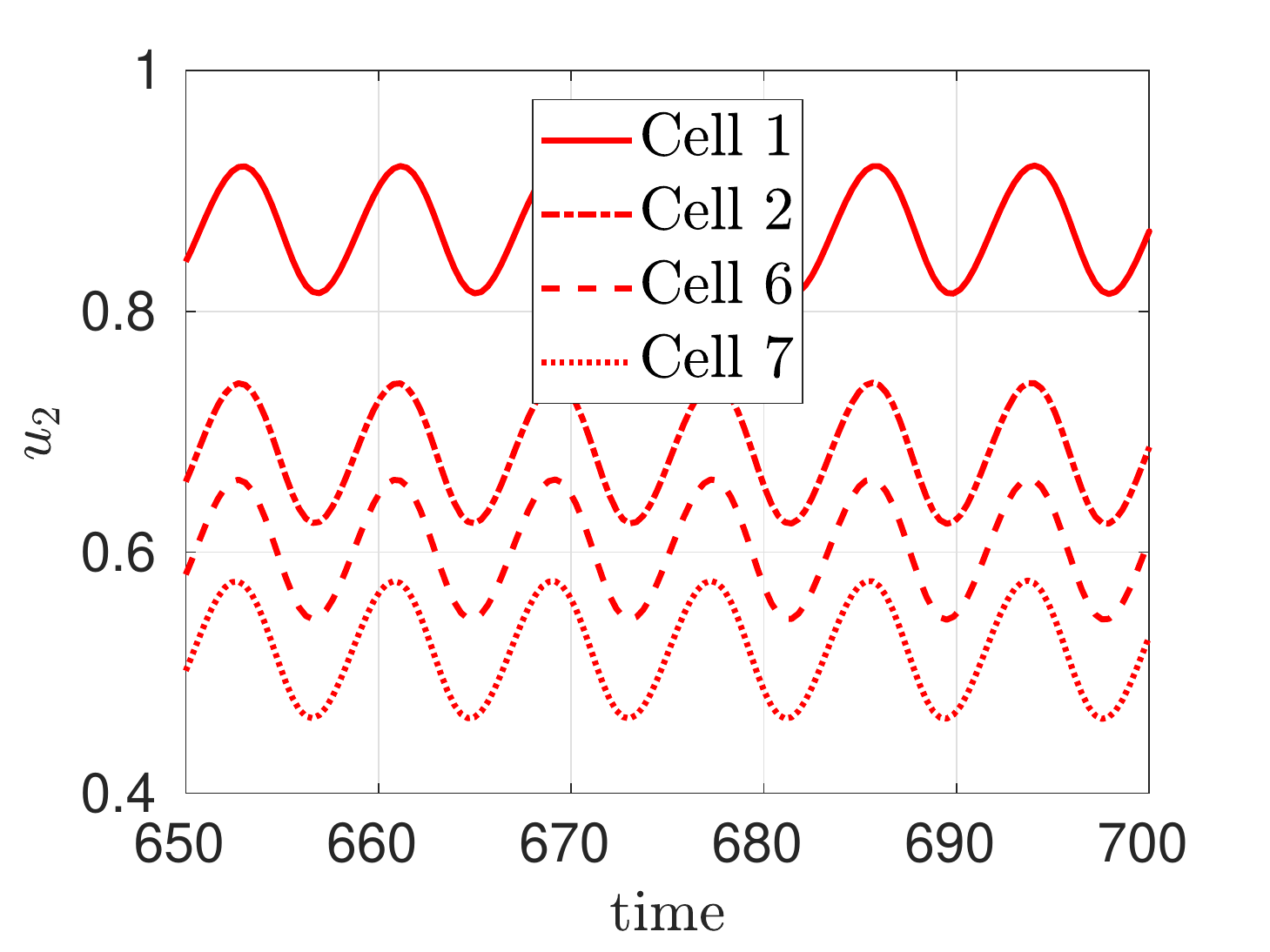}
        \caption{$u_2$, Cells 1,2,6,7 (ODEs)}
              \label{fig:cluster_u2_arb}
  \end{subfigure}\\
  \vspace*{-1ex}
  \caption{Top and middle row: FlexPDE numerical results
      computed from the PDE-ODE model \eqref{DimLess_bulk} for $m=10$
      cells, arranged in two clusters (see Fig.~\ref{fig:twoclusters})
      at the green dot in Fig.~\ref{fig:twoclustersHB} where
      $(D_0,\tau)=(1.5,0.3)$. The influx rates are selected uniformly
      from the interval $0.4 \leq d_1 \leq 0.8$ (see
      Table~\ref{Table:CellLocation} for set III). The cell locations
      are in Table~\ref{Table:CellLocation} of
      Appendix~\ref{Cell_Location}, with Cell 1 at $(0.47,-0.1)$, Cell
      2 at $(0.78,0.2)$, Cell 6 at $(-0.58,0.05)$, and Cell 7 at
      $(-0.65,0.35)$. Each cell has different intracellular dynamics
      owing to the different influx rates. Lower row: corresponding
      results for $\bar{U}$, $u_1$, and $u_2$, as computed from the
      ODE system \eqref{reducedODE}. The eigenvector
      and eigenvalue for the GCEP matrix for the linearization are in
      the lower third of Table~\ref{eigvec:twoclusters}.}
  \label{fig:cluster_arb}
\end{figure}

Next, we give FlexPDE results computed from \eqref{DimLess_bulk}
corresponding to the green dot in Fig.~\ref{fig:twoclustersHB} where
$(D_0,\tau)=(1.5,0.3)$, corresponding to the two cluster cell pattern
of Fig.~\ref{fig:twoclusters}). For this case, the influx rates are
selected randomly from the interval $0.4 \leq d_1 \leq 0.8$ and are
given in Table~\ref{Table:CellLocation} under permeability set III.
In the top and middle row of Fig.~\ref{fig:cluster_arb} we show the
FlexPDE results for the bulk solution at the point $(0,0.5)$ as well
as the intracellular dynamics in cells 1, 2, 6 and 7. From
Table~\ref{Table:CellLocation}, cells 1 and 2 are from the left
cluster, while 6 and 7 belong to the right cluster. The
  corresponding normalized eigenpair ${\mathcal K}\pmb{c}$ of the GCEP
  matrix, as given in the bottom third of
  Table~\ref{eigvec:twoclusters}, predicts that the amplitudes of the
  intracellular dynamics will differ from cell to cell due to the
  different cell influx rates, but that there will only be a small
  phase shift in the intracellular oscillations. Since
  $|({\mathcal K}\pmb{c})_1|< |({\mathcal K}\pmb{c})_7|$ (see
  Table~\ref{eigvec:twoclusters}), the linearized theory predicts
  larger amplitude oscillations for $u_1$ in cell 7 than in cell 1,
  which is expected since cell 7 has a larger influx rate than does
  cell 1 (see Table~\ref{Table:CellLocation}). This is confirmed in
  the FlexPDE simulations in Fig.~\ref{fig:cluster_arb}.  In the
bottom row of Fig.~\ref{fig:cluster_arb} we observe that the
corresponding results from the ODE system \eqref{reducedODE} compare
moderately well with the FlexPDE results regarding the amplitude and
period of oscillations in the bulk medium and within the cells. We
remark that if, instead, we used the simpler ODE system
\eqref{ode:well_mixed}, corresponding to the well-mixed limit
$D\to \infty$, the ODE results would be in very poor agreement with
the full PDE simulations.

 \subsubsection{Isolated cells in a pattern can be quiescent:
  Diffusion-sensing behavior}\label{large:quiet}

In this subsection, we analyze in detail the role of the spatial
configuration of the cells on the triggering of intracellular
oscillations. Specifically, we consider the cell configuration shown
in Fig.~\ref{fig:tworing}, with cell centers given in
Table~\ref{Table:CellLocation_ring} of Appendix~\ref{Cell_Location}.
This pattern consists of two spatially segregated rings of cells
together with two cells that are spatially isolated from the
rings. For this symmetric pattern we will consider five permeability
parameter sets for the cell influx rate, as given in
Table~\ref{Table:CellLocation_ring}, which lead to solution behavior
that can be interpreted both qualitatively and from our linear
stability analysis.

\begin{figure}[!ht]
  \centering
    \begin{subfigure}[b]{0.45\textwidth}
        \includegraphics[width=\textwidth,height=4.4cm]{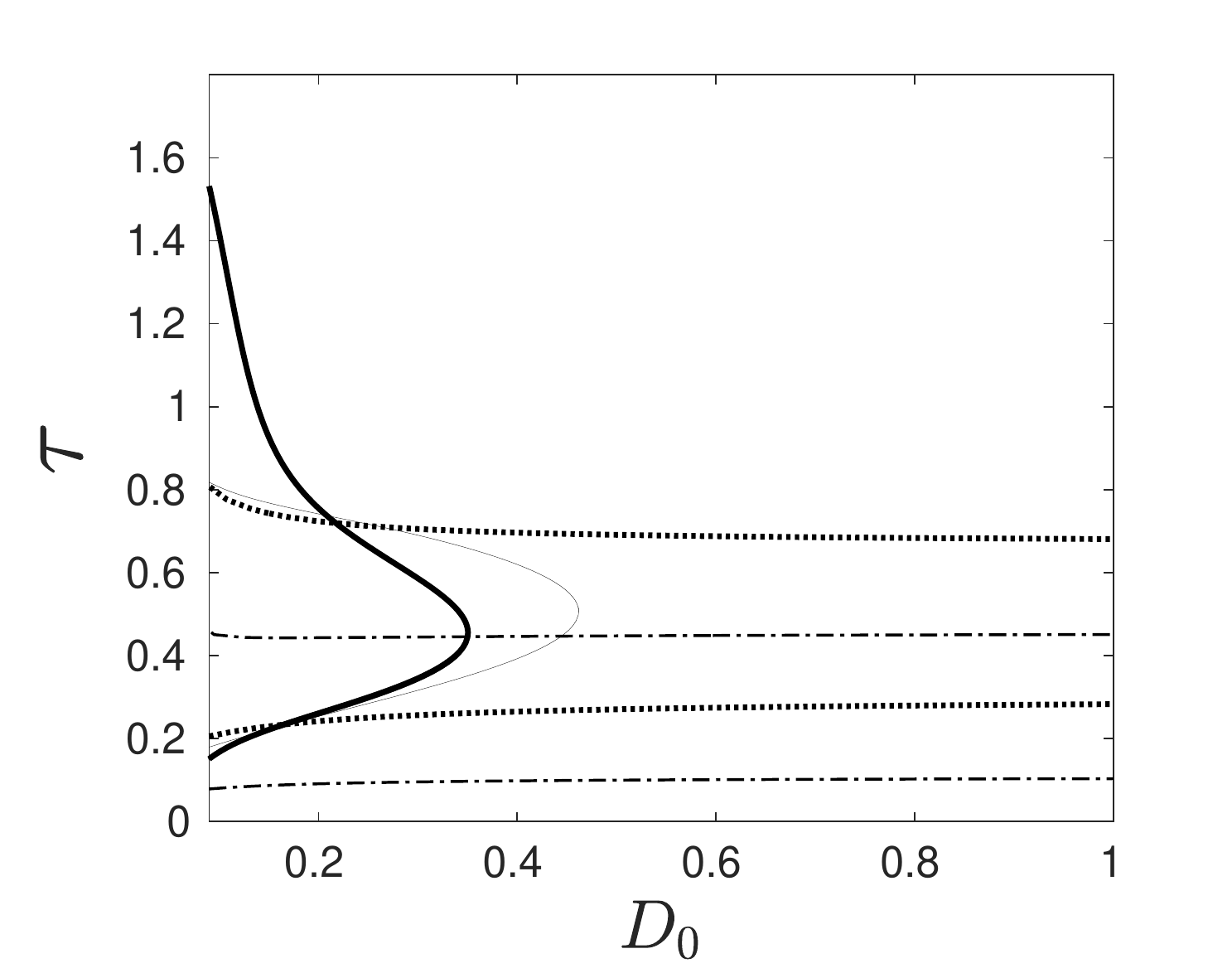}
        \caption{HB boundaries}
        \label{fig:tworingHB}
    \end{subfigure}
    \begin{subfigure}[b]{0.35\textwidth}
        \includegraphics[width=\textwidth,height=4.4cm]{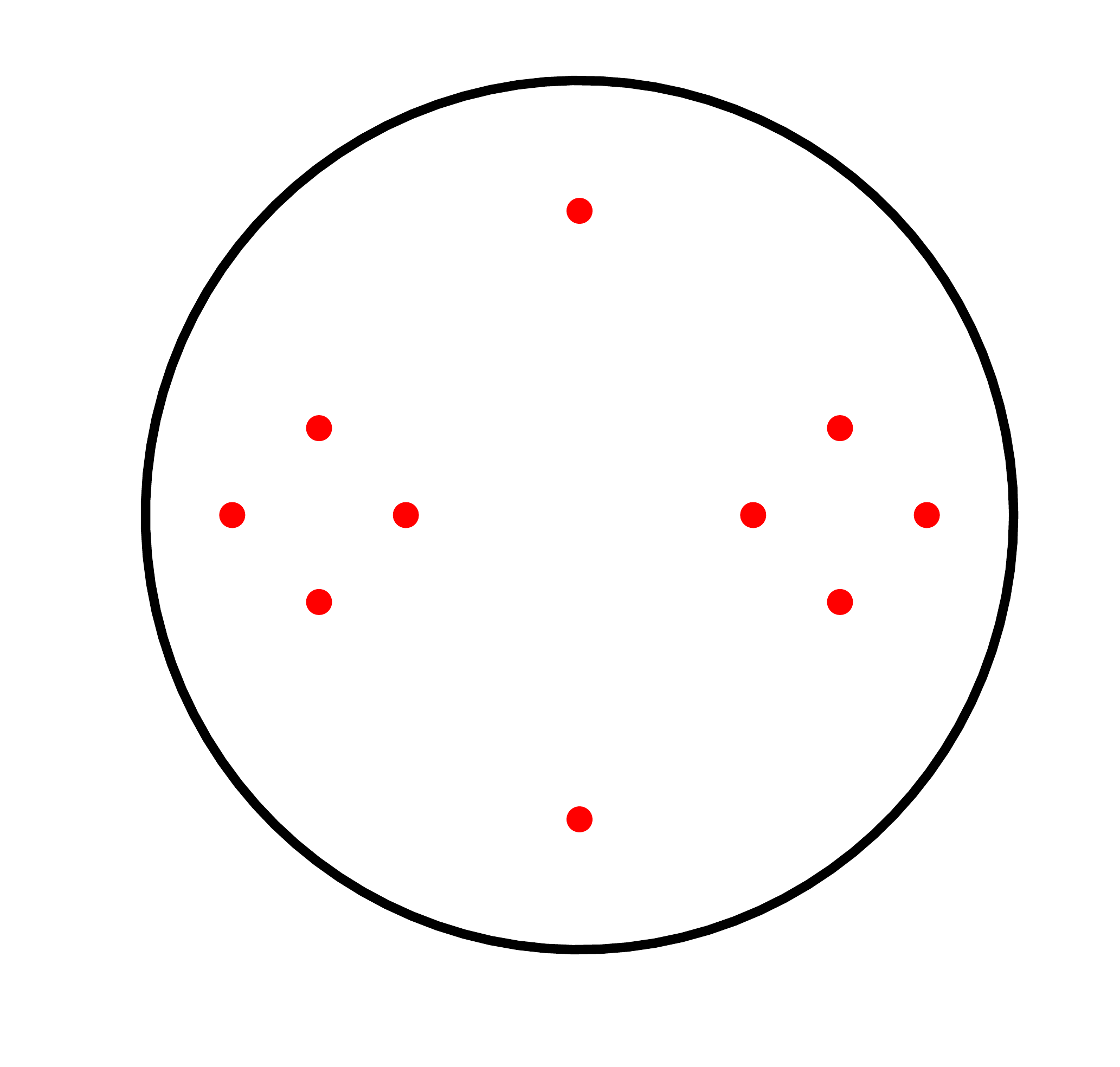}
         \caption{Two rings of cells with two isolated cells}
        \label{fig:tworing}
    \end{subfigure}
    ~\quad
    \caption{HB boundaries (left panel) for $m=10$ cells for
        a pattern with two rings of cells and two isolated cells
        (right panel). The cell locations and permeability parameter
        sets for the influx rates $d_{1j}$ are given in
        Table~\ref{Table:CellLocation_ring} of
        Appendix~\ref{Cell_Location}. Four out of five of the
        permeability sets in Table~\ref{Table:CellLocation_ring} give
        HB boundaries in the $(D_0,\tau)$ plane. The thin solid curve
        (set II) has $d_1=0.3$ for the ring cells and $d_1=0.8$ for
        the isolated cells. The heavy solid curve has $d_1=0.3$ for
        the isolated cells, $d_1=0.8$ for one group of ring cells, and
        $d_1=0.4$ for the cells on the other ring (set III). The
        dash-dotted curve has $d_1=0.8$ for all cells (set IV), while
        the dotted curve has $d_1=0.3$ for all cells (set V). There is
        no HB boundary for set I where $d_1=0.3$ and $d_1=0.8$ for the
        isolated and ring cells, respectively. The remaining
        parameters are given in
        \eqref{Selkov_para}.} \label{Large:ring_pattern}
\end{figure}

In Fig.~\ref{fig:tworingHB}  we plot the  HB boundaries in  the $\tau$
versus $D_0$  parameter plane  for permeability  sets $II-V$  of Table
\ref{Table:CellLocation_ring}, as obtained by solving for the roots of
\eqref{large:hopf}  numerically.  There  is  no HB  boundary  in  this
parameter plane from \eqref{large:hopf}  for permeability set I, where
$d_1=0.8$ for the ring cells and  $d_1=0.3$ for the isolated cells. As
verified from a numerical winding  number computation, only within the
lobes spanned by the HB boundaries  is the steady-state unstable to an
oscillatory instability.  From the dotted and  dashed-dotted curves in
Fig.~\ref{fig:tworingHB} corresponding to where all the cells have the
common influx  rates $d_1=0.3$ or $d_1=0.8$,  respectively, we observe
that the instability  lobe is unbounded in $D_0$.  Therefore, when the
cells are all identical,  intracellular oscillations occur within some
finite  band  of  the  reaction-time parameter  $\tau$  for  all  bulk
diffusivities.  For identical cells  with $d_1=0.8$, the HB boundaries
in  Fig.~\ref{fig:tworingHB}  are  very   similar  to  that  shown  in
Fig.~\ref{Large:cluster_arbitrary} for the  arbitrary cell pattern and
the two-cluster  pattern.  For  identical cells  with the  larger cell
influx rate  $d_1=0.8$ we  observe from  Fig.~\ref{fig:tworingHB} that
the lower  threshold in  $\tau$ where oscillatory  instabilities first
occur is  smaller than when  the common  influx rate is  $d_1=0.3$. To
explain this  qualitatively, we observe  that $\tau$ decreases  as the
bulk  decay  rate $k_B$  increases  (see  \eqref{dim:param}).  With  a
larger bulk decay  rate, the bulk signal becomes less  diffuse and has
larger spatial gradients, which leads to a buildup of the bulk signal,
near the cells. As a result, when  there is a larger cell influx rate,
the  bulk  signal   can  more  readily  enter  the   cell  to  trigger
intracellular oscillations.

\begin{figure}[!ht]
  \centering
    \begin{subfigure}[b]{0.48\textwidth}
        \includegraphics[width=\textwidth,height=6.0cm]{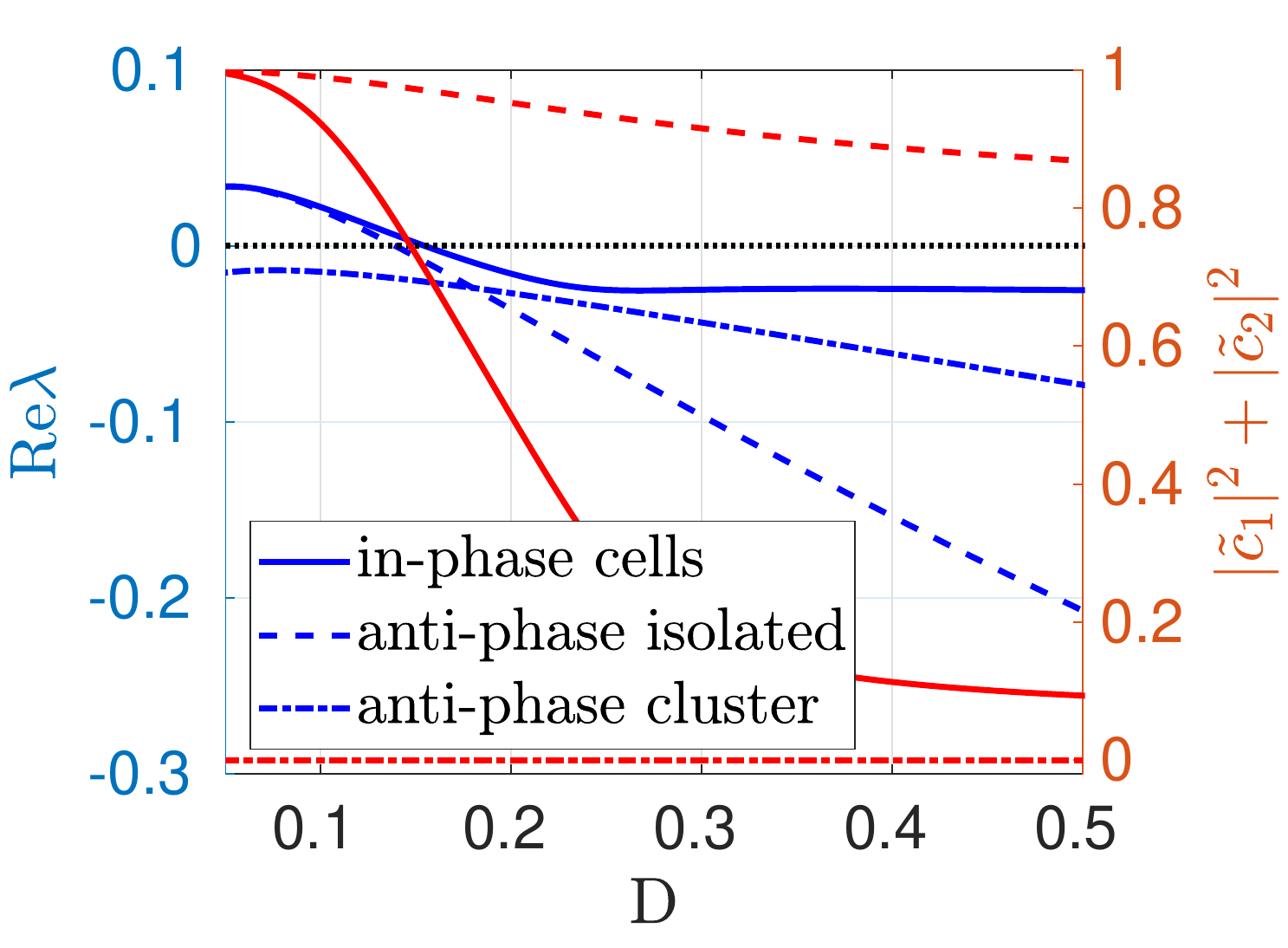}
        \caption{set I: $d_1=0.8$ (rings), $d_1=0.3$ (isolated)}
        \label{fig:cell_plot_I}
    \end{subfigure}
    \begin{subfigure}[b]{0.48\textwidth}
        \includegraphics[width=\textwidth,height=6.0cm]{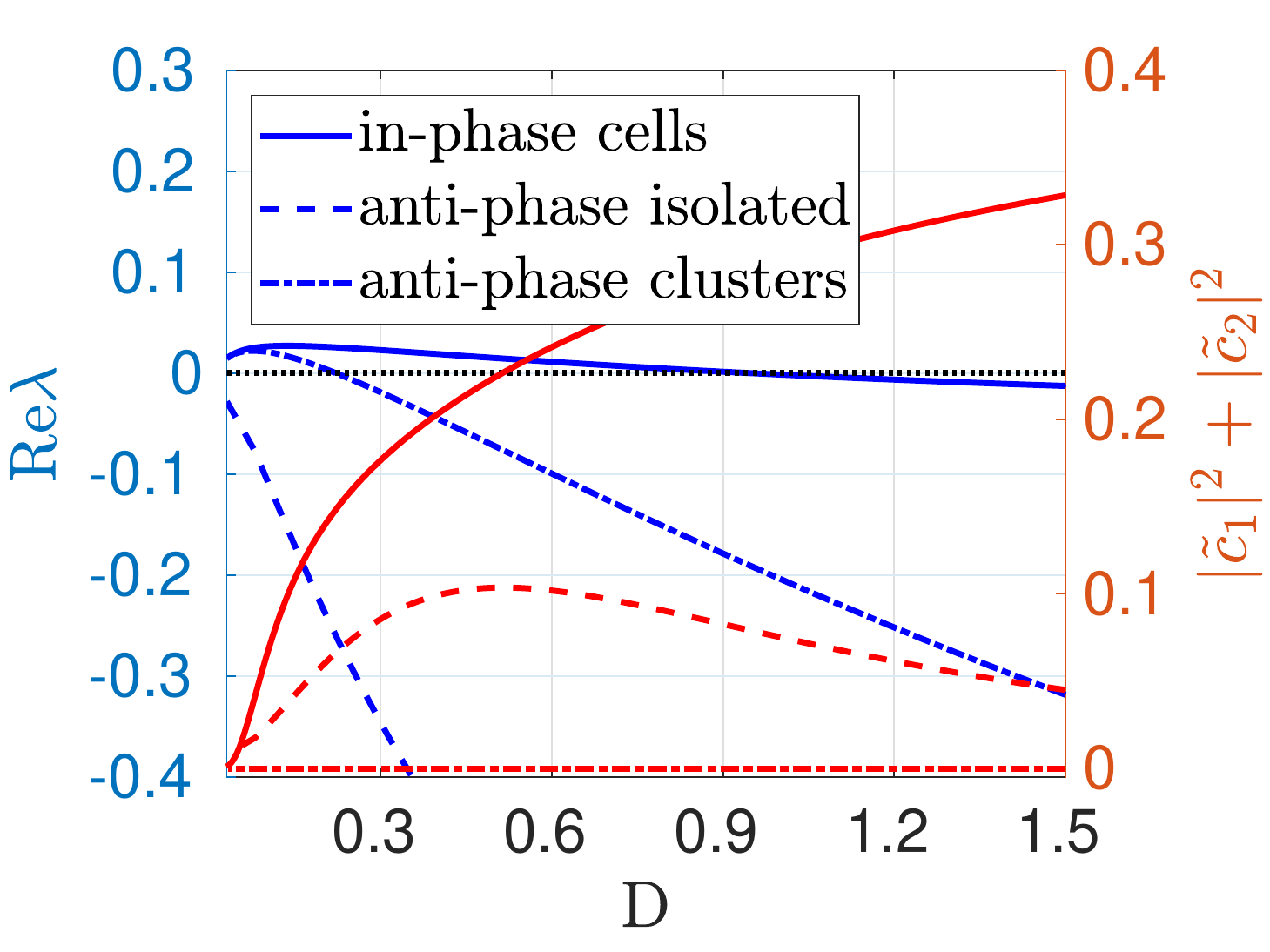}
         \caption{set II: $d_1=0.8$ (isolated), $d_1=0.3$ (rings)}
        \label{fig:cell_plot_II}
    \end{subfigure}
    ~\quad
    \caption{Spectral information from the eigenvalues
        $\lambda$ and normalized eigenvectors
        $\tilde{\pmb{c}}={\mathcal K}\pmb{c}$, with
        $\sum_{i}|\left({\mathcal K}\pmb{c}\right)_i|^2=1$, obtained
        from the GCEP matrix \eqref{Global_System} and
        \eqref{large:Kmatrix_1} for ${\mathcal K}$, as computed using
        root-finding on $\det{\mathcal M}(\lambda)=0$. The cell
        configuration is the two-ring and two-isolated cell pattern of
        Fig.~\ref{fig:tworing} for permeability set I (left panel) and
        permeability set II (right panel) for the fixed value
        $\tau=0.35$. Left $y$-axis: Real parts (blue curves) of the
        three eigenvalues of the GCEP matrix with the largest real
        parts versus $D$. Right $y$-axis: The sum
        $|\tilde{c}_1|^2+|\tilde{c}_2|^2$ (red curves) of the first
        two components of the normalized eigenvector
        ${\mathcal K}\pmb{c}$, which measures the relative amplitude
        of oscillations in the two isolated cells in comparison to the
        cells on the ring. The linetype (solid, dashed, dot-dashed) of
        the blue and red curves correspond to the same eigenpair of
        the GCEP matrix.}\label{small:ring_pattern}
\end{figure}

Since from Fig.~\ref{Large:ring_pattern} the steady-state is always
linearly stable for permeability set I when $D={D_0/\nu}$, this
motivates studying the full GCEP matrix ${\mathcal M}(\lambda)$ in
\eqref{full_gcep} that is valid for $D={\mathcal O}(1)$.  In the left
and right panels of Fig.~\ref{small:ring_pattern} we plot spectral
information versus $D$, for fixed $\tau=0.35$, obtained from the roots
of $\mbox{det}{\mathcal M}(\lambda)=0$ for the two-ring and isolated
cell pattern for permeability sets I (left panel) and II (right
panel). In the left $y$-axes of Fig.~\ref{small:ring_pattern} we plot
the real parts of the three eigenvalues of the GCEP matrix that have
the largest real parts.  These three eigenvalue branches are
associated with different spatial modes for the cells as indicated in
the figure legends and discussed below. In the right $y$-axes of
Fig.~\ref{small:ring_pattern} we plot the sum
$|\tilde{c}_1|^2+|\tilde{c}_2|^2$ (red curves) of the first two
components of the normalized eigenvector
$\tilde{\pmb{c}}\equiv {\mathcal K}\pmb{c}$, which measures the
relative amplitude of oscillations in the two isolated cells in
comparison to the cells on the ring (see \eqref{nstabform:c}). The
linetype (solid, dashed, dot-dashed) of the blue and red curves in
Fig.~\ref{small:ring_pattern} correspond to the same eigenpair of the
GCEP matrix.
  
For permeability set I, where the influx rate $d_1=0.3$ on the
isolated cells is lower than that on the ring cells ($d_1=0.8$), we
observe from the solid and dashed lines in Fig.~\ref{fig:cell_plot_I}
that there can be two unstable modes when $D$ is small enough. On the
range $D<0.138$, there is an unstable mode given by the
blue dashed curve where anti-phase oscillations occur for the two
isolated cells (cells 1 and 2), which has a much higher amplitude than
for the ring cells. On the two rings, cells 4 and 6 as well as cells 8
and 10 oscillate out of phase. The other ring cells (cells 3, 5, 7,
and 9) are essentially quiescent for this mode. Due to the low influx
rate into the isolated cells, the isolated cells communicate
imperfectly with their images across the domain boundary when $D$ is
small, which leads to the anti-phase instability.  Observe that as $D$
increases above $D\approx 0.138$ this mode becomes stable and the
amplitude in the isolated cells decreases (decreasing dashed red
curve) since the bulk signal near the isolated cells is washed
away. The other possible unstable mode, given by the blue solid curve
in Fig.~\ref{fig:cell_plot_I}, is one for which the two isolated cells
are in-phase. The ring cells, which have smaller amplitude
oscillations than do the isolated cells, are all roughly in phase and
have only a slight phase difference with the isolated cells. This
in-phase mode is unstable only for $D<0.155$. We remark that this low
stability threshold value of $D$ is consistent with the observations
in Fig.~\ref{fig:tworingHB} that the steady-state is always linearly
stable for $D={\mathcal O}(\nu^{-1})\gg 1$. Finally, the dashed-dotted
blue curve in Fig.~\ref{fig:cell_plot_I} corresponds to a linearly
stable mode where the isolated cells are quiescent, with each ring
having cell oscillations that are (roughly) synchronized in amplitude
and phase. However, for this mode there is a large phase shift between
the oscillations in the two rings clusters. With the larger influx
rate for the ring cells, there is effective communication between the
two spatially segregated rings as $D$ increases, which precludes any
anti-phase instability between the two ring clusters.

  \begin{figure}[!ht]
  \centering
    \begin{subfigure}[b]{0.40\textwidth}
      \includegraphics[width=\textwidth,height=4.2cm]{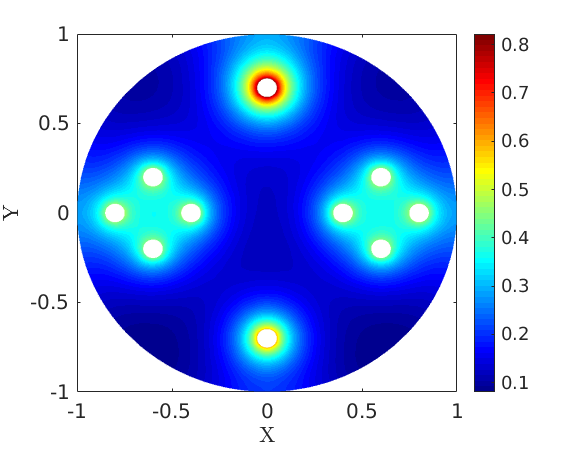}
      \caption{Surface plot at $t=400$}
      \label{fig:flexpde:ring_surf_I}
    \end{subfigure}
    \begin{subfigure}[b]{0.35\textwidth}
      \includegraphics[width=\textwidth,height=4.2cm]{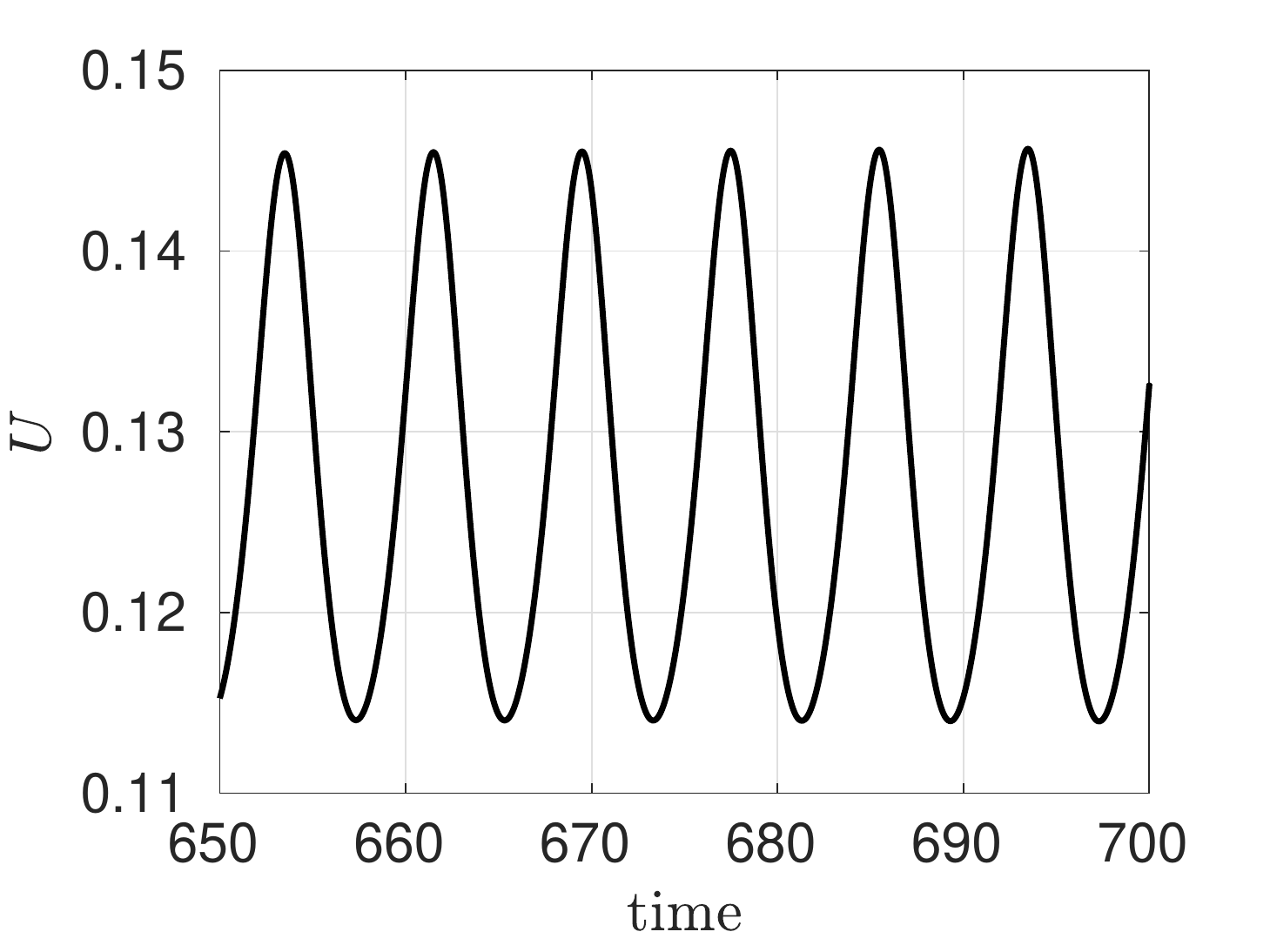}
        \caption{$U$ at ${\pmb x}=(0,0)$ (FlexPDE)} 
    \label{fig:flexpde:ringI_bulk}
    \end{subfigure}\\
    \begin{subfigure}[b]{0.32\textwidth}
      \includegraphics[width=\textwidth,height=4.2cm]{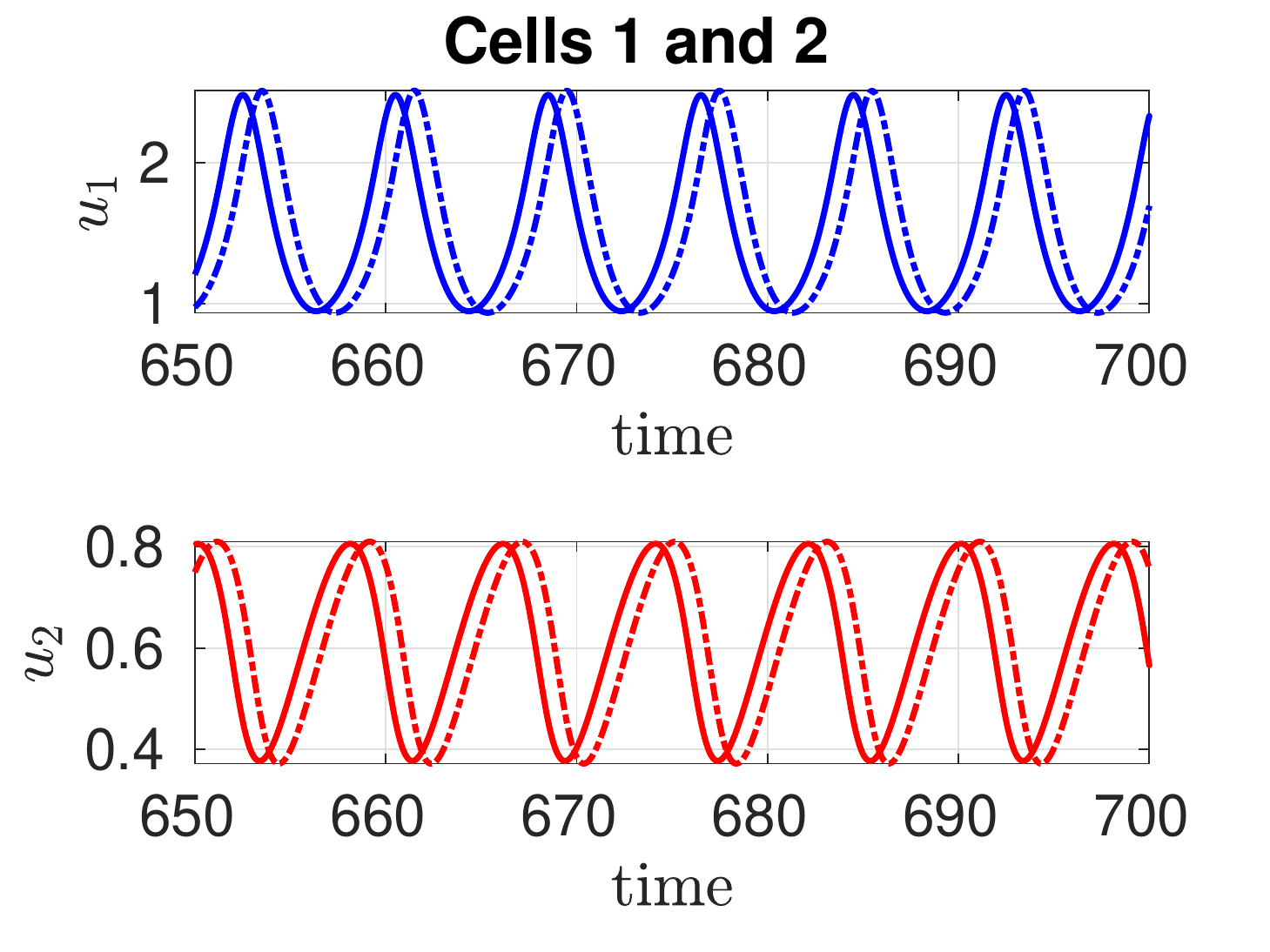}
      \caption{$u_1,u_2$ isolated cells 1 and 2 (FlexPDE)}
    \label{fig:flexpde:ringI_cell12}
    \end{subfigure}
    \begin{subfigure}[b]{0.32\textwidth}
      \includegraphics[width=\textwidth,height=4.2cm]{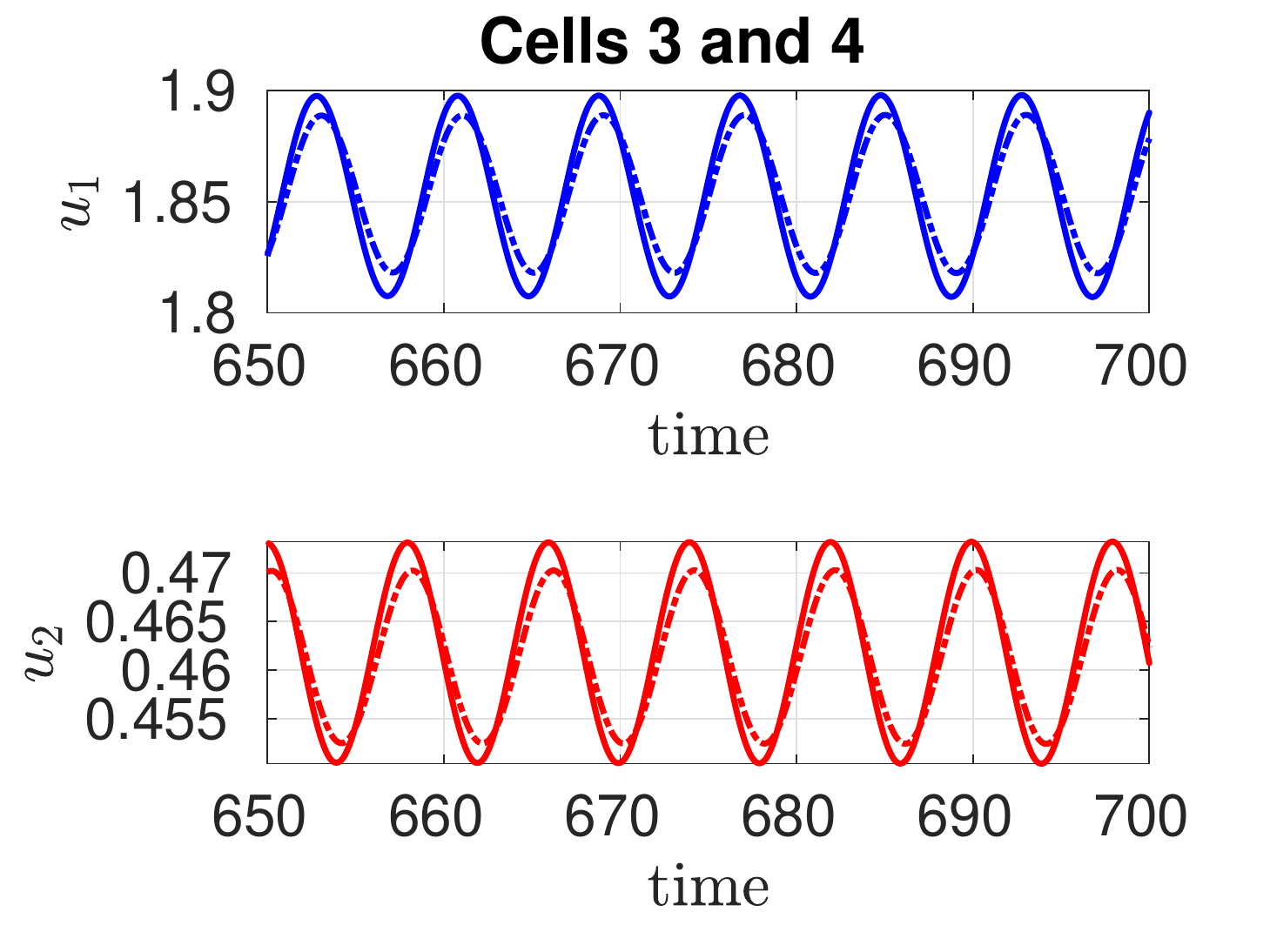}
      \caption{$u_1,u_2$ cells 3 and 4 (left ring)}
    \label{fig:flexpde:ringI_cell34}
  \end{subfigure}
    \begin{subfigure}[b]{0.32\textwidth}
      \includegraphics[width=\textwidth,height=4.2cm]{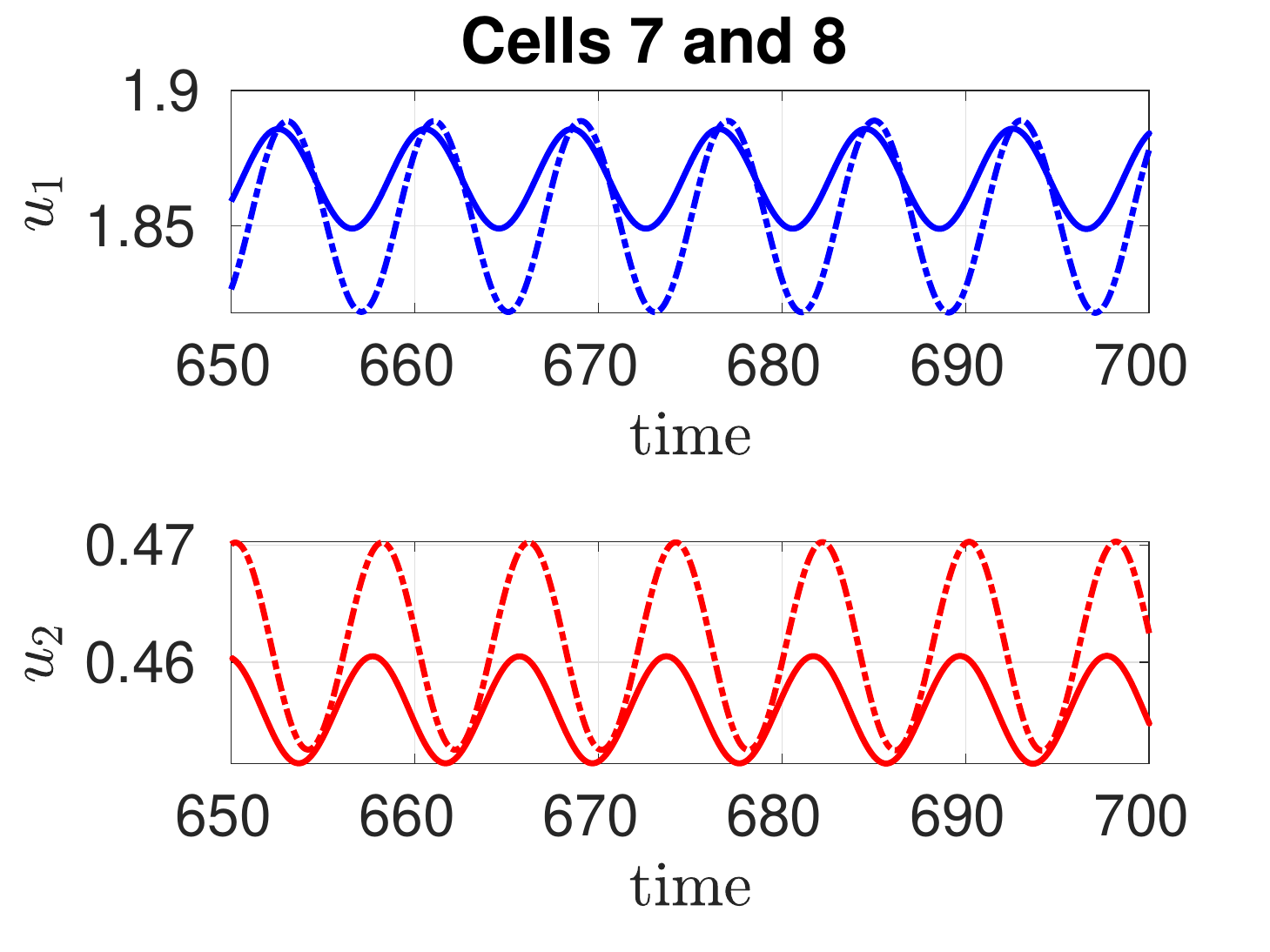}
      \caption{$u_1,u_2$ cells 7 and 8 (right ring)}
    \label{fig:flexpde:ringI_cell78}
    \end{subfigure}
  \vspace*{-1ex}
  \caption{FlexPDE numerical results computed from the
      PDE-ODE model \eqref{DimLess_bulk} for $m=10$ cells, arranged in
      two rings with two isolated cells (see Fig.~\ref{fig:tworing})
      when $D=0.05$ and $\tau=0.35$ for the permeability set I and
      cell locations given in Table
      \ref{Table:CellLocation_ring}. Other parameters are given in
      \eqref{Selkov_para}. The linear stability predictions are given
      in Fig.~\ref{fig:cell_plot_I}.} \label{fig:ring_permaI}
\end{figure}

\begin{figure}[!ht]
  \centering
    \begin{subfigure}[b]{0.40\textwidth}
      \includegraphics[width=\textwidth,height=4.2cm]{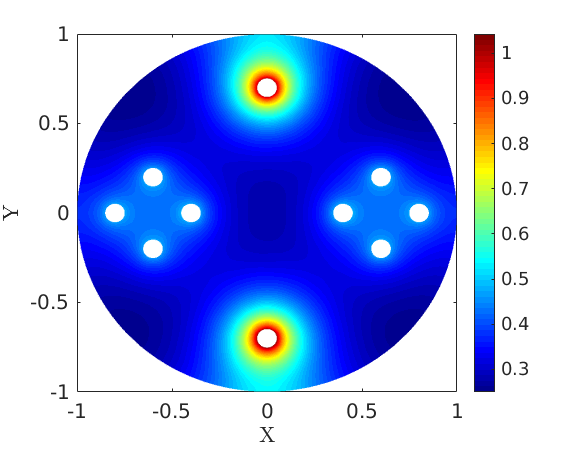}
      \caption{Surface plot at $t=400$}
      \label{fig:flexpde:ring_surf_ID13}
    \end{subfigure}
    \begin{subfigure}[b]{0.35\textwidth}
      \includegraphics[width=\textwidth,height=4.2cm]{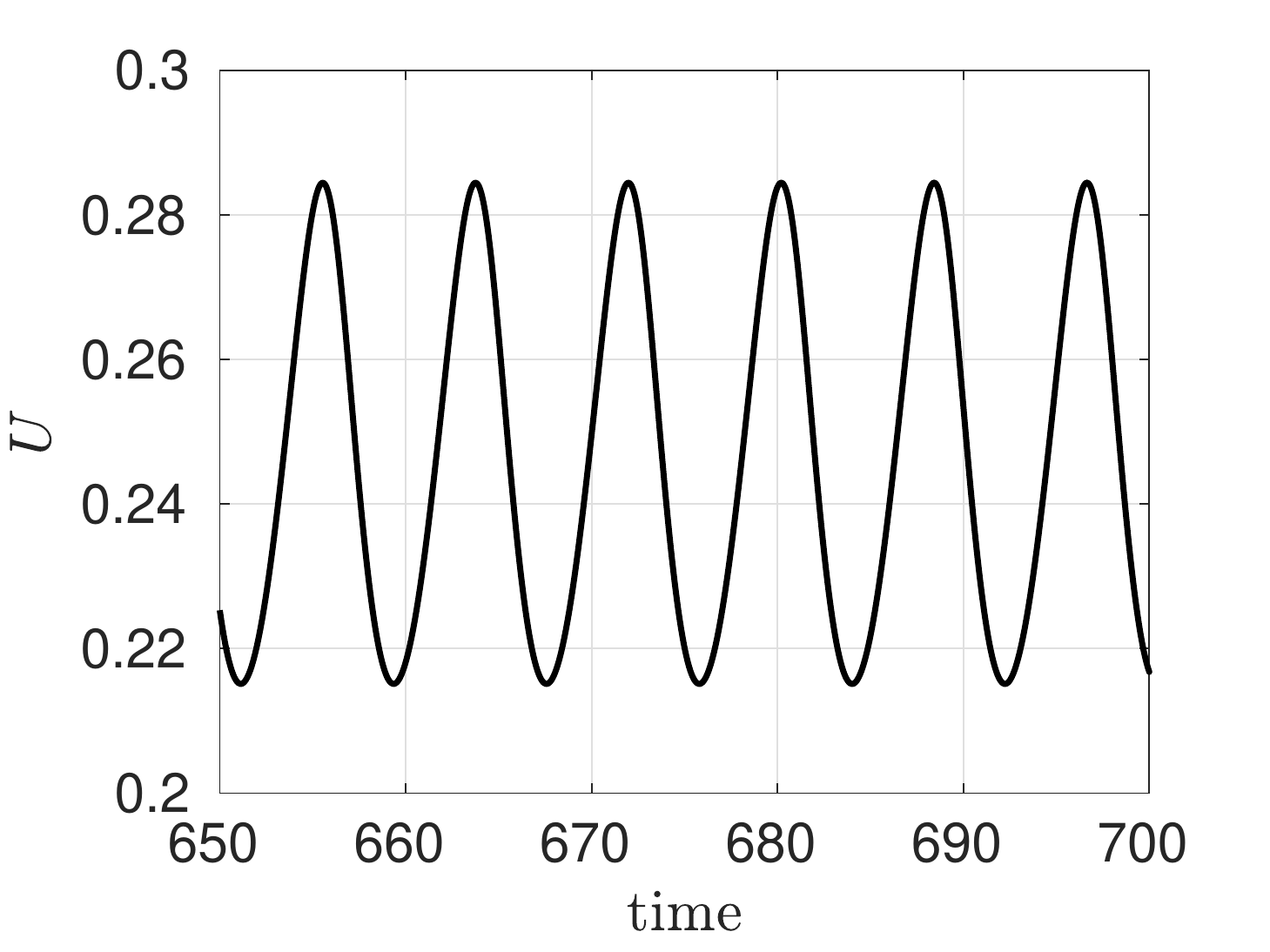}
        \caption{$U$ at ${\pmb x}=(0,0)$ (FlexPDE)} 
    \label{fig:flexpde:ringI_bulkD13}
    \end{subfigure}\\
    \begin{subfigure}[b]{0.32\textwidth}
      \includegraphics[width=\textwidth,height=4.2cm]{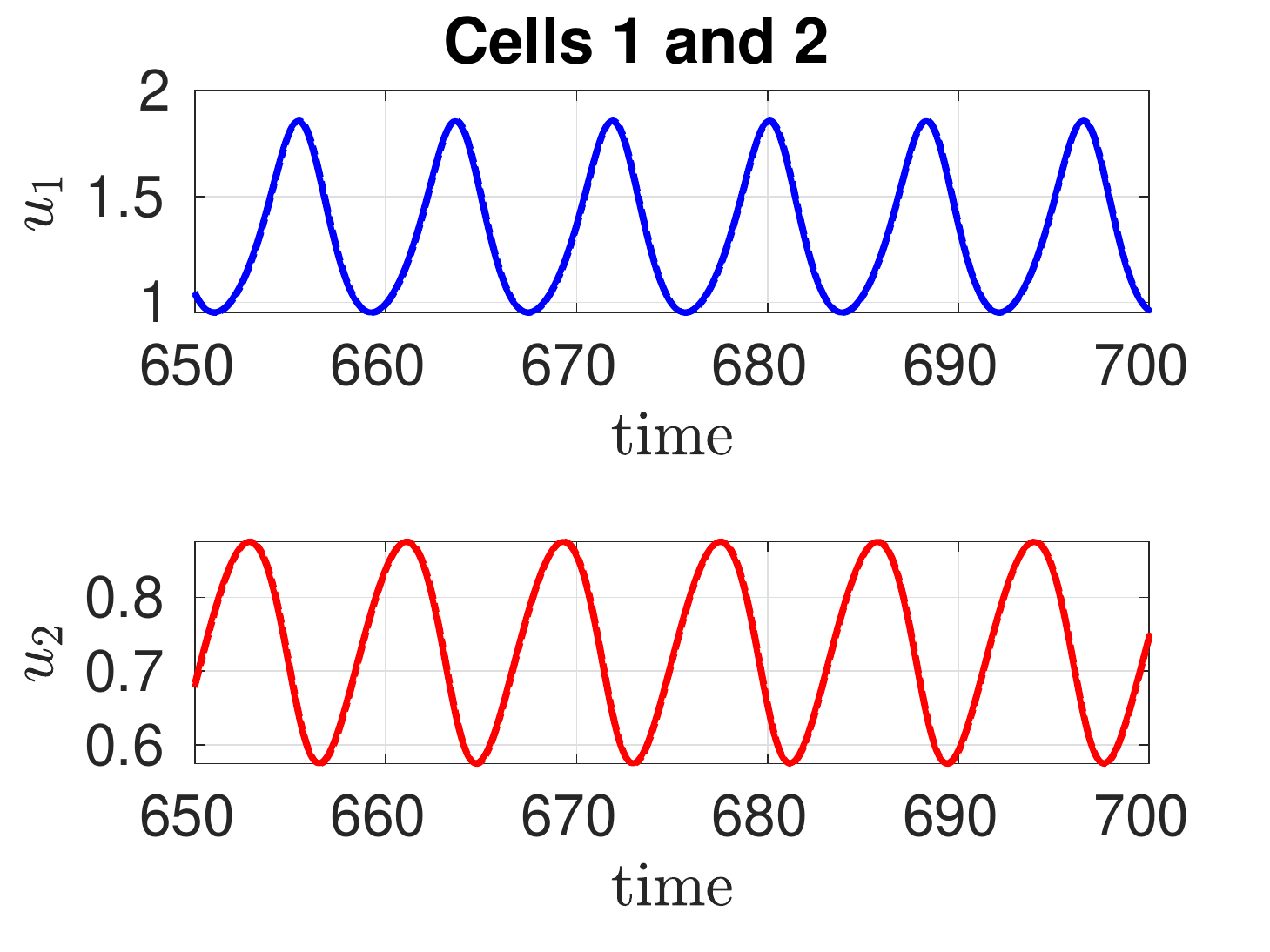}
      \caption{$u_1,u_2$ isolated cells 1 and 2 (FlexPDE)}
    \label{fig:flexpde:ringI_cell12D13}
    \end{subfigure}
    \begin{subfigure}[b]{0.32\textwidth}
      \includegraphics[width=\textwidth,height=4.2cm]{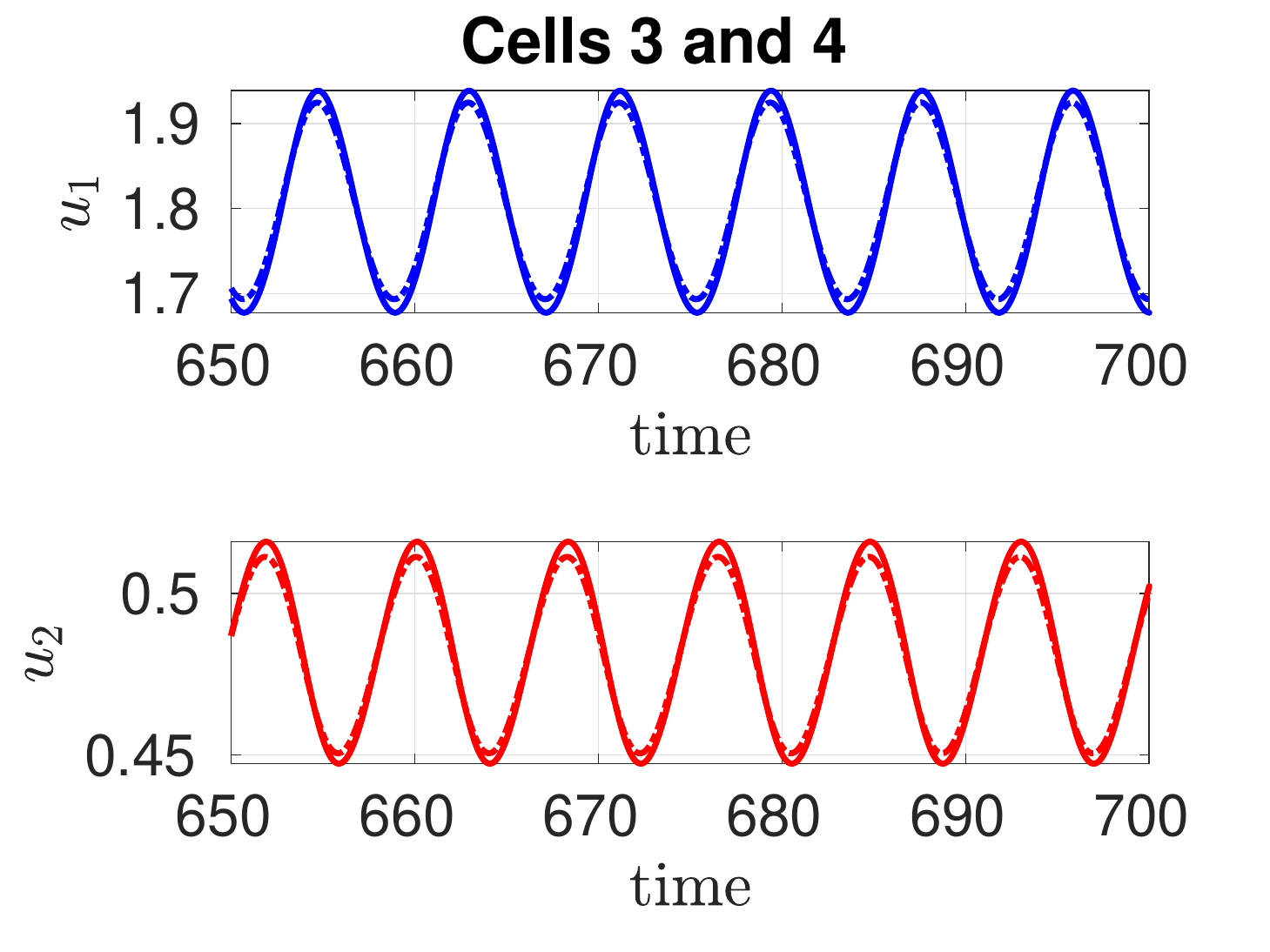}
      \caption{$u_1,u_2$ cells 3 and 4 (left ring)}
    \label{fig:flexpde:ringI_cell34D13}
  \end{subfigure}
    \begin{subfigure}[b]{0.32\textwidth}
      \includegraphics[width=\textwidth,height=4.2cm]{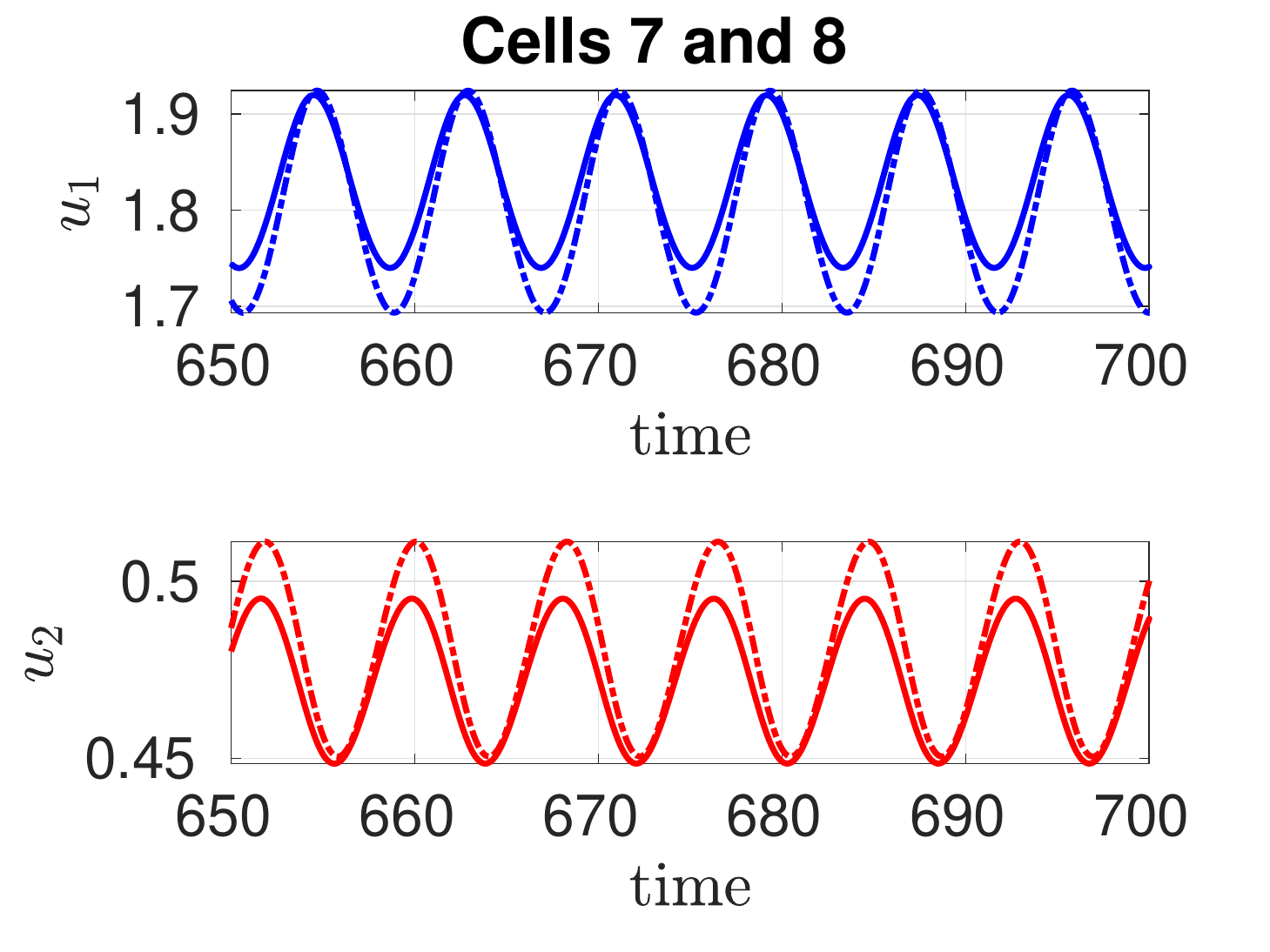}
      \caption{$u_1,u_2$ cells 7 and 8 (right ring)}
    \label{fig:flexpde:ringI_cell78D13}
    \end{subfigure}
  \vspace*{-1ex}
  \caption{Same caption as in Fig.~\ref{fig:ring_permaI}
      except that now $D$ is increased to $D=0.13$. Observe that
      there is large in-phase signaling gradient near the isolated
      cells in the surface plot (a), and that the isolated cells have
      now synchronized their dynamics. The oscillations in the ring
      cells are more synchronized than when $D=0.05$, but still have
      smaller amplitude for $u_1$ than do the isolated cells.}
  \label{fig:ring_permaID13}
\end{figure}

As a test of the predictions of the linear stability theory, as
summarized by Fig.~\ref{fig:cell_plot_I}, in
Fig.~\ref{fig:ring_permaI} and Fig.~\ref{fig:ring_permaID13} we show
FlexPDE results for \eqref{DimLess_bulk} for permeability set I when
$D=0.05$ and $D=0.13$, respectively. For $D=0.05$ where both the
in-phase and anti-phase modes (solid and dashed curves in
Fig.~\ref{fig:cell_plot_I}) are unstable with comparable growth rates,
we observe from
Figs.~\ref{fig:flexpde:ringI_cell12}--\ref{fig:flexpde:ringI_cell78}
that, as predicted, the intracellular dynamics in cells 1 and 2 are
not in-phase and that the amplitude of oscillations in the ring cells
is much smaller than in the isolated cells. The strong anti-phase bulk
signaling gradient at the isolated cells, as shown in
Fig.~\ref{fig:flexpde:ring_surf_I}, is also consistent with the linear
stability theory. In contrast, when $D=0.13$, the in-phase mode is the
dominant instability as seen from Fig.~\ref{fig:cell_plot_I}. For this
larger value of $D$, we observe from
Fig.~\ref{fig:flexpde:ring_surf_ID13} and
Fig.~\ref{fig:flexpde:ringI_cell12D13} that the bulk signaling
gradient and the intracellular oscillations are now in-phase at the
two isolated cells.  Moreover, by comparing
Figs.~\ref{fig:ring_permaI} and \ref{fig:ring_permaID13}, we observe
that the oscillations within the ring cells are more synchronized and
have a larger amplitude, while the isolated cells have a smaller
amplitude, when $D=0.13$ as compared to when $D=0.05$. These
observations are all consistent with the linear stability predictions
shown in Fig.~\ref{fig:cell_plot_I}.

\begin{figure}[!ht]
  \centering
    \begin{subfigure}[b]{0.40\textwidth}
      \includegraphics[width=\textwidth,height=4.2cm]{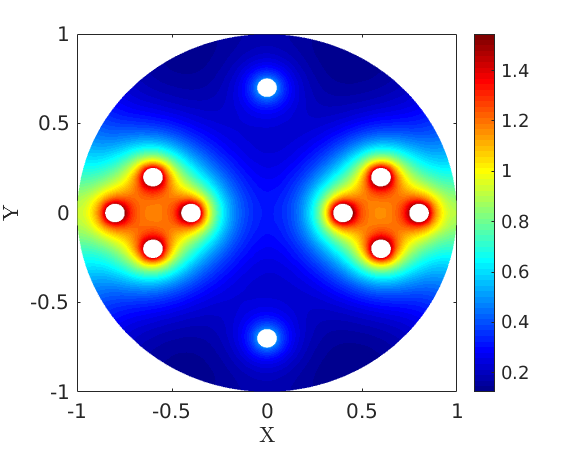}
      \caption{Surface plot at $t=400$}
      \label{fig:flexpde:ring_surf_II}
    \end{subfigure}
    \begin{subfigure}[b]{0.35\textwidth}
      \includegraphics[width=\textwidth,height=4.2cm]{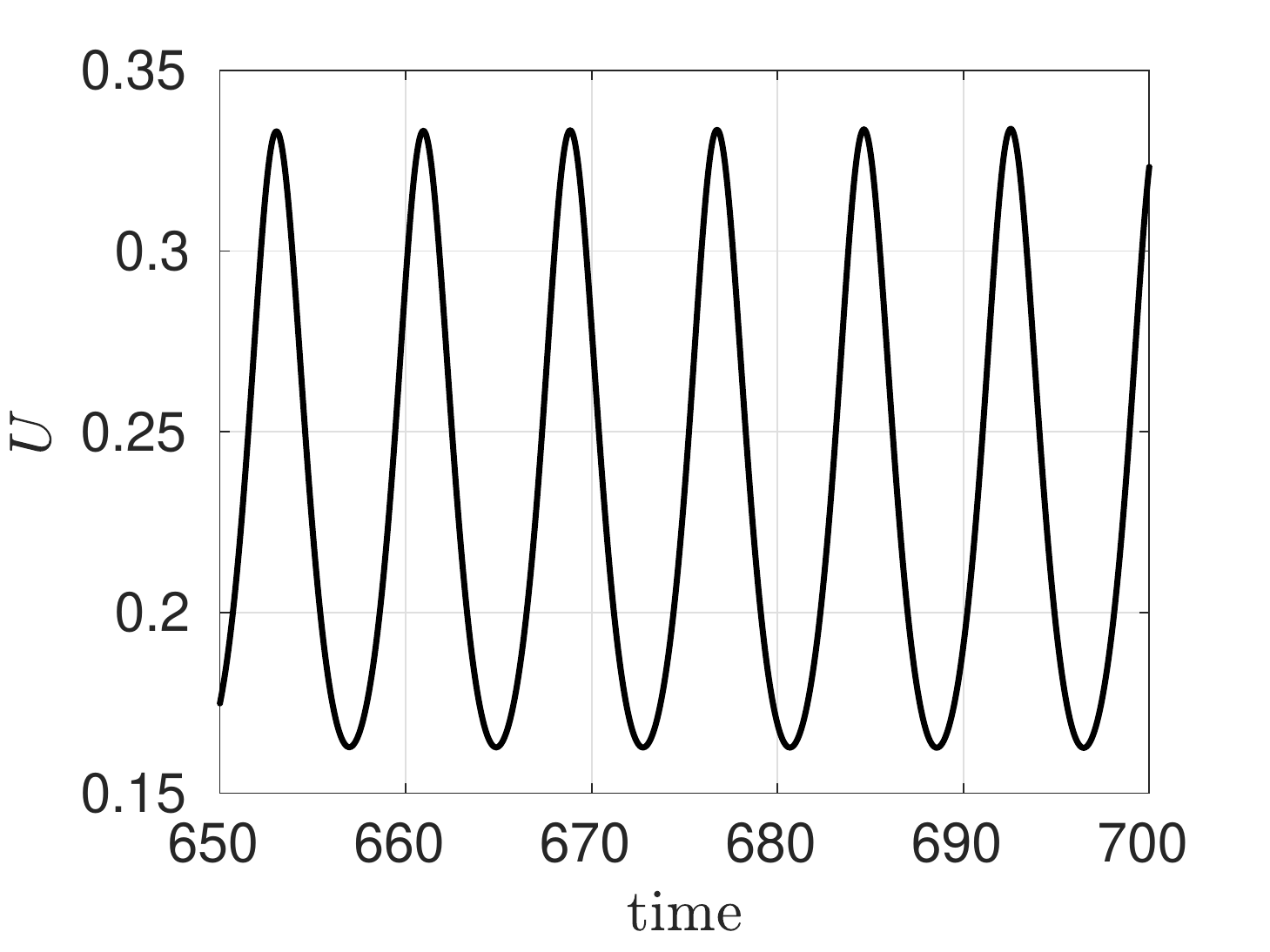}
        \caption{$U$ at ${\pmb x}=(0,0)$ (FlexPDE)} 
    \label{fig:flexpde:ringII_bulk}
    \end{subfigure}\\
    \begin{subfigure}[b]{0.32\textwidth}
      \includegraphics[width=\textwidth,height=4.2cm]{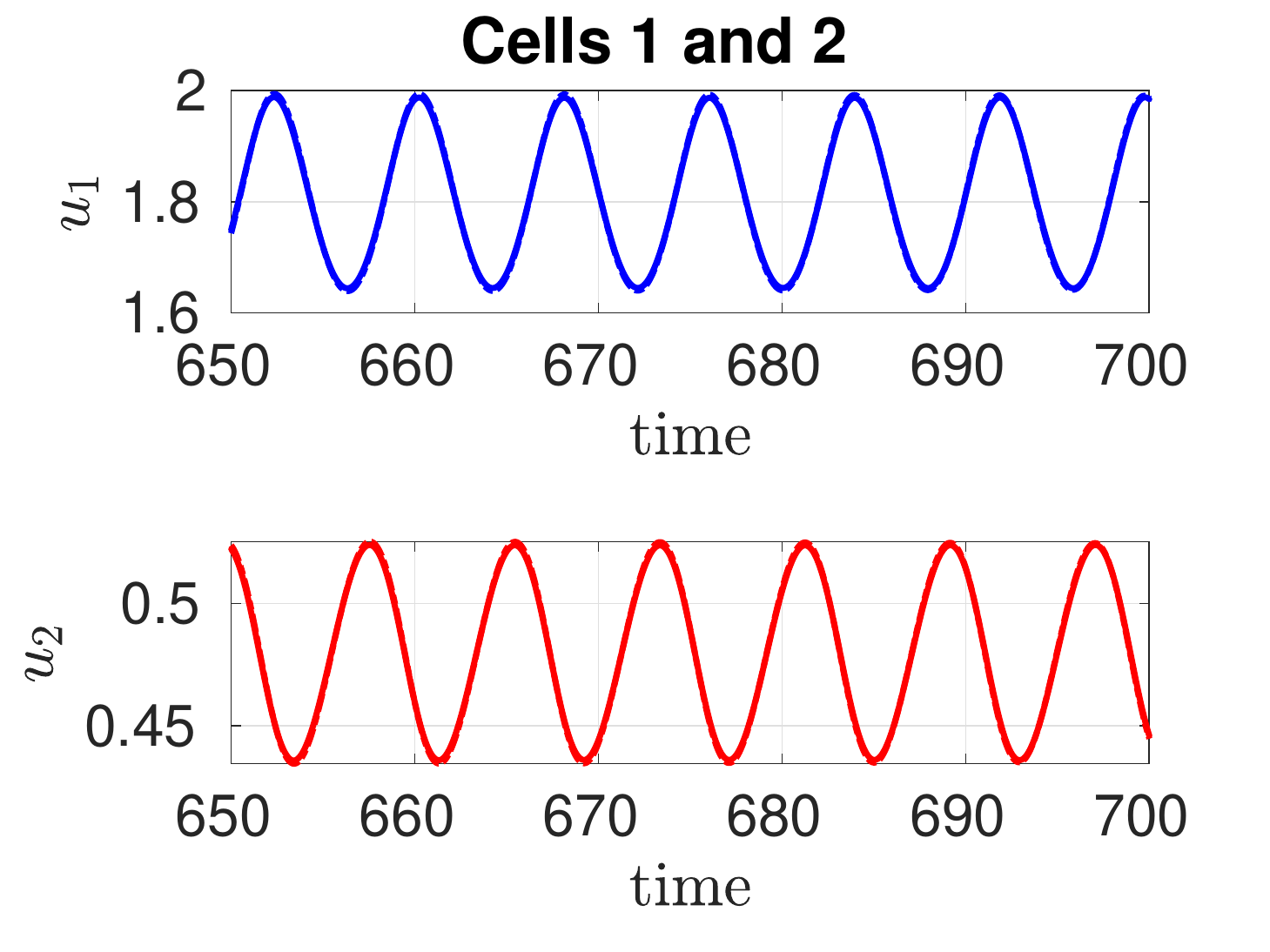}
      \caption{$u_1,u_2$ isolated cells 1 and 2 (FlexPDE)}
    \label{fig:flexpde:ringII_cell12}
    \end{subfigure}
    \begin{subfigure}[b]{0.32\textwidth}
      \includegraphics[width=\textwidth,height=4.2cm]{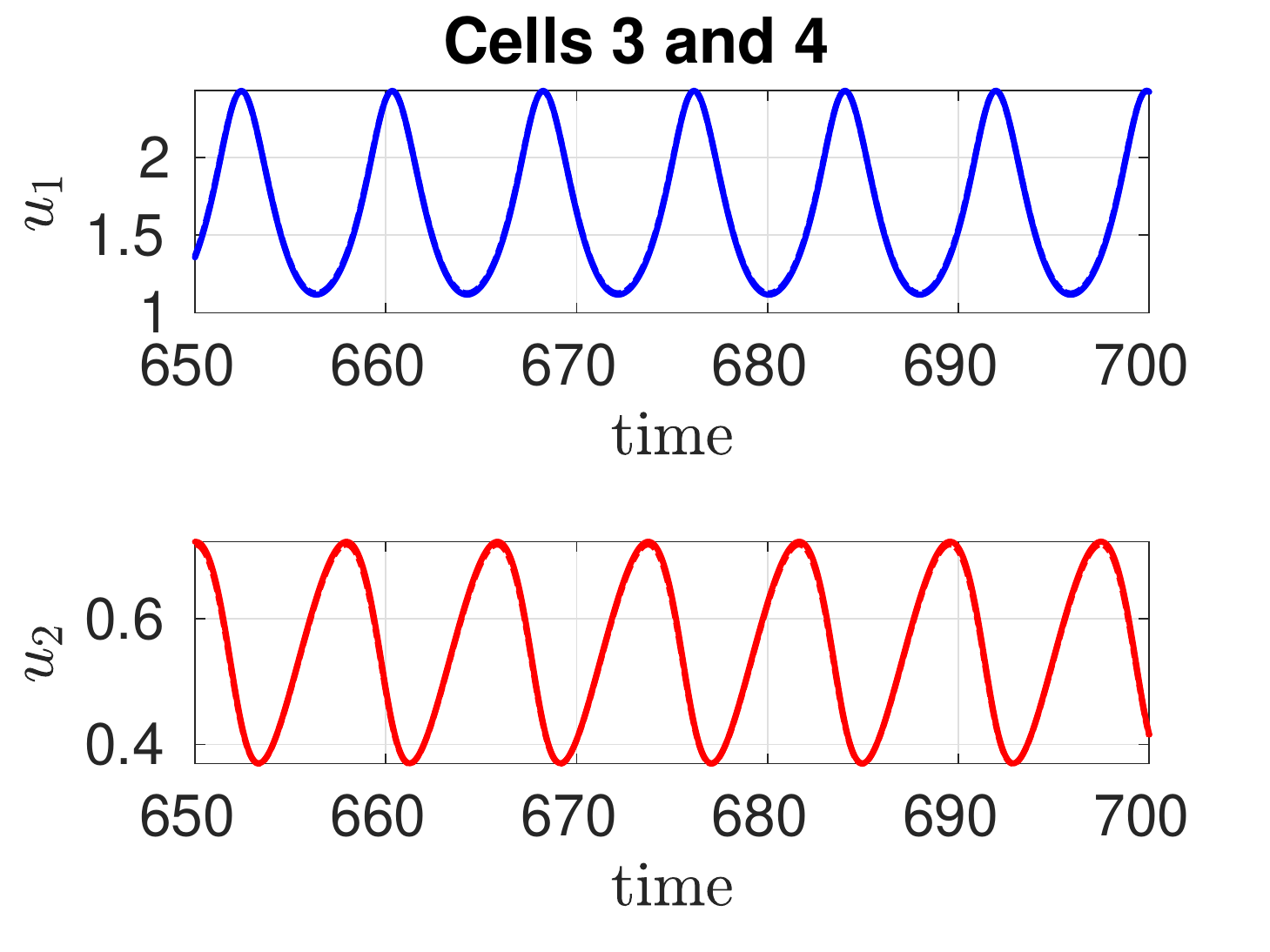}
      \caption{$u_1,u_2$ cells 3 and 4 (left ring)}
    \label{fig:flexpde:ringII_cell34}
  \end{subfigure}
    \begin{subfigure}[b]{0.32\textwidth}
      \includegraphics[width=\textwidth,height=4.2cm]{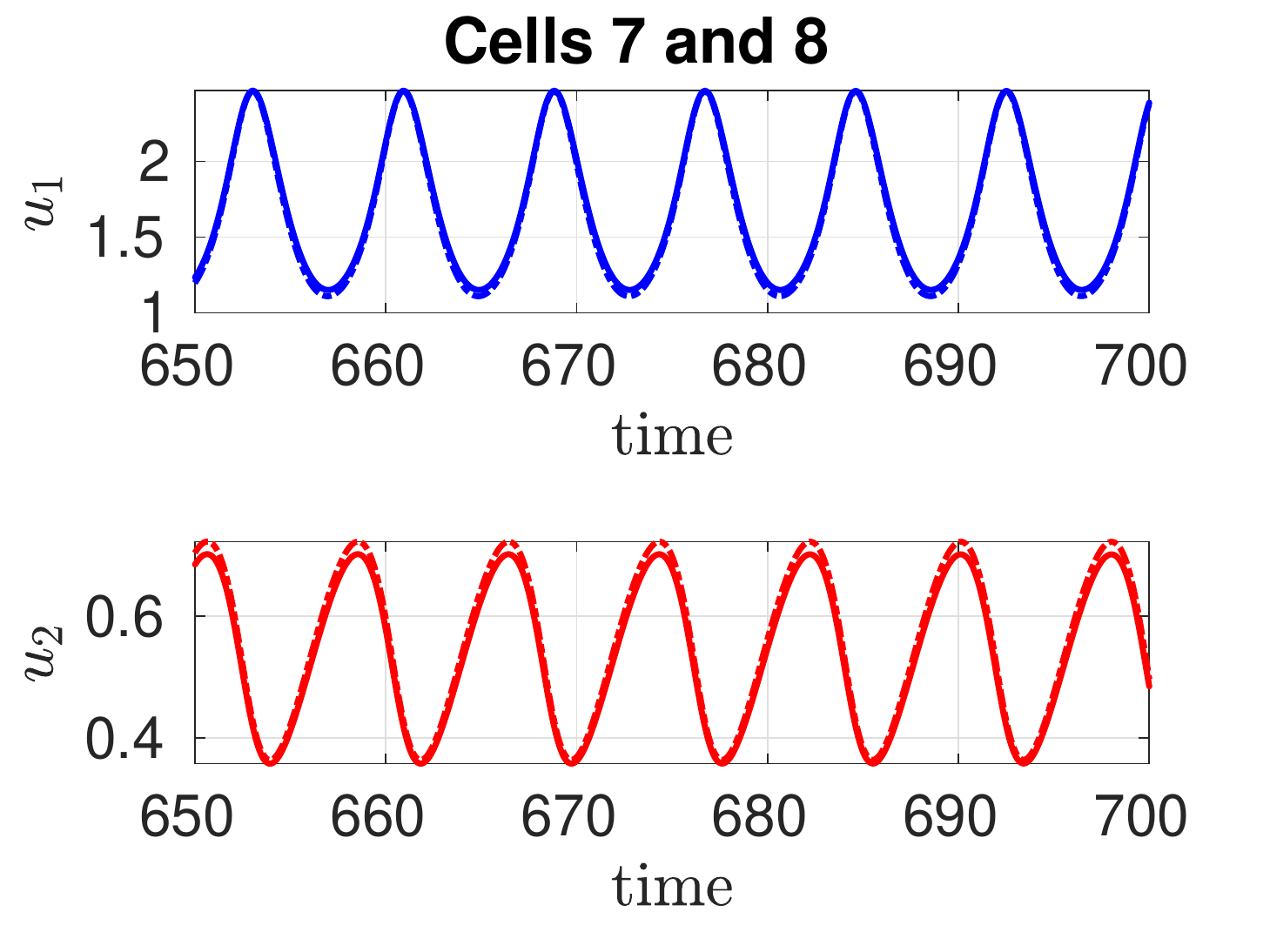}
      \caption{$u_1,u_2$ cells 7 and 8 (right ring)}
    \label{fig:flexpde:ringII_cell78}
    \end{subfigure}
  \vspace*{-1ex}
  \caption{FlexPDE numerical results computed from the
      PDE-ODE model \eqref{DimLess_bulk} for $m=10$ cells, arranged in
      two rings with two isolated cells (see Fig.~\ref{fig:tworing})
      when $D=0.05$ and $\tau=0.35$ for the permeability set II and
      cell locations given in Table
      \ref{Table:CellLocation_ring}. Other parameters are given in
      \eqref{Selkov_para}. The linear stability predictions are given
      in Fig.~\ref{fig:cell_plot_II}.} \label{fig:ring_permaII}
\end{figure}

In Fig.~\ref{fig:cell_plot_II} we show the corresponding spectral
plot, as computed from the GCEP \eqref{full_gcep}, for permeability
set II where the isolated cells have a higher influx rate
  $d_1=0.8$ than do the ring cells $d_1=0.3$. In contrast to the case
for permeability set I, we now observe that in addition to the usual
in-phase mode that is unstable for $D<0.95$, the only other possible
unstable mode corresponds to an anti-phase instability between the two
ring clusters. Since the influx rate into the ring cells has been
decreased, this unstable mode is due to a relatively poorer
communication between the two rings when the bulk diffusivity is
small. This anti-phase cluster mode is unstable for $D<0.22$.  In
contrast, the anti-phase mode for the isolated cells, which was
unstable for permeability set I when $D$ is small, is now always
linearly stable for set II. Another key feature from
Fig.~\ref{fig:cell_plot_II} is that, for the in-phase mode, the
intracellular oscillations in the isolated cells are much smaller when
$D$ is small than that in Fig.~\ref{fig:cell_plot_II}. However, we
observe from the increasing solid red curve in
Fig.~\ref{fig:cell_plot_II} that the amplitude of the oscillations in
the isolated cells grows as $D$ increases. The interpretation of this
observation is that the secretion from the ring cells can more easily
diffuse across the domain to the isolated cells, where it is readily
absorbed, as $D$ is increased. In Fig.~\ref{fig:ring_permaII} we show
results from FlexPDE simulations of \eqref{DimLess_bulk} for
permeability set II when $D=0.05$. The strong signalling gradient near
the ring cells where the influx rate is small, the smaller amplitude
oscillations in the isolated cells than in the rings, and the slight
phase shift between the oscillations in the two ring clusters (cells 3
and 4 versus cells 7 and 8) are all consistent with the predictions in
Fig.~\ref{fig:cell_plot_II} of the linear stability theory.

Finally, to obtain some analytical insight into the possibility of an
unstable mode consisting of anti-phase oscillations in the isolated
cells together with quiescent ring cells, we implement the degenerate
perturbation theory on the reduced GCEP matrix \eqref{large:GCEP} from
the $D={D_0/\nu}$ regime. By using the simple criterion in
\eqref{degen:quad_J2_crit}, based on the quadratic equation
\eqref{degen:quad_J2}, we calculate for permeability set I that this
anti-phase mode for the isolated cells is unstable when $\tau=0.35$
only when $D_0<0.0817$. This corresponds to $D<0.245$ using
$D={D_0/\nu}$ with $\nu={-1/\log\varepsilon}$ and
$\varepsilon=0.05$. Although this simple criterion
\eqref{degen:quad_J2_crit} predicts the existence of an unstable
anti-phase mode for the isolated cells, the threshold value is not so
accurate (see Fig.~\ref{fig:cell_plot_I}) owing to the fact that $D$
is not asymptotically large. In contrast, for permeability set II, we
calculate from the criterion \eqref{degen:quad_J2_crit} that the
anti-phase mode for the isolated cells is always stable, as is
consistent with Fig.~\ref{fig:cell_plot_II}.

\setcounter{equation}{0}
\setcounter{section}{5}
\section{Discussion and Outlook}\label{sec:discussion}

We have analyzed diffusion-sensing behavior for the coupled cell-bulk
PDE-ODE system \eqref{DimLess_bulk} that was used to model the
switch-like onset of intracellular oscillations for a collection of
heterogeneous cells, as mediated through a passive bulk diffusion
field with finite bulk diffusivity $D$. For the case of Sel'kov
reaction kinetics, we have studied how the onset of intracellular
oscillations depends on the spatial configuration of cells, on the
membrane permeability parameters, and on the triggering effect of a
single ``defective'' cell with a different kinetic parameter. In
contrast, in \cite{smjw_quorum} QS behavior resulting from
increase in cell density were studied using only the ODE system
\eqref{ode:well_mixed} with global coupling, which pertains to the
well-mixed regime of infinite bulk diffusivity.

There are several open theoretical and computational challenges that
should be explored for \eqref{DimLess_bulk}. One key numerical issue
concerns developing efficient and well-conditioned numerical
techniques to implement the full linear stability theory based on the
root-finding condition $\mbox{det}({\mathcal M}(\lambda))=0$ for the
GCEP given in \eqref{full_gcep} for a large number, i.e.~$m\geq 100$,
of randomly located cells with arbitrary permeabilities. The solution
strategies for such nonlinear matrix eigenvalue problems are typically
restricted to matrices wih special structure, such as Hermitian
matrices, matrices with low-rank dependence on $\lambda$, or
matrices that are quadratic or rational in the eigenvalue parameter
$\lambda$ (cf.~\cite{guttel} and \cite{betcke}). In contrast, the GCEP
matrix in \eqref{full_gcep} is not Hermitian when $\lambda$ is
complex-valued, and is not of low-rank owing to its dependence on the
full eigenvalue-dependent Green's matrix ${\mathcal G}_\lambda$.

From a mathematical viewpoint, a second open direction would be to
develop a weakly nonlinear theory for the Hopf bifurcations of
steady-state solutions of \eqref{DimLess_bulk}. The numerical results
shown in \S \ref{SpecConfigSec} and \ref{sec:LargePopulation} have
suggested that the intracellular oscillations are supercritical for
the Sel'kov kinetics, and that their relative amplitude and phases
within the cells are well-predicted by the eigenvector of the GCEP
matrix ${\mathcal M}(\lambda)$ in \eqref{full_gcep}. Also of interest
would be to explore whether ODE phase-reduction techniques, such as
surveyed in \cite{sync1} and \cite{sync2}, can be extended to the
PDE-ODE setting of \eqref{DimLess_bulk}.

From a modeling perspective, the PDE-ODE system \eqref{Dim_bulk} is
well-suited for investigating triggered intracellular oscillations and
collective dynamics associated with specific microbial systems for
which detailed models of the signaling pathways are available, the
autoinducer is known, and where membrane permeabilities can be
estimated from biological data. Such calibrated biological models are
readily incorporated into the theoretical framework \eqref{Dim_bulk}.
In particular, it would be worthwhile to analyze \eqref{Dim_bulk} with
the detailed model of \cite{wolf} and \cite{henson} for the glycolytic
pathway in yeast cells and the model of \cite{mina} for bacterial
communication of {\em Escherichia coli} cells. For non-oscillatory QS
systems, it would be interesting to use the intracellular
Lux-signaling pathways of \cite{qs_jump} into our PDE-ODE model
\eqref{Dim_bulk} to analyze sudden jumps between small and large
amplitude steady-states for bistable QS systems as the cell density
increases past a saddle-node bifurcation value. Moreover,
\eqref{Dim_bulk} can be used to study the effect of spatial diffusion
on the parallel QS signaling pathways associated with the marine
bacterium {\em V. harveyi}, which were recently modeled in
\cite{bressloff} through an ODE well-mixed limit. Our analysis in \S
\ref{largeD_ODE} has shown that the dimensionless PDE-ODE system
\eqref{DimLess_bulk} can be reduced to a limiting ODE system only when
the bulk diffusivity satisfies $D={\mathcal O}(-\log\varepsilon)$,
where $\varepsilon$ is the dimensionless ratio of the cell radius to
the length scale of the domain. The classic ODE system for the
well-mixed limit corresponds to the regime
$D\gg {\mathcal O}(-\log\varepsilon)$.

{Finally, it would be worthwhile to extend the 2-D model
\eqref{Dim_bulk} to allow for two passive bulk diffusion fields, and
to consider 3-D domains, where the cell-cell interaction is weaker
than in 2-D owing to the rapid decay of the 3-D free-space Green's
function. By including two bulk species, and with a re-formulation of
the class of lattice-based models introduced in \cite{rauch} into the
framework of \eqref{Dim_bulk}, it should be possible to show
analytically that changes in the permeability parameters of a
collection of cells can induce a Turing or transcritical bifurcation
to a patterned steady-state for an activator-inhibitor system even
when the two bulk diffusing species have very similar diffusivities.
In a spatially homogeneous medium without cells, Turing bifurcations
only occur for activator inhibitor systems when there is a
sufficiently large diffusivity ratio, which is not typical for most
biological systems. The PDE-ODE framework of \eqref{Dim_bulk}, where
the small signaling compartments are modeled explicitly, also provides
an alternative approach for studying large-scale self-organized
structures that have previously been modeled through PDE-ODE lattice
dynamics in which the dynamically active ``cells'' are treated as
point sources restricted to lattice sites and where a discrete
Laplacian averaging over nearby lattice points replaces the Laplace
operator (cf.~\cite{ross_pde}, \cite{wolf2},
\cite{RosslerHeter}). Such lattice PDE-ODE systems have been used to
model traveling waves of yeast activation due to substrate addition of
glucose at a localized source (cf.~\cite{wolf2}), and the emergence of
spiral waves resulting from the coupling of Fitzhugh-Nagumo or Rossler
cell kinetics to discrete lattice-based diffusion
(cf.~\cite{RosslerHeter}, \cite{ross_pde}).}
 
\section*{Acknowledgements}\label{sec:ak}
Michael Ward gratefully acknowledges the financial support from the
NSERC Discovery grant program. We are grateful to Justin Tzou
(Macquarie U.) for his initial help with the FlexPDE simulations.

\begin{appendices}
\renewcommand{\theequation}{\Alph{section}.\arabic{equation}}
\setcounter{equation}{0}

\section{The reduced-wave Green's function for the unit disk}\label{app:green}

In this appendix we provide explicit expressions for the matrix
spectrum of \eqref{cen:glam_eig} corresponding to the matrix block of
the eigenvalue-dependent reduced-wave Green's matrix in
\eqref{ring:eig_gblock} representing the interactions of $m-1\geq 2$
cells on a ring of radius $r_0$, with $0<r_0<1$, concentric within the
unit disk. For notational convenience we define $N$ by $N=m-1$. The
$N$ cells on the ring of radius $r_0$ are centered at
$\pmb{x}_k = r_0 \left( \cos\psi_k,\sin\psi_k\right)^T$, where
$\psi_k\equiv {2\pi(k-1)/N}$ for $k=1,\ldots,N$. In the unit disk, the
reduced-wave Green's function $G_{\lambda}(\pmb{x};\pmb{\xi})$ and its
regular part, satisfying \eqref{EigGreen}, can be calculated using
separation of variables as (see equations (6.10) and (6.11) of
\cite{jia2016})
\begin{subequations} \label{cyclic:g}
\begin{align} \label{cyclic:Glam}
G_\lambda(\pmb{x};\pmb{\xi}) &= \frac{\,1}{\,2 \pi}
K_0\left(\varphi_\lambda|\pmb{x}-\pmb{\xi}|\right) -\frac{\,1}{\,2 \pi}
\sum^{\infty}_{n=0} \beta_n \cos\left(n(\psi-\psi_0)\right)
\frac{K^{\prime}_n(\varphi_\lambda)} {I^{\prime}_n(\varphi_\lambda)}
  I_n\left(\varphi_\lambda r\right) I_n\left(\varphi_\lambda \rho\right)\,,\\
 R_{\lambda}(\pmb{\xi}) &= \frac{1}{2\pi} \left[\log\left(2\sqrt{D}\right) -
  \gamma_e -\frac{1}{2}\log{\left(1+\tau\lambda\right)} \right]
 -\frac{\,1}{\,2 \pi} \sum^{\infty}_{n=0} \beta_n
 \frac{K^{\prime}_n(\varphi_\lambda)} {I^{\prime}_n(\varphi_\lambda)} \left[
   I_n\left(\varphi_\lambda \rho\right) \right]^2 \,, \label{cyclic:Rlam}
\end{align}
\end{subequations}
where $\varphi_{\lambda}\equiv \sqrt{(1+\tau \lambda)/D}$ is the principal
branch of $\varphi_{\lambda}$, $\gamma_e=0.5772$ is Euler's constant, and
$I_n(z)$ and $K_n(z)$ are the modified Bessel functions of the
first and second kind of order $n$, respectively. In \eqref{cyclic:g},
$\beta_0\equiv 1$, $\beta_n\equiv 2$ for $n\geq 1$, while
$\pmb{x}\equiv r(\cos\psi,\sin\psi)^T$, and $\pmb{\xi} \equiv
\rho(\cos\psi_0,\sin\psi_0)^T$. 

For a ring pattern, the $N\times N$ symmetric Green's matrix
${\mathcal G}_{\lambda
  N}$ is also cyclic and can be generated by a cyclic permutation of
its first row $\pmb{a}_\lambda \equiv
(a_{\lambda,1},\ldots,a_{\lambda,N})^T$, which is defined term-wise by
\begin{equation}\label{cyclic:G_lam_row}
  a_{\lambda,1} \equiv R_{\lambda}(\pmb{x}_1) \,; \qquad a_{\lambda,k} =
  G_{\lambda} (\pmb{x}_k;\pmb{x}_1) \,, \qquad k=2,\ldots, N \,.
\end{equation}
The matrix spectrum
${\mathcal G}_{\lambda N}\pmb{v}_j=\omega_{\lambda
  j}\pmb{v}_j$ in \eqref{cen:glam_eig} with
$N=m-1$ is calculated as in \S 6 of \cite{jia2016}.  The
in-phase eigenpair is
\bsub \label{cyclic:spectrum_all}
\begin{equation}\label{cyclic:spectrum_1}
\omega_{\lambda N} = \sum_{k=1}^N a_{\lambda,k}\,, \qquad \pmb{v}_N =
(1, \ldots, 1)^T\,,
\end{equation}
while the other eigenvalues, corresponding to the anti-phase modes
for which $\pmb{v}_j^T\pmb{v}_N=0$ for $j=1,\ldots,N-1$, are
\begin{equation}\label{cyclic:spectrum_j}
  \omega_{\lambda j} = \sum_{k=0}^{N-1} \cos\left( \frac{2 \pi
      j k}{N}\right) a_{\lambda, k+1}\,, \qquad j=1,\ldots,N-1 \,.
\end{equation}
Since $\omega_{\lambda j}=\omega_{\lambda,N-j}$ for $j=1,\ldots,
\lceil {N/2} \rceil-1$, there are $\lceil {N/2} \rceil -1$ pairs of
degenerate eigenvalues for ${\mathcal G}_{\lambda N}$. Here the ceiling
function $\lceil x \rceil$ is defined as the smallest integer not less
than $x$.  When $N$ is even, there is an eigenvalue of multiplicity
one given by $\omega_{\lambda,\frac{N}{2}}=\sum_{k=0}^{N-1} (-1)^{k}
a_{\lambda, k+1}$. The other eigenvectors for $j=1,\ldots, \lceil {N/2}-1 \rceil$
are
\begin{equation}\label{cyclic:eigvec}
 \pmb{v}_j = \left(1, \cos \left(\frac{2 \pi j}{N}\right), \ldots,
 \cos \left( \frac{2 \pi j(N-1)}{N} \right)\,\right)^T \,,\qquad
 \pmb{v}_{N-j} = \left(0, \sin \left( \frac{2 \pi j}{N}\right),
 \ldots, \sin \left( \frac{2 \pi j(N-1)}{N}\right)\,\right)^T \,.
\end{equation}
\esub Finally, when $N$ is even, there is an additional eigenvector
$\pmb{v}_{\frac{N}{2}}=(1,-1,\ldots,-1)^T$. By using the explicit
formulae for $G_{\lambda}$ and its regular part from \eqref{cyclic:g},
the eigenvalues $\omega_{\lambda j}$ of the Green's matrix
${\mathcal G}_{\lambda N}$ are then easily computed from
\eqref{cyclic:spectrum_all}.

\vspace*{0.2cm}

\begin{remark}\label{app:degen}
The symmetric and cyclic matrix ${\mathcal G}_{\lambda N}$ has
${N/2}$ distinct anti-phase modes if $N$ is even and ${(N-1)/2}$ distinct
anti-phase modes if $N$ is odd.
\end{remark}

Finally, in order to calculate the quantities in \eqref{ring:gblock} and
\eqref{cent:g_eig}, which are needed in \eqref{Linear_reduceSys} for
determining the steady-state solution, we need only set $\lambda=0$
in \eqref{cyclic:g} and \eqref{cyclic:spectrum_1}.

\section{Cell locations and permeability parameters} \label{Cell_Location}

The cell locations and permeability parameter sets for the influx rate
$d_{1j}$, for $j=1,\ldots,m$, are given in the next two tables for the
cell patterns studied in \cref{sec:LargePopulation}.

\begin{table}[!ht]
\centering
  \begin{tabular}{ | c || c | c | c | c || c | c | c| c| } \hline
\rowcolor{LightCyan}  & \multicolumn{4}{| c ||}{Two clusters of cells} & \multicolumn{4}{| c |}{ Arbitrary cell locations}\\ 
\rowcolor{LightCyan}
\multirow{-2}*{Cell $i$}
  &  $x_i$  &   $y_i$  &   $d_{1i}$(II) &  $d_{1i}$(III) & $x_i$  &  $y_i$ &  $d_{1i}$(II) & $d_{1i}$(III)  \\  \hline  \hline  \rowcolor{Gray}
  $1$ & $0.4700$  & $-0.1000$  & $0.8$  & $0.4000$   &    $0.5205$ &   $-0.4687$ & $0.4$   &  $0.4000$\\   
\rowcolor{Gray}
  $2$	 & $0.7800$ & $0.2000$ & $0.8$  & $0.5168$   &   $-0.1856$ &   $-0.1927$ & $0.8$ & $0.5168$ \\ 
  \rowcolor{Cyan}
 $3$  &  $0.8000$ & $-0.1500$ &  $0.8$  &  $0.4378$   &  $0.3170$  & $-0.0236$ & $0.8$  & $0.4378$ \\ 
\rowcolor{Cyan}$4$  & $0.5000$ & $0.2500$ & $0.8$   & $0.6120$       &  $-0.3571$ & $0.6112$  & $0.4$  & $0.6120$ \\ 
\rowcolor{Gray}  $5$   & $0.6000$ & $-0.4000$ & $0.8$  & $0.7421$    &  $0.2019$ &   $0.5935$ & $0.4$  & $0.7421$   \\ 
\rowcolor{Gray}
$6$   & $-0.5800$ & $0.0500$ & $0.4$ & $0.6069$       & $-0.7526$  & $0.1667$ & $0.8$  &  $0.6069$ 
  \\ 
  \rowcolor{Cyan} 
  $7$  & $-0.6500$ & $0.3500$ & $0.4$ & $0.7062$ &  $-0.5500$  & $-0.4543$ & $0.4$  & $0.7062$ \\
  \rowcolor{Cyan}
$8$   & $-0.8000$  & $-0.1500$ &$0.4$  & $0.4365$ &   $0.0203$ &    $0.3201$ & $0.4$  &  $0.4365$ \\
  \rowcolor{Gray}
  $9$   & $-0.3000$ & $0.1000$ & $0.4$ & $0.5629$ &   $0.6947$ & $0.1265$ & $0.8$  & $0.5629$ \\ 
  \rowcolor{Gray}
  $10$ & $-0.5000$  & $-0.3000$ & $0.4$ & $0.8000$   & $-0.0172$ & $-0.7422$ & $0.8$  & $0.8000$\\ \hline
\end{tabular}
\caption{Locations of the cell centers and influx
    permeability rates for parameter sets II and III corresponding to
    the two cluster arrangement of cells in
    Fig.~\ref{fig:twoclusters} (columns 2--5) and the arbitrary
    arrangement of cells in Fig.~\ref{fig:arbit} (columns 6--9).
    The permeability parameter set I (not shown) is for identical
    cells where $d_{1j}=0.8$ for $j=1,\ldots,m$.}
\label{Table:CellLocation}
\end{table}

\begin{table}[!ht]
\centering
  \begin{tabular}{ | c || c | c | c | c | c | c | c | } \hline
\rowcolor{LightCyan}  & \multicolumn{7}{| c |}{Two rings with two isolated cells} \\ 
\rowcolor{LightCyan}
\multirow{-2}*{Cell $i$} &  $x_i$  &   $y_i$  &   $d_{1i}$(I)  &  $d_{1i}$(II) &  $d_{1i}$(III) &  $d_{1i}$(IV) &  $d_{1i}$(V)  \\  \hline  
 \rowcolor{Gray}
  $1$ &  $0.0000$  & $-0.7000$   &   $0.3$ &   $0.8$ & $0.3$ & $0.8$ & $0.3$ \\   
  \rowcolor{Gray}
  $2$ &  $0.0000$  & $0.7000$   &   $0.3$ &   $0.8$ & $0.3$ & $0.8$ & $0.3$ \\   
  \rowcolor{Cyan}
  $3$ &  $-0.4000$ & $0.0000$   &  $0.8$  & $0.3$ &  $0.8$ & $0.8$ & $0.3$ \\ 
  \rowcolor{Cyan}
 $4$  &  $-0.6000$ & $0.2000$ & $0.8$ & $0.3$  & $0.8$ & $0.8$ & $0.3$ \\ 
   \rowcolor{Gray}
 $5$  & $-0.8000$ & $0.0000$ & $0.8$  & $0.3$   & $0.8$ & $0.8$ & $0.3$\\ 
   \rowcolor{Gray}
 $6$   & $-0.6000$ & $-0.2000$ & $0.8$  & $0.3$ & $0.8$ & $0.8$ & $0.3$    \\ 
 \rowcolor{Cyan}
  $7$  &  $0.8000$ & $0.0000$ & $0.8$ & $0.3$  & $0.4$ & $0.8$ & $0.3$ \\
  \rowcolor{Cyan}
  $8$   & $0.6000$ & $0.2000$ & $0.8$  & $0.3$   & $0.4$ & $0.8$ & $0.3$ \\ 
  \rowcolor{Gray}
  $9$   & $0.4000$ & $0.0000$ & $0.8$  & $0.3$   & $0.4$ & $0.8$ & $0.3$ \\ 
  \rowcolor{Gray}
  $10$ & $0.6000$ & $-0.2000$ & $0.8$  & $0.3$   & $0.4$ & $0.8$ & $0.3$ \\ \hline
\end{tabular}
\caption{Locations of the cell centers and influx
    permeability rates for parameter sets I--V corresponding
    to the pattern of two rings of cells together with two isolated
    cells, as shown in Fig.~\ref{Large:ring_pattern}.}
\label{Table:CellLocation_ring} 
\end{table}

\end{appendices}

\bibliographystyle{plain}
\bibliography{ReferenceProposal.bib}

\begin{thebibliography}{10}

\bibitem{TestCon}
{Continuation Test:} a {MATLAB} library which defines test functions for
  continuation codes.
\newblock
  \url{http://people.math.sc.edu/Burkardt/m_src/test_con/test_con.html}.
\newblock Accessed: 2020-02-26.

\bibitem{alciatore1995winding}
D.~Alciatore and R.~Miranda.
\newblock A winding number and point-in-polygon algorithm.
\newblock {\em Glaxo Virtual Anatomy Project Research Report, Department of
  Mechanical Engineering, Colorado State University}, 1995.

\bibitem{betcke2}
T.~Betcke, N.~G. Highan, V.~Mehrmann, G.~M.~N. Porzio, C.~Schr\"oder, and
  Tisseur F.
\newblock {NLEVP}: A collection of nonlinear eigenvalue problems. users' guide.
\newblock {\em MIMS EPring 2011.117, Manchester Institue for Mathematical
  Sciences, The University of Manchester, UK}, page 10 pages, updated 2019.

\bibitem{betcke}
T.~Betcke, N.~G. Highan, V.~Mehrmann, C.~Schr\"oder, and Tisseur F.
\newblock {NLEVP}: A collection of nonlinear eigenvalue problems.
\newblock {\em ACM Trans. Math. Software}, 39(2):7.1--7.28, 2013.

\bibitem{bressloff}
P.~Bressloff.
\newblock Ultrasensitivity and noise amplification in a model of v. harveyi
  quorum sensing.
\newblock {\em Phys. Rev. E}, 93:062418, 2016.

\bibitem{ross_pde}
X.~Z. Cao, H.~Yuan, and B.~W. Li.
\newblock Selection of spatiotemporal patterns in arrays of spatially
  distributed oscillators indirectly coupled via a diffusive environment.
\newblock {\em Chaos}, 29:043104, 2019.

\bibitem{hess1}
M.~A.~J. Chaplain, M.~Ptashnyk, and M.~Sturrock.
\newblock Hopf bifurcation in a gene regulatory network model: Molecular
  movement causes oscillations.
\newblock {\em Math. Mod. Meth. Appl. Sci.}, 25(6):1179--1215, 2015.

\bibitem{dano}
S.~Dan{\o}, M.~F. Madsen, and P.~G. S{\o}rensen.
\newblock Quantitative characterization of cell synchronization in yeast.
\newblock {\em Proceedings of the National Academy of Sciences},
  104(31):12732--12736, 2007.

\bibitem{dano2}
S.~Dan{\o}, P.~G. S{\o}rensen, and F.~Hynne.
\newblock Sustained oscillations in living cells.
\newblock {\em Nature}, 402(6759):320--322, 1999.

\bibitem{de2007dynamical}
S.~De~Monte, F.~d'Ovidio, S.~Dan{\o}, and P.~G. S{\o}rensen.
\newblock Dynamical quorum sensing: {P}opulation density encoded in cellular
  dynamics.
\newblock {\em Proceedings of the National Academy of Sciences},
  104(47):18377--18381, 2007.

\bibitem{dilanji_diffusion}
G.~E. Dilanji, J.~B. Langebrake, P.~De~Leenheer, and S.~J. Hagen.
\newblock Quorum activation at a distance: Spatiotemporal patterns of gene
  regulation from diffusion of an autoinducer signal.
\newblock {\em J. Am. Chem. Soc.}, 6:34695, 2016.

\bibitem{dock}
J.~D. Dockery and J.~P. Keener.
\newblock A mathematical model for quorum sensing in {\em {p}seudomonas
  {a}eruginosa}.
\newblock {\em Bull Math Biol.}, 63(1):95--116, 2001.

\bibitem{dunny1997cell}
G.~M. Dunny and B.~Leonard.
\newblock Cell-cell communication in gram-positive bacteria.
\newblock {\em Annual review of microbiology}, 51(1):527--564, 1997.

\bibitem{flexpde2015solutions}
PDE FlexPDE.
\newblock Solutions inc.
\newblock {\em URL http://www. pdesolutions. com}, 2015.

\bibitem{gao_diffusion}
M.~Gao, H.~Zheng, Y.~Ren, R.~Lou, F.~Wu, W.~Yu, X.~Liu, and X.~Ma.
\newblock A crucial role for spatial distribution in bacterial quorum sensing.
\newblock {\em Scientific Reports}, 6(34695), 2016.

\bibitem{goldbeter1997biochemical}
A.~Goldbeter.
\newblock {\em Biochemical oscillations and cellular rhythms: the molecular
  bases of periodic and chaotic behaviour}.
\newblock Cambridge university press, 1997.

\bibitem{gomez2007}
A.~Gomez-Marin, J.~Garcia-Ojalvo, and J.~M. Sancho.
\newblock Self-sustained spatiotemporal oscillations induced by membrane-bulk
  coupling.
\newblock {\em Phys. Rev. Lett.}, 98:168303, Apr 2007.

\bibitem{gou2017}
J.~Gou, W.-Y. Chiang, P.-Y. Lai, M.~J. Ward, and Y.-X. Li.
\newblock A theory of synchrony by coupling through a diffusive chemical
  signal.
\newblock {\em Physica D}, 339:1--17, 2017.

\bibitem{gou2015}
J.~Gou, Y.~X. Li, W.~Nagata, and M.~J. Ward.
\newblock Synchronized oscillatory dynamics for a 1-{D} model of membrane
  kinetics coupled by linear bulk diffusion.
\newblock {\em SIAM J. Appl. Dyn. Sys.}, 14(4):2096--2137, 2015.

\bibitem{jia2016}
J.~Gou and M.~J. Ward.
\newblock An asymptotic analysis of a 2-d model of dynamically active
  compartments coupled by bulk diffusion.
\newblock {\em Journal of Nonlinear Science}, 26(4):979--1029, 2016.

\bibitem{gou2016}
J.~Gou and M.~J. Ward.
\newblock Oscillatory dynamics for a coupled membrane-bulk diffusion model with
  {F}itzhugh-{N}agumo kinetics.
\newblock {\em SIAM J. Appl. Math.}, 76(2):776--804, 2016.

\bibitem{gregor2010}
T.~Gregor, K.~Fujimoto, N.~Masaki, and S.~Sawai.
\newblock The onset of collective behavior in social amoebae.
\newblock {\em Science}, 328(5981):1021--1025, 2010.

\bibitem{guttel}
S.~G\"uttel and F.~Tisseur.
\newblock The nonlinear eigenvalue problem.
\newblock {\em Acta Numerica}, 26(1):1--94, 2017.

\bibitem{DSQS}
B.~A. Hense, C.~Kuttler, J.~M{\"{u}}ller, M.~Rothballer, A.~Hartmann, and J.~U.
  Kreft.
\newblock Does efficiency sensing unify diffusion and quorum sensing?
\newblock {\em Nature Reviews. Microbiology}, 5:230--239, 2007.

\bibitem{henson}
M.~A. Henson, M\"uller D., and M.~Reuss.
\newblock Cell population modelling of yeast glycolytic oscillations.
\newblock {\em Biochem J.}, 368:433--446, 2002.

\bibitem{smjw_quorum}
S.~Iyaniwura and M.~J. Ward.
\newblock Localized signaling compartments in 2-d coupled by a bulk diffusion
  field: Quorum sensing and synchronous oscillations in the well-mixed limit.
\newblock {\em to appear, Europ. J. Appl. Math.}, 2020.

\bibitem{review-yeast}
K.~Kamino, K.~Fujimoto, and Sawai S.
\newblock Collective oscillations in developing cells: Insights from simple
  systems.
\newblock {\em Develop. Growth Differ.}, 53:503--517, 2011.

\bibitem{KTW2005}
T.~Kolokolnikov, M.~S Titcombe, and M.~J. Ward.
\newblock Optimizing the fundamental {N}eumann eigenvalue for the {L}aplacian
  in a domain with small traps.
\newblock {\em Europ. J. Appl. Math.}, 16(2):161--200, 2005.

\bibitem{RosslerHeter}
B.~W. Li, X.~Z. Cao, and C.~Fu.
\newblock Quorum sensing in populations of spatially extended chaotic
  oscillators coupled indirectly via a heterogeneous environment.
\newblock {\em Journal of NonLinear Science}, 27(6):1667--1686, 2017.

\bibitem{Rossler}
B.~W. Li, C.~Fu, H.~Zhang, and X.~Wang.
\newblock Synchronization and quorum sensing in an ensemble of indirectly
  coupled chaotic oscillators.
\newblock {\em Physical Review E}, 86(4):046207, 2012.

\bibitem{hess2}
C.~K. Macnamara and M.~A.~J. Chaplain.
\newblock Spatio-temporal models of synthetic genetic oscillations.
\newblock {\em Math. Biosc. Eng.}, 14:249--262, 2017.

\bibitem{qs_jump}
P.~Melke, P.~Sahlin, A.~Levchenko, and H.~Jonsson.
\newblock A cell-based model for quorum sensing in heterogeneous bacterial
  colonies.
\newblock {\em PLoS Computational Biology}, 6(6):e1000819, 2010.

\bibitem{mina}
P.~Mina, M.~di~Benardo, N.~J. Savery, and K.~Tsaneva-Atanasova.
\newblock Modeling emergence of oscillations in communicating bacteria: a
  structured approach from one to many cells.
\newblock {\em J. Royal Society Interface}, 10:20120612, 2012.

\bibitem{muller2006}
J.~M{\"u}ller, C.~Kuttler, B.~A. Hense, M.~Rothballer, and A.~Hartmann.
\newblock Cell--cell communication by quorum sensing and dimension-reduction.
\newblock {\em Journal of mathematical biology}, 53(4):672--702, 2006.

\bibitem{muller2013}
J.~M{\"u}ller and H.~Uecker.
\newblock Approximating the dynamics of communicating cells in a diffusive
  medium by odes—homogenization with localization.
\newblock {\em Journal of mathematical biology}, 67(5):1023--1065, 2013.

\bibitem{sync2}
H.~Nakao.
\newblock Phase reduction approach to synchronisation of nonlinear oscillators.
\newblock {\em Contemporary Physics}, 57(2):188--214, 2015.

\bibitem{nandy1998}
V.~Nanjundiah.
\newblock Cyclic {AMP} oscillations in {D}ictyostelium {D}iscoideum: {M}odels
  and observations.
\newblock {\em Biophysical chemistry}, 72(1-2):1--8, 1998.

\bibitem{glass}
F.~Naqib, T.~Quail, L.~Musa, H.~Vulpe, J.~Nadeau, J.~Lei, and L.~Glass.
\newblock Tunable oscillations and chaotic dynamics in systems with localized
  synthesis.
\newblock {\em Phys. Rev. E}, 85:046210, Apr 2012.

\bibitem{dicty}
J.~Noorbakhsh, D.~J. Schwab, A.~E. Sgro, T.~Gregor, and P.~Mehta.
\newblock Modeling oscillations and spiral waves in {\em dictyostelium}
  populations.
\newblock {\em Phys Rev. E.}, 91(6):062711, 2015.

\bibitem{paquin_1d}
F.~Paquin-Lefebvre, W.~Nagata, and M.~J Ward.
\newblock Pattern formation and oscillatory dynamics in a two-dimensional
  coupled bulk-surface reaction-diffusion system.
\newblock {\em SIAM J. Appl. Math.}, 80(3):1520--1545, 2020.

\bibitem{sync1}
B.~Pietras and A.~Daffertshofer.
\newblock Network dynamics of coupled oscillators and phase reduction
  techniques.
\newblock {\em Physics Reports}, 819:1--105, 2019.

\bibitem{rauch}
E.~M. Rauch and M.~Millonas.
\newblock The role of trans-membrane signal transduction in turing-type
  cellular pattern formation.
\newblock {\em J. Theor. Biol.}, 226:401--407, 2004.

\bibitem{wolf2}
J.~Sch{\"{u}}tze and J.~Wolf.
\newblock Spatio-temporal dynamics of glycolysis in cell layers: {A}
  mathematical model.
\newblock {\em Biosystems}, 99(2):104--108, 2010.

\bibitem{Selkov}
E.~E. Sel'kov.
\newblock Self-oscillations in glycolysis 1. a simple kinetic model.
\newblock {\em European Journal of Biochemistry}, 4(1):79--86, 1968.

\bibitem{taga2003chemical}
M.~E. Taga and B.~L. Bassler.
\newblock Chemical communication among bacteria.
\newblock {\em Proceedings of the National Academy of Sciences}, 100(suppl
  2):14549--14554, 2003.

\bibitem{taylor2}
A.~F. Taylor, M.~Tinsley, and K.~Showalter.
\newblock Insights into collective cell behavior from populations of coupled
  chemical oscillators.
\newblock {\em Phys. Chemistry Chem Phys.}, 17(31):20047--20055, 2015.

\bibitem{taylor1}
A.~F. Taylor, M.~Tinsley, F.~Wang, Z.~Huang, and K.~Showalter.
\newblock Dynamical quorum sensing and synchronization in large populations of
  chemical oscillators.
\newblock {\em Science}, 323(5914):614--6017, 2009.

\bibitem{tinsley2}
M.~R. Tinsley, A.~F. Taylor, Z.~Huang, and K.~Showalter.
\newblock Emergence of collective behavior in groups of excitable
  catalyst-loaded particles: {S}patiotemporal dynamical quorum sensing.
\newblock {\em Phys. Rev. Lett.}, 102:158301, 2009.

\bibitem{tinsley1}
M.~R. Tinsley, A.~F. Taylor, Z.~Huang, F.~Wang, and K.~Showalter.
\newblock Dynamical quorum sensing and synchronization in collections of
  excitable and oscillatory catalytic particles.
\newblock {\em Physica D}, 239(11):785--790, 2010.

\bibitem{reflect}
A.~Trovato, F.~Seno, M.~Zanardo, S.~Alberghini, A.~Tondello, and A.~Squartini.
\newblock Quorum vs. diffusion sensing: {A} quantitative analysis of the
  relevance of absorbing or reflecting boundaries.
\newblock {\em FEMS Microbiology Letters}, 352:198--203, 2014.

\bibitem{Mueller2014uecke}
H.~Uecker, J.~M{\"u}ller, and B.~A. Hense.
\newblock Individual-based model for quorum sensing with background flow.
\newblock {\em Bulletin of mathematical biology}, 76(7):1727--1746, 2014.

\bibitem{king2001}
J.~P. Ward, J.~R. King, A.~J. Koerber, P.~Williams, J.~M. Croft, and R.~E.
  Sockett.
\newblock Mathematical modelling of quorum sensing in bacteria.
\newblock {\em Mathematical Medicine and Biology}, 18(3):263--292, 2001.

\bibitem{ward2018spots}
M.~J Ward.
\newblock Spots, traps, and patches: {A}symptotic analysis of localized
  solutions to some linear and nonlinear diffusive systems.
\newblock {\em Nonlinearity}, 31(8):R189, 2018.

\bibitem{WHK1993}
M.~J. Ward, W.~D. Henshaw, and J.~B. Keller.
\newblock Summing logarithmic expansions for singularly perturbed eigenvalue
  problems.
\newblock {\em SIAM J. Appl. Math.}, 53(3):799--828, 1993.

\bibitem{QS_whiteley}
M.~Whiteley, S.~P. Diggle, and E.~P. Greenberg.
\newblock Bacterial quorum sensing: the progress and promise of an emerging
  reserch areas.
\newblock {\em Nature}, 551:7680, November 15 2017.

\bibitem{wolf}
J.~Wolf and R.~Heinrich.
\newblock Effect of cellular interaction on glycolytic oscillations in yeast: a
  theoretical investigation.
\newblock {\em Biochem J.}, 345:321--334, 2000.

\bibitem{xu_bressloff}
B.~Xu and P.~Bressloff.
\newblock A {PDE-DDE} model for cell polarization in fission yeast.
\newblock {\em SIAM J. Appl. Math.}, 76(3):1844--1870, 2016.

\end{thebibliography}

\end{document}